\documentclass[a4paper,11pt]{article}

\usepackage{article}

\title{Classification of (non)-frustrated 2D Ising models in genus 1\\
on isoradial graphs}
\author{
Béatrice de Tilière%
  \thanks{{\small 
PSL University-Dauphine, CNRS, UMR 7534, CEREMADE, 75016 Paris, France.
    {\small\texttt{ detiliere@ceremade.dauphine.fr}}
}},
Lucas Rey%
  \thanks{{\small %
PSL University-Dauphine, CNRS, UMR 7534, CEREMADE, 75016 Paris, France.
 {\small\texttt{ lucas.rey@inrae.fr}}}}
}

\begin{document}
\maketitle

\begin{abstract}
We prove a complete classification of 2D Ising models defined on isoradial graphs, frustrated or not, whose underlying spectral curve has genus 1. 
As a specific case, we recover Baxter's $Z$-invariant Ising model, thus extending his class of models to \emph{real} coupling constants. We identify two additional families of models, both having \emph{non}-Harnack spectral curves. We show that in all cases the spectral curve is maximal. Moreover, each family undergoes an algebraic phase transition, in the sense that the genus changes from one to zero, explaining the different behaviors observed in the physics literature. In our proof, we use properties of the spectral curve and Fock's approach. This yields a natural framework for a further systematic study of the frustrated Ising model, in particular for proving local formulas. In the course of the proof, we also identify Fock's dimer models corresponding to real algebraic curves of genus 1, and to real dimer models. As an example of application of our main result, we prove a full classification of the frustrated Ising model on the triangular lattice. 
\end{abstract}

\section{Introduction}\label{sec:intro}

Consider a planar, simple graph $G=(V,E)$ such that edges are assigned \emph{real coupling constants} $\eps\Js=(\eps_e\Js_e)_{e\in E}$ with $\eps_e \in \{\pm 1\}$ and $\Js_e \in \RR^+$. 
A \emph{spin configuration}, denoted by $\sigma$, is an element of $\{-1,1\}^V$. When the graph is moreover finite, the \emph{(frustrated) Ising Boltzmann measure with free boundary conditions}, denoted by $\PPising$, is the probability measure on the set of spin configurations defined by
\begin{equation*}
\forall\,\sigma\in\{-1,1\}^V,\quad \PPising(\sigma)=\frac{1}{\Zising(G,\eps\Js)}
e^{\sum_{vv'\in E}\eps_e\Js_e\sigma_v\sigma_{v'}},
\end{equation*}
where $\Zising(G,\eps\Js)$ is the normalizing constant known as the \emph{partition function}. 

Under this model, neighboring spins tend to agree, resp. disagree, when $\eps_e = 1$, resp. $-1$.
However, the fundamental parameter to understand the behaviour of the model is the \emph{frustration function} (a terminology due to P. W. Aspen), denoted by $\delta=(\delta_f)_{f\in F}$, and defined as the product of the signs on edges bounding faces: $\delta_f=\prod_{e \in \partial f}\eps_e$. Prior to describing our main contributions, let us give some historical background on this model, which has been a longstanding source of interest in the physics community. 

\paragraph{Historical perspective: phase transition in the frustrated Ising model}
The notion of frustrated model was coined by Toulouse in 1977 \cite{Toulouse} to describe a model where the spins in a system cannot find an orientation to fully satisfy all the interactions with their neighbors.
The antiferromagnetic Ising model on the triangular lattice has been one of the first frustrated model to attract the attention of physicists, even before the notion of frustration was introduced. 
Shortly after the computation of the free energy of the ferromagnetic Ising model by Onsager \cite{Ons44}, and building on subsequent work by Kauffman, Wannier \cite{Wan70} gave a physical derivation of the free energy of the isotropic antiferromagnetic Ising model on the triangular lattice.
He showed that this model exhibits no phase transition, but gave some heuristics arguments that the model is critical at $T=0$.
He also discusses the degeneracy of the ground state, which is one of the key features of frustrated models.

The possibility of a phase transition in the anisotropic frustrated Ising model on the triangular lattice (where the three types of orientations have different coupling constants $-\Js_1, -\Js_2, -\Js_3$) was first discussed by Houtappel \cite{Hou50} who gave a physical derivation of the free energy of the model using transfer matrix techniques.
He argues that depending on the values of $\Js$, there exists $0 < T_c < \infty$ such that the free energy at $\Js/T_c$ is not regular, \emph{i.e.}, there exists a phase transition, if and only if: either all the coupling constants are different, or the two largest interactions are equal in strength and differ from the smallest, that is $\Js_3 <\Js_2 = \Js_1$. This case was studied in more details in Eggarter~\cite{Egg75}.

In a series of papers \cite{Ste64, Ste70a, Ste70b} Stephenson, using Toeplitz determinants, computed the spin pair correlations in the three directions of the triangular lattice for any values of the coupling constants.
He proves that below the critical value $T_c$ of Houtappel, the model has long-range order, while it has exponential decay of correlations above $T_c$. 
He also proves the existence of a  ``disorder temperature" $T_D \geq T_c$, separating an antiferromagnetic (long-range or short range) order phase $T < T_D$ from a (short range) disorder phase.
This disorder temperature is not associated to a singularity of the free energy.
Stephenson also studies in detail the isotropic case $\Js_3 = \Js_2 = \Js_1$, and gives mathematical and numerical evidence that the correlations decay exponentially at $T>0$ and polynomially at $T=0$, thus predicting a phase transition at $T=0$.
This is related to the fact that the boundaries of clusters in the antiferromagnetic isotropic Ising model are expected to converge in the scaling limit towards CLE$_4$ at $T=0$ and CLE$_6$ at $T > 0$ \cite{BloNie89}. Similar behaviors have been observed for the frustrated Ising model on the square lattice~\cite{YangLee,Villain,VannimenusToulouse,LongaOles,Forgacs,ForgacsFradkin,AuYangPerk_2,PerkAuYang_1}. 

From a rigorous mathematical point of view, little is known about those models.
Recently, a rigorous mathematical derivation of the free energy of the anisotropic antiferromagnetic Ising model on the triangular lattice was obtained by \cite{AthUel25} using the Kac Ward method in the spirit of \cite{KagLisMee12}, formalizing the results of Wannier and Houtappel.
The authors also give a rigorous mathematical derivation of the fact observed by Houtappel that there is a phase transition in this model for $\Js_3<\Js_2=\Js_1$ but no phase transition for $\Js_3=\Js_2 \leq \Js_1$ (including the isotropic case).

In the physics community, the existence of a phase transition has also been discussed in related but different models.
For example the Ising model with competing nearest and next-to-nearest neighbor interactions was studied by \cite{VakLarOvc66} who prove that there may or may not exist a phase transition depending on the ratio of the nearest and next-nearest coupling constants.
The phase diagram of the antiferromagnetic Ising model with non-zero magnetic field has been the subject of a controversy \cite{LinWu79}, \cite{DocHem81}. 
The following papers provide an overview of the different lattices studied, and of the different regimes observed~\cite{Liebmann,Diep}.

Comparatively, the situation for the ferromagnetic Ising model is much more understood: let us only cite the results of Cimasoni and Duminil-Copin \cite{CimDum12} and Li \cite{Li10} who give an algebraic equation for the critical point of the planar Ising model on any periodic graph with any (positive) coupling constants. 
To our knowledge, on general graphs and with general coupling constants, it is not known whether there is a phase transition in the frustrated Ising model, and a fortiori there is no equivalent of the results of Cimasoni and Duminil-Copin \cite{CimDum12} and Li \cite{Li10}.

Our main result consists in classifying all possible (frustrated) Ising models defined on isoradial graphs, when the underlying spectral curve has genus 1,
by means of a related dimer model, using either the Fisher correspondence~\cite{Fisher} or bosonization~\cite{Dubedat,BoutillierdeTiliere:XORloops}.
The phase transition in the dimer model on bipartite graph with positive weights is famously described in Kenyon, Okounkov and Sheffield \cite{KO}.
The appearance of a phase transition is related to the disappearance of the hole in the amoeba of the spectral curve. 
To understand the link between phase transition and amoeba, a key point proven by \cite{KO} is the fact that the spectral curve of a dimer model on a bipartite graph with positive weights is of a special type, namely a simple Harnack curve.
However, the dimer model related to a frustrated Ising model by the Fisher correspondence is defined on a non-bipartite graph, while the model given by bosonization is defined on a bipartite graph with negative weights.
A related physics paper is \cite{NasDen17}, which is interested in classifying the behaviours of the amoeba related to dimer models on non-bipartite graphs or with negative weights.
Via simulations and concrete example, the authors describe the possible deviations of the behaviour of the amoeba from the well-studied Harnack case (singularities, shrinking, disappearance of the hole).
However, they give no general classification and to our understanding the particular spectral curves that we wish to study are not considered there.

\paragraph{Main contributions.} This paper proves a complete classification of the (frustrated) Ising model, where the bracketed term ``frustrated'' emphasizes that this theory includes both the frustrated and the non-frustrated cases; it is a first step in establishing a general and systematic theory of the (frustrated) Ising models. More precisely, we consider the (frustrated) Ising model defined on an infinite, $\ZZ^2$-periodic, isoradial graph $G$, with coupling constants $\eps\Js$ and frustration function $\delta$, and the corresponding Ising-dimer model on the bipartite graph $\GQ$~\cite{Dubedat,BoutillierdeTiliere:XORloops}. The spectral curve $\C$ of the (frustrated) Ising model is that of this bipartite dimer model; we suppose that the curve $\C$ has genus 1. We consider Fock's gauge equivalent dimer model~\cite{Fock} on the graph $\GQ$, with modular parameter $\tau$ (or equivalently elliptic modulus $k$), angle map $\mapalpha$, and parameter $t$; as we will see in Section~\ref{sec:Fock_necessary_condition_II}, there is an additional parameter $\rho$, coming from the central symmetry of the curve $\C$.  
Our main contribution, see Theorem~\ref{thm:main_intro} below and Theorem~\ref{thm:main} in the body of the paper for a precise statement, is a full classification of the (frustrated) Ising models using Fock's dimer models; our proof relies on properties of the spectral curve $\C$ and gauge equivalence of the two dimer models. 

\begin{thm}\label{thm:main_intro}
Consider an Ising model on an infinite, $\ZZ^2$-periodic, isoradial graph $G$ with periodic real coupling constants $\eps\Js$, and the corresponding Ising-dimer model on the graph $\GQ$. Suppose that the spectral curve $\C$ has genus 1 and is generic; and
consider Fock's gauge equivalent dimer model. Then necessarily, the parameters of Fock's dimer model satisfy the following: up to modular transformations, $k^2\in[0,1)$; the angle map $\mapalpha$ generically belongs to the real locus of the torus $\TT(k)$, and satisfies an additional sign condition\footnote{See Definition~\ref{def:necessary_V}.}; and the parameters $(\rho,t)$
fall in one of the classes $(\rho,t)$ of Table~\eqref{table3_intro}. They are related to the absolute value coupling constants $\Js$, resp. the frustration function $\delta$, by the third, resp. the fourth, row of Table~\eqref{table3_intro}.

{\small 
 \renewcommand\arraystretch{1.5}
 \begin{equation}\label{table3_intro}
   \begin{array}{|c||c|c|c|}
  \hline 
   (\rho,t) 
   &(\frac{1}{2},0)
   & (\frac{1}{2},\frac{1}{2})
   &(\frac{1}{2}+\frac{\tau}{2},\frac{1}{2})\\
   \hline \hline 
   \text{Hyp. $G/G^*$}
   & \text{none }
   & \text{none }
   & G^* \text{ bipartite}\\
   \hline
   \sinh(2\Js_e)
     & |\sc(\beta_e-\alpha_e)|
     & k'|\sc(\beta_e-\alpha_e)|
     & \frac{k'}{k}|\nc\bigl(\Re(\betah_e-\alphah_e)\bigr)|\\
   \hline
   \Js_e
    & \frac{1}{2}\ln\bigl(\frac{1+|\sn(\beta_e-\alpha_e)|}{|\cn(\beta_e-\alpha_e)|}\bigr)
    & \frac{1}{2}\ln\bigl(\frac{\dn(\beta_e-\alpha_e) +k'|\sn(\beta_e-\alpha_e)|}{|\cn(\beta_e-\alpha_e)|}\bigr)
    & \frac{1}{2}\ln\bigl(\frac{k'+\dn\Re(\betah_e-\alphah_e)}{k|\cn\Re(\betah_e-\alphah_e)|} \bigr)\\
   \hline
   \delta_f 
   & - \sgn (\prod_{i=1}^{|f|}\sn(\betah_{i}-\alphah_i))
   & - \sgn (\prod_{i=1}^{|f|}\sn(\betah_{i}-\alphah_i))
   & -1\\
   \hline 
   \end{array}
 \end{equation}
 }

Conversely, consider Fock's dimer model on an infinite, minimal graph $\GQ$ with parameters $k,\mapalpha,\rho,t$ satisfying the above conditions, and 
suppose that the angle map $\mapalpha$ is periodic. Consider the absolute value coupling constants $\Js$ and the frustration function $\delta=(\delta_f)_{f\in F}$ given by Table~\eqref{table3_intro}, and suppose that $\prod_{f\in F_1}\delta_f=1$. Then, there exists edge-signs $(\eps_e)_{e\in E}$ such that the Ising-dimer model on $\GQ$ arising from the coupling constants $\eps \Js$, and Fock's dimer model on $\GQ$, are gauge equivalent. Moreover, the associated spectral curve $\C$ has genus 1, is invariant by complex conjugation, is centrally symmetric, and Fock's dimer model is real. 
\end{thm}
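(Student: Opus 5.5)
The proof splits into a necessity part and a sufficiency part, and both go through the spectral curve $\C$ of the Ising-dimer model on $\GQ$.

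\textbf{Necessity.} First I collect the structural properties of $\C$ that hold for any real coupling constants $\eps\Js$:
\begin{itemize}
\item $\C$ is invariant under complex conjugation, since the Kasteleyn weights are real;
\item $\C$ is centrally symmetric, $(z,w)\mapsto(z^{-1},w^{-1})$, since the Ising-dimer Kasteleyn matrix is, up to gauge, conjugate to its transpose;
\item the angles of the Ising-dimer weights on the quadrilaterals of $\GQ$ are prescribed by the isoradial embedding of $G$.
\end{itemize}
Since $\C$ has genus 1 and is generic, Fock's theorem identifies the dimer model up to gauge with a Fock model on $\GQ$, with parameters $(\tau,\mapalpha,t)$. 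I then transport each symmetry to a condition on these parameters.
\begin{itemize}
\item Invariance under complex conjugation means the torus carries an antiholomorphic involution. After a modular transformation this gives $\tau\in i\RR_+$ (or the rhombic case $\Re\tau=1/2$, which the genericity/maximality argument must rule out), hence $k^2\in[0,1)$.
\item The angles $\alpha$ must be fixed by the involution up to a shift. So they lie on a real component of $\TT(k)$, generically the real locus.
\item Central symmetry gives an involution $u\mapsto 2\rho-u$ of the torus under which the Abel–Jacobi data, and hence $t$, behave compatibly. This restricts $\rho$ and $t$ to half-periods. The reality of $\C$ then excludes all half-period combinations except $(\rho,t)\in\{(\tfrac12,0),(\tfrac12,\tfrac12),(\tfrac12+\tfrac\tau2,\tfrac12)\}$, up to the modular action.
\end{itemize}

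Next I identify the weights. The Ising-dimer weights on each quadrilateral are fixed up to gauge: $1$ on two edges, and $\cosh/\sinh$ or $\tanh$-type values of $2\Js_e$ on the others. I compute the face weights (alternating products, which are gauge invariant) of Fock's model on the quadrilateral attached to $e$, using Fay's trisecant identity for each of the three choices of $(\rho,t)$. For $t=\tfrac12$ the prime form ratios become Jacobi elliptic functions: $\sc$ for $t=0$, $k'\sc$ for the shift by $K$, and $\frac{k'}{k}\nc$ for the shift by $iK'$. Matching these face weights gives the third row of the table up to sign, and solving $\sinh 2\Js_e=x$ gives the fourth row. The sign of the Fock face weight around a face $f$ of $G$ must equal the sign of the Ising face weight, which is $-\delta_f$ after the Kasteleyn sign convention. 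This yields the last row, and it also yields the sign condition on $\mapalpha$ (Definition~\ref{def:necessary_V}) by requiring the dimer-type faces to carry the correct sign. In the case $(\tfrac12+\tfrac\tau2,\tfrac12)$, the sign of $\nc(\Re\cdot)$ forces $\delta_f=-1$ on every face. Compatibility around the vertices of $G^*$ then requires $G^*$ to be bipartite.

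\textbf{Sufficiency.} Given the parameters, I define $\Js$ and $\delta$ by the table. Since $\prod_{f\in F_1}\delta_f=1$, a cohomological argument on the torus, solving $\delta=d\eps$ as a $\ZZ/2$ cochain equation, produces signs $\eps$ realizing $\delta$. Face weights of two dimer models on a minimal graph determine the model up to gauge, together with the magnetic-field (homology) weights. Equality of all face weights then gives gauge equivalence up to these two homology classes, and I adjust them using the freedom in $\eps$ along cycles and $t$. The properties of $\C$ (genus 1, real, centrally symmetric) are then inherited from Fock's construction with these real parameters.

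\textbf{Main obstacle.} I expect the hardest step to be the necessity argument pinning $(\rho,t)$ to exactly three classes. It requires carefully translating the central symmetry of $\C$ into an automorphism of Fock's data, including the sign subtleties of Kasteleyn orientations and the discrete Abel map. I would attack it first, by comparing the divisors at infinity of $\C$ (the angles) with their images under $(z,w)\mapsto(z^{-1},w^{-1})$.
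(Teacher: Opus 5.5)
There is a genuine gap. Your necessity argument derives the restrictions on $(\tau,\rho,t)$ from symmetries of the curve $\C$, and that cannot work.

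\begin{itemize}
\item Fock's spectral curve does not depend on $t$, up to the global rescaling $(z,w)\mapsto(\lambda z,\mu w)$ that gauge equivalence allows anyway. So reality and central symmetry of $\C$ give no information on $t$ beyond what realness of the face-weights gives, namely $t$ on the real locus of $\TT(\tau)$.
\item Central symmetry gives a \emph{translation} $u\mapsto u+\rho$ by a nonzero half-period, i.e.\ $\alpha_{\cev{T}}=\alpha_{\vec{T}}+\rho$, not $u\mapsto 2\rho-u$. Every such $\rho$ produces a real, centrally symmetric curve.
\item Generic real curves with $\Re\tau=\frac12$ exist. Neither genericity nor maximality rules them out; maximality is a consequence in the paper (Corollary~\ref{cor:spectral_curve_max}), not an input.
\end{itemize}
So the claim that reality of $\C$ leaves exactly three half-period combinations is false. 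Your ``main obstacle'' step, comparing points at infinity under $(z,w)\mapsto(z^{-1},w^{-1})$, can only ever produce $\rho$, never $t$.

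What actually pins the parameters is matching Fock's face-weights with the specific Ising-dimer face-weights, one face type at a time.
\begin{itemize}
\item[(i)] At square faces one needs $\W(y)=-\sinh^2(2\Js_e)<0$.
  \begin{itemize}
  \item A Landen/modular computation shows $\W(y)>0$ whenever $\tau\in\frac12+i\RR$, and also when $\rho=\frac\tau2$; both cases are excluded this way.
  \item For the remaining $\rho$, negativity forces $\beta_e-\alpha_e\in\RR+(\ell_t+\ell)\frac\tau2$ modulo $\Lambda$.
  \item Closing this condition around the vertices of $G$ (their degrees must be even) is what makes $G^*$ bipartite in the third class.
  \end{itemize}
\item[(ii)] Once $\Js_e$ is fixed by the square faces, the moduli at faces of $\GQ$ corresponding to vertices and faces of $G$ are \emph{not} automatic.
  \begin{itemize}
  \item Using $\tanh^{-2}(2\Js_e)=1+\sinh^{-2}(2\Js_e)$ and a theta addition formula, the squared Ising weight at $f$ contains $\prod_i\theta_{0,0}(x_i+t)\theta_{0,0}(x_i-t)$, with $x_i=\alpha_{i-1}-\alpha_i$.
  \item Fock's squared weight contains $\prod_i\theta_{0,0}^2(x_i+t)$ instead.
  \item Genericity of $\mapalpha$ then forces $2t\in\Lambda$. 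With non-degeneracy, this leaves four classes.
  \end{itemize}
\item[(iii)] The sign condition at faces corresponding to vertices of $G$ eliminates $(\frac12+\frac\tau2,\frac\tau2)$ and produces the sign condition on $\mapalpha$.
\end{itemize}
Your plan matches square faces and then only \emph{signs} at the other faces, so step (ii), the one that determines $t$, is missing. The same computation is missing from your converse, where you assert equality of all face-weights without checking the moduli at these faces.

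Two smaller points:
\begin{itemize}
\item In the third column, $\delta_f=-1$ comes from $\W(f)=(-1)^{|f|}\prod\mathrm{nd}(\cdot)$ with $\mathrm{nd}>0$, not from the sign of $\mathrm{nc}$, which only appears squared.
\item There is nothing to adjust via $t$ in the converse. Changing $t$ changes face-weights, and gauge equivalence is by definition equality of all face-weights.
\end{itemize}
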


In some respects, this result has the same flavour as the paper of~\cite{George} where George characterises spectral curves arising from Ising models with positive coupling constants. George uses the spectral data approach of~\cite{KO,GoncharovKenyon}, while we use Fock's approach~\cite{Fock} allowing for further probabilistic study of the model. Most notably, we consider \emph{real} coupling constants and not only positive ones, but we restrict to genus 1 while George's result holds for general genus.

The first significant feature of Theorem~\ref{thm:main_intro} is that this classification includes as a specific case the $Z$-invariant Ising model of Baxter~\cite{Baxter:8V,Baxter:Zinv,Baxter:exactly}, thoroughly studied in~\cite{ChelkakSmirnov2,BoutillierdeTiliere:iso_perio,BoutillierdeTiliere:iso_gen,BdTR2}; namely it consists of the first two columns of Table~\ref{table3_intro} in the case where the frustration function $\delta$ is identically equal to 1. It thus fully explains how Baxter's $Z$-invariant model actually occurs when using Fock's more general perspective, which is not restricted to the coupling constants being positive.

The second striking feature is that we identify three family of models. Family I is the known case of the $Z$-invariant Ising model of Baxter, it corresponds to the first two columns of Table~\ref{table3_intro} when the frustration function is identically equal to 1; Family II corresponds to the first two columns  when the frustration function also admits negative values; Family III corresponds to the third column of Table~\ref{table3_intro} which is fully frustrated and only occurs when the dual graph $G^*$ is bipartite. The modular parameter $\tau\in i\RR^+$ of Fock's dimer model corresponds to the elliptic modulus $k^2\in[0,1)$ of the elliptic functions of Table~\ref{table3_intro}; depending on the model, $k^2$ either plays the role of the temperature or of the inverse temperature. When $k^2$ tends to 0 and reaches it, the spectral curve $\C$ has genus 0; which we refer to as an \emph{algebraic phase transition} of the model. This is the content of Corollary~\ref{cor:algebraic_phase_transition}, in which we compute the~\emph{algebraic critical absolute value coupling constants} $\Js^{\mathrm{crit}}$ and obtain: 

{\small 
 \renewcommand\arraystretch{1.4}
 \begin{equation}\label{cor_table3_intro}
   \begin{array}{|c|c|c|c|}
  \hline 
   (\rho,t) 
   &(\frac{1}{2},0)
   & (\frac{1}{2},\frac{1}{2})
   &(\frac{1}{2}+\frac{\tau}{2},\frac{1}{2})\\
   \hline \hline 
   \text{Hyp. $G/G^*$}
   & \text{none }
   & \text{none }
   & G^* \text{ bipartite}\\
   \hline
   \sinh(2\Js_e^{\mathrm{crit}})
     & |\tan(\beta_e-\alpha_e)|
     & |\tan(\beta_e-\alpha_e)|
     & \infty\\
   \hline
   \Js_e^{\mathrm{crit}}
    & \frac{1}{2}\ln\bigl(\frac{1+|\sin(\beta_e-\alpha_e)|}{|\cos(\beta_e-\alpha_e)|}\bigr)
    & \frac{1}{2}\ln\bigl(\frac{1 +|\sin(\beta_e-\alpha_e)|}{|\cos(\beta_e-\alpha_e)|}\bigr)
    & \infty \\
   \hline
   \end{array}
 \end{equation}
 }

This sheds light on the different kinds of behavior observed in the physics literature described above, explaining clearly when one expects a phase transition with a critical temperature as in the ferromagnetic case, or when one expects a phase transition at 0 temperature. Physicists noted that the phase transition at 0 temperature could be observed on the triangular lattice but not on the hexagonal lattice; this is in accordance with Theorem~\ref{thm:main_intro} since the dual of the hexagonal lattice is not bipartite, implying that Family III cannot occur. Note however that the notion of algebraic phase transition is weaker  than that of phase transition as explained in Remark~\ref{rem:phase_transition}, and this should be explored further.

Another notable feature of Theorem~\ref{thm:main_intro} is that the models of Families II and III arise from spectral curves that go \emph{beyond simple Harnack curves}, but naturally arise in a model of  statistical mechanics. 
Indeed, one of the outstanding recent development on the dimer model on bipartite graphs (with positive edge-weights) establishes a correspondence between its spectral curves and simple Harnack curves~\cite{KO,KOS,GoncharovKenyon}. The rich underlying structure of Harnack curves provides a full description of the phase diagram of the model~\cite{KOS}; the associated Fock parameterization of the weights~\cite{Fock} has deep implications on probabilistic aspects of the model allowing to prove \emph{local formulas} for the free energy and the Gibbs measures~\cite{BCdT:genus_1,BCdT:genus_g}. In the context of the (frustrated) Ising model, the corresponding Ising-dimer model on the bipartite graph $\GQ$ possibly has negative weights. It is thus expected that we exit the realm of simple Harnack curves. As a byproduct of Theorem~\ref{thm:main_intro}, we obtain an explicit parameterization of the spectral curve $\C$, see Equation~\eqref{equ:param_spectral_curve_symm}, helping us to understand the features of the curve $\C$. 

When the spectral curve $\C$ is generic, the different behaviors are illustrated in Figure~\ref{fig:triangular_lattice_curves} below, relying on the example of the triangular lattice  with the smallest fundamental domain, see Figure~\ref{fig:dual_graph_bipartite_ter}. 

\begin{figure}[h!]
    \centering
    \begin{subfigure}{0.32\textwidth}
        \centering
        \includegraphics[scale=.3, trim = 3cm 0cm 3cm .8cm, clip]{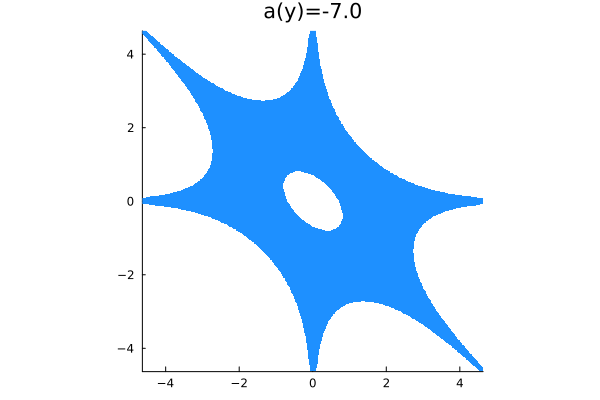}
    \end{subfigure}
    \hfill
    \begin{subfigure}{0.33\textwidth}
        \centering
        \includegraphics[scale=.33, trim = 6cm 0cm 6cm .8cm, clip]{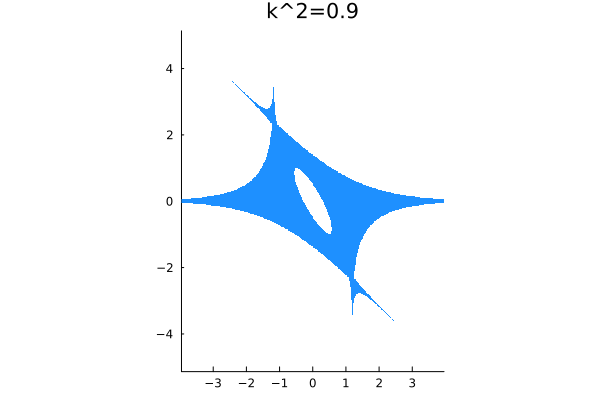}
    \end{subfigure}
    \hfill
    \begin{subfigure}{0.32\textwidth}
        \centering
        \includegraphics[scale=.3, trim = 3cm 0cm 3cm 0.8cm, clip]{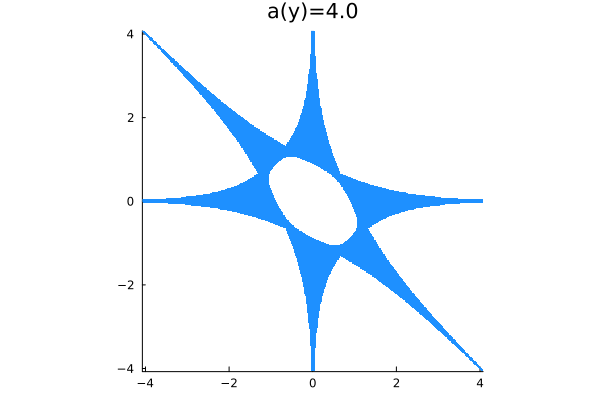}
    \end{subfigure}
\vspace{0.5cm}

\begin{subfigure}{0.32\textwidth}
        \centering
        \includegraphics[scale=.32, trim = 0cm 0cm 0cm 1.4cm, clip]{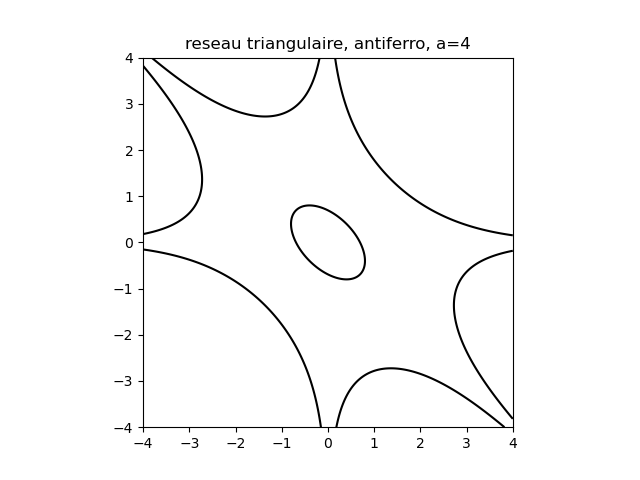}
    \end{subfigure}
    \hfill
    \begin{subfigure}{0.33\textwidth}
        \centering
        \includegraphics[scale=.32, trim = 0cm 0cm 0cm 1.4cm, clip] {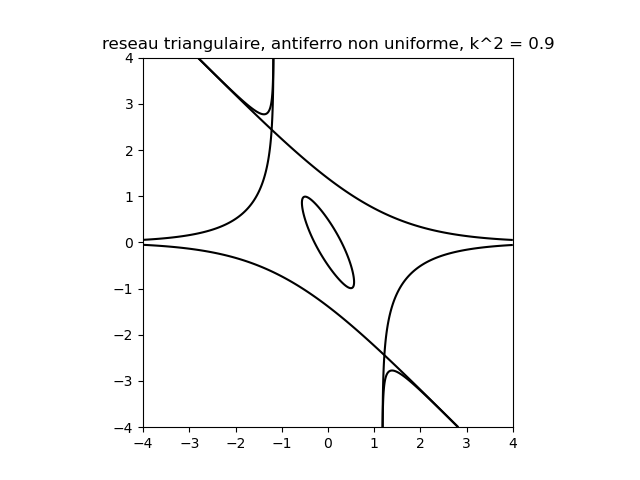}
    \end{subfigure}
    \hfill
    \begin{subfigure}{0.32\textwidth}
        \centering
        \includegraphics[scale=.32, trim = 0cm 0cm 0cm 1.4cm, clip]{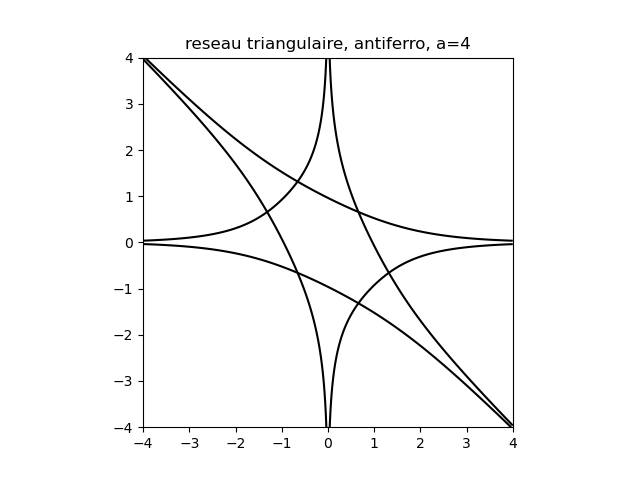}
    \end{subfigure}
\vspace{0.5cm}

      \begin{subfigure}{0.32\textwidth}
        \centering
        \begin{overpic}[width=3.3cm]{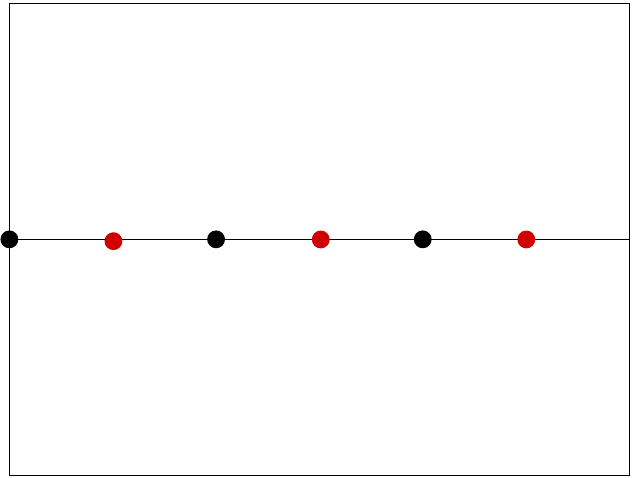}
        \put(-3,42){\scriptsize $\alpha_{\vec{T}_1}$} 
        \put(28,42){\scriptsize $\alpha_{\vec{T}_2}$} 
        \put(63,42){\scriptsize $\alpha_{\vec{T}_3}$} 
        \put(46,42){\scriptsize $\alpha_{\cev{T}_1}$} 
        \put(80,42){\scriptsize $\alpha_{\cev{T}_2}$} 
        \put(12,42){\scriptsize $\alpha_{\cev{T}_3}$} 
        \end{overpic}
  \caption{Family I}
    \end{subfigure}
    \hfill
    \begin{subfigure}{0.33\textwidth}
        \centering
        \begin{overpic}[width=3.3cm] {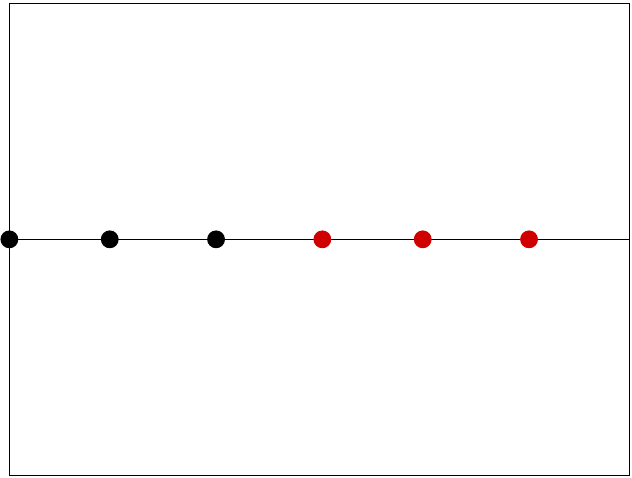}
         \put(-3,42){\scriptsize $\alpha_{\vec{T}_1}$} 
        \put(28,42){\scriptsize $\alpha_{\vec{T}_3}$} 
        \put(63,42){\scriptsize $\alpha_{\cev{T}_2}$} 
        \put(46,42){\scriptsize $\alpha_{\cev{T}_1}$} 
        \put(80,42){\scriptsize $\alpha_{\cev{T}_3}$} 
        \put(12,42){\scriptsize $\alpha_{\vec{T}_2}$} 
        \end{overpic}

        \caption{Family II}
    \end{subfigure}
    \hfill
    \begin{subfigure}{0.32\textwidth}
        \centering
        \begin{overpic}[width=3.3cm]{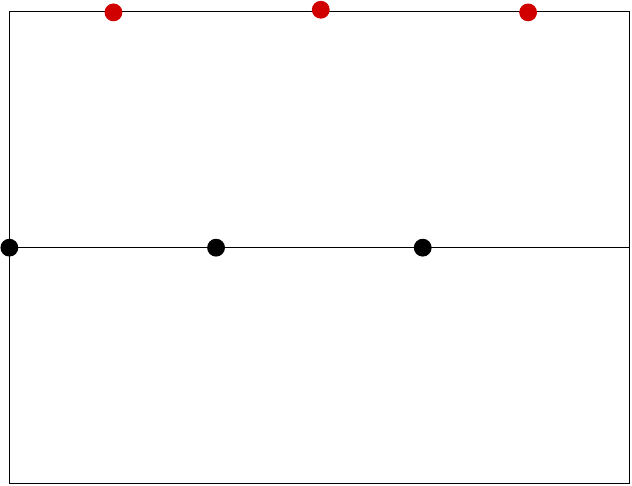}
        \put(-3,42){\scriptsize $\alpha_{\vec{T}_1}$} 
        \put(28,42){\scriptsize $\alpha_{\vec{T}_2}$} 
        \put(63,42){\scriptsize $\alpha_{\vec{T}_3}$} 
        \put(46,79){\scriptsize $\alpha_{\cev{T}_1}$} 
        \put(80,79){\scriptsize $\alpha_{\cev{T}_2}$} 
        \put(12,79){\scriptsize $\alpha_{\cev{T}_3}$} 
        \end{overpic}
        \caption{Family III}
    \end{subfigure}
    
    \caption{Spectral curves of the (frustrated) Ising model on the triangular lattice for the three possible families, represented through their amoebas (first row), the log of the modulus of their real locus (second row); the torus and angle map used to parameterize the spectral curve (third row). 
    Family I is an instance of an isotropic non-frustrated model; Family II is an instance of an anisotropic fully frustrated model; Family III is an instance of an isotropic fully frustrated model.
    Amoebas were plotted using the package PolynomialAmoebas of Sascha Timme.}
    \label{fig:triangular_lattice_curves}
\end{figure}

In order to get some grasp on the curve, we picture:\\ 
- on the top row, the amoeba of the curve $\C$, \emph{i.e.}, the image of $\C$ under the map $(z,w)\mapsto (\log|z|, \log|w|)$;\\
- on the second row, the image of the \emph{real locus} of the curve $\C$ through this same map;\\
- on the third row, the torus $\TT(\tau)$ and the angle map $\mapalpha$ used in the parameterization.\\
Indeed for non-Harnack curves the boundary of the amoeba is not the logarithm of the real locus~\cite{Mikhalkin,MikhalkinRullgard}, so this picture should help understand when the curve is Harnack.
Spectral curves of Family I are simple Harnack curves~\cite{BdTR2}. Spectral curves of Family II are close to Harnack curves: the zeros and poles of the parameterization live on a single component of the real locus of the torus $\TT(\tau)$, so that the hole in the amoeba seems to arise from the real component of $\TT(\tau)$ containing no zeros and no poles; the non-Harnacity comes from the crossing tentacles in the amoeba, and this occurs because the angle map $\mapalpha$ does not preserve the cyclic order of the train-tracks~\cite{Brugalle}. Family III is more subtle: the zeros and poles of the parameterization sit on both components of the real locus of $\TT(\tau)$, so that the log of the modulus of the real locus has two unbounded components, implying that the hole in the amoeba cannot be identified with a single real component of the torus $\TT(\tau)$. 

Another intriguing example of spectral curve $\C$ is given by that of the fully frustrated isotropic Ising model on the square lattice, see Section~\ref{subsec:square_lattice} and Figure~\ref{fig_amoeba_square}. The curve $\C$ is given by a reducible polynomial, written as a square of an irreducible polynomial. The remarkable fact is that the curve arising from this irreducible polynomial \emph{is Harnack}, so that the only feature that prevents the curve $\C$ from being Harnack is the fact that it is reducible. This is an illustration of the second part of Theorem~\ref{thm:main_intro}.


In the course of proving Theorem~\ref{thm:main_intro}, we prove three results of independent interest. In Corollary~\ref{cor:spectral_curve_max} we show that, if the spectral curve $\C$ of the (frustrated) Ising model has genus~1, is irreducible and non-singular except possibly at isolated real nodes, then it is \emph{maximal}, \emph{i.e.}, the number of connected components of the real locus is equal to $g+1=2$. This question was opened and, up to now, there was no consensus on whether this was true or not. Two other results have implications beyond the Ising model. Theorems~\ref{thm:real_spectral_curve} and~\ref{thm:Fock_real_dimer_model} characterize Fock's dimer models arising from \emph{real} dimer models thus extending~\cite[Theorem 34]{BCdT:genus_1} which is restricted to \emph{positive} weights; these results should have implications on the long standing open question of understanding the dimer model with positive weights on \emph{non-bipartite} graphs. Theorem~\ref{thm:real_symmetric_spectral_curve} characterizes Fock's dimer models arising from real dimer models on the bipartite graph $\GQ$, whose spectral curves have the additional property of being \emph{centrally symmetric}. Centrally symmetric Harnack curves arise when studying the massive Laplacian with positive conductances and masses~\cite{BdTR1}; our theorem extends this beyond the Harnack case.  More insight on these theorems is given in the ``Outline'' part of the introduction.

As said, this paper is a first step in establishing a general theory of the (frustrated) Ising model, but there remains many open questions. Here is a description of some. 
\begin{enumerate}
\item As exhibited in~\cite{BCdT:genus_1,Kenyon3}, Fock's parameterization of the dimer model gives the appropriate framework for studying the probabilistic model, and in particular prove explicit local expressions for the free energy and Gibbs measures. In the case of the $Z$-invariant Ising model, this was established in the papers~\cite{BoutillierdeTiliere:iso_gen,BdTR2}. An important open question is to give a full probabilistic description involving local formulas for the models classified in Theorem~\ref{thm:main_intro}. Most key identities used in the above mentioned papers extend to negative edge-weights. Nevertheless, these are only one of the building blocks, another being the Harnack property of the underlying spectral curve, used in many instances, in particular in the definition of the integration contour of the inverse Kasteleyn matrix~\cite{BoutillierdeTiliere:iso_gen,BdTR2}. Also, negative weights imply that some of the
probabilistic arguments collapse, and that the famous correlation inequalities~\cite{Griffiths,KellySherman,FortuinKasteleynGinibre} do not hold.
\item We conjecture that all models of the first two columns of Table~\ref{table3_intro}, consisting of Families I and II, are critical at $k=0$. 
In the case of Family II, the curve is not Harnack, but we believe that the behavior at the critical point is the same as in the Harnack case of Family I, thus raising the following questions/conjectures for Family II.
\begin{enumerate}
\item Prove that the algebraic critical absolute value coupling constants given by
\[\sinh(2\Js_e^{\mathrm{crit}})=|\tan(\beta_e-\alpha_e)|\] 
are indeed critical\footnote{The fact that the critical absolute value coupling constants for the first two columns are the same is natural and comes from a duality argument}. 
This would extend part of the results of~\cite{Li10,CimDum12} to real coupling constants, and the results of~\cite{AthUel25} beyond the case where two of the coupling constants are equal.
\item Prove that there is a second order phase transition of the free energy at $k=0$, as in~\cite{BdTR2}, see also~\cite{BdTR1}.
\item In the case of positive coupling constants, at the critical temperature on isoradial graphs, 
the Ising model has famously been proved to be conformally invariant by Smirnov~\cite{Smirnov}, and Chelkak and Smirnov~\cite{ChelkakSmirnov2,ChelkakSmirnov1}. We believe that this should still be true in the case of Family II. Proving this result would certainly be a challenge since no correlation inequalities are available, another problem being the definition of the limiting SLE process.
\end{enumerate}
\item The model of Family III is also critical at $k=0$, but has a different behavior.
In particular, it is always associated to infinite coupling constants, i.e. $0$ temperature, see Table~\ref{cor_table3}.
This should explain the phase transition observed at $T=0$ in the isotropic frustrated model \cite{Ste70a} with exponential decay of correlations in the distance $r$ as $e^{-r/\zeta}$ at $T > 0$ versus polynomial decay as $r^{-1/2}$ at $T = 0$.
Can we more generally try and identify to what model Family III corresponds ?
Can we also see the existence of the disorder temperature $T_D$ of Stephenson~\cite{Ste70b} on our parameterization ?
\item In all cases, when $k\neq 0$, the underlying spectral curve has genus 1. From simulations, it seems obvious that the amoeba has a hole; moreover by Corollary~\ref{cor:spectral_curve_max} we know that the curve is maximal. Nevertheless, when one exits the realm of Harnack curves, the boundary of the amoeba does not coincide with the log of the absolute value of the real locus~\cite{Mikhalkin,MikhalkinRullgard} and, for the moment, we are not able to prove that the amoeba indeed has a hole. If this was the case, it would prove that the free energy is analytic for all values of $k\neq 0$.
\item For the Ising model with positive coupling constants, there are equivalent ways of characterizing the phase transition: uniqueness/non-uniqueness of the Gibbs measures, spontaneous magnetization~\cite{Peierls,Griffiths}, decay of correlations~\cite{Aizenman}, change of regularity in the free energy~\cite{Baxter:exactly,Li10,CimDum12}, see also the book~\cite{Friedli_Velenik} and references therein. Using the approach of this paper, the most natural path is to use the free energy. But an interesting open question is to establish equivalences between the different points of view as in the case of positive coupling constants.
\item We believe that the coupling constants of Table~\ref{table3_intro} are all $Z$-invariant as established by Baxter in the case of Family I~\cite{Baxter:exactly}. Given the length of this paper, we decided no to perform this computation here. 
\item In Section~\ref{sec:examples}, we provide examples of angle maps $\mapalpha$ satisfying the conditions of the statement of Theorem~\ref{thm:main_intro}. Nevertheless, an open question is to characterize the latter using the immersion of the minimal graph $\GQ$. This would be an extension of the paper~\cite{BCdT:immersions} where we identified immersions of minimal graphs related to dimer models with \emph{positive} weights. 
\end{enumerate}

\paragraph{Outline of the paper.} Section~\ref{sec:background} is dedicated to background, and Section~\ref{sec:real_dimers} to characterizing Fock's dimer models arising from algebraic curves given by polynomials with real coefficients. Section~\ref{sec:classification} is then the heart of the paper; we establish five necessary conditions and then show that they are sufficient, thus proving our classification result, Theorem~\ref{thm:main_intro}. Section~\ref{sec:examples} gives examples, in particular we provide a full classification of the (frustrated) Ising model on the triangular lattice in Theorem~\ref{thm:antiferro:triangular:general}. More precisely, we have the following. 

\paragraph{Section~\ref{sec:background}.} Section~\ref{sec:dimer_model} gives the required background on the dimer model. When the graph is $\ZZ^2$-periodic, we furthermore define the characteristic polynomial, the spectral curve and the associated amoeba~\cite{KOS}. In Section~\ref{sec:frustrated_Ising_model} we define the (frustrated) Ising model on a graph $G$ with coupling constants $\eps\Js$ and frustration function $\delta$~\cite{Toulouse}, and give some useful properties. Section~\ref{sec:Ising_dimer model_representation} describes the two dimer representations of the (frustrated) Ising model: Fisher's representation on the non-bipartite Fisher graph $\GF$~\cite{Fisher}, and the bipartite dimer model representation on the graph $\GQ$~\cite{Dubedat,BoutillierdeTiliere:XORloops}. When the graph $G$ is moreover $\ZZ^2$-periodic,
the two dimer characteristic polynomials arising from the two dimer representations are equal up to a constant~\cite{Dubedat}; they thus give the same spectral curve $\C$, also referred to as the \emph{spectral curve of the (frustrated) Ising-dimer model}. This implies that the curve $\C$ arises from a polynomial with \emph{real coefficients} and is \emph{centrally symmetric}.

\paragraph{Section~\ref{sec:real_dimers}.} As said, the spectral curve of the (frustrated) Ising-dimer model arises from a polynomial with real coefficients. This section places itself in the more general framework of algebraic curves $\C$ of genus 1 arising from polynomials with real coefficients, but not necessarily from an Ising model. We further suppose that the algebraic curve has genus 1, and satisfies Assumptions $(\dagger)$, \emph{i.e.} it is irreducible, and non-singular except possibly for isolated real nodes. By Fock~\cite{Fock} we know that, up to scale change, the curve $\C$ is the spectral curve of a dimer model with Fock's weights in genus 1 on an infinite, periodic minimal graph $\Gs$. Fock's dimer model has three sets of parameters: an underlying torus $\TT(\tau)$ with modular parameter $\tau$, an angle map $\mapalpha$, and an additional parameter $t$. 

Section~\ref{sec:Fock_genus_1} is dedicated to background on Fock's dimer model in genus 1~\cite{Fock}: we define train-tracks, minimal graphs, isoradial graphs, the angle map, the discrete Abel map and genus 1 Riemann surfaces and theta functions, modular transformations. In Lemma~\ref{lem:invariance_weights}, we prove that face-weights are meromorphic functions of the parameters, and in Lemma~\ref{lem:modular_transo_dimers}, we identify the evolution of the model under modular transformations of the parameter $\tau$.

Since the polynomial defining the algebraic curve $\C$ has real coefficients, $\C$ is a \emph{real algebraic curve} with the real structure given by complex conjugation; we also know that Fock's associated dimer model is \emph{real}, see Definition~\ref{defi:dimer_extras}. These two features allow us to prove the following. We refer to the body of the paper for precise statements. 
\begin{itemize}
 \item[$\circ$] Theorem~\ref{thm:real_spectral_curve} of Section~\ref{sec:real_spectral_curve} identifies the values of the modular parameters $\tau$ and of the angle map $\mapalpha$ such that the algebraic curve $\C$ is real: for generic $\tau$, and up to modular transformations, $\tau\in i\RR^+\cup \bigl\{\frac{1}{2}+i\RR^+\,:\Im(\tau)\geq\frac{\sqrt{3}}{2}\bigr\}$, and in each case the angle map $\mapalpha$ belongs to the real locus of the torus $\TT(\tau)$. 
 \item[$\circ$] Theorem~\ref{thm:Fock_real_dimer_model} of Section~\ref{sec:Fock_real_dimer_model} 
 proves that Fock's dimer model is furthermore real if the parameter $t$ belongs to the real locus of the torus $\TT(\tau)$. 
\end{itemize}
Theorems~\ref{thm:real_spectral_curve} and~\ref{thm:Fock_real_dimer_model} extend~\cite[Proposition 13, Theorem 34]{BCdT:genus_1} where we identified parameters of Fock's dimer model on minimal graphs that are gauge equivalent to dimer models with \emph{positive edge-weights}. 

\paragraph{Section~\ref{sec:classification}.} We consider a (frustrated) Ising model defined on an infinite, periodic, isoradial graph $G$, with coupling constants $\eps\Js$, and frustration function $\delta$. We also consider the corresponding Ising-dimer model on the minimal graph $\GQ$ with spectral curve $\C$ of genus 1, and Fock's gauge equivalent dimer model on $\GQ$, with parameters $\tau,\mapalpha,t$. The goal of this section is to establish all necessary conditions satisfied by Fock's dimer model on $\GQ$ in order for it to be gauge equivalent to the Ising-dimer model on $\GQ$, and then prove that these conditions are sufficient in Theorem~\ref{thm:main}. More precisely, 

\begin{enumerate}
 \item[$\circ$] Since the Ising model spectral curve $\C$ arises from a polynomial with real coefficients, Definition~\ref{def:Focks_dimer_model} introduces \textbf{Necessary Condition I}, asking that the parameters $\tau,\mapalpha$ satisfy the conditions of Theorems~\ref{thm:real_spectral_curve} and \ref{thm:Fock_real_dimer_model}, \emph{i.e.}, that generically, and up to modular transformations $\tau\in i\RR^+\cup \bigl\{\frac{1}{2}+i\RR^+\,:\Im(\tau)\geq\frac{\sqrt{3}}{2}\bigr\}$, and that $\mapalpha,t$ belong to the real locus of the torus $\TT(\tau)$.
 \item[$\circ$] Section~\ref{sec:Fock_necessary_condition_II}. By Dubédat~\cite{Dubedat}, we know that the spectral curve $\C$ is furthermore \emph{centrally symmetric}, that is $(z,w)\in\C\leftrightarrow\bigl(\frac{1}{z},\frac{1}{w}\bigr)\in\C$. From the construction of the set of oriented train-tracks $\vec{\T}$ of $\GQ$ from the set of (unoriented) train-tracks $\T$ of $G$ we know that, to every train-track $T$ of $\T$, is assigned a pair $\vec{T},\cev{T}$ of oriented train-tracks of $\vec{\T}$. This implies that angles of the angle map~$\mapalpha$ come in pairs $(\alpha_{\vec{T}},\alpha_{\cev{T}})$. Theorem~\ref{thm:real_symmetric_spectral_curve} identifies the angles maps encoding central symmetry of the curve: for every train-track $T$ of $\T$, 
 \[
 \alpha_{\vec{T}}=\alpha_{\cev{T}}+\rho, 
 \]
 where $\rho=\frac{j}{2}+\frac{\ell}{2}\tau$, and $(j,\ell)\in\{(1,0),(0,1),(1,1)\}$ when $\tau\in i\RR^+$, and $(j,\ell)=\frac{1}{2}$ when $\bigl\{\frac{1}{2}+i\RR^+\,:\Im(\tau)\geq\frac{\sqrt{3}}{2}\bigr\}$.
This additional condition on the angle map $\mapalpha$ is the content of \textbf{Necessary Conditions II} of Definition~\ref{def:necessary_II}; the parameters of Fock's dimer model are now encoded as $\tau,\mapalpha,\rho,t$. In Lemma~\ref{lem:alternate_product_faces} we explicitly compute the corresponding face-weights of Fock's dimer model on $\GQ$, which are of three types: square faces of $\GQ$ corresponding to edges $y=\{e,e^*\}$ of $G$, faces of $\GQ$ corresponding  to faces $f$ of $G$, and those corresponding to vertices $v$ of $G$. Lemma~\ref{lem:duality} describes duality properties of Fock's dimer model on the graph $\GQ$.

The Ising-dimer model on $\GQ$ and Fock's dimer model on $\GQ$  are gauge equivalent if and only if their face-weights are equal for all faces. We then successively study implications of the equality at the three different kinds of faces. In detail we obtain. 

\item[$\circ$] Section~\ref{sec:Necessary_III}: Equality of face-weights at square faces. In Proposition~\ref{prop:maximality_spectral_curve}, we prove that the modular parameter necessarily belongs to $i\RR^+$, thus implying maximality of the spectral curve $\C$, see Corollary~\ref{cor:spectral_curve_max}. We also prove conditions on the parameter $\rho=\frac{j}{2}+\frac{\ell}{2}\tau$ encoding the central symmetry of the curve $\C$: necessarily $(j,\ell)\in\{(1,0),(1,1)\}$. Combining this with information on $\Im(t):=\ell_t/2$, taking values in $\{0,\frac{\tau}{2}\}$, we infer possible values for the angle map: $\mapalpha$ must belong to a set $\Acal_{\rho,\ell_t}$, implying in some cases \emph{bipartitedness restriction} on the graph $G$ and/or its dual graph $G^*$. This is the content of Propositions~\ref{prop:first_parameter_condition} and~\ref{prop:global_condition_anglemap}, and these additional conditions are tracked in \textbf{Necessary Conditions III}, see Definition~\ref{defi:necessary_III}. 

\item[$\circ$] Section~\ref{sec:Necessary_IV}: Equality of \emph{squared} face-weights at primal/dual faces. With these conditions, we prove additional restrictions on the parameter $t$. In Proposition~\ref{prop:t}, we show that either $\rho=\frac{1}{2}$ and $t\in\{0,\frac{1}{2}\}$, or $\rho=\frac{1}{2}+\frac{\tau}{2}$ and $t\in\{\frac{1}{2},\frac{\tau}{2}\}$. This defines \textbf{Necessary Conditions IV}, see Definition~\ref{def:necessary_IV}. At this point, face-weights of Fock's dimer model on $\GQ$ have simple expressions in terms of the Jacobi elliptic functions; this is the subject of Lemma~\ref{lem:face_weights_again}. The modular parameter $\tau\in i\RR^+$ is equivalent to considering the elliptic modulus $k^2\in[0,1)$. 

\item[$\circ$] Section~\ref{sec:Necessary_V}: Equality of face-weights at primal/dual faces. With these last conditions, we finally exclude the case $\rho=\frac{1}{2}+\frac{\tau}{2}$ and $t=\frac{\tau}{2}$, and establish an additional condition on the angle map $\mapalpha$. This is the content of \textbf{Necessary Conditions V}, see Definition~\ref{def:necessary_V}. 

\item[$\circ$] Section~\ref{sec:classification_thm}. Statement and proof of our main classification result, Theorem~\ref{thm:main}. Definition and discussion of the notion of algebraic phase transition, and algebraic critical absolute value coupling constants.
\end{enumerate}

\paragraph{Section~\ref{sec:examples}.} In Section~\ref{sec:monotone_angle_maps}, we provide examples of models satisfying the conditions of Theorem~\ref{thm:main}, using the notion of monotone angle maps studied in~\cite{BCdT:immersions}. Section~\ref{sec:triangular_lattice} is dedicated to the famous frustrated Ising model on the triangular lattice; a full classification is proved in Theorem~\ref{thm:antiferro:triangular:general}. 
The intriguing example of the frustrated isotropic model on the square lattice is the subject of Section~\ref{subsec:square_lattice}.

\paragraph{Acknowledgments.} We would like to thank Cédric Boutillier, Alexander Glazman, and Richard Kenyon for interesting discussions on the subject. We acknowledge having used AI to find relevant references on parameterization of real algebraic elliptic curves. 

\section{Background}\label{sec:background}

This section aims at giving the necessary background. Section~\ref{sec:dimer_model} is dedicated to the dimer model. After having defined the model, we introduce Kasteleyn matrices~\cite{Kast61,TemperleyFisher,Kasteleyn} and, when the underlying graph is bipartite, the notion of gauge equivalence~\cite{KOS}. When the underlying graph is $\ZZ^2$-periodic, we recall the definition of the characteristic polynomial, the spectral curve and its amoeba~\cite{KOS}. In Section~\ref{sec:frustrated_Ising_model} we define the (frustrated) Ising model, also leading to a notion of gauge equivalence related to the frustration function~\cite{Toulouse}. In Section~\ref{sec:Ising_dimer model_representation} we describe two useful dimer model representations of the (frustrated) Ising model: Fisher's correspondence~\cite{Fisher}, and the bipartite dimer representation~\cite{Dubedat,BoutillierdeTiliere:XORloops}. Note that in order to read Section~\ref{sec:real_dimers}, one only needs the content of Section~\ref{sec:dimer_model}; Sections~\ref{sec:frustrated_Ising_model} and~\ref{sec:Ising_dimer model_representation} give the necessary background for Section~\ref{sec:classification}.

\subsection{The dimer model}\label{sec:dimer_model}

\subsubsection{Definition and founding tools}\label{subsec:dimer_model_defi}

The \emph{dimer model} represents the adsorption of diatomic molecules on the surface of a crystal~\cite{FR}. It is defined as follows. 

\paragraph{Definition.} Consider a planar, simple graph $G=(V,E)$. A \emph{dimer configuration} of $G$ is a subset of edges $\Ms$ such that each vertex of $G$ is incident to exactly one edge of $\Ms$; let $\M(G)$ denote the set of dimer configurations of $G$. Assume a complex weight function $\nu$ is assigned to edges of $G$. When the graph $G$ is moreover finite, the \emph{dimer Boltzmann measure}, denoted by $\PPdimer$, is a (complex) measure on $\M(G)$ defined by:
\[
\forall\,\Ms\in\M(G),\quad \PPdimer(\Ms)=\frac{\prod_{e\in E} \nu_e}{\Zdimer(G,\nu)},
\]
where $\Zdimer(G,\nu)=\sum_{\Ms\in\M(G)}\prod_{e\in \Ms}\nu_e$ is the normalizing constant known as the \emph{partition function}. Note that this measure has total mass 1, but configurations are allowed to have complex weights for the moment. 

\paragraph{Kasteleyn matrix.}
One of the main tools used to study the dimer model is the \emph{Kasteleyn matrix}, denoted by $K$, and defined as follows~\cite{Kast61,TemperleyFisher,Kasteleyn}. It has rows and columns indexed by vertices of $V$, and non-zero coefficients corresponding to edges of $E$, given by 
\[
\forall\, vv'\in E,\quad K(v,v')=\eta_{(v,v')}\nu_{vv'},
\]
where $\eta=(\eta_{(v,v')})_{vv'\in E}$ are the \emph{Kasteleyn signs}: for every edge $vv'$ of $E$, $\eta_{(v,v')}\in\{-1,1\}$, $\eta_{(v,v')}=-\eta_{(v',v)}$; moreover, for every inner face $f$ of $G$ of degree $|f|$, if we denote by $v_1,\dots,v_n$ its boundary vertices in counterclockwise order, the Kasteleyn signs $\eta$ satisfies the \emph{Kasteleyn condition}:
\begin{equation}\label{equ:Kast_condition_I}
\prod_{j=1}^{|f|} \eta_{(v_j,v_{j+1})}=(-1)^{|f|-1}. 
\end{equation}
By~\cite{Kasteleyn}, see also~\cite{CimRes07}, Kasteleyn signs $\eta$ exist when the graph $G$ is finite or infinite, or when it is embedded in the torus and has an even number of vertices. 
Note that the Kasteleyn matrix $K$ is skew-symmetric. 

If the graph is moreover bipartite, we use the specific notation 
$\Gs=(\Vs,\Es)$. Its vertex set is naturally split into white and black: $\Vs=\Ws\sqcup\Bs$. Following~\cite{Kuperberg}, edges can more generally be assigned modulus one complex vectors $\phi=(\phi_{\ws\bs})_{\ws\bs\in\Es}$ instead of signs $\eta$, referred to as \emph{Kasteleyn-Kuperberg phases}, and the Kasteleyn matrix can be restricted to white, resp. black vertices~\cite{Percus}. More precisely, the \emph{(bipartite) Kasteleyn matrix}, denoted by $\Ks$, has rows indexed by white vertices, columns by black ones, and non-zero coefficients corresponding to edges of $\Es$, given by
\begin{equation}\label{equ:Kuperberg}
\forall\,\ws\bs\in\Es,\quad \Ks(\ws,\bs)=\phi_{\ws\bs}\nu_{\ws\bs},
\end{equation}
where, for every edge $\ws\bs$ of $\Es$, $\phi_{\ws\bs}\in\CC$, $|\phi_{\ws\bs}|=1$; moreover, for every inner face $\fs$ of $\Gs$ of degree $|\fs|$ (where $|\fs|$ is even since $\Gs$ is bipartite), if we denote by $\ws_1,\bs_1,\dots,\ws_{|\fs|/2},\bs_{|\fs|/2}$ its vertices in counterclockwise order, the Kasteleyn-Kuperberg phases $\phi$ satisfy the \emph{Kasteleyn condition}:
\begin{equation}\label{equ:Kast_condition_II}
\prod_{j=1}^{|\fs|/2} \frac{\phi_{\ws_j\bs_{j}}}{\phi_{\ws_{j}\bs_{j-1}}}=(-1)^{{|\fs|/2}+1}. 
\end{equation}

We now turn to the notion of \emph{gauge equivalent dimer models}, which holds when the graph $\Gs$ is bipartite. In the context of the dimer model, this has been introduced and studied by Kenyon, Okounkov and Sheffield~\cite{KOS}. 

\begin{defi}\label{defi:dimer_extras}\leavevmode
\begin{enumerate}
\item[$\bullet$] Let $\tilde{\Ks}$ be a complex-valued matrix whose rows are indexed by white vertices, columns by black ones, and whose non-zero coefficients correspond to edges of $\Gs$. For every face $\fs$ of degree $|\fs|$ of $\Gs$, using the same notation as above, the \emph{alternate product around the face $\fs$}, simply referred to as the \emph{face-weight} at $\fs$, is denoted by $\W(\fs)$, and defined by
\begin{equation}\label{equ:face_weight_gen}
\W(\fs)=\prod_{j=1}^{|\fs|/2} \frac{\tilde{\Ks}(\ws_j,\bs_j)}{\tilde{\Ks}(\ws_{j},\bs_{j-1})}. 
\end{equation}
\item[$\bullet$] Suppose that edges of $\Gs$ are assigned Kasteleyn-Kuperberg phases $\phi$. Then, using Equation~\eqref{equ:Kuperberg}, the matrix $\tilde{\Ks}$ defines a weight function $\tilde{\nu}$, and thus a dimer model. We speak of the \emph{dimer model arising from the Kasteleyn matrix $\tilde{\Ks}$}.
\item[$\bullet$] The dimer models on the graph $\Gs$ arising from the Kasteleyn matrices $\tilde{\Ks}$ and $\Ks$ are said to be \emph{gauge equivalent}~\cite{KOS} if, for every face $\fs$ of $\Gs$,
\begin{equation*}
\W(\fs)=(-1)^{|\fs|/2-1}\prod_{j=1}^{|\fs|/2} \frac{\nu_{\ws_j\bs_j}}{\nu_{\ws_{j}\bs_{j-1}}}.
\end{equation*}
\item[$\bullet$] The dimer model on the graph $\Gs$ arising from the Kasteleyn matrix $\tilde{\Ks}$ is said to be~\emph{real} if it is gauge equivalent to a dimer model with real edge-weights, in particular, for every face $\fs$ of $\Gs$, the face-weight $\W(\fs)$ of Equation~\eqref{equ:face_weight_gen} is real. Then, if $\W(\fs)\in (-1)^{|\fs|/2-1}\RR^+$, it is gauge equivalent to \emph{positive} edge-weights around the face $\fs$, else if $\W(\fs)\in (-1)^{|\fs|/2}\RR^+$, it is gauge equivalent to \emph{negative} edge-weights around the face $\fs$.
\end{enumerate}
\end{defi}

We refer to the paper~\cite{KOS} for the proof of the next lemma.
\begin{lem}[\cite{KOS}]
Consider a finite, planar, simple bipartite graph $\Gs$, and two gauge equivalent dimer models on $\Gs$ arising from Kasteleyn matrices $\Ks$ and $\tilde{\Ks}$. If the two models are gauge equivalent then the partition functions $\Zdimer(\Gs,\nu)$ and  $\Zdimer(\Gs,\tilde{\nu})$ are equal up to an explicit constant, and the dimer Boltzmann measures are the same. 
\end{lem}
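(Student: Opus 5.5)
The plan is to first reduce gauge equivalence on faces to gauge equivalence on edges, and then read off both claims from Kasteleyn's theorem. Let $\Ks$ and $\tilde{\Ks}$ be the Kasteleyn matrices, built with phases $\phi$ and $\tilde\phi$, and let $\nu,\tilde\nu$ be the corresponding weights. Set $\mu_{\ws\bs}=\tilde{\Ks}(\ws,\bs)/\Ks(\ws,\bs)$ on every edge. By the Kasteleyn condition~\eqref{equ:Kast_condition_II}, the alternate product of $\Ks$ around an inner face $\fs$ equals $(-1)^{|\fs|/2-1}\prod_j \nu_{\ws_j\bs_j}/\nu_{\ws_j\bs_{j-1}}$. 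The definition of gauge equivalence then says exactly that the alternate product of $\mu$ around every inner face equals $1$. In other words, $\mu$, viewed as a $\CC^*$-valued $1$-form on $\Gs$ (oriented from white to black, inverted on the reverse orientation), is closed around each inner face.

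Since $\Gs$ is finite, planar and connected (otherwise we argue component by component), the inner faces generate the cycle space. Hence $\mu$ is exact: there exist functions $a:\Ws\to\CC^*$ and $b:\Bs\to\CC^*$ with $\mu_{\ws\bs}=a(\ws)b(\bs)$. I would obtain them by fixing a spanning tree, defining $a$ and $b$ along tree paths, and checking consistency on non-tree edges via the face condition. This step is the main point, but it is standard discrete cohomology.

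This gives $\tilde{\Ks}=D_a\Ks D_b$ with $D_a$ and $D_b$ diagonal. By Kasteleyn's theorem (in Kuperberg's form for phases), $|\det \Ks|=\Zdimer(\Gs,\nu)$; more precisely $\det\Ks=c\,\Zdimer(\Gs,\nu)$ for a unimodular constant $c$ depending only on $\phi$, and similarly for $\tilde\Ks$ with a constant $\tilde c$. Therefore
\[
\Zdimer(\Gs,\tilde\nu)=\frac{c}{\tilde c}\prod_{\ws} a(\ws)\prod_{\bs} b(\bs)\,\Zdimer(\Gs,\nu),
\]
which is an explicit constant times $\Zdimer(\Gs,\nu)$.

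For the measures, take a dimer configuration $\Ms$. Its term in $\det\tilde\Ks$ is the term of $\det\Ks$ multiplied by $\prod_\ws a(\ws)\prod_\bs b(\bs)$, because each vertex is covered exactly once. Comparing the phases as well, we obtain $\prod_{e\in\Ms}\tilde\nu_e = C\prod_{e\in\Ms}\nu_e$ for a constant $C$ independent of $\Ms$. To see this, note that the ratio of any two configurations' weights is an alternate product along the cycles of the symmetric difference $\Ms\triangle\Ms'$. Such a product is determined by the face-weights of the enclosed faces, and these agree for the two models. Equivalently, one can use the Kasteleyn-sign cycle identity for both phase choices. Normalizing then gives $\PPdimer_{\tilde\nu}=\PPdimer_\nu$.
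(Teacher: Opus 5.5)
Your proof is correct and is essentially the standard argument of \cite{KOS} that the paper cites without reproducing: equal face-weights make $\tilde{\Ks}/\Ks$ a vertex gauge $a(\ws)b(\bs)$ by the planar cohomology argument, and a vertex gauge multiplies the weight of every dimer configuration by the same nonzero constant, which gives both claims. The one slip is the aside $|\det\Ks|=\Zdimer(\Gs,\nu)$, which fails for complex or signed weights; the precise form you actually use, $\det\Ks=c\,\Zdimer(\Gs,\nu)$ with $c$ unimodular and independent of the configuration, is correct.
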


\subsubsection{The dimer model in the $\ZZ^2$-periodic case}\label{subsec:periodic_dimer}

The graph $G$ is said to be \emph{$\ZZ^2$-periodic}, or simply \emph{periodic}, if it is invariant under the action of a 2-dimensional lattice referred to as $\ZZ^2$. The graph has a natural toroidal exhaustion $(G_{n})_{n\geq 1}$, where $G_n=G/n\ZZ^2$; the graph $G_1$ is known as the \emph{fundamental domain}. We use similar notation for the toroidal exhaustion of the dual graph $G^*$. 
From now on, we suppose that $G_1$ has an even number of vertices, this being necessary for $G_1$ to have a dimer configuration. 

Suppose that the graph $\Gs=(\Vs,\Es)$ is moreover bipartite, then it is said to be \emph{$\ZZ^2$-periodic}, or simply \emph{periodic}, if furthermore the bipartite coloring is invariant under the action of $\ZZ^2$. 

When the dimer weight function is positive, the key object to understand the model on periodic graphs is the \emph{characteristic polynomial} containing information about the free energy, Gibbs measures and the phase diagram (in the bipartite case), see for example~\cite{CKP,KOS}. This object is also central to this paper where more general weight functions are considered, and we now recall its definition. 

\paragraph{Characteristic polynomials.} Fix a face $f$ of $G$, and draw two simple paths in the plane, denoted by $\gamma_x$, resp. $\gamma_y$, joining $f$ to $f+(1,0)$, resp. $f$ to $f+(0,1)$, intersecting only at $f$. They project onto two simple closed loops in $G_1^*$ also denoted by $\gamma_x,\gamma_y$. Their homology classes $[\gamma_x],[\gamma_y]$ form a basis of the first homology group of the torus $H_1(\TT,\ZZ)$. The construction of the characteristic polynomial differs slightly depending on whether $G$ is bipartite or not. 

Suppose first that $G$ is not bipartite. Consider a dimer model on $G$ with (complex) periodic weight function $\nu$, periodic Kasteleyn signs $\eta$ assigned to edges,
and let $K_1$ be the Kasteleyn matrix of the graph $G_1$. Then, multiply by $w^{-1}$, resp. $w$ coefficients of edges crossing $\gamma_x$ from left-to-right, resp. from right-to-left. Similarly, multiply by $z$, resp. $z^{-1}$ coefficients of edges crossing $\gamma_y$ from left-to-right, resp. from right-to-left. This gives a modified weight Kasteleyn matrix $K_1(z,w)$ whose coefficients are Laurent polynomials in the variables $z,w\in \CC^*$. The \emph{characteristic polynomial}~\cite{KOS}, denoted by $P(z,w)$, is defined as
\begin{equation}\label{equ:charact_non_bipartite}
P(z,w)=\det(K_1(z,w)).
\end{equation}

\begin{rem}
Since the matrix $K_1$ is skew-symmetric, the matrix $K_1(z,w)$ has the symmetry $K_1(z,w)^t=-K_1\bigl(\frac{1}{z},\frac{1}{w}\bigr)$, implying \emph{central symmetry of the characteristic polynomial}:
\begin{equation}\label{equ:central_symmetry}
P(z,w)=(-1)^{|V_1|}P\Bigl(\frac{1}{z},\frac{1}{w}\Bigr)=P\Bigl(\frac{1}{z},\frac{1}{w}\Bigr),
\end{equation}
because the number of vertices $|V_1|$ of the fundamental domain $G_1$ is even.
\end{rem}

Suppose now that the graph $\Gs$ is bipartite. Consider a dimer model on $\Gs$ with (complex) periodic weight function $\nu$, periodic Kasteleyn-Kuperberg phases $\phi$ assigned to edges, and let $\Ks_1$ be the Kasteleyn matrix of the graph $\Gs_1$. Then, multiply by $w^{-1}$, resp. $w$ coefficients of edges crossing $\gamma_x$ having a white vertex on the left, resp. right. Similarly, multiply by $z$, resp. $z^{-1}$ coefficients of edges crossing $\gamma_y$ having a white vertex on the left, resp. right. This gives a modified weight Kasteleyn matrix $\Ks_1(z,w)$ and the \emph{characteristic polynomial}~\cite{KOS}, denoted by $\Ps(z,w)$, is defined as
\begin{equation}\label{equ:charact_bipartite}
\Ps(z,w)=\det(\Ks_1(z,w)). 
\end{equation}

Consider two gauge equivalent, periodic dimer models arising from Kasteleyn matrices $\Ks_1$ and $\tilde{\Ks}_1$ then, by~\cite{KOS}, their respective characteristic polynomials $\Ps(z,w)$, $\tilde{\Ps}(z,w)$, are equal up to a constant and up to a global scaling $(z,w)\leftrightarrow(\lambda z,\mu w)\in(\CC^*)^2$, namely there exists $(\lambda,\mu)\in(\CC^*)^2$, such that
\begin{equation}\label{equ:gauge_equivalence_polynomials}
\Ps(z,w)=c\, \tilde{\Ps}(\lambda z,\mu w).
\end{equation}

\paragraph{Spectral curves, amoebas.} Consider a Laurent polynomial $P(z,w)$ in the variables $(z,w)\in(\CC^*)^2$. Then, the \emph{spectral curve} $\C$ associated to $P$ is defined to be the zero locus of the polynomial $P$, namely 
\[
\C=\{(z,w)\in (\CC^*)^2:\ P(z,w)=0\}. 
\]
When the spectral curve arises from the characteristic polynomial of a dimer model, one speaks of the~\emph{spectral curve of this dimer model}. Recall that the definition of the polynomial $P$, resp. $\Ps$, used to define the spectral curve $\C$ differs when the underlying graph is not bipartite, resp. is bipartite.

The \emph{real locus} of the curve $\C$ consists of the real points of $\C$, that is of the set of points 
\[
\{(x,y)\in(\RR^*)^2|\ P(x,y)=0\}. 
\]
The \emph{amoeba} of the curve $\C$, a notion introduced in~\cite{Gelfand}, denoted by $\A$, is the image of the curve $\C$ trough the map $(z,w)\mapsto (\log|z|,\log|w|)$. 

\begin{rem}\label{rem:spectral_curve}\leavevmode
\begin{enumerate}
\item As a consequence of Equation~\eqref{equ:central_symmetry}, the spectral curve $\C$ of a non-bipartite dimer model is \emph{centrally symmetric}, meaning that $(z,w)\in\C$ if and only if $(\frac{1}{z},\frac{1}{w})\in\C$. 
\item As a consequence of Equation~\eqref{equ:gauge_equivalence_polynomials}, gauge equivalent bipartite dimer models have the same spectral curve up to a global scaling $(z,w)\leftrightarrow(\lambda z,\mu w)$; their amoebas are translates of each other by the vector $(\log|\lambda|,\log|\mu|)$. 
\end{enumerate}
\end{rem}

\subsection{The (frustrated) Ising model}\label{sec:frustrated_Ising_model}

Let us recall the definition of the \emph{(frustrated) Ising model} given in the introduction. 
Consider a planar, simple graph $G=(V,E)$ such that edges are assigned \emph{real coupling constants} $\eps\Js=(\eps_e\Js_e)_{e\in E}$ with $\eps_e \in \{\pm 1\}$ and $\Js_e \in \RR^+$. 
A \emph{spin configuration}, denoted by $\sigma$, is an element of $\{-1,1\}^V$. When the graph is moreover finite, the \emph{Ising Boltzmann measure with free boundary conditions}, denoted by $\PPising$, is the probability measure on the set of spin configurations defined by
\begin{equation*}
\forall\,\sigma\in\{-1,1\}^V,\quad \PPising(\sigma)=\frac{1}{\Zising(G,\eps\Js)}
e^{\sum_{vv'\in E}\eps_e\Js_e\sigma_v\sigma_{v'}},
\end{equation*}
where $\Zising(G,\eps\Js)$ is the normalizing constant known as the \emph{partition function}. From now on, we will simply refer to this model as the \emph{Ising model}, omitting the (frustrated) term. As noted by Toulouse~\cite{Toulouse}, the real effect of negative coupling constants on the model is measured by the \emph{frustration function} at faces, a terminology due to Aspen. More precisely, we have the following.

\begin{defi}\label{defi:frustrated_ising}\leavevmode
\begin{enumerate}
\item[$\bullet$] Consider an Ising model on the graph $G$ with real coupling constants $\eps\Js$. Then, for every face $f$ of $G$, the \emph{frustration function} at the face $f$ is $\delta_f := \prod_{e \in \partial f}\eps_e$. 
\item[$\bullet$] Two Ising models on the graph $G$ with respective coupling constants $\eps\Js$ and $\tilde{\eps}\tilde{\Js}$ are said to be \emph{gauge equivalent} if, for every edge $e$ of $G$, $\Js_e = \tilde{\Js}_e$ and if the frustration functions are equal at all faces. A gauge equivalence class of models is thus characterized by the set of positive coupling constants $\Js$ together with a set of signs $(\delta_f)_{f\in F}\in\{-1,1\}^{F}$, where $F$ denotes the set of faces of $G$. Note that this notion is reminiscent of the notion of gauge equivalent bipartite dimer models.
\item[$\bullet$] An (equivalence class of) Ising model is \emph{frustrated} at a face $f$ if $\delta_f = +1$; otherwise it is said to be \emph{non-frustrated} at the face $f$. We will say that the model is \emph{frustrated}, resp. \emph{non-frustrated}, if it so is at every face.
\end{enumerate}
\end{defi}
\begin{rem}\leavevmode
\begin{enumerate} 
\item Note that the notion of \emph{frustration} differs from the notion of \emph{anti-ferromagnetism}, referring to negativity of the coupling constants. For example, a model with all negative coupling constant on a bipartite graph is gauge equivalent to a non-frustrated model. 
\item In the whole of this paper, the Ising model has coupling constants $\eps\Js$ that are allowed to be negative. It then belongs to an equivalence class of models, which might be locally/globally frustrated or not.  
\end{enumerate}
\end{rem}

The definition of gauge equivalence is motivated by the following. We provide a proof of this standard argument for sake of completeness, see for example~\cite[Section 3.2.]{Cugliandolo}.

\begin{lem}\label{lem:gauge_equiv_ising}
Consider a finite, planar, simple graph $G$ and two Ising models on $G$ with coupling constants $\eps\Js$ and $\tilde{\eps}\tilde{\Js}$. If the two models are gauge equivalent, then there exists a set of fixed signs $(\xi_v)_{v \in V}$ such that the bijection
$$
  (\sigma_v)_{v \in V} \in \{-1,1\}^V \mapsto (\tilde{\sigma}_v)_{v \in V} = (\xi_v \sigma_v)_{v \in V} \in \{-1,1\}^V,
$$
is an energy preserving and probability-preserving bijection between the two Ising models. 
\end{lem}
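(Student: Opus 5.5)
The plan is to build the signs $(\xi_v)$ from the discrepancy between the two edge-sign functions, and then check that flipping spins by $\xi$ preserves the energy of every configuration. Since $\Js=\tilde\Js$, the two models can only differ through the signs. Set $\chi_e:=\eps_e\tilde\eps_e\in\{-1,1\}$ for every edge $e$. Because the frustration functions agree, for every face $f$ of $G$ we get
\[
\prod_{e\in\partial f}\chi_e=\delta_f\tilde\delta_f=1.
\]
So $\chi$ is a $\{\pm1\}$-valued (that is, $\ZZ/2\ZZ$-valued) $1$-cochain whose holonomy around every face boundary is trivial.

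The key step is to show that $\chi$ is a coboundary: we want $(\xi_v)_{v\in V}$ with $\chi_{vv'}=\xi_v\xi_{v'}$ for every edge $vv'$. We may treat each connected component of $G$ separately. In a component, fix a root vertex $v_0$ and set $\xi_{v_0}=1$. For any other vertex $v$, choose a path $\gamma$ from $v_0$ to $v$ and define $\xi_v:=\prod_{e\in\gamma}\chi_e$.

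To see that $\xi_v$ does not depend on $\gamma$, we need $\prod_{e\in C}\chi_e=1$ for every closed cycle $C$. Since $G$ is finite and planar, the cycle space of $G$ over $\ZZ/2\ZZ$ is generated by the boundaries of the bounded faces. This is the one point where planarity is used: $H_1$ of the plane vanishes, so every cycle is a mod-2 sum of the bounded faces it encloses. Hence every cycle product is a product of face products, and so equals $1$. It follows that $\xi$ is well defined, and by construction $\xi_v\xi_{v'}=\chi_{vv'}$ on every edge.

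One caveat concerns the outer face. Its frustration value is determined by the inner ones, so only the inner faces need the hypothesis; since the statement asks for equality at all faces, this is automatic. I expect this cycle-space step to be the only point needing care. The rest is a direct check.

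With this choice of $\xi$, the map $\sigma\mapsto\tilde\sigma=(\xi_v\sigma_v)_v$ is an involution of $\{-1,1\}^V$, hence a bijection. For each edge $e=vv'$,
\[
\tilde\eps_e\Js_e\tilde\sigma_v\tilde\sigma_{v'}=\tilde\eps_e\chi_e\Js_e\sigma_v\sigma_{v'}=\eps_e\Js_e\sigma_v\sigma_{v'},
\]
using $\tilde\eps_e\chi_e=\eps_e$. Summing over all edges shows that the energy of $\tilde\sigma$ under the model $\tilde\eps\tilde\Js$ equals the energy of $\sigma$ under the model $\eps\Js$.

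Finally, summing the Boltzmann weights over the bijection gives $\Zising(G,\eps\Js)=\Zising(G,\tilde\eps\tilde\Js)$. Combined with the equality of energies, this yields $\PPising(\sigma)=\tilde{\PPising}(\tilde\sigma)$, so the bijection preserves probabilities as well as energies.
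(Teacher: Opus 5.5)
Your proposal is correct and follows the same route as the paper: you define $\xi_v$ as the product of $\eps_e\tilde{\eps}_e$ along a path from a root vertex, observe that this product is $1$ around every loop, and check energy preservation edge by edge. The only difference is that you justify why loop products are trivial (the mod-2 cycle space of a finite planar graph is generated by the bounded face boundaries), a step the paper simply attributes to the definition of gauge equivalence.
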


\begin{proof}
It is always possible to find $\xi_v$ such that $\tilde{\eps}_{vv'} = \xi_v \xi_{v'}\eps_{vv'}$. Indeed, choose a vertex $v_0$ and fix $\xi_{v_0} = 1$. Then, for every path 
$\gamma = v_0,v_1, \dots, v_n$, define 
$$
  \xi_v = \prod_{i=1}^n \eps_e\tilde{\eps}_e.
$$
This is well-defined because, by definition of gauge equivalence, this product is one when $\gamma$ is a loop. Now observe that the bijection of the lemma is energy preserving:
$$
  \sum_{vv'\in E}\tilde{\eps_e}\tilde{\Js}_e \tilde{\sigma}_v \tilde{\sigma}_{v'} = \sum_{vv'\in E} \xi_v \xi_{v'} \eps_e\Js_e \tilde{\sigma}_v \tilde{\sigma}_{v'} = \sum_{vv'} \eps_e\Js_e \sigma_v \sigma_{v'}. 
$$
As a consequence, it is also probability preserving.
\end{proof} 

The following proposition proves that for each gauge equivalence class (with an additional mild assumption in the toroidal case) there indeed exists an Ising model representing this class. 

\begin{lem}\label{lem:reconstructing_Ising_model}
Suppose that the graph $G$ is simple, finite, and that it is either embedded in the plane or embedded in the torus. 
If the graph $G$ is planar then, for every gauge equivalence class, i.e. for every set of positive coupling constants $\Js$ and for every frustration function $(\delta_f)_{f \in F} \in \{-1,1\}^F$, there exists edge-signs $(\eps_e)_{e \in E}$ such that the Ising model with coupling constants $\eps\Js$ lies in this equivalence class. If the graph $G$ is toroidal, suppose further that $\prod_{f\in F} \delta_f =1$, then the same conclusion holds.
\end{lem}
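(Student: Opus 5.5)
The claim is a statement about cohomology over $\mathbb{F}_2=\{\pm1\}$ (written multiplicatively), and the proof would go through the dual graph. The unknowns are the signs $\eps=(\eps_e)_{e\in E}$; the absolute values $\Js$ are simply kept. So we only need to realize the prescribed signs $\delta_f$ as face products $\prod_{e\in\partial f}\eps_e$, i.e., solve the linear system $\partial^*\eps=\delta$ over $\ZZ/2\ZZ$, where $\partial^*:\{\pm1\}^E\to\{\pm1\}^F$ is the coboundary map sending edges to the faces they bound. The question is then the image of this map.

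\emph{Planar case.} Here $F$ is the set of bounded (inner) faces, since the frustration function is only defined at those.

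I would build $\eps$ explicitly. Take a spanning tree $T^*$ of the dual graph $G^*$, rooted at the outer face $f_\infty$. Process the inner faces from the leaves of $T^*$ toward the root. For each inner face $f$, let $e_f$ be the primal edge dual to the edge of $T^*$ joining $f$ to its parent. Set $\eps_e=1$ for every edge not of the form $e_f$.

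Then, going leaves first, define $\eps_{e_f}$ so that the product around $f$ equals $\delta_f$. This is consistent because every other edge on $\partial f$ is either fixed to $1$ or of the form $e_{f'}$ for a child $f'$ of $f$, and those were already determined. Each edge $e_f$ touches exactly $f$ and its parent, and the outer face imposes no constraint, so every inner face gets the right product. Equivalently, one can phrase this as an induction on the number of inner faces: remove a face adjacent to the outer face.

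\emph{Toroidal case.} For a graph cellularly embedded in the torus, each edge borders exactly two faces (counted with multiplicity). Hence
\[
\prod_{f\in F}\prod_{e\in\partial f}\eps_e=\prod_{e}\eps_e^2=1,
\]
so the condition $\prod_f\delta_f=1$ is necessary.

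For sufficiency, I would run the same tree construction with $T^*$ a spanning tree of the (connected) dual graph $G^*$ on the torus, now rooted at an arbitrary face $f_0$. Every face other than $f_0$ gets its prescribed value exactly as before. For $f_0$, the computation above gives $\prod_f(\text{product around }f)=1$, so the value at $f_0$ is forced to equal $\prod_{f\neq f_0}\delta_f=\delta_{f_0}$, using the hypothesis. Edges dual to non-tree edges of $G^*$ (these include the homology-carrying ones) stay free; they are simply set to $1$.

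\emph{Expected obstacle.} The main care point is not algebraic but topological. One must ensure that each edge borders two distinct faces or is counted twice (a bridge or loop in the dual), and that the leaf-to-root ordering is well defined when $\partial f$ contains repeated edges. Since $G$ is simple, a repeated edge on $\partial f$ contributes $\eps_e^2=1$, so it drops out of the constraint and causes no harm. The rest is routine, and by Lemma~\ref{lem:gauge_equiv_ising} the resulting model lies in the required gauge class.
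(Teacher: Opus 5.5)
Your proposal is correct and follows essentially the same route as the paper: a spanning tree of $G^*$ rooted at the outer face (resp.\ at an arbitrary face $f_0$ on the torus), with the signs of edges dual to tree edges fixed leaf by leaf and all other edges set to $+1$; in the toroidal case, the value at $f_0$ is forced by $\prod_{f}\delta_f=1$ because each edge is counted twice. Two minor remarks: it is not the simplicity of $G$ that makes repeated boundary edges harmless (simple graphs can have bridges), but the fact that such an edge is dual to a loop of $G^*$, so it never lies in the tree and contributes $\eps_e^2=1$; and the closing appeal to Lemma~\ref{lem:gauge_equiv_ising} is unnecessary, since membership in the equivalence class holds by definition.
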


\begin{proof}
The proof is an explicit algorithm constructing $\eps=(\eps_e)_{e \in E}\in\{-1,1\}^E$ such that the Ising model with coupling constants $\eps\Js$ lies in the equivalence class. It is inspired by the algorithm of~\cite[Section 3.3]{CimRes07} used to find Kasteleyn phases. Let $T$ be a spanning tree of the dual graph $G^*$, rooted at the outer face when $G$ is planar, or rooted at any face $f_0$ when $G$ is embedded in the torus. For every edge not intersecting an edge of $T$, choose an arbitrary sign, say $\eps_e = 1$. At each step, remove a degree one vertex $f_i$ of the tree, that is a leaf, and the corresponding edge $e_i^*$ which is the dual of a unique edge $e_i \in E$. Choose the sign $\eps_{e_i}$ such that $\prod_{e \in \partial f_i}\eps_e = \delta_f$. Note that all edges around $f_i$ except $e_i$ have been given a sign in an earlier step of the algorithm, so there is a unique possible choice for $\eps_{e_i}$. In the planar case, when all vertices of the spanning tree have been removed, we obtain $\eps \in \{-1,1\}^E$ having the desired property. In the torus case, we do not have any freedom left for edges around the face $f_0$. Furthermore, we have $\prod_{f\in F}\delta_f=\prod_{f\in F}\prod_{e\in\partial f}\eps_e=\prod_{e\in E}(\eps_e)^2=1$, where in the third equality we used that each edge occurs twice exactly in the product. This implies the necessary condition $\prod_{f\in F}\delta_f=1$. This condition is also sufficient because, when performing the algorithm, all edges around the face $f_0$ have been assigned a sign, and the sign of the face $f_0$ is thus given by 
\[
\delta_{f_0}=\prod_{e\in\partial f_0}\eps_e=\frac{\prod_{f\in F}\prod_{e\in f}\eps_e}{\prod_{f\in F\setminus\{f_0\}}\prod_{e\in f}\eps_e}=\frac{1}{\prod_{f\in F\setminus\{f_0\}}\delta_f},
\]
which is true by assumption. 
\end{proof}
\begin{rem}
Note that the condition $\prod_{f\in F} \delta_f=1$ occurring in the toroidal case is not really restrictive. Indeed, if it is not satisfied, one can double the graph either in the horizontal or the vertical direction, and obtain a new toroidal graph where this condition now holds.
We will see an example of this situation in the case of the square lattice in Section~\ref{subsec:square_lattice}.
\end{rem}

\subsection{Dimer model representations of the Ising model}\label{sec:Ising_dimer model_representation}

The Ising model on $G$ with coupling constants $\eps\Js$ is known to be related to the dimer model on two decorated graphs obtained from $G$: the Fisher graph $\GF$~\cite{Fisher}, and the bipartite graph $\GQ$~\cite{Dubedat,WilsonXOR,BoutillierdeTiliere:XORloops}, see Figure~\ref{fig:Fisher_Dub}.
These relations are more often used when the coupling constants are positive, \emph{i.e.} $\eps=1$, but actually hold for \emph{real} coupling constants, \emph{i.e.} $\eps \in \{\pm 1\}^E$. 

\begin{figure}[h]
  \centering
  \begin{overpic}[width=\linewidth]{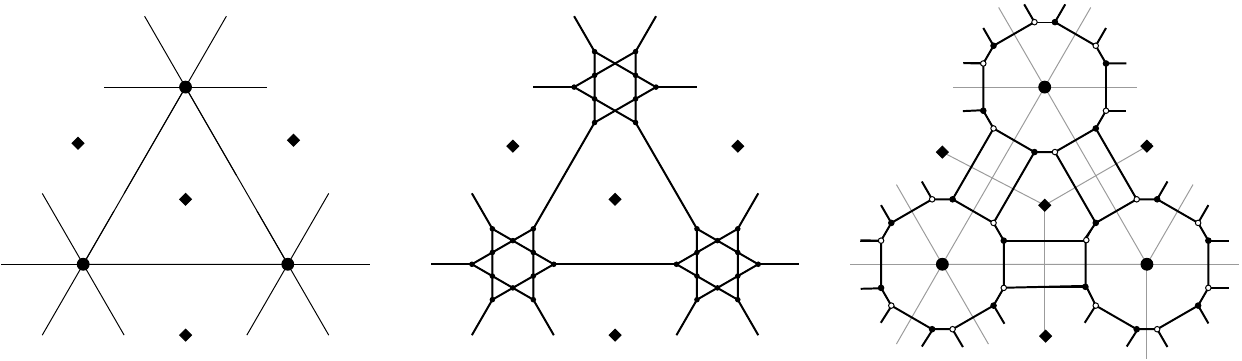} 
   \put(13,5.5){\scriptsize $\eps_e\Js_e$} 
   \put(45,5.5){\scriptsize $\tanh(\eps_e\Js_e)$}
   \put(80,4.5){\tiny $\tanh(2\eps_e\Js_e)$}
   \put(77,7){\tiny $\cosh(2\Js_e)$}
  \end{overpic}
\caption{Left: $G$ is the triangular lattice, and the dual $G^*$ is the hexagonal lattice. Center: corresponding Fisher graph $\GF$. Right: corresponding bipartite graph $\GQ$.}  \label{fig:Fisher_Dub}  
\end{figure}

\subsubsection{Fisher's correspondence} \label{sec:Fisher_correspondence}

Fisher's correspondence~\cite{Fisher} establishes an explicit mapping between the high temperature expansion of the Ising model on $G$~\cite{KramersWannier1,KramersWannier2} and the dimer model on the decorated \emph{Fisher graph} $\GF=(\VF,\EF)$. We actually use a version of the correspondence due to Dubédat~\cite[Section 4.1]{Dubedat}. For the purpose of this paper, it suffices to consider graphs with no boundary. The decorated graph $\GF$ is constructed from $G$ as follows: replace every vertex by a decoration made of triangles, where triangles are joined in a circular way, see Figure~\ref{fig:Fisher_Dub} (center). Then, polygon configurations arising from the high temperature expansion of the Ising model are in correspondence with $2^{|V|}$ dimer configurations: edges of polygon configurations correspond to \emph{long} edges of the dimer model; once the long edges are fixed, there are exactly two ways of filling each decoration with a dimer configuration~\cite{Fisher}. The dimer weight function $\mu$ corresponding to the Ising model with coupling constants $\eps\Js$ is given by, for every edge $\es$ of $\GF$,
\begin{equation}\label{equ:weight_Fisher}
\mu_{\es}=
\begin{cases}
1 & \text{ if $\es$ is a short edge, \emph{i.e.} belongs to a decoration}\\
\tanh(\eps_e\Js_e)& \text{ if the long edge $\es$ arises from an edge $e$ of $G$}.
\end{cases}
\end{equation}
Note that the weight function $\mu_\es$ is negative when the coupling constant is negative, \emph{i.e.} $\eps_e = -1$, so that the weight function $\mu$ is \emph{real-valued}. 
If one instead considers the low temperature expansion of the Ising model on the dual graph $G^*$, weights of the long edges would be equal to $e^{-2\eps_e\Js_e}$ and would all be positive. As a consequence of the correspondence, we have
\[
\Zising(G,\eps\Js)=\Bigl(\prod_{e\in E}\cosh(\eps_e\Js_e)\Bigr)\Zdimer(\GF,\mu). 
\]

\subsubsection{Bipartite dimer representation and duality}\label{sec:bipartite_representation_I}

The XOR Ising model~\cite{WilsonXOR}, constructed from two independent Ising models, is related to the dimer model on the decorated bipartite graph $\GQ=(\VQ,\EQ)$. There are actually two mappings leading to the dimer model on $\GQ$, one due to~\cite{Dubedat} based on bosonization identities and the results of~\cite{KadanoffWegner,Wu71,FanWu,WuLin}, and the other due to~\cite{BoutillierdeTiliere:XORloops} based on the results of~\cite{Nienhuis,WuLin}. Both are rather long to describe and, since we do not need the details of the construction, we refer to the original papers and only state the required results. The bipartite graph $\GQ$ is constructed from $G$ as follows, see Figure~\ref{fig:Fisher_Dub} (right): every edge $e$ of $G$ is replaced by a \emph{square}, noting the abuse of terminology since it might not be an actual geometric square; consecutive squares are then joined in a circular way using additional edges, see Figure~\ref{fig:Fisher_Dub} (right). The additional edges used to connect squares are referred to as \emph{external edges}. Note that in each square, two edges are parallel to an edge $e$ of $G$, and two are parallel to the dual edge $e^*$ of $G^*$. There are three kinds of faces in $\GQ$, those corresponding to a square, generically denoted by $y=\{e,e^*\}$ or simply $y$; those corresponding to a vertex $v$ of $G$; and those corresponding to a face $f$ of $G$/dual vertex of $G^*$. We use the same notation for the face and for the corresponding dual vertex. The bipartite coloring of vertices of $\GQ$ is chosen as in Figure~\ref{fig:Fisher_Dub} (right).  
The dimer weight function $\nu$ corresponding to the double Ising model with coupling constants $\eps\Js$ is given by, for every edge $\es$ of $\GQ$,
\begin{equation}\label{eq:Dimer_Ising_weights}
\nu_\es=
\begin{cases}
\tanh(2\eps_e\Js_e)&\text{ if $\es$ belongs to a square and is parallel to an edge $e$ of $G$}\\
\cosh^{-1}(2\eps_e\Js_e)&\text{ if $\es$ belongs to a square and is parallel to an edge $e^*$ of $G^*$}\\
1&\text{ if $\es$ is an external edge}. 
\end{cases}
\end{equation}
Note that the edge weight $\cosh^{-1}(2\eps_e\Js_e)=\cosh^{-1}(2\Js_e)$ is always positive, while the edge-weight $\tanh(2\eps_e\Js_e)$ is negative when the corresponding coupling constant is, i.e. $\eps_e = -1$, so that the weight function $\nu$ is \emph{real-valued}. 

\begin{defi}\label{defi:Ising_dimer_model}
The dimer model on $\GQ$ with edge-weights~\eqref{eq:Dimer_Ising_weights} arising from an Ising model on $G$ with real coupling constants $\eps\Js$ is referred to as the \emph{Ising-dimer model on $\GQ$}.
\end{defi}

Suppose that Kasteleyn-Kuperberg phases $\phi$ are assigned to edges of $\GQ$, and let $\KQ$ be the Kasteleyn matrix associated to $\nu$ and $\phi$. Then, face-weights of the Ising-dimer model on $\GQ$ are denoted by $\WI_{G,\eps\Js}(\,\cdot\,)$, and are given by, for every square face $y$, resp. vertex $v$ of $G$, resp. dual vertex $f$ of $G$:
{\small 
\begin{equation}\label{equ:gauge_Ising_dimers}
\begin{aligned}
\WI_{G,\eps\Js}(y) &= -\sinh^{2}(2\eps_e\Js_e)=-\sinh^{2}(2\Js_e)<0\\
\WI_{G,\eps\Js}(v) &=(-1)^{|v|-1}\prod_{e^* \in \partial v}\cosh^{-1}(2\Js_e)\in (-1)^{|v|-1}\RR^+\\
\WI_{G,\eps\Js}(f) &= (-1)^{|f|-1}\prod_{e \in \partial f}\tanh^{-1}(2\eps_e\Js_e)=(-1)^{|f|-1}\delta_f\prod_{e \in \partial f}\tanh^{-1}(2\Js_e)\in(-1)^{|f|-1}\delta_f\RR^{+}.
\end{aligned}
\end{equation}
}
where recall that $\delta_f = \prod_{e\in f} \eps_e$ is the frustration function.
\begin{rem}\label{rem:ising_model_gauge}
Observe from Equations~\eqref{equ:gauge_Ising_dimers} that face-weights of the Ising-dimer model on $\GQ$ only depend on the equivalence class of the Ising model, \emph{i.e.} on the absolute value coupling constants $\Js$, and on the frustration function $\delta=(\delta_f)_{f\in F}$. 
The Ising model is frustrated at $f$ if and only if $\delta_f = -1$ or equivalently if $\WI_{G,\eps\Js}(f) \in (-1)^{|f|}\RR^+$. Whenever there is no ambiguity on the underlying graph $G$, we simply denote the face-weights as $\WI_{\eps\Js}(\,\cdot\,)$
\end{rem}

Let us now turn to duality. Suppose that we start from an Ising model on the dual graph $G^*$ and real coupling constants $\Js^*$. Construct the decorated graph $\GQ$ starting from $G^*$ instead of $G$, while keeping the bipartite coloring fixed. Keeping the bipartite coloring fixed has the advantage of not needing an inverse when computing face-weights, which will be useful in the remainder of the paper. 

\begin{defi}\label{def:weak_duality}
The Ising model on $G$ with coupling constants $\eps\Js$, and the Ising model on $G^*$ with coupling constants $\eps^*\Js^*$ are said to be \emph{weakly dual} if face-weights at square faces of the corresponding dimer models on $\GQ$ are equal that is if, for every square face $y=\{e,e^*\}$,
\begin{align}\label{equ:weakly_dual_coupling_constants}
\WI_{G,\eps\Js}(y)=\WI_{G^*,\eps^*\Js^*}(y) \quad\Leftrightarrow\quad \sinh^{2}(2\Js_e)=\sinh^{-2}(2\Js^*_{e^*}).
\end{align}
Note that the second condition is true if and only if $\cosh^{2}(2\Js_e)=\tanh^{-2}(2\Js^*_{e^*})$, or equivalently
$\tanh^{-2}(2\Js_e)=\cosh^{2}(2\Js^*_{e^*})$. As a consequence, if the Ising models are weakly dual, then for every face corresponding to a primal or dual vertex, we have equality of the \emph{squared} face-weights:
\begin{equation}\label{equ:weak_duality_Ising}
(\WI)^2_{G,\eps\Js}(\,\cdot\,)=(\WI)^2_{G^*,\eps^*\Js^*}(\,\cdot\,).
\end{equation}
Observe that the notion of weak duality and related properties only rely on the absolute value coupling constants $\Js$.
This definition is to be paralleled with the definition of (strong) duality for the ferromagnetic Ising model asking that $\sinh(2\Js_e)\sinh(2\Js^*_{e^*})=1$~\cite{KramersWannier1,KramersWannier2}. 
\end{defi}

\begin{rem}\label{rem:duality}
 Note that if this condition is satisfied, then in the frustrated Ising model case, setting $\eps^*_{e^*}=\eps_e$, we cannot have both the condition $\WI_{G,\eps\Js}(v)\in(-1)^{|v|-1}\RR^+$ at every vertex $v$ of $G$, and $\WI_{G^*,\eps^*\Js^*}(f)\in(-1)^{|f|-1}\delta_f\RR^+$ at every dual vertex $f$ of $G^*$.
\end{rem}

\subsubsection{The periodic case}

Suppose that the graph $G$ is periodic, and consider an Ising model on the graph $G$, with periodic, real coupling constants $\eps\Js$. Consider the corresponding dimer model on the Fisher graph $\GF$, with periodic weight function $\mu$ given by~\eqref{equ:weight_Fisher}, and Kasteleyn signs $\eta$ assigned to edges; let $\Pising(z,w)$ denote the associated characteristic polynomial, see Equation~\eqref{equ:charact_non_bipartite}. Consider also the corresponding Ising-dimer model on the bipartite graph $\GQ$ with periodic weight function $\nu$ given by~\eqref{eq:Dimer_Ising_weights}, and Kasteleyn-Kuperberg phases $\phi$; let $\Pdub(z,w)$ denote the associated characteristic polynomial, see Equation~\eqref{equ:charact_bipartite}. Then, by Dubédat~\cite[Section 4.1]{Dubedat} one can choose the Kasteleyn-Kuperberg phases $\phi$ so that they are real, \emph{i.e.} take values in $\{-1,1\}$, and so that the characteristic polynomial $\Pising(z,w)$ and $\Pdub(z,w)$ are equal up to a global (real) constant. Namely, there exists a non-zero, real constant $\tilde{c}$, such that 
\begin{equation}\label{equ:Dubedat_equality}
\Pising(z,w)=\tilde{c} \Pdub(z,w).
\end{equation}

\begin{rem}\label{rem:Fisher_Dub}\leavevmode
\begin{enumerate}
\item As a consequence of Equation~\eqref{equ:Dubedat_equality}, the dimer model on the Fisher graph $\GF$ and the Ising-dimer model on the graph $\GQ$ have the same spectral curves, which we denote by~$\C$. Note that the curve $\C$ arises from a polynomial with \emph{real coefficients}.
\item As a consequence of Point 1. of Remark~\ref{rem:spectral_curve}, since the curve $\C$ arises from the characteristic polynomial of a non-bipartite dimer model (on the Fisher graph), it is \emph{centrally symmetric}. 
\end{enumerate}
\end{rem}

\section{Real dimer model on minimal bipartite graphs in genus~1}\label{sec:real_dimers}

One of the main goals of this paper is to characterize spectral curves arising from Ising-dimer models on $\GQ$, in the case where they are irreducible, and of genus 1.
From Point~1.~of Remark~\ref{rem:Fisher_Dub}, we know that such curves arise from polynomials with \emph{real coefficients}. This section places itself in the more general context of irreducible, genus 1 curves arising from polynomials with real coefficients, but not necessarily from an Ising-dimer model; it is interesting in its own respect. 

More precisely, consider first any irreducible, genus 1 algebraic curve $\C$.
Then by Fock~\cite{Fock}, up to scale change, this curve is the spectral curve of a dimer model with Fock's Kasteleyn matrix in genus~1 on an infinite, periodic, minimal graph $\Gs$; referred to as \emph{Fock's dimer model}, see Definition~\ref{def:Focks_dimer_model}. The goal of this section is to characterize Fock's dimer models arising from genus 1 curves $\C$ defined from polynomials with \emph{real coefficients}, when they satisfy Assumptions $(\dagger)$:
\begin{center}
$(\dagger)\ $ $\C$ is irreducible, and non singular except possibly at isolated real nodes\footnote{Since we consider the genus 1 version of Fock's dimer model, isolated real nodes are present when the graph $\Gs$ has a large period so that, even if we consider generic curves, such type of singularities cannot be avoided.}.  
\end{center}

This is an extension of the work~\cite[Theorem 34]{BCdT:genus_1}, where we identified parameters of Fock's dimer models arising from dimer models with \emph{positive edge-weights}. 

Fock's dimer model has three sets of parameters: an underlying torus $\TT(\tau)$ with modular parameter $\tau$, an angle map $\mapalpha$, and an additional parameter $t$. When the curve $\C$ arises from a polynomial with real coefficients, it is equipped with a natural real structure arising from complex conjugation: $(z,w)\in\C \leftrightarrow (\bar{z},\bar{w})\in\C$; such a curve is referred to as a \emph{real algebraic curve}. Our first result, Theorem~\ref{thm:real_spectral_curve}, identifies the parameters $\tau$ and $\mapalpha$ of Fock's dimer model corresponding to \emph{real algebraic curves}. When the curve $\C$ arises from a polynomial with real coefficients, Fock's associated dimer model has the further property of being a \emph{real dimer model} as in Definition~\ref{defi:dimer_extras}. Our second result, Theorem~\ref{thm:Fock_real_dimer_model}, identifies the parameters $t$ corresponding to \emph{real dimer models}.  

This section is organized as follows: in Section~\ref{sec:Fock_genus_1}, we give the definition of Fock's dimer model with all the required background; then in Section~\ref{sec:real_spectral_curve} we state and prove Theorem~\ref{thm:real_spectral_curve}, and in Section~\ref{sec:Fock_real_dimer_model} we state and prove Theorem~\ref{thm:Fock_real_dimer_model}.

\subsection{Fock's dimer model on minimal graphs in genus 1}\label{sec:Fock_genus_1}

In Section~\ref{subsec:background}, we recall the definition of all required tools: train-tracks and the angle map, the discrete Abel map, genus 1 Riemann surfaces and theta functions, modular transformations. Then, in Section~\ref{sec:def_Fock_general} we define Fock's dimer model~\cite{Fock} involving the modular parameter $\tau$, the angle map $\mapalpha$, and the parameter $t$. We also establish two useful properties: in Lemma~\ref{lem:invariance_weights}, we prove that face-weights are meromorphic functions of the parameters, and in Lemma~\ref{lem:modular_transo_dimers}, we identify the evolution of the model under modular transformations of the parameter $\tau$. 

\subsubsection{Background}\label{subsec:background}

This section on backgrounds aims at being self contained. The content with more details can be found in~\cite{KenyonSchlenker,Thurston, GoncharovKenyon,BCdT:immersions}, see also ~\cite[Section 2, Section 5]{BCdT:genus_1}.

\paragraph{Train-tracks, isoradial graphs, minimal graphs and angle map.} 

Let $G=(V,E)$ be an infinite, planar, locally finite, simple graph embedded in the plane so that faces are bounded topological discs, and let $G^*=(V^*,E^*)$ denote its dual embedded graph. 
The \emph{quad-graph} $\GGR$ has vertex set $V\sqcup V^*$, and edges are obtained by joining every primal vertex $v$ to the dual vertices on the boundary of the corresponding face.
Note that faces of $\GGR$ are quadrangles, and that the quad-graph is invariant by duality. 
A \emph{train-track} is a path in the dual graph $(\GGR)^*$ crossing opposite edges of quadrangles. 
Let $\T$ denote the set of train-tracks of the graph $G$. Following~\cite{KenyonSchlenker}, the graph $G$ is said to be \emph{isoradial} if train-tracks do not self intersect, and no pair of distinct train-tracks intersects more than once. 

When the graph $G=(V,E)$ is moreover bipartite, we instead use the notation $\Gs=(\Vs,\Es)$. In this case, train-tracks can be consistently oriented so that white vertices of $\Gs$ are on the left of the path, implying that black vertices are on the right. Denote by $\vec{\T}$ the set of oriented train-tracks. Following~\cite{Thurston,GoncharovKenyon}, the graph $\Gs$ is said to be \emph{minimal} if its oriented train-tracks do not self intersect, and no pair of train-tracks intersects twice in the same direction. As a consequence, minimal graphs are simple. More details on minimal graphs can be found in the paper~\cite{BCdT:immersions}. To each oriented train-track $\vec{T}$ is assigned an \emph{angle} $\alpha_{\vec{T}}\in\CC$ defining the \emph{angle map} $\mapalpha:\vec{\T}\rightarrow \CC$. 

Suppose further that the graph $\Gs$ is periodic. Fix a face $\fs$ of $\Gs$, and recall that 
$\gamma_x$, resp. $\gamma_y$, are two simple paths in the plane joining $\fs$ to $\fs+(1,0)$, resp. $\fs$ to $\fs+(0,1)$, intersecting only at $\fs$. They project onto the torus onto two simple closed loops in $\Gs_1^*$ also denoted by $\gamma_x,\gamma_y$.
Their homology classes $[\gamma_x],[\gamma_y]$ form a basis of the first homology group of the torus $H_1(\TT,\ZZ)$. Let us denote by $\vec{\T}_1$ the set of train-tracks of $\Gs_1$. 
Observing that each train-track $\vec{T}$ is a closed curve on the torus, its homology class can be written as $[\vec{T}]=h_T[\gamma_x]+v_T[\gamma_y]:=(h_T,v_T)$, with $(h_T,v_T)$ coprime integers. 
Homology of train-tracks satisfy the condition $\sum_{\vec{T}\in\vec{\T}_1}(h_T,v_T)=(0,0)$, see~\cite[Section 5]{BCdT:genus_1}.

\paragraph{Discrete Abel map.} Following Fock~\cite{Fock}, the discrete Abel-map $\mapd$ is a complex-valued function defined on vertices of $\GR$. Fix a reference vertex $v_0$ of $\GR$ and set $\mapd(v_0)=0$. Then, the value of $\mapd$ increases, resp. decreases, by $\alpha_{\vec{T}}$ when the oriented train-track $\vec{T}$ is crossed from right-to-left, resp. from left-to-right, see Figure~\ref{fig:minimal} (right) for an example of computation. This map is well defined. 

\paragraph{Genus 1 Riemann surface, and theta functions.} Consider the lattice $\Lambda=\ZZ+\tau\ZZ$, for some \emph{modular parameter} $\tau\in\CC$, such that $\Im(\tau)>0$. The associated genus 1 Riemann surface is the torus $\TT(\tau)=\CC/\Lambda$. Define $\rho$ to be the \emph{half-period translation vector}:
\begin{equation}\label{equ:def_rho}
\rho=\rho(j,\ell)=\frac{j}{2}+\frac{\ell}{2}\tau,\quad (j,\ell)\in\{0,1\}^2. 
\end{equation}
Then, $2\rho$ is a translation by what is known as a \emph{horizontal/vertical/diagonal} period of the lattice $\Lambda$. Note that we do not assume $\tau$ to be pure imaginary as was the case in~\cite{BCdT:genus_1}.  

The definition of the four theta functions $(\theta_{j,\ell})_{(j,\ell) \in \{0,1\}}$ can be found in~\cite[21.2.5]{DLMF}. 
Let $q=e^{i\pi\tau}$, then comparing to the alternative notation $(\theta_m)_{m \in \{1,2,3,4\}}$, we have~\cite[21.2(iii)]{DLMF},
\[
  \theta_{1,1}(z,\tau) = -\theta_1(\pi z,q); \ \ \theta_{1,0}(z,\tau)=\theta_2(\pi z,q); \ \ \theta_{0,0}(z,\tau) = \theta_3(\pi z,q); \ \ \theta_{0,1}(z,\tau) = \theta_4(\pi z,q).
\]
Whenever no confusion occurs, we remove the reference to $\tau$ in the notation. 
\begin{rem}
  Instead of the more common notation $(\theta_m)_{m \in \{1,2,3,4\}}$, we chose to use the notation $(\theta_{j,\ell})_{(j,\ell) \in \{0,1\}}$ which extends more easily to the higher genus case, see~\cite{BCdT:genus_g}, and is useful to obtain compact forms for formulas, see below. 
\end{rem}
Here are useful formulas describing parity, complex conjugation, and transformation of theta functions under (half)-periods of the torus $\TT(\tau)$. They are the genus 1 versions of formulas of the quoted references.

\begin{lem}[] \leavevmode
For $m,n\in\{0,1\}$ with indices understood mod 2, and $z\in\CC$, 
\begin{enumerate}
\item Parity\emph{~\cite[p.17]{MumfordTheta1}} 
\begin{align}\label{eq:theta:1}
\theta_{m,n}(-z) &= (-1)^{mn}\theta_{m,n}(z).
\end{align}  
\item Complex conjugation\emph{~\cite[proof of Lemma 18]{BCdT:genus_g}}
\begin{equation}\label{equ:complex_conjugation}
\overline{\theta_{m,n}(z,\tau)}=\theta_{m,n}(\bar{z},-\bar{\tau}).
\end{equation}
\item Translation by periods of $\TT(\tau)$\emph{~\cite[p.123]{MumfordTheta1}}
\begin{align}
  \theta_{m,n}(z+2\rho) &= (-1)^{jm+\ell n}(q^{-1} e^{-2i\pi z})^{\ell}\theta_{m,n}(z)  \label{eq:theta:2} \\
  \theta_{m,n}(z-2\rho) &=(-1)^{jm+\ell n} (q^{-1} e^{2i\pi z})^{\ell}\theta_{m,n}(z).   \label{eq:theta:3}
  \end{align}
\item Translation by half-periods of $\TT(\tau)$\emph{~\cite[Table 0, p.19]{MumfordTheta1}}
\begin{align}
  \theta_{m, n}(z+\rho) &= (q^{-1/4}e^{-i\pi z})^\ell i^{\ell(n+j)} (-1)^{\ell(n+j) + j\ell n +jmn} \theta_{m+\ell,n+j}(z)\label{eq:theta:4}\\
  \theta_{m,n}(z-\rho) &= (q^{-1/4}e^{i\pi z})^\ell i^{\ell(n+j)}(-1)^{jm+j\ell n+jmn}\theta_{m+\ell,n+j}(z).\label{eq:theta:5}
\end{align} 
\end{enumerate}
\end{lem}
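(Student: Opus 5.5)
The plan is to derive all four families of identities directly from the defining Fourier series of the theta functions with characteristics~\cite[21.2.5]{DLMF}, in the normalization compatible with~\cite{MumfordTheta1}:
\[
\theta_{m,n}(z,\tau)=\sum_{k\in\ZZ}\exp\Bigl(i\pi\bigl(k+\tfrac{m}{2}\bigr)^2\tau+2i\pi\bigl(k+\tfrac{m}{2}\bigr)\bigl(z+\tfrac{n}{2}\bigr)\Bigr).
\]
The series converges absolutely and locally uniformly since $\Im(\tau)>0$, so every manipulation below is a reindexing of an absolutely convergent sum. Throughout, I write $a=k+\frac{m}{2}$ for the summation variable.

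\textbf{Parity and complex conjugation.} For parity, I would reindex $k\mapsto -k-m$, which sends $a\mapsto -a$. The quadratic term is unchanged, and the linear term becomes $-2i\pi a(z+\frac{n}{2})$. Rewriting this as $2i\pi a(-z+\frac{n}{2})-2i\pi a n$ leaves the residual factor $e^{-2i\pi an}=e^{-i\pi mn}=(-1)^{mn}$, which gives~\eqref{eq:theta:1}.

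For conjugation, I conjugate term by term. Each term becomes $\exp\bigl(i\pi a^2(-\bar\tau)-2i\pi a(\bar z+\frac{n}{2})\bigr)$. The same reindexing $a\mapsto -a$ turns this into the series for $\theta_{m,n}(\bar z,-\bar\tau)$, with no residual sign this time. Note that $\Im(-\bar\tau)>0$, so the right-hand side is a genuine theta function. This gives~\eqref{equ:complex_conjugation}.

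\textbf{Full periods.} I shift $z$ by $2\rho=j+\ell\tau$ and treat the two parts separately.
\begin{itemize}
\item[$\circ$] The $j$-part multiplies each term by $e^{2i\pi aj}=e^{i\pi mj}=(-1)^{jm}$.
\item[$\circ$] For the $\ell\tau$-part, I complete the square: $a^2\tau+2a\ell\tau=(a+\ell)^2\tau-\ell^2\tau$. Reindexing $k\mapsto k-\ell$ restores the series and leaves the factor $\exp\bigl(-i\pi\ell\tau-2i\pi\ell(z+\frac{n}{2})\bigr)=(q^{-1}e^{-2i\pi z})^{\ell}(-1)^{\ell n}$, using $\ell^2=\ell$.
\end{itemize}
Together these give~\eqref{eq:theta:2}. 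Identity~\eqref{eq:theta:3} follows by the same computation with $-2\rho$, or by combining~\eqref{eq:theta:2} with parity.

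\textbf{Half periods.} I shift by $\rho=\frac{j}{2}+\frac{\ell}{2}\tau$.
\begin{itemize}
\item[$\circ$] The $\frac{\ell\tau}{2}$-part: completing the square gives $a^2\tau+a\ell\tau=(a+\frac{\ell}{2})^2\tau-\frac{\ell}{4}\tau$. Since $a+\frac{\ell}{2}=k+\frac{m+\ell}{2}$, the characteristic $m$ is replaced by $m+\ell$. The linear term produces the prefactor $(q^{-1/4}e^{-i\pi z})^{\ell}$, together with a phase $e^{-i\pi\ell(n+j)/2}$ coming from the accompanying $\frac{n}{2}$ and $\frac{j}{2}$ shifts. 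This phase is the source of the $i^{\ell(n+j)}$ factor.
\item[$\circ$] The $\frac{j}{2}$-part simply turns $z+\frac{n}{2}$ into $z+\frac{n+j}{2}$, i.e.\ replaces $n$ by $n+j$.
\end{itemize}
At this stage the result is a theta series with characteristics $(m+\ell,n+j)$ viewed as integers in $\{0,1,2\}$.

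\textbf{Main obstacle.} The delicate step is reducing these characteristics mod $2$ and tracking the resulting signs. Replacing $m+\ell=2$ by $0$ is a reindexing $k\mapsto k+1$; it costs the factor $e^{2i\pi(z+(n+j)/2)}$ absorbed against the prefactors, plus a sign. Replacing $n+j=2$ by $0$ costs $e^{2i\pi a}=(-1)^{m+\ell}$. Collecting these contributions should produce exactly the signs $(-1)^{\ell(n+j)+j\ell n+jmn}$ in~\eqref{eq:theta:4}. I would carry out this bookkeeping by enumerating the cases $(j,\ell)\in\{(1,0),(0,1),(1,1)\}$ and checking each against Mumford's Table~0~\cite[p.19]{MumfordTheta1}, since a slip of a factor $i$ or of a sign is the main risk in this part of the argument. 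Identity~\eqref{eq:theta:5} follows either by the same computation with $-\rho$, or by applying~\eqref{eq:theta:4} to $-z$ together with parity~\eqref{eq:theta:1}.
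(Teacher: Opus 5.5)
\textbf{Gap in the half-period reduction.} Your derivations of parity, conjugation and the full-period shifts \eqref{eq:theta:2}--\eqref{eq:theta:3} from the series are correct. This self-contained route is a reasonable alternative to the paper, which gives no proof and simply cites \cite{MumfordTheta1} and \cite{BCdT:genus_g}.

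The half-period part, however, rests on a wrong reduction rule, at exactly the step you single out as the main obstacle. Reducing the upper characteristic $m+\ell=2$ to $0$ costs nothing. Writing $b=k+1$, the reindexing $k+1\mapsto k$ turns $\sum_k\exp\bigl(i\pi b^2\tau+2i\pi b(z+\frac{N}{2})\bigr)$ term by term into the series with upper characteristic $0$, with no factor at all. In general $\theta_{M+2,N}=\theta_{M,N}$, while $\theta_{M,N+2}=(-1)^M\theta_{M,N}$; you state this second rule correctly.

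If you inserted the factor $e^{2i\pi(z+(n+j)/2)}$ you propose, the cases $m=\ell=1$ would acquire the wrong $z$-dependence. For example, $\theta_{1,1}(z+\frac{1}{2}+\frac{\tau}{2})$ would come out proportional to $e^{i\pi z}\theta_{0,0}(z)$ instead of $-q^{-1/4}e^{-i\pi z}\theta_{0,0}(z)$. No prefactor can absorb this, since all the $z$-dependence of \eqref{eq:theta:4} has already been produced by completing the square.

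\textbf{How to finish.} With the correct rules, no case-by-case comparison with Mumford's table is needed. After completing the square you have
\[
\theta_{m,n}(z+\rho)=(q^{-1/4}e^{-i\pi z})^\ell\, e^{-i\pi\ell(n+j)/2}\,\theta_{m+\ell,n+j}(z),
\]
with unreduced characteristics.
\begin{enumerate}
\item[$\circ$] Only the lower reduction contributes a sign, precisely when $n=j=1$. It gives $(-1)^{(m+\ell)nj}=(-1)^{j\ell n+jmn}$.
\item[$\circ$] The remaining sign comes from the phase itself, not from the reductions as you suggest. Indeed $e^{-i\pi\ell(n+j)/2}=(-i)^{\ell(n+j)}=i^{\ell(n+j)}(-1)^{\ell(n+j)}$, where $n+j\in\{0,1,2\}$ must be the integer sum. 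So the factor $i^{\ell(n+j)}$ in \eqref{eq:theta:4} is to be read with unreduced $n+j$.
\end{enumerate}
Together these give exactly \eqref{eq:theta:4}.

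Your derivation of \eqref{eq:theta:5} from \eqref{eq:theta:4} and parity then works. The total sign is $(-1)^{mn+(m+\ell)(n+j)+\ell(n+j)+j\ell n+jmn}=(-1)^{jm+j\ell n+jmn}$, as required.
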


We need the following addition formula. A proof is provided in Lemma~\ref{lem:appendix_1} of Appendix~\ref{sec:App_A} since we could not find it as such in the literature.

\begin{lem}
For $m,n\in\{0,1\}$ with indices understood mod 2, and $z,w\in\CC$,
\begin{equation} \label{eq:addition:formula}
\begin{split}
\theta_{1,1}^2(z)\theta_{0,0}^2(w) + &(-1)^{1+m+mn}\theta_{m,n}^2(z)\theta_{m+1,n+1}^2(w)=\\
&=(-1)^{n+mn} \theta_{m+1,n+1}(z+w)\theta_{m+1,n+1}(z-w)\theta_{m,n}^2(0).
\end{split}
\end{equation}
\end{lem}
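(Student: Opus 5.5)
The plan is the classical Liouville argument for theta functions: fix $w$, view both sides as entire functions of $z$, and show that their ratio is an elliptic function with no poles.

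The degenerate case comes first. For $(m,n)=(1,1)$ the statement is trivial. Indeed, $\theta_{m+1,n+1}=\theta_{0,0}$ and the sign is $(-1)^{1+1+1}=-1$, so the left-hand side is $\theta_{1,1}^2(z)\theta_{0,0}^2(w)-\theta_{1,1}^2(z)\theta_{0,0}^2(w)=0$. The right-hand side contains $\theta_{1,1}^2(0)=0$ by the parity formula~\eqref{eq:theta:1}. So assume $(m,n)\neq(1,1)$, hence $mn=0$.

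\textbf{Step 1: matching quasi-periodicity.} Fix a generic $w$ and let $L(z)$ and $R(z)$ be the two sides. Using~\eqref{eq:theta:2} with $\rho=\frac12$ and $\rho=\frac{\tau}{2}$, each of $\theta_{1,1}^2$, $\theta_{m,n}^2$, and the product $\theta_{m+1,n+1}(z+w)\theta_{m+1,n+1}(z-w)$ picks up the same factor under $z\mapsto z+1$, namely $1$. Under $z\mapsto z+\tau$ each picks up the same factor $q^{-2}e^{-4i\pi z}$. The signs $(-1)^{\ell n}$ get squared, and the $e^{\mp 2i\pi w}$ factors cancel in the product. Thus $L$ and $R$ are second-order theta functions with the same multipliers, and $L/R$ is an elliptic function of $z$.

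\textbf{Step 2: zeros of $R$ are zeros of $L$.} The function $\theta_{m+1,n+1}$ has a unique simple zero at a half-period $\rho'$ modulo $\Lambda$: it is $0$ for $\theta_{1,1}$, and obtained from~\eqref{eq:theta:4} in the other cases. So $R$ vanishes exactly at $z=\pm w+\rho'$, which for generic $w$ are two simple zeros. By~\eqref{eq:theta:4}--\eqref{eq:theta:5}, translating by $\rho'$ maps $\theta_{1,1}$ to a multiple of $\theta_{m+1,n+1}$ and $\theta_{m,n}$ to a multiple of $\theta_{0,0}$, up to exponential factors and powers of $i$. Hence
\[
L(w+\rho')=c_1\,\theta_{m+1,n+1}^2(w)\theta_{0,0}^2(w)+c_2\,\theta_{0,0}^2(w)\theta_{m+1,n+1}^2(w).
\]
The work is to check that $c_1+c_2=0$ once the sign $(-1)^{1+m+mn}$ is included. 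This bookkeeping of the powers of $i$, the signs, and the factors $q^{-1/4}e^{\mp i\pi z}$ in~\eqref{eq:theta:4}--\eqref{eq:theta:5} is where I expect the main obstacle. I would do it case by case for the three pairs $(m,n)\in\{(0,0),(1,0),(0,1)\}$. The zero at $-w+\rho'$ then follows by parity, since $L$ is even in $z$. Consequently $L/R$ is elliptic and holomorphic, hence constant.

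\textbf{Step 3: the constant.} Evaluate at $z=0$, using $\theta_{1,1}(0)=0$ and parity~\eqref{eq:theta:1}:
\[
L(0)=(-1)^{1+m}\theta_{m,n}^2(0)\theta_{m+1,n+1}^2(w).
\]
On the other side,
\[
R(0)=(-1)^{n}(-1)^{(m+1)(n+1)}\theta_{m+1,n+1}^2(w)\theta_{m,n}^2(0)=(-1)^{m+1}\theta_{m,n}^2(0)\theta_{m+1,n+1}^2(w),
\]
since $mn=0$. Both are nonzero for generic $w$, because $\theta_{m,n}(0)\neq0$ when $(m,n)\neq(1,1)$. So the constant is $1$. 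The identity then holds for generic $w$, and for all $w$ by continuity.

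As a cross-check, I would compare the result with the classical Jacobi product formulas $\theta_j(z+w)\theta_j(z-w)\theta_k^2(0)=\dots$ in DLMF 20.7.
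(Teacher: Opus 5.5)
Your proof is correct, and it takes a genuinely different route from the paper's. The paper does not use a Liouville argument. It starts from the classical product formulas (Lawden (1.4.19)--(1.4.29)), in the compact form $\theta_{1,1}^2(z)\theta_{1,1}^2(w)+(-1)^{m+n+mn}\theta_{m,n}^2(z)\theta_{m,n}^2(w)=(-1)^{m+n+mn}\theta_{m,n}(z+w)\theta_{m,n}(z-w)\theta_{m,n}^2(0)$. It then substitutes $w\mapsto w+\frac12+\frac\tau2$ and converts every factor back using \eqref{eq:theta:4}--\eqref{eq:theta:5}. The lemma is thus obtained as a half-period shift of a tabulated identity.

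Each route has its advantages.
\begin{itemize}
\item The paper's route is shorter if one accepts the literature identity. However, its sign and exponential bookkeeping falls on the unsquared product $\theta_{m,n}(z+w+\frac12+\frac\tau2)\theta_{m,n}(z-w-\frac12-\frac\tau2)$.
\item Your argument is self-contained: it uses only quasi-periodicity, parity and the location of the zeros. It also proves any identity of this type. Its bookkeeping is confined to squared factors and the evaluation at $z=0$, which you carried out correctly.
\end{itemize}

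The step you deferred as the main obstacle is a one-liner and needs no case split. Your $\rho'$ is $\frac n2+\frac m2\tau$, that is, $(j,\ell)=(n,m)$ in \eqref{eq:theta:4}. Squaring \eqref{eq:theta:4} removes the sign factor there and leaves only a power of $i^2$. Using $mn=0$ and writing $E=(q^{-1/4}e^{-i\pi w})^{2m}$, one gets
\[
\theta_{1,1}^2(w+\rho')=(-1)^{m}E\,\theta_{m+1,n+1}^2(w),\qquad \theta_{m,n}^2(w+\rho')=E\,\theta_{0,0}^2(w).
\]
Hence $L(w+\rho')=\bigl((-1)^m+(-1)^{1+m}\bigr)E\,\theta_{0,0}^2(w)\theta_{m+1,n+1}^2(w)=0$.

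You should also make one point explicit. Deducing the second zero $-w+\rho'$ from the evenness of $L$ also uses two facts. First, the zero set of $L$ is $\Lambda$-invariant, because its multipliers never vanish. Second, $2\rho'\in\Lambda$.
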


We also need the following result. A proof is provided in Lemma~\ref{lem:meromorphic_theta_ratio_appendix} of Appendix~\ref{sec:App_A} since we use a slight extension which is not written as such in~\cite{MumfordTheta1}.

\begin{lem}{\emph{\cite[\textsection 6, p.24]{MumfordTheta1}}}\label{lem:meromorphic_theta_ratio}
Let $a_1,\dots,a_r, b_1,\dots,b_r\in\CC$. Then, the function
\[
f(u)=e^{-2i\pi\ell u}\prod_{i=1}^r \frac{\theta_{m,n}(u-a_i)}{\theta_{m,n}(u-b_i)},
\]
is a meromorphic function on $\TT(\tau)$ if and only if $\ell\in\ZZ$, and $\sum_{i=1}^r (a_i-b_i)-\ell\tau \in\ZZ$.
\end{lem}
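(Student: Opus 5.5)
The plan is the standard argument via Liouville's theorem on the torus $\TT(\tau)$. Throughout, $\Lambda=\ZZ+\tau\ZZ$. For the direction ``$\Leftarrow$'', assume $\ell\in\ZZ$ and $\sum_{i}(a_i-b_i)-\ell\tau=:N\in\ZZ$. The function $f$ is clearly meromorphic on $\CC$, since $\theta_{m,n}$ is entire and not identically zero. So it suffices to check invariance under the two generators $u\mapsto u+1$ and $u\mapsto u+\tau$ of $\Lambda$.

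\textbf{Translation by $1$.} We apply Equation~\eqref{eq:theta:2} with $(j,\ell)=(1,0)$, so $2\rho=1$. Each theta factor in the numerator and in the denominator picks up the same sign $(-1)^{m}$, so these cancel in the ratio. The prefactor picks up $e^{-2i\pi\ell}=1$ because $\ell\in\ZZ$.

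\textbf{Translation by $\tau$.} We apply Equation~\eqref{eq:theta:2} with $(j,\ell)=(0,1)$, so $2\rho=\tau$. Each factor $\theta_{m,n}(u-c)$ is multiplied by $(-1)^{n}q^{-1}e^{-2i\pi(u-c)}$. In the ratio, the signs, the powers of $q$ and the $e^{-2i\pi u}$ terms all cancel, leaving
\[
\prod_{i=1}^r e^{2i\pi(a_i-b_i)}=e^{2i\pi\sum_i(a_i-b_i)}.
\]
The prefactor contributes $e^{-2i\pi\ell\tau}$. The total multiplier is therefore $e^{2i\pi(\sum_i(a_i-b_i)-\ell\tau)}=e^{2i\pi N}=1$. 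Hence $f$ is $\Lambda$-periodic and descends to a meromorphic function on $\TT(\tau)$.

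For the direction ``$\Rightarrow$'', assume $f$ is $\Lambda$-periodic; we use the same two computations in reverse. Invariance under $u\mapsto u+1$ gives $e^{-2i\pi\ell}=1$, that is $\ell\in\ZZ$. Invariance under $u\mapsto u+\tau$ gives $e^{2i\pi(\sum_i(a_i-b_i)-\ell\tau)}=1$, that is $\sum_i(a_i-b_i)-\ell\tau\in\ZZ$.

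The only subtle point, and the main obstacle, is degeneracy. If some $a_i\equiv b_{i'}$ mod $\Lambda$, zeros and poles cancel and $f$ could be a constant. The multiplier computation is unaffected by this, because it is an identity of functions on $\CC$ and $f\not\equiv 0$. So the equivalence holds with no genericity hypothesis.

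For the ``$\Rightarrow$'' direction, one should also allow $\ell\in\CC$ a priori; this is why the paper calls the statement a slight extension of Mumford. Here $f(u+1)=e^{-2i\pi\ell}f(u)$ directly forces $e^{-2i\pi\ell}=1$, hence $\ell\in\ZZ$, without any further argument. The only care needed is to track the factor $(q^{-1}e^{-2i\pi z})^{\ell}$ in Equation~\eqref{eq:theta:2} with $z=u-c$, so that the $e^{2i\pi c}$ contributions produce exactly $e^{2i\pi\sum_i(a_i-b_i)}$.
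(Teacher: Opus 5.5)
Your proof is correct and follows the paper's argument: both compute the multipliers of $f$ under the period translations $u\mapsto u+1$ and $u\mapsto u+\tau$ using Identity~\eqref{eq:theta:2}, obtaining $e^{-2i\pi\ell}$ and $e^{2i\pi(\sum_i(a_i-b_i)-\ell\tau)}$, and require both to equal $1$. Your additional remark that the multiplier identities hold on $\CC$ regardless of zero/pole cancellation (since $f\not\equiv 0$) is a valid point that the paper leaves implicit.
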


\paragraph{Modular transformations of the lattice parameters.} We recall the required notions, and refer to~\cite[Ch.9]{Lawden} for more details. \emph{Modular transformations} consist of the following transformations of the modular parameter $\tau$:
\[
\tau'=\frac{c+d\tau}{a+b\tau},
\]
where $a,b,c,d\in\ZZ$ satisfy $ad-bc=1$. They can be seen as the action of the group $SL_2(\ZZ)$ on the upper half plane by Möbius transformations. They thus form a group, known as the \emph{modular group}, and we have $\TT(\tau)=\TT(\tau')$. The \emph{standard fundamental domain} is:
\begin{equation}\label{equ:fund_domain_modular}
\{\tau\in\CC, \Im(\tau)>0\,:\,\Re(\tau)\leq \frac{1}{2},|\tau|\geq 1\}.
\end{equation}
Generators of the modular group are given by:
\begin{equation*}
  \tau_1=\tau_1(\tau) = -\frac{1}{\tau} \quad ; \quad \tau_2=\tau_2(\tau) = \tau + 1.
\end{equation*}

The following lemma records the evolution of theta functions under modular transformations.

\begin{lem}\label{lem:modular_transfo}\leavevmode
\begin{enumerate}
\item Under the transformation $\tau_1=-\frac{1}{\tau}$~\emph{\cite[(20.7.30)-(20.7.32)]{DLMF}},
\begin{equation}\label{eq:tau_1}
  \begin{aligned}
(-i\tau)^{\frac{1}{2}}\theta_{1,1}(z,\tau) = -ie^{i\pi\tau_1 z^2}\theta_{1,1}(\tau_1\cdot z,\tau_1)~&,~
(-i\tau)^{\frac{1}{2}}\theta_{0,0}(z,\tau) = e^{i\pi\tau_1 z^2}\theta_{0,0}(\tau_1 \cdot z,\tau_1)\\
(-i\tau)^{\frac{1}{2}}\theta_{1,0}(z,\tau) = -ie^{i\pi\tau_1 z^2}\theta_{0,1}(\tau_1\cdot z,\tau_1)~&,~
(-i\tau)^{\frac{1}{2}}\theta_{0,1}(z,\tau) = e^{i\pi\tau_1 z^2}\theta_{1,0}(\tau_1 \cdot z,\tau_1).
  \end{aligned}
\end{equation}
\item Under the transformation $\tau_2=\tau+1$~\emph{\cite[(20.7.26)-(20.7.28)]{DLMF}},
\begin{equation}\label{eq:tau_2}
\theta_{1,1}(z,\tau) = e^{-i\frac{\pi}{4}}\theta_{1,1}(z,\tau_2),\quad 
\theta_{0,0}(z,\tau) =  \theta_{0,1}(z,\tau_2).
\end{equation} 
\end{enumerate}
\end{lem}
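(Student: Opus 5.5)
The second identity~\eqref{eq:tau_2} comes straight from the series, and the first~\eqref{eq:tau_1} from Poisson summation. Both start from the defining series~\cite[21.2.5]{DLMF}
\[
\theta_{m,n}(z,\tau)=\sum_{k\in\ZZ}\exp\Bigl(i\pi\tau\bigl(k+\tfrac{m}{2}\bigr)^2+2i\pi\bigl(k+\tfrac{m}{2}\bigr)\bigl(z+\tfrac{n}{2}\bigr)\Bigr),
\]
which converges absolutely and locally uniformly in $(z,\tau)\in\CC\times\{\Im\tau>0\}$. All manipulations below are therefore legitimate termwise.

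For $\tau_2=\tau+1$, replacing $\tau$ by $\tau+1$ multiplies the $k$-th term by $e^{i\pi(k+m/2)^2}$.
\begin{enumerate}
\item[$\circ$] When $m=0$ this factor is $(-1)^{k}$, which is absorbed by switching $n\mapsto n+1$, since $e^{i\pi k}=(-1)^k$. In particular $\theta_{0,1}(z,\tau+1)=\sum_k e^{i\pi k^2\tau}(-1)^k(-1)^ke^{2i\pi kz}=\theta_{0,0}(z,\tau)$.
\item[$\circ$] When $m=1$ we have $(k+\tfrac12)^2=k(k+1)+\tfrac14$ with $k(k+1)$ even, so the factor is the constant $e^{i\pi/4}$. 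This gives $\theta_{1,1}(z,\tau+1)=e^{i\pi/4}\theta_{1,1}(z,\tau)$, which is the claimed identity.
\end{enumerate}

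For $\tau_1=-1/\tau$, I would first take $\tau\in i\RR^+$ and $z\in\RR$ and apply Poisson summation $\sum_k f(k)=\sum_p\hat f(p)$. Here $f(x)=\exp\bigl(i\pi\tau(x+\tfrac m2)^2+2i\pi(x+\tfrac m2)(z+\tfrac n2)\bigr)$ and $\hat f(p)=\int_\RR f(x)e^{-2i\pi px}\,dx$. The key step is the Gaussian integral:
\[
\int_\RR e^{i\pi\tau x^2+2i\pi xb}\,dx=(-i\tau)^{-1/2}e^{-i\pi b^2/\tau},
\]
with the principal branch of the square root, which is positive on $i\RR^+$. After the change of variables $x\mapsto x+\tfrac m2$, this produces a phase $e^{i\pi pm}$ and the shift $b=z+\tfrac n2-p$. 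Expanding $-(z+\tfrac n2-p)^2/\tau$ then gives:
\begin{enumerate}
\item[$\circ$] the prefactor $e^{i\pi\tau_1z^2}$;
\item[$\circ$] a sum over $p$ of the form $\sum_p\exp\bigl(i\pi\tau_1(p-\tfrac n2)^2+2i\pi(p-\tfrac n2)(\tau_1\cdot z)\bigr)e^{i\pi pm}$, up to a constant phase.
\end{enumerate}
Re-indexing $p\mapsto -p$ where needed and comparing with the series shows that this sum is $\theta_{n,m}(\tau_1\cdot z,\tau_1)$, times a root of unity depending on $(m,n)$. The roles of the characteristics are exchanged, which yields the pairings $\theta_{1,0}\leftrightarrow\theta_{0,1}$ and $\theta_{0,0},\theta_{1,1}$ fixed.

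The main obstacle is bookkeeping the constant phases: the $-i$ factors for $\theta_{1,1}$ and $\theta_{1,0}$, and the sign convention $\theta_{1,1}=-\theta_1$. I would fix these by checking a single value. For $\theta_{0,0}$ and $\theta_{1,0}$, evaluate at $z=0$ and $\tau=i$, where both sides are explicitly positive or otherwise comparable. For $\theta_{1,1}$, compare $\partial_z$ at $z=0$ using Jacobi's derivative identity $\theta_{1,1}'(0)=-\pi\theta_{0,0}(0)\theta_{1,0}(0)\theta_{0,1}(0)$ together with the three already-established cases.

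Finally, both sides are holomorphic in $(z,\tau)$ on $\CC\times\{\Im\tau>0\}$, with $(-i\tau)^{1/2}$ holomorphic there for the principal branch. Since the identities hold for $z\in\RR$ and $\tau\in i\RR^+$, the identity theorem extends them to all $z\in\CC$ and all $\tau$ with $\Im\tau>0$.
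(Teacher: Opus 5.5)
You are right to use termwise substitution for $\tau\mapsto\tau+1$ and Poisson summation plus analytic continuation for $\tau\mapsto-1/\tau$. This is the standard derivation behind the formulas the paper simply cites from \cite{DLMF}, and your $\tau_2$ part is correct.

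The genuine gap is that you set out to confirm the factor $-i$ in the third identity, and that identity is false as printed. The correct relation is $(-i\tau)^{1/2}\theta_{1,0}(z,\tau)=e^{i\pi\tau_1 z^2}\theta_{0,1}(\tau_1 z,\tau_1)$. This is what \cite[(20.7.31)]{DLMF} gives under $\theta_{1,0}(z)=\theta_2(\pi z)$ and $\theta_{0,1}(z)=\theta_4(\pi z)$.

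Your own proposed check exposes this. At $z=0$, $\tau=\tau_1=i$, the left side is $\sum_k e^{-\pi(k+1/2)^2}>0$, whereas the printed right side is $-i\sum_k(-1)^k e^{-\pi k^2}\in i\RR^*$.

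The four printed relations are in fact mutually inconsistent. Write $c_{m,n}$ for the phases.
\begin{enumerate}
\item Applying $\tau\mapsto-1/\tau$ twice gives $c_{1,0}c_{0,1}=1$, so the printed $c_{0,1}=1$ forces $c_{1,0}=1$.
\item Your derivative test via Jacobi's identity gives $c_{1,1}=-i\,c_{1,0}c_{0,1}$. Feeding it the printed $c_{1,0}=-i$ would return $c_{1,1}=-1$ instead of $-i$.
\end{enumerate}
Hence no argument can prove the lemma verbatim, and your proof should conclude with the corrected third identity instead. The misprint is harmless downstream. Lemma~\ref{lem:modular_transo_dimers} and Theorem~\ref{thm:real_spectral_curve} use only the $\theta_{1,1}$ and $\theta_{0,0}$ relations. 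In Proposition~\ref{prop:maximality_spectral_curve} the phases cancel in the squared ratio.

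You also do not need any a posteriori value checks, because your Poisson computation determines the phase exactly. The cross term coming from $e^{i\pi\tau_1(z+\frac n2-p)^2}$ is $-2i\pi(p-\frac n2)\tau_1 z$, not $+$. After $p\mapsto-p$, write $e^{-i\pi pm}=e^{i\pi mn/2}e^{-i\pi(p+\frac n2)m}$ and use $e^{-2i\pi(p+\frac n2)m}=(-1)^{mn}$. This gives
\[
(-i\tau)^{1/2}\theta_{m,n}(z,\tau)=e^{-i\pi mn/2}\,e^{i\pi\tau_1 z^2}\,\theta_{n,m}(\tau_1 z,\tau_1),
\]
directly for all $z\in\CC$ and $\Im\tau>0$, since the summand is a Gaussian. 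So the phase is $-i$ only for $(m,n)=(1,1)$ and equals $1$ in the three other cases.
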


\subsubsection{Definition and first properties}\label{sec:def_Fock_general}

We are now ready to state the definition of Fock's dimer model~\cite{Fock}, and establish some properties.

\begin{defi}[\cite{Fock}]\label{def:Focks_dimer_model} Consider an infinite minimal graph $\Gs$, a modular parameter $\tau$, an angle map $\mapalpha:\vec{\T}\rightarrow\CC$, and an element $t$ of $\CC$. Then, \emph{Fock's dimer model on $\Gs$ in genus~1, with parameters $\tau,\mapalpha,t$}, or simply \emph{Fock's dimer model}, is defined through its Kasteleyn matrix $\Ks_{\mapalpha, t}(\,\cdot\,,\tau)=\Ks_{\mapalpha, t}(\,\cdot\,)$, referred to as \emph{Fock's Kasteleyn matrix}. The latter has rows indexed by white vertices of $\Gs$, columns by black ones, and non-zero coefficients given by, for  
every edge $\ws\bs$ of $\Gs$, 
\begin{equation}\label{equ:Focks_weight}
\Ks_{\mapalpha, t}((\ws,\bs),\tau)=\Ks_{\mapalpha, t}(\ws,\bs)=\frac{\theta_{1,1}(\beta-\alpha)}{\theta_{0,0}(t+\mapd(\fs))\theta_{0,0}(t+\mapd(\fs'))},
\end{equation}
where the edge $\ws\bs$ is crossed by train-tracks with angles $\alpha,\beta\in\CC$, see Figure~\ref{fig:minimal} (left); suppose that intersecting train-tracks have distinct angles for otherwise the coefficient is 0.
Depending on the context, we will emphasize the dependence in $\tau$ in the notation for $\Ks$, or not. 

Suppose that Kasteleyn-Kuperberg phases $\phi$ are assigned to edges of $\Gs$. Then, by Definition~\eqref{defi:dimer_extras}, Fock's Kasteleyn matrix $\Ks_{\mapalpha, t}$ defines a dimer weight function, which is \emph{complex-valued} in general. 
\end{defi}

\begin{figure}[h]
  \centering
  \begin{overpic}[width=\linewidth]{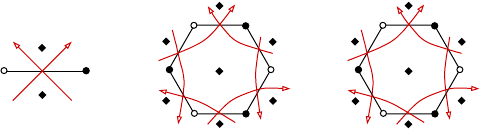}
  \put(-2,11){\scriptsize $\ws$} 
  \put(19.5,11){\scriptsize $\bs$}
  \put(2,18.5){\scriptsize $\beta$}
  \put(15,18.5){\scriptsize $\alpha$}
  \put(9,4){\scriptsize $\fs$}
  \put(9,18){\scriptsize $\fs'$}
  \put(48,26.2){\scriptsize $\alpha_1$}
  \put(41,26){\scriptsize $\beta_1$}
  \put(30.5,8){\scriptsize $\alpha_2$}
  \put(36,-1){\scriptsize $\beta_2$}
  \put(54,0){\scriptsize $\alpha_3$}
  \put(61,8){\scriptsize $\beta_3$}
  \put(39,23){\scriptsize $\ws_1$}
  \put(32,11){\scriptsize $\bs_1$}
  \put(39,0.5){\scriptsize $\ws_2 $}
  \put(51,0.5){\scriptsize $\bs_2 $}
  \put(58,11){\scriptsize $\ws_3$}
  \put(51,23){\scriptsize $\bs_3$}
  \put(45.5,9){\scriptsize $\fs$}
  \put(84,9){\scriptsize \textcolor{blue}{$\mapd(\fs)$}}
  \put(62.5,20){\scriptsize \textcolor{blue}{$\mapd(\fs)+\beta_2-\alpha_1$}}
  \put(80,27){\scriptsize \textcolor{blue}{$\mapd(\fs)+\beta_1-\alpha_1$}}
  \put(94,20){\scriptsize \textcolor{blue}{$\mapd(\fs)+\beta_1-\alpha_3$}}
  \put(94,3){\scriptsize \textcolor{blue}{$\mapd(\fs)+\beta_3-\alpha_3$}}
  \put(80,-2){\scriptsize \textcolor{blue}{$\mapd(\fs)+\beta_3-\alpha_2$}}
  \put(64,3){\scriptsize \textcolor{blue}{$\mapd(\fs)+\beta_2-\alpha_2$}}
  \end{overpic}
  \vspace{0.1cm}
  \caption{Left: train-track angles $\alpha,\beta$ at an edge $\ws\bs$. Center: train-track angles $\alpha_1,\beta_1,\dots,\alpha_{|\fs|/2},\beta_{|\fs|/2}$ around a face $\fs$ of degree $6$, with vertices $\ws_1,\bs_1,\dots,\ws_{|\fs|/2},\bs_{|\fs|/2}$ in counterclockwise order. Right (blue): computation of the discrete Abel map using $\mapd(\fs)$.}
  \label{fig:minimal}
\end{figure}

As noted in Section~\ref{subsec:dimer_model_defi}, rather than the weight function, the relevant quantities defining the dimer model are the face-weights. 
Given a face $\fs$ of degree $|\fs|$ (where $|\fs|$ is even  since $\Gs$ is bipartite) with vertices labeled $\ws_1,\bs_1,\dots,\ws_{|\fs|/2},\bs_{|\fs|/2}$ in counterclockwise order, and train-track angles denoted by $\alpha_1,\beta_1,\cdots,\alpha_{|\fs|/2},\beta_{|\fs|/2},$ as in Figure~\ref{fig:minimal} (center), the face-weight $\W_{\mapalpha,t}(\fs,\tau)=\W_{\mapalpha,t}(\fs)$ associated to Fock's dimer model is given by 
\begin{equation}\label{equ:face_weight_bip}
\W_{\mapalpha,t}(\fs,\tau)=\W_{\mapalpha,t}(\fs)=\prod_{i=1}^{|\fs|/2}
   \frac{\theta_{1,1}(\alpha_i-\beta_{i+1})}{\theta_{1,1}(\beta_i-\alpha_i)} 
 \frac{\theta_{0,0}(t+\mapd(\fs)+\beta_{i}-\alpha_i)}{\theta_{0,0}(t+\mapd(\fs)+\beta_{i+1}-\alpha_i)},
\end{equation} 
where we used the definition of the discrete Abel map to express all terms involving $\mapd$ using $\mapd(\fs)$, see Figure~\ref{fig:minimal} (right).

Theta functions have non-trivial periods along horizontal/vertical periods of the torus $\TT(\tau)$ so that, in order for Fock's Kasteleyn matrix to be well defined, the angle map $\mapalpha$ and the parameter $t$ have to be considered as living in $\CC$. The next lemma proves that actually face-weights are periodic functions on $\TT(\tau)$ of the parameters of the model. This implies that 
when considering face-weights, the angle map $\mapalpha$ and $t$ can be seen as living on $\TT(\tau)$, whereas when considering Fock's Kasteleyn matrix, lifted angles in $\CC$ should be considered.
We do not introduce different notation for the angles in $\TT(\tau)$ and the lifted angles in $\CC$, trying to make it clear from the context. Note that periodicity of face-weights under horizontal translations is already proved in~\cite[Remark 30]{BCdT:genus_g}.

\begin{lem}\label{lem:invariance_weights}
For every face $\fs$ of $\Gs$, the face-weight $\W_{\mapalpha,t}(\fs)$ is a meromorphic function on the torus $\TT(\tau)$ when considered as a function of $t$, or of any of the angles $\alpha_i,\beta_i$, $i\in\{1,\dots,|\fs|/2\}$. 
\end{lem}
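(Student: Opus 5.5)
The plan is to use Lemma~\ref{lem:meromorphic_theta_ratio}, or equivalently the quasi-periodicity formulas~\eqref{eq:theta:2}--\eqref{eq:theta:3}. Write the face-weight~\eqref{equ:face_weight_bip} as a product of two factors:
\[
A=\prod_{i=1}^{|\fs|/2}\frac{\theta_{1,1}(\alpha_i-\beta_{i+1})}{\theta_{1,1}(\beta_i-\alpha_i)},\qquad
B=\prod_{i=1}^{|\fs|/2}\frac{\theta_{0,0}(t+\mapd(\fs)+\beta_{i}-\alpha_i)}{\theta_{0,0}(t+\mapd(\fs)+\beta_{i+1}-\alpha_i)}.
\]
Each factor is a ratio of entire functions, so $\W_{\mapalpha,t}(\fs)$ is meromorphic on $\CC$ in each variable. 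It therefore suffices to check invariance under $u\mapsto u+1$ and $u\mapsto u+\tau$, where $u$ is $t$ or one of the angles.

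\emph{Dependence on $t$.} Only $B$ depends on $t$. In the variable $t$ it has the form of Lemma~\ref{lem:meromorphic_theta_ratio} with $(m,n)=(0,0)$, $\ell=0$, and
\[
a_i=-\mapd(\fs)-\beta_i+\alpha_i,\qquad b_i=-\mapd(\fs)-\beta_{i+1}+\alpha_i .
\]
Then $\sum_i(a_i-b_i)=\sum_i(\beta_{i+1}-\beta_i)=0$, since indices are cyclic. Hence $B$, and therefore $\W$, is meromorphic on $\TT(\tau)$ as a function of $t$. One subtlety: $\mapd(\fs)$ itself depends on the angles through the choice of reference vertex. This does not matter for the $t$-variable.

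\emph{Dependence on an angle, say $\alpha_j$.} Here I first need to track how $\mapd(\fs)$ depends on $\alpha_j$. I would fix the reference vertex $v_0$ and treat $\mapd(\fs)=c+\epsilon\alpha_j+(\text{other angles})$, where $\epsilon\in\ZZ$ counts the signed crossings of $\vec{T}_j$ between $v_0$ and $\fs$. For a periodic graph the same train-track may also appear several times around $\fs$, so all occurrences must be counted together. The cleaner route is to check directly, via~\eqref{eq:theta:2}, the multiplier picked up under $\alpha_j\mapsto\alpha_j+2\rho$ with $2\rho\in\{1,\tau\}$.

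For $\theta_{1,1}$: the variable $\alpha_j$ appears once in a numerator, as $\alpha_j-\beta_{j+1}$, and once in a denominator, as $\beta_j-\alpha_j$ (with opposite sign). By parity~\eqref{eq:theta:1}, $\theta_{1,1}(\beta_j-\alpha_j)=-\theta_{1,1}(\alpha_j-\beta_j)$. So $A$ becomes, up to sign, a ratio of the form in Lemma~\ref{lem:meromorphic_theta_ratio} in $u=\alpha_j$, with $a-b=\beta_{j+1}-\beta_j$. This difference is not an integer. Hence $A$ alone is not periodic, and it is the $\theta_{0,0}$ factor $B$ that must compensate.

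For $\theta_{0,0}$: the arguments $t+\mapd(\fs)+\beta_i-\alpha_i$ involve $\alpha_j$ both through $-\alpha_j$ (at $i=j$) and through $\mapd(\fs)$ (shift $\epsilon\alpha_j$ in every term). I would compute the total exponential multiplier $e^{-2i\pi(\cdot)}$ and the signs $(-1)^{jm+\ell n}$ coming from all theta factors, using the translation formula~\eqref{eq:theta:2} under $\alpha_j\mapsto\alpha_j+\tau$.

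The main obstacle is the bookkeeping showing that these multipliers cancel exactly. The linear-in-$\alpha_j$ exponents must cancel (the count of numerator and denominator thetas containing $\alpha_j$ balances), and the constant parts, involving $\beta$'s, $t$ and $\mapd(\fs)$, must cancel too. I expect this to come from the identity $\sum_i(\beta_{i+1}-\beta_i)=0$ together with the fact that the $\theta_{1,1}$ and $\theta_{0,0}$ arguments differ by exactly $t+\mapd(\fs)$. The sign factors must also cancel, using that the numbers of theta factors in the numerator and denominator are equal.

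Invariance under $\alpha_j\mapsto\alpha_j+1$ is easier, since then $\ell=0$ and only signs appear; these cancel by the same counting. The argument for the $\beta_j$ is symmetric.
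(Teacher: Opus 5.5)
Your approach is correct and the bookkeeping you defer does close. It is a different route from the paper's.

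The paper never mixes theta characteristics. It uses the half-period formula~\eqref{eq:theta:4} to rewrite each ratio of $\theta_{0,0}$'s as $e^{-i\pi(\beta_i-\beta_{i+1})}$ times the corresponding ratio of $\theta_{1,1}(\,\cdot\,-\frac12-\frac\tau2)$. Oddness of $\theta_{1,1}$ then puts the four factors containing $\alpha_i$ in the form $\theta_{1,1}(\alpha_i-c)$, so Lemma~\ref{lem:meromorphic_theta_ratio} applies directly with $\sum(a_k-b_k)=0$. The case where $\mapd(\fs)$ itself contains $\alpha_i$ is only asserted to be similar.

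You instead check quasi-periodicity by hand. This is legitimate because the $\tau$-multiplier $q^{-1}e^{-2i\pi z}$ in~\eqref{eq:theta:2} is the same for every characteristic; only the sign $(-1)^n$ depends on it. Your version also covers, uniformly, the $\mapd(\fs)$-dependent case that the paper leaves implicit.

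The computation you leave open is two lines. Under $\alpha_j\mapsto\alpha_j+\tau$, with $\mapd(\fs)\mapsto\mapd(\fs)+\epsilon\tau$ for some $\epsilon\in\ZZ$:
\begin{itemize}
\item The pair $\theta_{1,1}(\alpha_j-\beta_{j+1})/\theta_{1,1}(\beta_j-\alpha_j)$ picks up $e^{-2i\pi(\beta_j-\beta_{j+1})}$; its two signs cancel.
\item In the $i$-th $\theta_{0,0}$ ratio, both arguments move by the same amount $(\epsilon-\mathbf{1}_{\{i=j\}})\tau$. The $q$-powers and the common part $t+\mapd(\fs)-\alpha_i$ cancel, so the ratio contributes $e^{-2i\pi(\epsilon-\mathbf{1}_{\{i=j\}})(\beta_i-\beta_{i+1})}$.
\item By $\sum_i(\beta_i-\beta_{i+1})=0$, the product over $i$ of these contributions is $e^{2i\pi(\beta_j-\beta_{j+1})}$, which cancels the first multiplier.
\end{itemize}
So your guess about the ingredients is right. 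The explicit $-\alpha_j$ slot cancels locally against the $\theta_{1,1}$ pair, and the cyclic identity absorbs the dependence through $\mapd(\fs)$, for any integer $\epsilon$.

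The remaining cases are quick:
\begin{itemize}
\item The shift by $1$ is immediate: $\theta_{0,0}$ is $1$-periodic, and the two $\theta_{1,1}$ signs cancel.
\item For $\beta_j$, the same count gives the multipliers $e^{\pm 2i\pi(\alpha_{j-1}-\alpha_j)}$, again with the $\epsilon$-part killed by the cyclic identity.
\item Your worry about a train-track appearing several times around $\fs$ is harmless. On the infinite minimal graph the train-tracks around a face are pairwise distinct (\cite[Lemma 8]{BCdT:immersions}, as the paper uses). If a periodic angle map ties several slots together, invariance in each slot separately still gives invariance along the diagonal.
\end{itemize}
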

\begin{proof}
Consider first the face-weight $\W_{\mapalpha,t}(\fs)$ as a function of $t$. Since the first term of $\W_{\mapalpha,t}(\fs)$ is constant when seen as a function of $t$, we need to prove that 
\[
\prod_{i=1}^{|\fs|/2}
 \frac{\theta_{0,0}(t+\mapd(\fs)+\beta_{i}-\alpha_i)}{\theta_{0,0}(t+\mapd(\fs)+\beta_{i+1}-\alpha_i)},
\]
is a meromorphic function of $t$ on $\TT(\tau)$. This is indeed the case using Lemma~\ref{lem:meromorphic_theta_ratio} and observing that 
\[
\sum_{i=1}^{|\fs|/2}\bigl(-\mapd(\fs)-\beta_{i}+\alpha_i\bigr)-\bigl(-\mapd(\fs)-\beta_{i+1}+\alpha_i\bigr)=\sum_{i=1}^{|\fs|/2}(\beta_{i+1}-\beta_{i})=0.
\]
Consider now the face-weight $\W_{\mapalpha,t}(\fs)$ as a function of $\alpha_i$. Since the graph is minimal, we know that train-tracks around a face are distinct~\cite[Lemma 8]{BCdT:immersions}, so that $\alpha_i$ corresponds to a single train-track around the face $\fs$. We thus need to show that
\[
   \frac{\theta_{1,1}(\alpha_i-\beta_{i+1})}{\theta_{1,1}(\beta_i-\alpha_i)} 
 \frac{\theta_{0,0}(t+\mapd(\fs)+\beta_{i}-\alpha_i)}{\theta_{0,0}(t+\mapd(\fs)+\beta_{i+1}-\alpha_i)},
\]
is a meromorphic function of $\alpha_i$ on $\TT(\tau)$. Using Identity~\eqref{eq:theta:4}, we have
that 
\begin{align*}
\frac{\theta_{0,0}(t+\mapd(\fs)+\beta_{i}-\alpha_i)}{\theta_{0,0}(t+\mapd(\fs)+\beta_{i+1}-\alpha_i)}
=e^{-i\pi(\beta_{i}-\beta_{i+1})}
\frac{\theta_{1,1}(t-\frac{1}{2}-\frac{\tau}{2}+\mapd(\fs)+\beta_{i}-\alpha_i)}{\theta_{1,1}(t-\frac{1}{2}-\frac{\tau}{2}+\mapd(\fs)+\beta_{i+1}-\alpha_i)}.
\end{align*}
Using furthermore that $\theta_{1,1}$ is odd, we need to prove that
\[
-e^{-i\pi(\beta_{i+1}-\beta_i)}\frac{\theta_{1,1}(\alpha_i-\beta_{i+1})}{\theta_{1,1}(\alpha_i-\beta_i)} 
\frac{\theta_{1,1}(\alpha_i-(t-\frac{1}{2}-\frac{\tau}{2}+\mapd(\fs)+\beta_{i}))}{\theta_{1,1}(\alpha_i-(t-\frac{1}{2}-\frac{\tau}{2}+\mapd(\fs)+\beta_{i+1}))},
\]
is a meromorphic function of $\alpha_i$ on $\TT(\tau)$. The exponential term is constant when seen as a function of $\alpha_i$. For the remaining product of two terms, suppose first that there is no contribution $\alpha_i$ to $\mapd(\fs)$, then this product is indeed meromorphic using Lemma~\ref{lem:meromorphic_theta_ratio} and observing that 
\[
\beta_{i+1}+\Bigl(t-\frac{1}{2}-\frac{\tau}{2}+\mapd(\fs)+\beta_{i}\Bigr)
-\beta_i-\Bigl(t-\frac{1}{2}-\frac{\tau}{2}+\mapd(\fs)+\beta_{i+1}\Bigr)=0.
\]
When $\mapd(\fs)$ contains the term $\alpha_i$, the argument is similar, but involves heavier notation so that we do not make it explicit. 
The proof is similar when considering the face-weight $\W_{\mapalpha,t}(\fs)$ as a function of $\beta_i$. \qedhere
\end{proof}

\begin{rem}\label{rem:angle_definition}
Note that the discrete Abel map $\mapd(\fs)$ is always summed with $t$, so that our proof also implies that the face-weight $\W_{\mapalpha,t}(\fs)$ is meromorphic when considered as function of $\mapd(\fs)$. 
\end{rem}

Lemma~\ref{lem:modular_transo_dimers} below proves that Fock's dimer model is invariant under modular transformations of the modular parameter $\tau$, up to an appropriate transformation of the angle map $\mapalpha$ and of the parameter $t$. In particular, this implies that we can restrict the study of Fock's dimer model on the torus $\TT(\tau)$ to the standard fundamental domain of the modular parameter $\tau$ as defined in Equation~\eqref{equ:fund_domain_modular}.

\begin{lem}\label{lem:modular_transo_dimers}
For every face $\fs$ of $\Gs$, the face-weight $\W_{\mapalpha,t}(\fs,\tau)$ transforms as follows under the two generators $\tau_1=-\frac{1}{\tau},\tau_2=\tau+1$ of the modular group:
\begin{equation*}
\W_{\mapalpha,t}(\fs,\tau)=
\W_{\tau_1 \mapalpha,\tau_1  t}(\fs,\tau_1),\quad 
\W_{\mapalpha,t}(\fs,\tau)=
\W_{\mapalpha,t+\frac{1}{2}}(\fs,\tau_2).
\end{equation*}
\end{lem}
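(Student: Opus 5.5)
We will prove both identities directly from the explicit face-weight formula~\eqref{equ:face_weight_bip}, treating each factor of the product separately. The face-weight is a product of two kinds of ratios. The first kind is $\theta_{1,1}(\alpha_i-\beta_{i+1})/\theta_{1,1}(\beta_i-\alpha_i)$. The second kind is $\theta_{0,0}(t+\mapd(\fs)+\beta_i-\alpha_i)/\theta_{0,0}(t+\mapd(\fs)+\beta_{i+1}-\alpha_i)$. For each generator we substitute the modular transformation rules of Lemma~\ref{lem:modular_transfo} into numerators and denominators. We then check that all prefactors cancel in the full product over $i$.

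\emph{The generator $\tau_1=-1/\tau$.} By~\eqref{eq:tau_1}, each $\theta_{1,1}(z,\tau)$ equals $-i(-i\tau)^{-1/2}e^{i\pi\tau_1 z^2}\theta_{1,1}(\tau_1 z,\tau_1)$. In the ratios of the first kind, the constants $-i(-i\tau)^{-1/2}$ cancel between numerator and denominator. The same holds for $\theta_{0,0}$, which carries the constant $(-i\tau)^{-1/2}$ and no $-i$.

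Two things remain. The rescaled theta arguments become $\tau_1\alpha_i$, $\tau_1\beta_i$, $\tau_1 t$ and $\tau_1\mapd(\fs)$. The discrete Abel map is linear in the angles, so $\tau_1\mapd(\fs)$ is the discrete Abel map of the angle map $\tau_1\mapalpha$. It therefore remains to show that the Gaussian factors $e^{i\pi\tau_1 z^2}$ cancel over the whole face.

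For the $\theta_{1,1}$ part, the exponent is $i\pi\tau_1\sum_i[(\alpha_i-\beta_{i+1})^2-(\beta_i-\alpha_i)^2]$. Expanding, this equals $i\pi\tau_1\sum_i[\beta_{i+1}^2-\beta_i^2-2\alpha_i(\beta_{i+1}-\beta_i)]$, and the telescoping terms vanish.

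For the $\theta_{0,0}$ part, set $x=t+\mapd(\fs)$. The exponent is $i\pi\tau_1\sum_i[(x+\beta_i-\alpha_i)^2-(x+\beta_{i+1}-\alpha_i)^2]$. This equals $i\pi\tau_1\sum_i(\beta_i-\beta_{i+1})(2x+\beta_i+\beta_{i+1}-2\alpha_i)$. The pieces $2x(\beta_i-\beta_{i+1})$ and $\beta_i^2-\beta_{i+1}^2$ telescope to zero. What is left is $-2\sum_i\alpha_i(\beta_i-\beta_{i+1})$, which exactly cancels the leftover $-2\sum_i\alpha_i(\beta_{i+1}-\beta_i)$ from the $\theta_{1,1}$ part. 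Both leftovers involve $\alpha_i$ against the same differences, with opposite signs, so the total Gaussian exponent is zero.

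The main obstacle is bookkeeping of the square-root branch of $(-i\tau)^{1/2}$ and of the signs. Since every constant appears with the same power in numerator and denominator of each ratio, we expect these to cancel without further work.

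\emph{The generator $\tau_2=\tau+1$.} By~\eqref{eq:tau_2}, $\theta_{1,1}(z,\tau)=e^{-i\pi/4}\theta_{1,1}(z,\tau_2)$, and this constant cancels in the ratios of the first kind. For the second kind we have $\theta_{0,0}(z,\tau)=\theta_{0,1}(z,\tau_2)$. To return to $\theta_{0,0}$ we use the half-period formula~\eqref{eq:theta:4} on $\TT(\tau_2)$ with $\rho=\frac12$, i.e.\ $(j,\ell)=(1,0)$. With $(m,n)=(0,0)$ this gives $\theta_{0,0}(z+\tfrac12,\tau_2)=\theta_{0,1}(z,\tau_2)$, with no exponential prefactor since $\ell=0$.

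It follows that each $\theta_{0,0}(x+\cdot,\tau)$ equals $\theta_{0,0}(x+\tfrac12+\cdot,\tau_2)$. Hence the face-weight at $\tau$ with parameter $t$ equals the face-weight at $\tau_2$ with parameter $t+\frac12$ and the same angle map, which is the claimed second identity. Finally, both identities hold regardless of the chosen lifts of the angles, by Lemma~\ref{lem:invariance_weights}.
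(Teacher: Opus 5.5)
Your proposal is correct and follows the paper's proof. You apply the transformation formulas of Lemma~\ref{lem:modular_transfo} factor by factor, cancel the constant prefactors within each ratio, and check that the Gaussian factors cancel over the face; for $\tau_2$ you use $\theta_{0,1}(z,\tau_2)=\theta_{0,0}(z+\tfrac12,\tau_2)$. The only cosmetic difference is that you split the Gaussian exponent into $\theta_{1,1}$ and $\theta_{0,0}$ contributions, whose residual terms $\pm 2\sum_i\alpha_i(\beta_i-\beta_{i+1})$ cancel, whereas the paper combines them at once into $2x\sum_i(\beta_i-\beta_{i+1})=0$.
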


\begin{proof}
The first equality is obtained by applying Identity~\eqref{eq:tau_1} to $\W_{\mapalpha,t}(\fs,\tau)$. Observe that the theta characteristics $(1,1)$ and $(0,0)$ are unchanged, and that arguments of the theta functions are multiplied by $\tau_1$. The terms $(-i\tau)^{\frac{1}{2}}$ and $(-i)$ cancel out, so that we are left with proving that the contribution of the terms $e^{i\pi\tau_1 z^{2}}$ cancel out. The latter is equal to:
\begin{align*}
\prod_{i=1}^{|\fs|/2}e^{i\pi[(\alpha_i-\beta_{i+1})^2+(t+\mapd(\fs)+\beta_i-\alpha_i)^2-
(\beta_i-\alpha_i)^2-(t+\mapd(\fs)+\beta_{i+1}-\alpha_i)^2]}=
e^{2i\pi\tau_1(t+\mapd(\fs))\sum_{i=1}^{|\fs|/2}(\beta_{i}-\beta_{i+1})}=1,
\end{align*}
thus proving the first equality.
The second equality is obtained by noting that, when performing the modular transformation $\tau_2$, see Identity~\eqref{eq:tau_2}, arguments of theta functions are unchanged, the characteristic $(1,1)$ is unchanged, and the characteristic $(0,0)$ gets mapped to $(0,1)$. One then uses Identity~\eqref{eq:theta:4} to bring $\theta_{0,1}$ back to $\theta_{0,0}$:
\[
\theta_{0,1}(z,\tau_2)=\theta_{0,0}\Bigl(z+\frac{1}{2},\tau_2\Bigr).
\]
The contributions of prefactors are all constant and cancel out, thus ending the proof of the second equality. \qedhere 
\end{proof}

\subsection{Parameterization of real algebraic curves using dimers}\label{sec:real_spectral_curve}

Let $\C$ be an algebraic curve of genus 1, and suppose that it satisfies Assumptions $(\dagger)$. By Fock~\cite{Fock}, there exists an infinite, periodic minimal graph $\Gs$, a torus $\TT(\tau)$, a periodic angle map $\mapalpha:\vec {\T_1}\rightarrow\CC$, such that for every $t\in\CC$, the spectral curve of Fock's dimer model on $\Gs$ is the curve $\C$ up to a global scaling $(z,w)\leftrightarrow(\lambda z,\mu w)$ for some $(\lambda,\mu)\in(\CC^*)^2$. We speak of \emph{Fock's dimer models corresponding to the curve $\C$.} Note that, since the curve is supposed to be non singular apart from isolated real singularities, the angle map $\mapalpha$ is generic in the sense that there are a priori no relations between angles of distinct train-tracks, as linear relations for example. 

Since the curve $\C$ is irreducible, we actually know that an explicit bi-rational parameterization of $\C$ is given by the meromorphic function $\Psi:\TT(\tau)\rightarrow \C$, where $\Psi(u)=(z(u),w(u))$, with $z,w$ given by 
\begin{equation}\label{equ:param_spectral_curve}
z(u)=c_1\cdot e^{2i\pi \ell_v\cdot u}\prod_{\vec{T}\in\vec{\T}_1}\theta_{1,1}(u-\alpha_{\vec{T}})^{-v_{\vec{T}}},\quad 
w(u)=c_2\cdot e^{-2i\pi \ell_h\cdot u}\prod_{\vec{T}\in\vec{\T}_1}\theta_{1,1}(u-\alpha_{\vec{T}})^{h_{\vec{T}}},
\end{equation}
for some $c_1,c_2\in\CC$, where $\sum_{\vec{T}\in\vec{\T}_1} v_{\vec{T}}=\sum_{\vec{T}\in\vec{\T}_1} h_{\vec{T}}=0$, and 
\begin{equation}\label{cond:angle_map_periodic}
\sum_{\vec{T}\in\vec{\T}_1}\alpha_{\vec{T}}v_{\vec{T}}=\ell_v\tau,\quad 
\sum_{\vec{T}\in\vec{\T}_1}\alpha_{\vec{T}}h_{\vec{T}}=\ell_h\tau,
\end{equation}
with $\ell_v,\ell_h\in\ZZ$; the form of the exponential prefactor in Equation~\eqref{equ:param_spectral_curve} is given by Lemma~\ref{lem:meromorphic_theta_ratio}. The existence of $\Psi$ follows from general parameterization theorems, while the breakthrough contribution of Fock~\cite{Fock} consists in relating parameters of $\Psi$ 
to parameters of the dimer model on a periodic minimal graph $\Gs$ with Fock's Kasteleyn matrix, see Definition~\ref{def:Focks_dimer_model}. In the case of genus 0 Harnack curves, this had been previously established in
\cite{Kenyon3,KO}, and a specific case of the genus 1 case was established in~\cite{BdTR1}. Note that the construction of the periodic minimal graph $\Gs$ arises from the paper~\cite{GoncharovKenyon}. 

The algebraic curve $\C$ is said to be \emph{real} if it is invariant by complex conjugation: $(z,w)\in\C$ if and only if $(\bar{z},\bar{w})\in\C$. 
The curve $\C$ is said to be \emph{maximal} if the number of connected components of the real locus (in the compactified toric surface associated to the Newton polygon, see for example~\cite{MikhalkinRullgard}) is equal to 2, that is equal to the genus plus 1.
The main result of this section, Theorem~\ref{thm:real_spectral_curve} below, identifies the values of the modular parameter $\tau$ and of the angle map $\mapalpha$ corresponding to \emph{real} curves. In the following section, we identify the values of the parameter $t$ associated to \emph{real dimer models}. These results generalize~\cite[Theorem 34]{BCdT:genus_1} where we characterized the values of the parameters $\tau,\mapalpha$ corresponding to simple Harnack curves and the values of $t$ associated to positive dimer edge-weights.

\begin{thm}\label{thm:real_spectral_curve}
Consider an algebraic curve $\C$ of genus 1 satisfying $(\dagger)$ and Fock's corresponding dimer model on $\Gs$
with modular parameter $\tau$, angle map $\mapalpha$ (both fixed by $\C$), and free parameter $t\in\CC$. Suppose that $\C$ contains a real point. Then, the algebraic curve $\C$ is real if and only if, for generic values, the modular parameter $\tau$ belongs to $i\RR^+\cup \bigl\{\frac{1}{2}+i\RR^+\,:\Im(\tau)\geq\frac{\sqrt{3}}{2}\bigr\}$, up to modular transformations of $\tau$. Moreover, 
\begin{enumerate}
\item when $\tau\in i\RR^+$ then, generically, the angle map $\mapalpha=(\alpha_{\vec{T}})_{\vec{T}\in\vec{\T}_1}$ takes values in $\RR + \{0,\frac{\tau}{2}\}~[\Lambda]$. Moreover, the curve $\C$ is maximal, 
\item when $\tau\in \frac{1}{2}+i\RR$ with $\Im(\tau)\geq \frac{\sqrt{3}}{2}$ then, generically, either 
the angle map $\mapalpha$ takes values in $\RR~[\Lambda]$, or the angle map $\mapalpha$ takes values in $\bigl\{0,\frac{1}{2}\bigr\}+ i\RR~[\Lambda]$. In both cases, the curve $\C$ is not maximal. 
\end{enumerate}
\end{thm}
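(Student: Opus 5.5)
The plan is to transport complex conjugation on $\C$ through the birational parameterization $\Psi:\TT(\tau)\rightarrow\C$ of Equation~\eqref{equ:param_spectral_curve}. If $\C$ is real, the map $(z,w)\mapsto(\bar z,\bar w)$ restricts to an antiholomorphic involution of $\C$. Since $\Psi$ is birational, this involution lifts to an antiholomorphic involution $\sigma$ of $\TT(\tau)$ with $\Psi\circ\sigma=\overline{\Psi}$.

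Conversely, I will show that the stated conditions on $\tau,\mapalpha$ produce such a $\sigma$. Using the complex conjugation formula~\eqref{equ:complex_conjugation} together with the parity and half-period formulas~\eqref{eq:theta:1}--\eqref{eq:theta:5}, one checks that $\overline{z(\sigma(u))}$ and $z(u)$ have the same zeros and poles. By Lemma~\ref{lem:meromorphic_theta_ratio} they then differ by a constant, and this constant is absorbed into the global scaling $(\lambda,\mu)$. The same argument applies to $w$. The hypothesis that $\C$ contains a real point fixes this constant so that the curve itself, and not only its scaled version, is invariant.

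First, I will classify antiholomorphic involutions of $\TT(\tau)$. Every such involution has the form $u\mapsto \epsilon\bar u+c$. For it to descend to $\CC/\Lambda$, the map $u\mapsto\bar u$ must map $\Lambda$ to $\pm\Lambda$, so $\bar\tau$ (or $-\bar\tau$) must lie in $\Lambda$ up to the $SL_2(\ZZ)$ action. This is the classical fact that real elliptic curves have, up to modular transformation, $\tau\in i\RR^+$ (rectangular lattice) or $\Re\tau=\frac12$ (rhombic lattice). Intersecting with the standard fundamental domain~\eqref{equ:fund_domain_modular} gives $\tau\in i\RR^+$, or $\tau=\frac12+i y$ with $y\geq\frac{\sqrt3}{2}$. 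Using Lemma~\ref{lem:modular_transo_dimers}, I may reduce to these representatives. Composing with a translation, I may assume the involution is either $u\mapsto\bar u+c$ or $u\mapsto-\bar u+c$, where $c$ is constrained by the requirement that $\sigma$ be an involution.

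Second, I will determine the angle map. The zeros and poles of $z$ and $w$ sit at the angles $\alpha_{\vec T}$, with multiplicities given by the homology classes $(h_T,v_T)$. For generic curves, distinct train-tracks have distinct homology directions or are in generic position, so $\sigma$ must fix each $\alpha_{\vec T}$ individually. Generic here means the angles are not permuted among themselves. Hence every $\alpha_{\vec T}$ lies in the fixed locus of $\sigma$.

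For $\tau\in i\RR^+$, the fixed locus of $u\mapsto\bar u$ is $\RR\cup(\RR+\frac\tau2)$ modulo $\Lambda$, which consists of two circles. This gives the claimed values of $\mapalpha$. Maximality then follows because $\Psi$ maps these two real circles onto the real locus in the compactified toric surface, giving $2=g+1$ components.

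For the rhombic case, the involution $u\mapsto\bar u$ has fixed locus a single circle, either $\RR$ or, after the alternative normalization $u\mapsto-\bar u$, the circle $\{0,\frac12\}+i\RR$. This gives exactly the two options stated, each with one real component, so the curve is not maximal. The involutions without fixed points are excluded because they would force the real locus to be empty, which contradicts the assumption that $\C$ contains a real point.

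The main obstacle will be the bookkeeping of the exponential prefactors $e^{2i\pi\ell u}$ and of the quasi-periodicity factors, and verifying that $\overline{z\circ\sigma}/z$ is genuinely constant rather than merely meromorphic with no zeros and poles. I also need to justify the genericity step carefully, namely that $\sigma$ cannot permute angles. I would handle this by noting that a nontrivial permutation of the $\alpha_{\vec T}$ compatible with the divisors of $z$ and $w$ imposes a linear relation among angles, and such relations are excluded under $(\dagger)$.
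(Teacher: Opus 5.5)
Your route is the same as the paper's. You lift conjugation through the birational map $\Psi$ to an antiholomorphic involution $\sigma$ of $\TT(\tau)$, list the possible $\sigma$, use genericity to force each $\alpha_{\vec T}$ into the fixed locus of $\sigma$, and count the components of that locus to get maximality.

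\textbf{Missing case.} For $\tau\in i\RR^+$ you only treat $\sigma(u)=\bar u$. A rectangular torus also carries the real structure $\sigma(u)=-\bar u$. It is not conjugate to $\bar u$ by any automorphism of $\TT(\tau)$ unless $\tau=i$, and its fixed locus $\{0,\frac12\}+\tau\RR~[\Lambda]$ is again two circles. Your argument then places the angles in $\{0,\frac12\}+\tau\RR~[\Lambda]$, not in $\RR+\{0,\frac{\tau}{2}\}~[\Lambda]$. Since you have already normalized $\tau$ into the standard fundamental domain, you have no tool left to convert this case. The paper handles it by applying $\tau\mapsto\tau_1=-1/\tau$ directly to $z$ and $w$ through the transformation law \eqref{eq:tau_1} of $\theta_{1,1}$; Lemma~\ref{lem:modular_transo_dimers} concerns face-weights, not $\Psi$. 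This transformation sends $\{0,\frac12\}+\tau\RR$ to $\RR+\{0,\frac{\tau_1}{2}\}$ with $\Im\tau_1\le 1$. That is exactly why Point~1 is stated on all of $i\RR^+$ with the single involution $u\mapsto\bar u$.

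\textbf{Uses of the real-point hypothesis.} Neither use works as written, though both are repairable.
\begin{itemize}
\item Fixed-point-free involutions are not excluded by ``the real locus would be empty.'' Assumption $(\dagger)$ allows isolated real nodes, i.e.\ real points $\Psi(u)=\Psi(\sigma(u))$ with $u\neq\sigma(u)$, and these can exist when $\sigma$ has no fixed points. You do not need this step anyway: your own conclusion that every $\alpha_{\vec T}$ is fixed by $\sigma$ already rules out such $\sigma$. In the paper, $b=\frac12$ leads to $2i\Im(\alpha_{\vec T})=\frac12~[\Lambda]$, which has no solution.
\item A real point $p=\Psi(u_0)$ does not pin down the constant $D$ in $\overline{\Psi(u)}=D\,\Psi(\sigma(u))$; it only gives $D^{-1}p\in\C$. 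What is true is that iterating the relation with $\sigma^2=\mathrm{id}$ gives $|D|=1$ componentwise. So $D$ is removed by a unimodular rescaling $(z,w)\mapsto(\lambda z,\mu w)$. Since $\C$ is only determined by Fock's data up to such a scaling, this is the correct sense of the converse.
\end{itemize}

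\textbf{Genericity.} The claim that $\sigma$ cannot permute the angles does not follow from $(\dagger)$. A smooth real curve can meet a toric boundary divisor in two complex-conjugate points; for example, a real plane cubic whose restriction to a side of the Newton polygon has a pair of complex roots. In that case $\sigma$ swaps the angles of two parallel train-tracks. Treat the absence of such coincidences as the genericity hypothesis in the statement, which is how the paper's proof uses it.
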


\begin{proof}
Consider the birational parameterization $\Psi$ of the curve given by~\eqref{equ:param_spectral_curve}. Following~\cite[Chapter 11, Section 88]{DuVal}, the curve $\C$ is invariant by complex conjugation if and only if the real structure is preserved by $\Psi$, meaning that there exists an anti-holomorphic involution $\sigma(u)=a\bar{u}+b$ of the torus $\TT(\tau)$ such that, for every $u\in\TT(\tau)$, $\overline{\Psi(u)}=\Psi(\sigma(u))$. Du Val furthermore classifies these involutions, see also~\cite[Table 1]{Bates}, and obtains the following conditions on the modular parameter $\tau$ (assuming it is generic) up to modular transformations, and on the coefficients $a$ and $b$:
\begin{enumerate}
 \item[$\circ$] $\tau\in i\RR,\Im(\tau)\geq 1$, $a=1$ and $b\in\{0,\frac{1}{2}\}$, or $a=-1$ and $b\in\{0,\frac{\tau}{2}\}$.
 \item[$\circ$] $\tau\in \frac{1}{2}+i\RR,\Im(\tau)\geq \frac{\sqrt{3}}{2}$, $a\in\{-1,1\}$ and $b=0$.
\end{enumerate}

%

\underline{Proof of Point 1}. Suppose that $\tau\in i\RR,\Im(\tau)\geq 1$. Then $-\bar{\tau}=\tau$, and using Identity~\eqref{equ:complex_conjugation} we have
\[
\overline{\theta_{1,1}(z,\tau)}=\theta_{1,1}(\bar{z},-\bar{\tau})=\theta_{1,1}(\bar{z},\tau).
\]
Consider the case where $\sigma(u)=\bar{u}+b$, with $b\in\{0,\frac{1}{2}\}$. The condition 
$\overline{\Psi(u)}=\Psi(\sigma(u))$ written for the first component reads
\begin{align*}
&\forall\,u\in\TT(\tau),\quad \bar{c}_1\cdot 
e^{-2i\pi \ell_v\cdot \bar{u}}
\prod_{\vec{T}\in\vec{\T}_1}\theta_{1,1}(\bar{u}-\bar{\alpha}_{\vec{T}})^{-v_{\vec{T}}}=
c_1\cdot 
e^{2i\pi \ell_v\cdot (\bar{u}+b)}
\prod_{\vec{T}\in\vec{\T}_1}\theta_{1,1}(\bar{u}+b-\alpha_{\vec{T}})^{-v_{\vec{T}}}\\
\Leftrightarrow\ & 
\forall\,u\in\TT(\tau),\quad 
e^{-4i\pi\ell_v\cdot u}
\prod_{\vec{T}\in\vec{\T}_1}\left(\frac{\theta_{1,1}(u-\bar{\alpha}_{\vec{T}})}{\theta_{1,1}(u+b-\alpha_{\vec{T}})}\right)^{-v_{\vec{T}}}=e^{2i\pi\ell_v \cdot b}\frac{c_1}{\bar{c}_1}. 
\end{align*}
By Lemma~\ref{lem:meromorphic_theta_ratio}, the left hand side of the above equation is a meromorphic function of $\TT(\tau)$, which we denote by $m=m(u)$. The condition $\overline{\Psi(u)}=\Psi(\sigma(u))$ is equivalent to asking that this function $m$ is equal to an explicit constant, namely $e^{2i\pi\ell_v \cdot b}c_1/\bar{c}_1$. In particular, the function $m$ needs to be constant, which is true if and only if the zeros of the numerator and the denominator cancel out. Using that zeros of the function $\theta_{1,1}(\,\cdot\,)$ are the points $0\,[\Lambda]$, and using that by our genericity assumption on the angle map $\mapalpha$, there are no relations between angles of distinct train-tracks, we deduce that the meromorphic function $m$ is constant if and only if, for every train-track $\vec{T}$ such that $v_{\vec{T}}\neq 0$, 
\[
\bar{\alpha}_{\vec{T}}=(\alpha_{\vec{T}}-b)\,[\Lambda]\Leftrightarrow\ 2i\Im(\alpha_{\vec{T}})=b\,[\Lambda]\ \Leftrightarrow\ b=0 \text{ and } \Im(\alpha_{\vec{T}})\in\Bigl\{0,\frac{\tau}{2}\Bigr\} + \tau\ZZ. 
\]
The explicit value of $\frac{c_1}{\bar{c}_1}$ can be derived by the evaluation of $m$ at $u=0$. Since this is not needed in the sequel, we do note make this computation explicit. 
Repeating the argument for the second component of the function $\Psi$, we deduce that the angle map $\mapalpha$ takes values in $\RR+\bigl\{0,\frac{\tau}{2}\bigr\}~[\Lambda]$. Note that in this part of the argument we have only used that $\tau\in i\RR^+$, but not that $\Im(\tau)\geq 1$. 

Now consider the case where $\sigma(u)=-\bar{u}+b$, with $b\in\{0,\frac{\tau}{2}\}$. The condition 
$\overline{\Psi(u)}=\Psi(\sigma(u))$ written for the first component reads
\begin{align*}
&\forall\,u\in\TT(\tau),\quad \bar{c}_1\cdot 
e^{-2i\pi \ell_v\cdot \bar{u}}
\prod_{\vec{T}\in\vec{\T}_1}\theta_{1,1}(\bar{u}-\bar{\alpha}_{\vec{T}})^{-v_{\vec{T}}}=
c_1\cdot 
e^{2i\pi \ell_v\cdot (-\bar{u}+b)}
\prod_{\vec{T}\in\vec{\T}_1}\theta_{1,1}(-\bar{u}+b-\alpha_{\vec{T}})^{-v_{\vec{T}}}\\
\Leftrightarrow\ & 
\forall\,u\in\TT(\tau),\quad 
\prod_{\vec{T}\in\vec{\T}_1}\left(\frac{\theta_{1,1}(u-\bar{\alpha}_{\vec{T}})}{\theta_{1,1}(-u+b-\alpha_{\vec{T}})}\right)^{-v_{\vec{T}}}=e^{2i\pi\ell_v \cdot b}\frac{c_1}{\bar{c}_1}. 
\end{align*}
Using again that, by Lemma~\ref{lem:meromorphic_theta_ratio}, the left hand side of the above equation is a meromorphic function of $\TT(\tau)$, we deduce that, for every train-track $\vec{T}$ such that $v_{\vec{T}}\neq 0$, we must have the following condition:
\[
\bar{\alpha}_{\vec{T}}=(-\alpha_{\vec{T}}+b)\,[\Lambda]\Leftrightarrow\ 2\Re(\alpha_{\vec{T}})=b\,[\Lambda]\ \Leftrightarrow\ b=0  \text{ and } \Re(\alpha_{\vec{T}})\in\Bigl\{0,\frac{1}{2}\Bigr\} +\ZZ. 
\]
The evaluation of the left hand side at $u=0$ allows to compute the ratio $\frac{c_1}{\bar{c}_1}$ in this case. 
Repeating the argument for the second component of the function $\Psi$, 
we deduce that the angle map $\mapalpha$ takes values in $\bigl\{0,\frac{1}{2}\bigr\}+ \tau\RR~[\Lambda].$ 

Let us now consider the modular transformation $\tau_1=-\frac{1}{\tau}$, then the range $\Im(\tau)\geq 1$ gets mapped to the range $\Im(\tau)\leq 1$. By Equation~\eqref{eq:tau_1} giving the transformation of $\theta_{1,1}$ under $\tau_1$, we obtain that the first component of the map $\Psi$ can be written as:
\[
z(u,\tau)=c_1\cdot e^{2i\pi \ell_v\cdot u}\prod_{\vec{T}\in\vec{\T}_1}\theta_{1,1}(u-\alpha_{\vec{T}},\tau)^{-v_{\vec{T}}}=
\tilde{c}_1\prod_{\vec{T}\in\vec{\T}_1}\theta_{1,1}(\tau_1 u-\tau_1
\alpha_{\vec{T}},\tau_1)^{-v_{\vec{T}}},
\]
for some explicit constant $\tilde{c}_1$. 
This implies that the range $\bigl\{0,\frac{1}{2}\bigr\}+ \tau\RR~[\Lambda]$ of the angle map $\mapalpha$ gets mapped to the range $\tau_1\bigl(\bigl\{0,\frac{1}{2}\bigr\}+\tau\RR\bigr)=\RR+\bigl\{0,\frac{\tau_1}{2}\bigr\}~[\Lambda]$. We recover the case of the involution $\sigma(u)=\bar{u}$ in the range $\Im(\tau)\leq 1$. This proves that considering the two anti-holomorphic involutions $\sigma(u)=\bar{u}$, and $\sigma(u)=-\bar{u}$ in the range $\tau\in i\RR^+, \Im(\tau)\geq 1$ is equivalent to considering only the involution $\sigma(u)=\bar{u}$ in the full range $\tau\in i\RR^+$, thus ending the proof of Point 1. apart from maximality which is postponed below. 


\underline{Proof of Point 2}. Suppose that $\tau\in \frac{1}{2}+i\RR$ with $\Im(\tau)\geq \frac{\sqrt{3}}{2}$. Then $-\bar{\tau}=\tau-1$, and using Identities~\eqref{equ:complex_conjugation} and~\eqref{eq:tau_2} we have
\[
\overline{\theta_{1,1}(z,\tau)}=\theta_{1,1}(\bar{z},\tau-1)=e^{-i\frac{\pi}{4}}\theta_{1,1}(\bar{z},\tau).
\]
Consider first the case where $\sigma(u)=\bar{u}$. The Condition $\overline{\Psi(u)}=\Psi(\sigma(u))$ written for the first component is:
\begin{align*}
\forall\,u\in\TT(\tau),\quad 
e^{-4i\pi\ell_v\cdot u}
\prod_{\vec{T}\in\vec{\T}_1}\left(\frac{\theta_{1,1}(u-\bar{\alpha}_{\vec{T}})}{\theta_{1,1}(u-\alpha_{\vec{T}})}\right)^{-v_{\vec{T}}}=\frac{c_1}{\bar{c}_1}.
\end{align*}
The same argument as in the case where $\tau\in i\RR^+$ gives us that $2i\Im(\alpha_{\vec{T}})=0\,[\Lambda]$, but this time, since $\tau=\frac{1}{2}+i\RR$, the only solution is $\Im(\alpha_{\vec{T}}) \in \ZZ$. Repeating the argument for the second component, we deduce that the angle map $\mapalpha$ takes values in $\RR~[\Lambda]$. 

Lastly, consider the case where $\sigma(u)=-\bar{u}$. Then, arguing in a similar way, we obtain that $2\Re(\alpha_{\vec{T}})=0\,[\Lambda]$, so that $\Re(\alpha_{\vec{T}})\in\bigl\{0,\frac{1}{2}\bigr\}+\ZZ$, which yields $\alpha_{\vec{T}}\in \bigl\{0,\frac{1}{2}\bigr\}+ i\RR ~[\Lambda]$. 

Note that we have not used the assumption $\Im(\tau)\geq \frac{\sqrt{3}}{2}$. It is here to signify that, up to modular transformations, we can restrict to the standard fundamental domain of the modular parameter $\tau$. 

\underline{Study of maximality of the curve $\C$}. 
Given the explicit parameterization $\Psi$, the real locus of $\C$ is given by the image by $\Psi$ of the points $\{u\in\TT(\tau):\, \Psi(u)=\overline{\Psi(u)}\}$. 
Since the anti-holomorphic involution $\sigma$ commutes with complex conjugation, \emph{i.e.} $\overline{\Psi(u)}=\Psi(\sigma(u))$, the real locus is given by the image of the set of points of $\TT(\tau)$ such that $\sigma(u)=u$. 
In Point 1, this set consists of the image of the two components $\RR+\{0,\frac{\tau}{2}\}~[\Lambda]$. 
Since the map $\Psi$ is bi-rational, this image has two components and the curve $\C$ is maximal. 
In Point 2, when $\sigma(u)=\bar{u}$, then the real locus is the image of $\RR~[\Lambda]$, which consists of a single component and the curve is not maximal; 
when $\sigma(u)=-\bar{u}$, then the real locus is the image of $\{0,\frac{1}{2}\}+ i\RR~[\Lambda]$, which again consists of a single component and the curve is not maximal. 
\qedhere
\end{proof}

\begin{rem}\label{rem:thm_real_spectral_curve}\leavevmode
\begin{enumerate}
\item From the proof of Theorem~\ref{thm:real_spectral_curve}, we gather additional information on the real locus of the curve $\C$. In all cases, the real locus of $\C$ is the image by $\Psi$ of the \emph{real locus of $\TT(\tau)$} defined to be the set of points that are invariant by the anti-holomorphic involution $\sigma$. More specifically, following the case handling of Theorem~\ref{thm:real_spectral_curve}, we have
\begin{enumerate}
 \item The anti-holomorphic involution $\sigma$ of $\TT(\tau)$ is given by $\sigma(u)=\bar{u}$, and the real locus of $\TT(\tau)$ is $\RR+\{0,\frac{\tau}{2}\}~[\Lambda]$. 
 \item The anti-holomorphic involution $\sigma$ of $\TT(\tau)$ is either given by $\sigma(u)=\bar{u}$ in which case the real locus of $\TT(\tau)$ is $\RR~[\Lambda]$; or it is given by $\sigma(u)=-\bar{u}$ in which case the real locus of $\TT(\tau)$ is $\{0,\frac{1}{2}\}+ i\RR~[\Lambda]$.
\end{enumerate}
\item From the proof of Theorem~\ref{thm:real_spectral_curve}, we also know that if the modular parameter $\tau$ satisfies the conditions of either Point 1. or Point 2., then \emph{for all} angle maps $\mapalpha$ satisfying the respective conditions, the corresponding spectral curve has genus 1 and is invariant by complex conjugation; but the curve $\C$ might be singular in general. The genericity assumption on the angle map is related to the curve being non singular. 
\end{enumerate}
\end{rem}

\subsection{Fock's real dimer models}\label{sec:Fock_real_dimer_model}

Consider an infinite, periodic minimal graph $\Gs$, and Fock's dimer model on $\Gs$, with parameters $\tau,\mapalpha,t$, see Definition~\ref{def:Focks_dimer_model}. Assume that Fock's dimer model is \emph{real}, see Definition~\ref{defi:dimer_extras}, recalling that this means that it is gauge equivalent to a dimer model with \emph{real weights}. 

Then as a consequence of gauge equivalence, up to a scale change $(z,w)\leftrightarrow(\lambda z,\mu w)$, the spectral curve of Fock's dimer model arises from a polynomial with real coefficients implying that, up to a scale change, it is invariant by complex conjugation, see Section~\ref{sec:real_spectral_curve}. We thus have that the modular parameter $\tau$ and the angle map $\mapalpha$ must satisfy the conditions of Theorem~\ref{thm:real_spectral_curve}. Recall that this theorem gives no restriction on the parameter $t$; Theorem~\ref{thm:Fock_real_dimer_model} below identifies the values of the parameter $t$ corresponding to real dimer models. 

\begin{thm}\label{thm:Fock_real_dimer_model}
Consider Fock's dimer model on an infinite, periodic minimal graph $\Gs$ with
modular parameter $\tau$ and angle map $\mapalpha$ as in the statement of Theorem~\ref{thm:real_spectral_curve}. Then, up to modular transformations of $\tau$, 
\begin{enumerate}
 \item when $\tau\in i\RR^+$, Fock's dimer model is real for every generic angle map $\mapalpha$ taking values in $\RR + \{0,\frac{\tau}{2}\}~[\Lambda]$, if and only if the parameter $t$ belongs to the real locus $\RR+ \{0,\frac{\tau}{2}\}~[\Lambda]$ of the torus $\TT(\tau)$.
 \item when $\tau\in \frac{1}{2}+i\RR$ with $\Im(\tau)\geq \frac{\sqrt{3}}{2}$, 
 \begin{enumerate}
 \item[$\circ$] the condition ``Fock's dimer model is real for every generic angle map $\mapalpha$ taking values in $\RR~[\Lambda]$'' is never satisfied.
 \item[$\circ$] Fock's dimer model is real for every generic angle map $\mapalpha$ taking values in $ \bigl\{0,\frac{1}{2}\bigr\}+ i\RR~[\Lambda]$, if and only if $t\in \bigl\{\frac{1}{4},\frac{3}{4}\bigr\}+i\RR~[\Lambda]$.
 \end{enumerate}
\end{enumerate}
\end{thm}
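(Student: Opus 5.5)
The plan is to translate realness of Fock's dimer model into a condition on face-weights. By Definition~\ref{defi:dimer_extras}, the model is real if and only if every face-weight $\W_{\mapalpha,t}(\fs)$ of~\eqref{equ:face_weight_bip} is real. (Real face-weights let one reconstruct real edge-weights, up to a possible global holonomy on the torus which is absorbed by the scaling $(z,w)\leftrightarrow(\lambda z,\mu w)$.) So I will compute $\overline{\W_{\mapalpha,t}(\fs,\tau)}$ using the conjugation identity~\eqref{equ:complex_conjugation}, and compare it with $\W_{\mapalpha,t}(\fs,\tau)$.

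\textbf{Case $\tau\in i\RR^+$.} Here $-\bar\tau=\tau$, so $\overline{\theta_{m,n}(z,\tau)}=\theta_{m,n}(\bar z,\tau)$. Conjugating the face-weight replaces every angle by its conjugate and $t$ by $\bar t$; the Abel map $\mapd(\fs)$ becomes $\overline{\mapd(\fs)}$. If an angle lies in $\RR+\frac{\tau}{2}$, then its conjugate equals the angle minus $\tau$, so each angle satisfies $\bar\alpha=\alpha$ mod $\Lambda$. Thus $\overline{\W_{\mapalpha,t}(\fs)}=\W_{\bar{\mapalpha}',\bar t}(\fs)$, where $\bar{\mapalpha}'$ differs from $\mapalpha$ only by lattice translations. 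By Lemma~\ref{lem:invariance_weights} and Remark~\ref{rem:angle_definition}, face-weights are meromorphic (hence periodic) in the angles and in $t+\mapd(\fs)$, so $\overline{\W_{\mapalpha,t}(\fs)}=\W_{\mapalpha,\bar t+c_\fs}(\fs)$. Here $c_\fs=\overline{\mapd(\fs)}-\mapd(\fs)\in\Lambda$, which again is harmless by periodicity. Realness is therefore equivalent to $\W_{\mapalpha,\bar t}(\fs)=\W_{\mapalpha,t}(\fs)$ for all faces.

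For the sufficiency direction: if $t\in\RR+\{0,\frac{\tau}{2}\}$, then $\bar t=t$ mod $\Lambda$ and we are done. For necessity, I view $t\mapsto \W_{\mapalpha,t}(\fs)$ as a nonconstant meromorphic function on $\TT(\tau)$ (for generic $\mapalpha$). Its zeros are at $t=-\mapd(\fs)-\beta_i+\alpha_i+\frac12+\frac\tau2$, and its poles are at the analogous points with $\beta_{i+1}$. Equality of the functions at $\bar t$ and $t$, for all faces and generic angles, forces $\bar t - t\in\Lambda$. More precisely, I will use one face of degree $4$ and the fact that the face-weight is a degree-$2$ function of $t$, which takes each value at most twice. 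This gives $\bar t=t$ or $\bar t=-t-2\mapd(\fs)-\text{const}$. The second option depends on the face, and different faces having different $\mapd(\fs)$ rules it out generically. Hence $2i\Im t\in\Lambda$, i.e.\ $t\in\RR+\{0,\frac\tau2\}$.

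\textbf{Case $\tau\in\frac12+i\RR$.} Here $-\bar\tau=\tau-1$, and by~\eqref{eq:tau_2} conjugation maps $\theta_{1,1}(\cdot,\tau)$ to $e^{-i\pi/4}\theta_{1,1}(\bar\cdot,\tau)$ and $\theta_{0,0}(\cdot,\tau)$ to $\theta_{0,1}(\bar\cdot,\tau)=\theta_{0,0}(\bar\cdot+\frac12,\tau)$. The constant prefactors cancel in the ratios defining face-weights. This is exactly Lemma~\ref{lem:modular_transo_dimers} combined with conjugation: $\overline{\W_{\mapalpha,t}(\fs)}=\W_{\bar{\mapalpha},\bar t+\frac12}(\fs)$.

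For angles in $\RR$ mod $\Lambda$, we get $\bar\alpha=\alpha$, so we need $\bar t+\frac12= t$ mod $\Lambda$. But $2i\Im t=\frac12$ mod $\Lambda$ has no solution, since $\Lambda\cap(\frac12+i\RR)$ consists of points $\frac12+n\tau$-type elements with the wrong form; a short lattice check confirms this. So this case is never satisfied.

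For angles in $\{0,\frac12\}+i\RR$, we have $\bar\alpha=-\alpha$ mod $\Lambda$. Using oddness/evenness~\eqref{eq:theta:1}, the weight with all angles negated equals the weight with $t$ replaced by $-t$, up to sign bookkeeping: $\theta_{1,1}$ ratios are invariant since numerator and denominator both flip sign. So the condition becomes $-\bar t+\frac12 = t$ mod $\Lambda$ (together with the same genericity argument as before). This means $2\Re t=\frac12$ mod $\Lambda$, i.e.\ $t\in\{\frac14,\frac34\}+i\RR$.

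\textbf{Main obstacle.} I expect the main difficulty to be the necessity direction. One must check that generic angles exclude accidental equalities such as $\W_{\mapalpha,t}=\W_{\mapalpha,t'}$ with $t'\neq t$. One must also handle lattice shifts in $\mapd(\fs)$ carefully, since $\theta$ is only quasi-periodic; I will rely on Lemma~\ref{lem:invariance_weights} to control these shifts.
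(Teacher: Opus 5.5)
Your reduction is the same as the paper's: conjugation via \eqref{equ:complex_conjugation}, $\bar\alpha\equiv\pm\alpha$ on the relevant real locus, Lemma~\ref{lem:modular_transo_dimers} when $\Re\tau=\frac12$, and evenness of $\theta_{0,0}$ to trade $-\mapalpha$ for $-t$. Two steps, however, do not go through.

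\textbf{The lattice check in the real-angle case of Point~2 is false.} Write $\tau=\frac12+ib$. For $t\in\RR+\frac{\tau}{2}$ you get $t-(\bar t+\frac12)=-\frac12+ib=\tau-1\in\Lambda$. So $\bar t+\frac12\equiv t~[\Lambda]$ \emph{does} have solutions. By periodicity in $t$ (Lemma~\ref{lem:invariance_weights}), all face-weights are then real, and this holds for every real angle map. A direct check confirms it:
\begin{itemize}
\item $\theta_{0,0}(\frac{\tau}{2}+y)=q^{-1/4}e^{-i\pi y}\theta_{1,0}(y)$, and the exponential factors cancel around a face;
\item for real $y$, the values $\theta_{1,1}(y,\tau)$ and $\theta_{1,0}(y,\tau)$ are $e^{i\pi/8}$ times real numbers, since $\overline{\theta_{1,n}(y,\tau)}=\theta_{1,n}(y,\tau-1)=e^{-i\pi/4}\theta_{1,n}(y,\tau)$.
\end{itemize}
The paper's proof makes the same slip. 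It asserts that $i\Im(t)=\frac14~[\Lambda/2]$ has no solution, but $\frac14+\frac{i\Im\tau}{2}=\frac{\tau}{2}\in\Lambda/2$. So the first bullet of Point~2 is not true as written and no argument can establish it. The computation actually shows that the model is real for every generic real angle map if and only if $t\in\RR+\frac{\tau}{2}~[\Lambda]$.

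\textbf{Your necessity mechanism needs a quadrilateral face.} The degree-$2$ fibre argument requires one, but a periodic minimal graph need not have any: the hexagonal lattice has only hexagonal faces. For $|\fs|\geq 6$ the fibres of $t\mapsto\W_{\mapalpha,t}(\fs)$ are not explicit, so the argument stalls.

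The paper uses the freedom in the angles rather than in $t$. Fix $t$ and consider the factor of $\W_{\mapalpha,\bar t}(\fs)/\W_{\mapalpha,t}(\fs)$ that involves a single angle $\alpha_i$. By Lemma~\ref{lem:meromorphic_theta_ratio} it is a meromorphic function of $\alpha_i$ on $\TT(\tau)$. It is constant along the real locus, hence constant everywhere. Matching its zeros and poles for generic $\beta_i,\beta_{i+1}$ forces $\bar t\equiv t$, whatever the degree of the face. The same computation, with $\bar t+\frac12$ or $-\bar t-\frac12$ in place of $\bar t$, handles Point~2.

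If a quadrilateral face is available, your route can be completed. With $t$ fixed, the second alternative of your dichotomy, $t+\bar t\equiv -2\mapd(\fs)-\mathrm{const}$, is destroyed by perturbing the angles.
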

\begin{proof}
Fock's dimer model is real if and only if, for every face $\fs$ of $\Gs$,
\begin{equation}\label{equ:real_condition}
\overline{\W_{\mapalpha,t}(\fs)}=\W_{\mapalpha,t}(\fs).
\end{equation}
\underline{Proof of Point 1}. Suppose that $\tau\in i\RR^+$, then by Equation~\eqref{equ:complex_conjugation},
$
\overline{\W_{\mapalpha,t}(\fs)}=\W_{\overline{\mapalpha},\overline{t}}(\fs).
$
Moreover for every $\mapalpha$ taking values in $\{0,\frac{\tau}{2}\}+\RR~[\Lambda]$, we have $\overline{\mapalpha}=\mapalpha\ [\Lambda]$. Recalling that by Lemma~\ref{lem:invariance_weights}, face-weights are meromorphic functions of the angle map $\mapalpha$, we deduce that Condition~\eqref{equ:real_condition} is true for every such $\mapalpha$ if and only if, 
for every $\mapalpha$ taking values in $\{0,\frac{\tau}{2}\}+\RR~[\Lambda]$,
\begin{equation}\label{equ:ratio_to_check_also}
\frac{
\W_{\mapalpha,\overline{t}}(\fs)}{\W_{\mapalpha,t}(\fs)}=1 \ 
\Leftrightarrow 
\prod_{i=1}^{|\fs|/2}
 \frac{\theta_{0,0}(\bar{t}+\mapd(\fs)+\beta_{i}-\alpha_i)}{\theta_{0,0}(\bar{t}+\mapd(\fs)+\beta_{i+1}-\alpha_i)}\frac{\theta_{0,0}(t+\mapd(\fs)+\beta_{i+1}-\alpha_i)}{\theta_{0,0}(t+\mapd(\fs)+\beta_{i}-\alpha_i)}=1.
\end{equation}
Since the graph $\Gs$ is minimal, train-tracks around a vertex are distinct~\cite[Lemma 8]{BCdT:immersions}, so that by assumption the angles $(\alpha_i)$ are generically independent. 
The above condition thus implies that, for every $i\in\{1,\dots,|\fs|/2\}$, for every $\alpha_i\in \{0,\frac{\tau}{2}\}+\RR~[\Lambda]$,
\[
\frac{\theta_{0,0}(\bar{t}+\mapd(\fs)+\beta_{i}-\alpha_i)}{\theta_{0,0}(\bar{t}+\mapd(\fs)+\beta_{i+1}-\alpha_i)}\frac{\theta_{0,0}(t+\mapd(\fs)+\beta_{i+1}-\alpha_i)}{\theta_{0,0}(t+\mapd(\fs)+\beta_{i}-\alpha_i)}=c,
\]
where the term $c$ in the right hand side is constant when seen as a function of $\alpha_i$. 

By Lemma~\ref{lem:meromorphic_theta_ratio}, the left hand side is a meromorphic function of $\alpha_i$ on $\TT(\tau)$ which needs to be equal to a constant on the real component of $\TT(\tau)$, consisting  of two circles, hence it must be equal to a constant on the whole of $\TT(\tau)$. This is true if and only if the zeros and the poles cancel. Using that zeros of the function $\theta_{0,0}$ are located at $\frac{1}{2}+\frac{\tau}{2}\ [\Lambda]$, and using that the angle map $\mapalpha$ is generic, the above equation is true if and only if 
\begin{align*}
&\bar{t}+\mapd(\fs)+\beta_i=t+\mapd(\fs)+\beta_i\ [\Lambda], \text{ and }
\bar{t}+\mapd(\fs)+\beta_{i+1}=t+\mapd(\fs)+\beta_{i+1}\ [\Lambda]\\
\Leftrightarrow \quad & \bar{t}=t\ [\Lambda]\quad \Leftrightarrow\quad  2i\Im(t)=0 \ [\Lambda]\quad \Leftrightarrow\quad t\in \RR+\Bigl\{0,\frac{\tau}{2}\Bigr\}~[\Lambda].
\end{align*}
When $t\in\RR+\{0,\frac{\tau}{2}\}~[\Lambda]$, we now know that the above ratio is constant, but we still need to check that Condition~\eqref{equ:ratio_to_check_also} holds, that is that the constant is indeed equal to 1. This can be checked explicitly using that $\bar{t}=t+r+s\tau$, with $r,s\in\{0,1\}$, using Identity~\eqref{eq:theta:2} and the fact that the angles are fixed. This concludes the proof of Point 1. 

\underline{Proof of Point 2}. Suppose that $\tau\in \frac{1}{2}+i\RR$. Since the modular parameter will change in the forthcoming computations, we write its dependence in the face-weights. By
Equation~\eqref{equ:complex_conjugation}, 
\[
\overline{\W_{\mapalpha,t}(\fs,\tau)}=\W_{\overline{\mapalpha},\overline{t}}(\fs,-\bar{\tau})=
\W_{\overline{\mapalpha},\overline{t}}(\fs,\tau-1)=\W_{\overline{\mapalpha},\overline{t}+\frac{1}{2}}(\fs,\tau),
\]
where in the last equality we used the second equality of Lemma~\ref{lem:modular_transo_dimers}. 

$\circ$ Suppose first that the angle map $\mapalpha$ takes values in $\RR~[\Lambda]$, then we have $\overline{\mapalpha}=\mapalpha\ [\Lambda]$, so that Condition~\eqref{equ:real_condition} is true for every such $\mapalpha$ if and only if, for every $\mapalpha$ taking values in $\RR~[\Lambda]$,
\[
\frac{\W_{\mapalpha,\overline{t}+\frac{1}{2}}(\fs,\tau)}{\W_{\mapalpha,t}(\fs,\tau)}=1.
\]
Arguing as in the case $\tau\in i\RR^+$, this is true if and only if 
\[
\bar{t}+\frac{1}{2}=t\ [\Lambda] \quad \Leftrightarrow\quad 2i\Im(t)=\frac{1}{2}\ [\Lambda] 
\quad \Leftrightarrow\quad i\Im(t)=\frac{1}{4}\ [\Lambda/2]. 
\]
This condition is never satisfied on the lattice $\Lambda/2$ thus proving the first part of Point 2.

$\circ$ Suppose that the angle map $\mapalpha$ takes values in $\bigl\{0,\frac{1}{2}\bigr\}+i\RR~[\Lambda]$, then we have $\overline{\mapalpha}=-\mapalpha\ [\Lambda]$,  so that Condition~\eqref{equ:real_condition} is true for every such $\mapalpha$ if and only if, for every $\mapalpha$ taking values in $\bigl\{0,\frac{1}{2}\bigr\}+i\RR~[\Lambda]$,
\begin{equation}\label{equ:ratio_to_check_once_more}
\frac{\W_{-\mapalpha,\overline{t}+\frac{1}{2}}(\fs,\tau)}{\W_{\mapalpha,t}(\fs,\tau)}=1.
\end{equation}
Since the computation is a bit more delicate, let us detail the next step. The above implies the following condition
\[
\frac{\theta_{0,0}(\bar{t}+\frac{1}{2}-\mapd(\fs)-\beta_{i}+\alpha_i)}{\theta_{0,0}(\bar{t}+\frac{1}{2}-\mapd(\fs)-\beta_{i+1}+\alpha_i)}\frac{\theta_{0,0}(t+\mapd(\fs)+\beta_{i+1}-\alpha_i)}{\theta_{0,0}(t+\mapd(\fs)+\beta_{i}-\alpha_i)}=c,
\]
which is equivalent to the following condition
\begin{align*}
&-\bar{t}-\frac{1}{2}+\mapd(\fs)+\beta_i=t+\mapd(\fs)+\beta_i\ [\Lambda], \text{ and }
-\bar{t}-\frac{1}{2}+\mapd(\fs)+\beta_{i+1}=t+\mapd(\fs)+\beta_{i+1}\ [\Lambda]\\
\quad \Leftrightarrow & -\bar{t}-\frac{1}{2}=t \quad \Leftrightarrow \quad 2\Re(t)=-\frac{1}{2} \ [\Lambda]
\quad \Leftrightarrow \quad \Re(t)=-\frac{1}{4} \ [\Lambda/2]
\quad \Leftrightarrow\quad t\in \Bigl\{\frac{1}{4},\frac{3}{4}\Bigr\}+ i\RR~[\Lambda].
\end{align*}
The fact that the ratio~\eqref{equ:ratio_to_check_once_more} is indeed equal to 1 when $2\Re(t)=-\frac{1}{2} \ [\Lambda]$ is checked similarly to Point 1. This ends the proof of the second part of Point 2. 
\end{proof}

\begin{rem}\label{rem:thm_real_dimer_model}
Similarly to Point 2. of Remark~\ref{rem:thm_real_spectral_curve}, we gather from the proof of Theorem~\ref{thm:Fock_real_dimer_model} that if the modular parameter $\tau$ satisfies the conditions of either Point 1. or Point 2., then \emph{for all} angle maps $\mapalpha$ and parameters $t$ satisfying the respective conditions, Fock's corresponding dimer model is real.
\end{rem}


\section{Classification of (frustrated) Ising models in genus 1} \label{sec:classification}

Consider an Ising model on an infinite, periodic, isoradial graph $G$ with real coupling constants $\eps\Js$, where the definition of isoradial graph is recalled in Section~\ref{subsec:background}. 
Consider the corresponding Ising-dimer model on the infinite, periodic graph $\GQ$, see Definition~\ref{defi:Ising_dimer_model}. Then, by Lemma~\ref{lem:isoradial_minimal} of the forthcoming Section~\ref{subsec:train_tracks}, the graph $\GQ$ is minimal. Let $\C$ be the spectral curve of this dimer model; assume that it is irreducible and has genus 1.
Suppose that a vertex of $\GQ$ is distinguished, allowing to define the spectral data of this Ising-dimer model on $\GQ$~\cite{KOS,GoncharovKenyon,GGK}. Then, by Fock~\cite{Fock}, there exists a choice of modular parameter $\tau$, a periodic angle map $\mapalpha$, and a parameter $t\in\TT(\tau)$, such that the dimer model with Fock's weights on $\GQ$, see Definition~\ref{def:Focks_dimer_model}, is gauge equivalent to the Ising-dimer model on $\GQ$ with the given spectral data. The goal of this section is to characterize, and classify the Ising-dimer models on $\GQ$ through the set of parameters $\tau,\mapalpha,t$ of Fock's gauge equivalent dimer model. This is in the same spirit as the work of George~\cite{George} with the following difference: George uses the spectral data approach of~\cite{KO,GoncharovKenyon}, whereas we use Fock's approach~\cite{Fock}, allowing to have a very concrete model in the end, amenable to further study. Also, we allow for \emph{real} coupling constants, and not only positive ones as in~\cite{George}, thus exiting the well behaved world of Harnack curves; but we restrict to genus 1 while~\cite{George} considers general genus spectral curves. 

Before outlining the content of this section, let us summarize the information that we can gather on Fock's gauge equivalent dimer model on $\GQ$ with parameters $\tau,\mapalpha,t$, using Section~\ref{sec:real_dimers}. Since the curve $\C$ arises from a polynomial with real coefficients, the curve is invariant by complex conjugation implying that the parameters $\tau,\mapalpha$ satisfy the conditions of Theorem~\ref{thm:real_spectral_curve}. The fact that $\C$ arises from a polynomial with real coefficients also implies that Fock's gauge equivalent dimer model is real, see Definition~\ref{defi:dimer_extras}, implying that the parameter $t$ satisfies the conditions of Theorem~\ref{thm:Fock_real_dimer_model}. This motivates the following.

\begin{defi}\label{def:Necessary_I}
We say that the parameters $\tau,\mapalpha,t$ of Fock's dimer model on $\GQ$ satisfy
\emph{Necessary Conditions I} if, up to modular transformations of $\tau$:
\begin{enumerate}
\item[1. ] either, the modular parameter $\tau\in i\RR^+$, in which case the angle map $\mapalpha$ takes values in $\RR+\{0,\frac{\tau}{2}\}~[\Lambda]$, and the parameter $t$ belongs to $\RR+ \{0,\frac{\tau}{2}\}~[\Lambda]$,
\item[2. ] or, the modular parameter $\tau\in \frac{1}{2}+i\RR$ with $\Im(\tau)>\frac{\sqrt{3}}{2}$, in which case the angle map $\mapalpha$ takes values in $ \bigl\{0,\frac{1}{2}\bigr\}+ i\RR~[\Lambda]$, and the parameter 
$t\in \bigl\{\frac{1}{4},\frac{3}{4}\bigr\}+ i\RR~[\Lambda]$.
\end{enumerate}
Recall that, by Points 2.~of Remarks~\ref{rem:thm_real_spectral_curve} and~\ref{rem:thm_real_dimer_model}, we do not need to assume that the angle map $\mapalpha$ is generic in order to have a spectral curve $\C$ that it is invariant by complex conjugation, and such that Fock's dimer model is real. This justifies that we take \emph{all} angle maps taking values in the respective real loci of the torus $\TT(\tau)$.
\end{defi}

\begin{rem}\leavevmode
\begin{enumerate}
 \item Suppose that the Ising model is defined on an infinite, periodic, planar graph $G$ (not necessarily isoradial), and consider its corresponding Ising-dimer model on $\GQ$ with spectral curve $\C$. Then, there exists a dimer model with Fock's weights on an infinite, periodic, \emph{minimal} graph $\Gs$ which has the same spectral curve $\C$, up to a global scaling; but this time the graph $\Gs$ cannot easily be related to the graph $\GQ$.  
 \item Recall that by Lemma~\ref{lem:modular_transo_dimers}, we know that Fock's dimer model is invariant under modular transformations of the modular parameter $\tau$, up to an appropriate transformation of the angle map $\mapalpha$ and of the parameter $t$. As a consequence, we only need to understand the model when the modular parameter is restricted to the appropriate subset of the standard fundamental domain, see Equation~\eqref{equ:fund_domain_modular}.
\end{enumerate}
\end{rem}

With Definition~\ref{def:Necessary_I} at hand, we now turn to the content of this section. 
By Point 2. of Remark~\ref{rem:Fisher_Dub}, we moreover know that the spectral curve $\C$ of the Ising-dimer model on $\GQ$ is \emph{centrally symmetric}, \emph{i.e.}, $(z,w)\in\C$ if and only if $\bigl(\frac{1}{z},\frac{1}{w}\bigr)\in\C$. This property has further implications on the angle map $\mapalpha$, which is the subject of Theorem~\ref{thm:real_symmetric_spectral_curve} of Section~\ref{sec:Fock_necessary_condition_II}, leading to Necessary Conditions II. Note that Necessary Conditions I and II are motivated by models defined on infinite, periodic, minimal graphs, but they still make sense when the graph $\GQ$ is not periodic. Hence, when not needed, we will not make the periodicity assumption. In Lemma~\ref{lem:alternate_product_faces} we explicitly compute face-weights of Fock's dimer model on infinite, minimal graphs $\GQ$ for the three kinds of faces. In Lemma~\ref{lem:duality}, we prove a duality property for Fock's dimer model. 

Fock's dimer model and the Ising-dimer model on an infinite, minimal graph $\GQ$ are gauge equivalent if and only if their face-weights are equal at all faces. Assuming that the parameters $\tau,\mapalpha,t$ of Fock's dimer model satisfy Necessary Conditions II, Sections~\ref{sec:Necessary_III} to~\ref{sec:Necessary_V} aim at characterizing the parameters $\tau,\mapalpha,t$ that arise from gauge equivalent Ising-dimer models on $\GQ$. More precisely, we have the following. 
Recall that faces of $\GQ$ are of three kinds: faces corresponding to squares $y=\{e,e^*\}$, where $e,e^*$ is a pair of primal/dual edge of $G/G^*$, those corresponding to vertices $v$ of $G$, and those corresponding to dual vertices $f$ of $G^*$.
The subject of Section~\ref{sec:Necessary_III} is to see the implications of equality of face-weights at square faces $y=\{e,e^*\}$:
\begin{equation}\label{equ:condition_square_face_yet_once_more}
\WI_{\eps\Js}(y)=\W_{\mapalpha,t}(y).
\end{equation}
In particular $\W_{\mapalpha,t}(y)$ must be negative since $\WI_{\eps\Js}(y)$ is, see Equation~\eqref{equ:gauge_Ising_dimers}. 
The first result is Proposition~\ref{prop:maximality_spectral_curve} proving that necessarily $\tau\in i\RR^+$. An important implication of this, see Corollary~\ref{cor:spectral_curve_max}, is that when the graph $\GQ$ is periodic, the spectral curve $\C$ is \emph{maximal}.
Then, in Proposition~\ref{prop:first_parameter_condition}, we derive a local condition necessarily satisfied by the angle map $\mapalpha$ and the parameter $t$, and turn it into a global condition in Proposition~\ref{prop:global_condition_anglemap}. 
This leads to the definition of Necessary Conditions III, and of the absolute value coupling constants $\Js_{(\mapalpha,\rho,t)}$ satisfying~\eqref{equ:condition_square_face_yet_once_more}. 

In Section~\ref{sec:Necessary_IV}, we study implications of equality of \emph{squared} face-weights at faces corresponding to vertices $v$ of $G$ and dual vertices $f$ of $G^*$:
\begin{equation*}
\forall\,f\in V^*,\ v\in V,\quad 
( \WI_{\eps\Js_{(\mapalpha,\rho,t)}} )^2 (f)=( \W_{\mapalpha,t} )^2(f),\quad
( \WI_{\eps\Js_{(\mapalpha,\rho,t)}} )^2 (v)=( \W_{\mapalpha,t} )^2(v).
\end{equation*}
This leads to further conditions on the angle map $\mapalpha$ and the parameter $t$, which is the content of Proposition~\ref{prop:t} and the subject of Necessary Conditions IV. 
Face-weights satisfying these conditions are computed in Lemma~\ref{lem:face_weights_again} using Jacobi elliptic functions. 

In Proposition~\ref{prop:Necessary_V} of Section~\ref{sec:Necessary_V}, we explore a last condition on the angle map $\mapalpha$ needed to have equality of \emph{face-weights}, and not only of \emph{squared face-weights}, leading to the definition of Necessary Conditions V. This characterization of parameters $\tau,\mapalpha,t$ of Fock's dimer model arising from gauge equivalent Ising-dimer models on $\GQ$ allows us to prove the main result of this paper, namely 
Theorem~\ref{thm:main} of Section~\ref{sec:classification_thm}, classifying Ising models; we then define the notion of algebraic phase transition, infer algebraic critical temperatures, and explain the relations to the corresponding notions arising from physics.

\subsection{Centrally symmetric dimer models on $\GQ$ with real edge-weights}\label{sec:Fock_necessary_condition_II}

Consider the Ising-dimer model on an infinite, periodic, minimal graph $\GQ$, with periodic coupling constants $\eps\Js$. Let $\C$ be the associated spectral curve, assume that it has genus 1 and satisfies Assumptions $(\dagger)$. Then by Point 2. of Remark~\ref{rem:Fisher_Dub}, we know that this curve is centrally symmetric. Consider also Fock's gauge equivalent dimer model on $\GQ$ with parameters $\tau,\mapalpha,t$, thus satisfying Necessary Conditions~I. The main result of this section is Theorem~\ref{thm:real_symmetric_spectral_curve} of Section~\ref{subsec:central_symmetry} identifying the values of the angle map $\mapalpha$ corresponding to curves that have the property of being centrally symmetric, leading to Necessary Conditions II. We start with Section~\ref{subsec:train_tracks}, highlighting useful properties of train-tracks of the graph $\GQ$. Finally, in Lemma~\ref{lem:alternate_product_faces} of Section~\ref{subsec:face_weights_Fock}, we explicitly compute face-weights of Fock's dimer model on $\GQ$ when it satisfies Necessary Conditions I and II. 

\subsubsection{Train-tracks of the graph $G$ and of the graph $\GQ$}\label{subsec:train_tracks}

\paragraph{Train-tracks of $G$ and of $\GQ$, and angle map.} Let $G$ be an infinite, planar graph as in Section~\ref{subsec:background}. Recall that $\GGR$ is the associated quad-graph, and that $\T$ is the set of train-tracks, consisting of paths in the dual graph $(\GGR)^*$ crossing opposite sides of quadrangles. A priori, there is no natural consistent orientation of the train-tracks of $\T$, see Figure~\ref{fig:tt_G_GQ} (left). 

\begin{figure}[h]
  \centering
  \begin{overpic}[width=\linewidth]{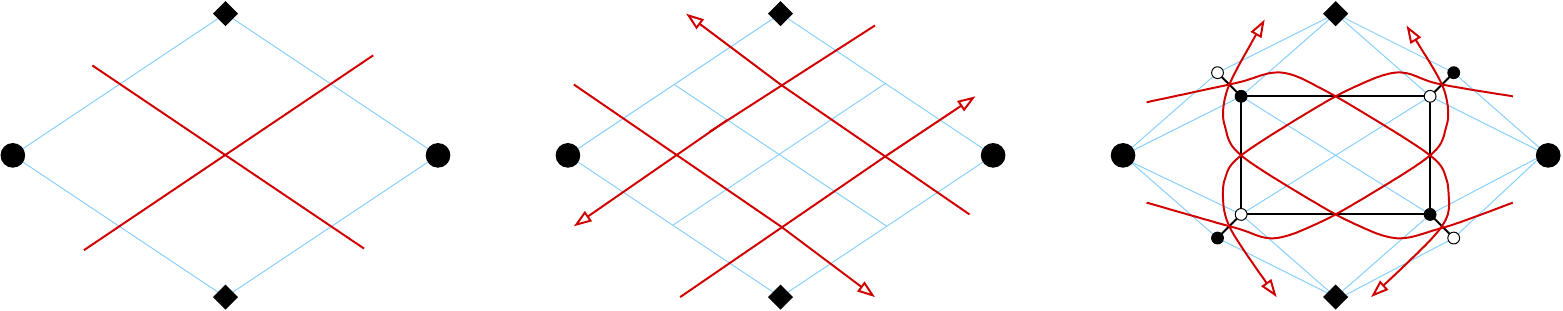}
   \end{overpic}
\caption{Comparison of the set of train-tracks $\T$ (left), $\vec{\T}_0$ (center) and $\vec{\T}$ (right).}
\label{fig:tt_G_GQ}
\end{figure}

Consider the graph $\GRR_0$ obtained from $\GR$ by cutting all quadrangles into four. The corresponding set of train-tracks can be consistently oriented so that vertices of $G$ and $G^*$ are on the right, see Figure~\ref{fig:tt_G_GQ} (center), and we denote this set by $\vec{\T}_0$. To each train-track $T$ of $\T$, there corresponds a pair $\{\vec{T},\cev{T}\}$ of non-intersecting train-tracks of $\vec{\T}_0$, oriented in opposite directions. As a consequence, two oriented train-tracks of $\vec{\T}_0$ intersect if and only of they arise from distinct intersecting train-tracks of $\T$. 

Now consider the graph $\GQ$, and the associated quad-graph $\GRR$; there are two kinds of quadrangles: those arising from quadrangles of $\GRR_0$, and ``flat'' quadrangles corresponding to external edges of $\GQ$, see Figure~\ref{fig:tt_G_GQ} (right). Since the graph $\GQ$ is bipartite, train-tracks are consistently oriented so that white vertices of $\GQ$ are on the left, see Section~\ref{subsec:background}, and this set is denoted by $\vec{\T}$. The set of train-tracks $\vec{\T}$ is obtained from $\vec{\T}_0$ by adding a crossing between train-tracks of a pair at each ``flat'' quadrangle, see Figure~\ref{fig:tt_G_GQ} (center and right). As a consequence, no train-track is deleted or created, the only new intersections are between train-tracks of a pair, and they arise in opposite direction. We thus have that, to every (unoriented) train-track $T$ of $\T$, there corresponds a pair $\{\vec{T}$, $\cev{T}\}$ of oriented train-tracks of $\vec{\T}$. Note that, without further assumption on $G$, there is no consistent way of assigning arrows to the pair $\{\vec{T},\cev{T}\}$ just by looking at $T$; we nevertheless use the notation $\vec{T},\cev{T}$ implying that a choice of labeling has been made, this is possible but this labeling does not come from an information we have on $T$. The relation between $G$ and $\GQ$ implies the following. 
\begin{lem}\label{lem:isoradial_minimal}
The graph $G$ is isoradial if and only if the graph $\GQ$ is minimal.  
\end{lem}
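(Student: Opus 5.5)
The plan is to compare the train-track structures of $G$ and $\GQ$ directly, using the description just given: every unoriented train-track $T\in\T$ corresponds to a pair $\{\vec{T},\cev{T}\}$ of oriented train-tracks of $\vec{\T}$. These two oriented tracks run parallel to $T$ in opposite directions. They cross each other exactly at the ``flat'' quadrangles coming from external edges of $\GQ$, and every crossing between tracks of distinct pairs comes from a crossing of the corresponding tracks of $\T$. I will therefore translate each of the two minimality conditions on $\GQ$ into statements about $\T$, and check that together they are exactly the isoradiality conditions.

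The first step is self-intersections. A self-intersection of $\vec{T}$ in $\GQ$ cannot be created at a flat quadrangle, since those only produce crossings between $\vec{T}$ and $\cev{T}$. It therefore comes from a self-crossing of the corresponding track in $\vec{\T}_0$, and hence from a self-crossing of $T$ in $\T$. Conversely, a self-crossing of $T$ produces self-crossings of both $\vec{T}$ and $\cev{T}$.

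The second step is pairs of distinct unoriented tracks. Suppose $T_1\neq T_2$ cross at one quadrangle of $\GR$. After cutting into four, this single crossing yields four crossings in $\vec{\T}_0$, one between each member of the pair of $T_1$ and each member of the pair of $T_2$. Consequently, if $T_1$ and $T_2$ cross twice, then $\vec{T}_1$ and $\vec{T}_2$ cross twice. I will show that these two crossings occur in the same direction, which contradicts minimality; this gives minimal $\Rightarrow$ isoradial. Conversely, if $T_1$ and $T_2$ cross at most once, then any two tracks coming from distinct pairs cross at most once, so they never cross twice in the same direction.

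The third step is the two tracks of a single pair. $\vec{T}$ and $\cev{T}$ cross once at each flat quadrangle along $T$. The paper asserts that these crossings occur ``in opposite direction'', so they alternate, and I need to verify that this alternation is consistent with the minimality definition. I expect this to be the main obstacle. Two consecutive crossings between $\vec{T}$ and $\cev{T}$ bound a bigon, and the alternation of directions at consecutive flat quadrangles along $T$ follows from the bipartite coloring, as in Figure~\ref{fig:tt_G_GQ}. The difficulty is that if $T$ has more than two flat quadrangles, alternation alone still lets the first and third crossings occur in the same direction. I will have to argue that the relevant notion (forbidding parallel bigons, as in~\cite{BCdT:immersions}) is what minimality actually excludes, so that the antiparallel bigons formed by $\vec{T}$ and $\cev{T}$ are allowed. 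Alternatively I can use the concrete local picture: since $\vec{T}$ and $\cev{T}$ are oriented oppositely, the crossings along the pair are of the allowed type. The same direction computation from the local picture of a $\GR$-quadrangle cut into four settles the sign check needed in the second step.
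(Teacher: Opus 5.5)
Your three-step comparison (self-crossings, crossings between tracks of distinct pairs, crossings within a pair $\{\vec{T},\cev{T}\}$) is exactly the paper's proof. What needs fixing is your reading of ``intersect twice in the same direction''. In the definition of minimality it means that the two oriented tracks traverse the portion between the two crossing points in the same direction (a parallel bigon). It does not mean that the two crossings have the same sign. Crossing signs are the wrong invariant here: at the two corners of any bigon they are opposite, whatever the orientations. So the local ``sign check'' you plan for step two cannot produce the contradiction, and the alternation that worries you in step three is automatic and irrelevant.

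With the correct notion, step three is immediate. $\vec{T}$ and $\cev{T}$ run in opposite directions along $T$, so for any two of their crossings $p_i,p_j$, one track goes from $p_i$ to $p_j$ and the other from $p_j$ to $p_i$. There is no need to consider consecutive crossings or the bipartite coloring.

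In step two, however, your claim that $\vec{T}_1$ and $\vec{T}_2$ themselves cross twice in the same direction fails for half of the (arbitrary) labelings. What is true is that $\vec{T}_2$ and $\cev{T}_2$ traverse the stretch of $T_2$ between the two crossing quadrangles in opposite directions. Hence one of the pairs $(\vec{T}_1,\vec{T}_2)$, $(\vec{T}_1,\cev{T}_2)$ is co-oriented there and gives the forbidden configuration. The fact that both orientations of every $T\in\T$ occur in $\vec{\T}$ is precisely what makes the implication ``$\GQ$ minimal $\Rightarrow$ $G$ isoradial'' work, and no local computation is required.
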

\begin{proof}
Returning to the construction of the set of oriented train-tracks $\vec{\T}$ from $\T$, we have that train-tracks self intersect in $\T$ if and only of they do so in $\vec{\T}$. We also have that two distinct train-tracks of $\T$ intersect more than once, if and only if two distinct oriented train-tracks of $\vec{\T}$ arising from a distinct pair of train-tracks of $\T$ intersect in the same direction, which is forbidden in minimal graphs. The only additional crossings that we have in $\vec{\T}$ and not in $\T$ come from oriented train-tracks of a pair, which intersect in the opposite direction, and this is allowed in minimal graphs. 
\end{proof}

Consider a train-track $T$ of $\T$, and a choice of labeling $\vec{T},\cev{T}$ for the two corresponding oriented train-tracks of $\vec{\T}$. Then, the angle map $\mapalpha$ assigns angles $\alpha_{\vec{T}}$ and $\alpha_{\cev{T}}$ to the train-tracks $\vec{T},\cev{T}$. 
When the graph $G$ is furthermore periodic, let us denote by $\T_1$ the set of train-tracks of $G_1$, and by $\vec{\T}_1$ the set of oriented train-tracks of $\GQ_1$. Observe that the homology classes of train-tracks of 
$\vec{\T}_1$ come in opposite pairs: for every $T\in\T_1$, $(h_{\vec{T}},v_{\vec{T}})=-(h_{\cev{T}},v_{\cev{T}})$.

\subsubsection{Centrally symmetric dimer models on $\GQ$}\label{subsec:central_symmetry}

Consider the Ising-dimer model on an infinite, periodic, minimal graph $\GQ$ with associated 
spectral curve $\C$ of genus 1 satisfying Assumptions $(\dagger)$. Consider Fock's gauge equivalent dimer model with parameters $\tau,\mapalpha,t$ satisfying Necessary Conditions I. Then the spectral curve $\C$ and that of Fock's dimer model are related by a global scaling $(z,w)\leftrightarrow (\lambda z,\mu w)$, with $(\lambda,\mu)\in(\CC^*)^2$, see Point 2. of Remark~\ref{rem:spectral_curve}. A bi-rational parameterization of $\C$ is given by the map $\Psi$ of Equation~\eqref{equ:param_spectral_curve} on the graph $\GQ$, with $c_1,c_2\in\CC$. Using the specific structure of the train-tracks exhibited in Section~\ref{subsec:train_tracks}, we suppose that a labeling $\vec{T},\cev{T}$ of the pairs of train-tracks is chosen. The map $\Psi:\TT(\tau)\rightarrow \C$ can then further be written as $\Psi(u)=(z(u),w(u))$, with $z,w$ given by 
{\small 
\begin{equation}\label{equ:param_spectral_curve_symm}
z(u)=c_1\cdot e^{2i\pi \ell_v\cdot u}
\prod_{T\in \T_1}\left(\frac{\theta_{1,1}(u-\alpha_{\vec{T}})}{\theta_{1,1}(u-\alpha_{\cev{T}})}\right)^{-v_{\vec{T}}},\quad 
w(u)=c_2\cdot e^{-2i\pi \ell_h\cdot u}
\prod_{T\in \T_1}\left(\frac{\theta_{1,1}(u-\alpha_{\vec{T}})}{\theta_{1,1}(u-\alpha_{\cev{T}})}\right)^{h_{\vec{T}}},
\end{equation}
}
for some $c_1,c_2\in\CC$, where 
\begin{equation}\label{cond:angle_map_periodic_symm}
\sum_{\vec{T}\in\vec{\T}_1}v_{\vec{T}}(\alpha_{\vec{T}}-\alpha_{\cev{T}})=\ell_v\tau ,\quad 
\sum_{\vec{T}\in\vec{\T}_1}h_{\vec{T}}(\alpha_{\vec{T}}-\alpha_{\cev{T}})=\ell_h\tau ,
\end{equation}
with $\ell_v,\ell_h\in\ZZ$. 

Recall from Point 2.~of Remark~\ref{rem:Fisher_Dub} that the curve $\C$ has the additional property of being centrally symmetric. Since the curve $\C$ is non-singular (apart for isolated real nodes), the angle map is generic in the sense that there are a priori no relations between angles of oriented train-tracks of $\vec{\T}$ arising from distinct train-tracks of $\T$, as linear relations for example. Theorem~\ref{thm:real_symmetric_spectral_curve} below identifies the values of the angle map $\mapalpha$ of Fock's gauge equivalent dimer model when the curve $\C$ has this additional property.

\begin{thm}\label{thm:real_symmetric_spectral_curve}
Consider the Ising-dimer model on an infinite, periodic minimal graph $\GQ$ with periodic real edge-weights and genus 1 spectral curve $\C$ satisfying $(\dagger)$. Consider Fock's gauge equivalent dimer model with parameters $\tau$, generic angle map $\mapalpha$, $t$ satisfying Necessary Conditions I. Then, the spectral curve $\C$ is centrally symmetric if and only if, up to modular transformations of the modular parameter $\tau$, for every train-track $T$ of $\T_1$, 
\begin{enumerate}
\item when $\tau\in i\RR^+$, then 
$\alpha_{\cev{T}}=\alpha_{\vec{T}}+\rho\ [\Lambda]$,
for some half-period vector $\rho=\frac{j}{2}+\frac{\ell}{2}\tau,\ (j,\ell)\in\{0,1\}^2$, with $(j,\ell)\neq (0,0)$,
\item when $\tau\in \frac{1}{2}+i\RR$, with $\Im(\tau)>\frac{\sqrt{3}}{2}$, then $\alpha_{\cev{T}}=\alpha_{\vec{T}}+\rho\ [\Lambda]$, with $\rho=\frac{1}{2}$.
\end{enumerate}
\end{thm}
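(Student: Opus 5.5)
We follow the strategy of the proof of Theorem~\ref{thm:real_spectral_curve}, now with a holomorphic involution of the torus in place of an anti-holomorphic one. Central symmetry means that $(z,w)\mapsto(1/z,1/w)$ preserves $\C$. Since $\Psi$ of Equation~\eqref{equ:param_spectral_curve_symm} is birational, this map lifts to an automorphism $\iota$ of $\TT(\tau)$ with $1/\Psi(u)=\Psi(\iota(u))$. Automorphisms of a generic torus are of the form $\iota(u)=\pm u+b$. The sign is forced to be $+$: an automorphism $-u+b$ would map the divisor of $z$ to its reflection. But inversion must exchange the zero set $\{\alpha_{\cev{T}}\}$ with the pole set $\{\alpha_{\vec{T}}\}$, up to the signs $v_{\vec T}$. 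Genericity of $\mapalpha$ (no relations between angles of distinct train-tracks) forbids the reflection from doing this, since it would impose relations $\alpha_{\vec T}+\alpha_{\cev{T'}}=b$ across distinct pairs. The case $T=T'$ of the reflection is discussed below.

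With $\iota(u)=u+b$, write the condition $z(u)z(u+b)=1$ using Equation~\eqref{equ:param_spectral_curve_symm}. The left-hand side is meromorphic on $\TT(\tau)$ by Lemma~\ref{lem:meromorphic_theta_ratio}, and it must be constant (equal to $1$). Its zeros and poles must therefore cancel. The divisor of $z(u+b)$ is the divisor of $z$ translated by $-b$. Cancellation with the divisor of $z$, under genericity, forces the following for every $T$ with $v_{\vec T}\neq 0$ (and similarly for $h_{\vec T}$, using $w$):
\[
\alpha_{\vec T}-b=\alpha_{\cev T}\ [\Lambda] \quad\text{and}\quad \alpha_{\cev T}-b=\alpha_{\vec T}\ [\Lambda].
\]
Adding the two equations gives $2b\in\Lambda$, so $b=\rho$ is a half-period. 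Moreover $b\neq 0$, since otherwise $\alpha_{\vec T}=\alpha_{\cev T}$ and the Kasteleyn coefficients on edges crossed by both tracks of a pair vanish (equivalently, $z$ becomes constant). Every $T\in\T_1$ has $(h_{\vec T},v_{\vec T})\neq(0,0)$, because the homology class is primitive. Hence the relation $\alpha_{\cev T}=\alpha_{\vec T}+\rho$ holds for all $T$, with one common $\rho$ because $b$ is global. For the converse, assume $\alpha_{\cev T}=\alpha_{\vec T}+\rho$. Then the quasi-periodicity formulas~\eqref{eq:theta:2}--\eqref{eq:theta:3} show that $z(u)z(u+\rho)$ is a constant. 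This constant equals $1$ up to the global rescaling $(\lambda,\mu)$, which is allowed by Point~2.~of Remark~\ref{rem:spectral_curve}. The argument for $w$ is the same.

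The remaining step restricts $\rho$ using Necessary Conditions~I. When $\tau\in i\RR^+$, the real locus $\RR+\{0,\frac{\tau}{2}\}$ is stable under translation by any of the three nonzero half-periods, so all three are allowed; this gives Point~1. When $\tau\in\frac12+i\RR$, the angles lie in $\{0,\frac12\}+i\RR$. Among the half-periods $\frac12$, $\frac{\tau}{2}$ and $\frac{1+\tau}{2}$, only $\frac12$ has real part in $\{0,\frac12\}$ modulo $\frac12$-shifts compatible with $\Lambda$. The other two have real part $\pm\frac14$ and would move $\alpha_{\cev T}$ off the real locus; this gives Point~2.

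The main obstacle is excluding $\iota(u)=-u+b$ rigorously. Genericity kills cross-pair relations, but within a single pair the reflection only imposes $\alpha_{\vec T}+\alpha_{\cev T}=b$. Such a relation is not obviously excluded by genericity, since the Ising structure itself correlates the two angles of a pair. We would first try to show that imposing this relation for every $T$ contradicts condition~\eqref{cond:angle_map_periodic_symm} together with $\sum_T (h_{\vec T},v_{\vec T})$-weighted sums. Failing that, we would try to show it forces a nongeneric singular curve, violating $(\dagger)$. A secondary subtlety is handling the translation $b$ together with the scaling constants $c_1,c_2$ consistently.
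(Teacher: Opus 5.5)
Your treatment of the translation branch $\iota(u)=u+b$ (divisor cancellation, $2b\in\Lambda$, exclusion of $b=0$, a single $\rho$ for all pairs, and the restriction of $\rho$ by Necessary Conditions~I) matches the paper's argument. The genuine gap is the one you flag yourself: the branch $\iota(u)=-u+b$ is never excluded, and neither of your fallback routes can exclude it.

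The within-pair case is not a corner case; it is the whole branch. Suppose $\alpha_{\vec{T}}+\alpha_{\cev{T}}=b\ [\Lambda]$ for every $T\in\T_1$. Oddness of $\theta_{1,1}$ then gives
\[
\frac{\theta_{1,1}(-u+b-\alpha_{\vec{T}})}{\theta_{1,1}(-u+b-\alpha_{\cev{T}})}=\frac{\theta_{1,1}(u-\alpha_{\cev{T}})}{\theta_{1,1}(u-\alpha_{\vec{T}})},
\]
so $z(u)z(-u+b)$ and $w(u)w(-u+b)$ are constant, and the curve is centrally symmetric up to scaling. This family has the same dimension as the translation family and is compatible with Necessary Conditions~I (take $b\in\RR$). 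So genericity cannot rule it out. Your fallbacks fail as follows.
\begin{enumerate}
\item Condition~\eqref{cond:angle_map_periodic_symm} reduces to a single linear constraint on the angles $\alpha_{\vec{T}}$, which can be satisfied, so it yields no contradiction.
\item The fixed points of $u\mapsto -u+b$ are merely sent into $\{(\pm1,\pm1)\}$. This creates at worst isolated real nodes, which $(\dagger)$ explicitly allows.
\end{enumerate}

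Some input beyond genericity is therefore required. The paper uses the real structure. The map $(z,w)\mapsto(1/z,1/w)$ sends real points of $\C$ to real points, and the real locus of $\C$ is the image of the real locus of $\TT(\tau)$. Hence the lifted holomorphic involution must preserve the real locus of $\TT(\tau)$. The paper then asserts that inversion preserves orientation on each real component of $\C$, whereas $u\mapsto -u+b$ is orientation reversing, so this branch cannot occur.

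If you adopt this route, the orientation claim needs its own justification. It is clear for a real component that is an oval inside one open quadrant. In the coordinates $(\log|z|,\log|w|)$, inversion is the rotation by $\pi$, which preserves the orientation of any invariant Jordan curve. By contrast, $u\mapsto -u+b$ with $b\in\RR$ reverses the circle it preserves. However, when $b\in\RR+\frac{\tau}{2}$ the map $u\mapsto -u+b$ exchanges the two real circles rather than reversing one. That case must be argued more carefully, or settled using additional structure of the Ising-dimer model.
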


\begin{proof}
Consider the birational parameterization $\Psi$ of the spectral curve $\C$ given by~\eqref{equ:param_spectral_curve_symm}. By central symmetry of the curve $\C$, we have that 
$(z,w)\in\C$ if and only if $\bigl(\frac{1}{z},\frac{1}{w}\bigr)\in\C$, implying that $\frac{1}{\Psi(u)}:=\bigl(\frac{1}{z(u)},\frac{1}{w(u)}\bigr)$ yields another parameterization of the curve $\C$. By~\cite[Corollary, Section 86]{DuVal}, this means 
that there exists a holomorphic automorphism $\sigma$ of the torus $\TT(\tau)$ such that $1/\Psi(u)=\Psi(\sigma(u))$. The further condition $1/(1/\Psi(u))=\Psi(u)$ implies that $\sigma$ must be an involution. The condition of being a holomorphic involution on $\sigma$ implies that $\sigma=au+b$ with $a=1$ and $b\in\{0,\frac{1}{2},\frac{\tau}{2},\frac{1}{2}+\frac{\tau}{2}\}$, or $a=-1$ and $b\in\CC$. 

$\circ$ Consider first the case where $a=1$ and $b\in\{0,\frac{1}{2},\frac{\tau}{2},\frac{1}{2}+\frac{\tau}{2}\}$. Then, looking at the first component of the map $\Psi$, we must have $\frac{1}{z(u)}= z(u+b)$, that is 
\begin{align*}
&c_1^{-1}\cdot e^{-2i\pi \ell_v\cdot u} 
\prod_{T\in \T_1}\left(\frac{\theta_{1,1}(u-\alpha_{\cev{T}})}{\theta_{1,1}(u-\alpha_{\vec{T}})}\right)^{-v_{\vec{T}}}=
c_1 e^{2i\pi \ell_v\cdot (u+b)} 
\prod_{T\in \T_1}\left(\frac{\theta_{1,1}(u+b-\alpha_{\vec{T}})}{\theta_{1,1}(u+b-\alpha_{\cev{T}})}\right)^{-v_{\vec{T}}}\\
\Leftrightarrow\, & e^{-4i\pi \ell_v\cdot u} \prod_{T\in \T_1}
\left(\frac{\theta_{1,1}(u-\alpha_{\cev{T}})}{\theta_{1,1}(u-\alpha_{\vec{T}})}
\frac{\theta_{1,1}(u+b-\alpha_{\cev{T}})}{\theta_{1,1}(u+b-\alpha_{\vec{T}})}\right)^{-v_{\vec{T}}}=(c_1)^2\cdot e^{2i\pi \ell_v b}. 
\end{align*}
The argument then resembles that of the proof of Theorem~\ref{thm:real_spectral_curve}. By Lemma~\ref{lem:meromorphic_theta_ratio}, the left hand side is a meromorphic function on $\TT(\tau)$ which must be equal to a constant, namely $(c_1)^2\cdot e^{2i\pi \ell_v b}$. This meromorphic function is equal to a constant if and only if the zeros of the numerator and denominator cancel out. Using that the zeros of the function $\theta_{1,1}(\,\cdot\,)$ are the points $0\,[\Lambda]$, and using that by our genericity assumption on the angle map $\mapalpha$ there are no relations between pairs of oriented train-tracks $\vec{T},\cev{T}$ corresponding to distinct train-tracks $T$, we deduce that, for every train-track $T$ such that $v_{\vec{T}}\neq 0$, we have
\begin{itemize}
 \item[$\circ$] either $\alpha_{\vec{T}}=\alpha_{\cev{T}}$, but then this cancels a zero and a pole of the function $z$, thus changing the number of zeros and poles of $z$, which is not possible,
 \item[$\circ$] or $\alpha_{\cev{T}}=\alpha_{\vec{T}}-b\,[\Lambda],\text{ and } \alpha_{\vec{T}}=\alpha_{\cev{T}}-b\,[\Lambda]$. When $b\in\{0,\frac{1}{2},\frac{\tau}{2},\frac{1}{2}+\frac{\tau}{2}\}$, the two conditions are compatible, and give 
$\alpha_{\cev{T}}=\alpha_{\vec{T}}-b\,[\Lambda]$. For the same reason as above, we must exclude the case $b=0$.
\end{itemize}
We thus have $\alpha_{\cev{T}}=\alpha_{\vec{T}}+\rho\ [\Lambda]$ for some half-period vector $\rho$ as in the statement, with $(j,\ell)\neq (0,0)$. The explicit value of the constant $(c_1)^2\cdot e^{2i\pi \ell_v b}$ can be computed by evaluating the meromorphic function at $u=0$. Since this is not needed in the sequel, we do not perform this computation. Repeating the argument for the second component of the map $\Psi$ yields that this holds for all train-tracks. 

We must now take into account the fact that the parameters $\tau,\mapalpha$ satisfy Necessary Condition~I. When $\tau\in i\RR^+$, 
$\mapalpha$ takes values in $\RR+ \{0,\frac{\tau}{2}\}~[\Lambda]$, which is compatible with the condition $\alpha_{\cev{T}}=\alpha_{\vec{T}}+\rho$, ending the proof of Point 1. When $\tau\in \frac{1}{2}+i\RR$, $\mapalpha$ takes values in $\bigl\{0,\frac{1}{2}\bigr\}+i\RR~[\Lambda]$. 
This condition is compatible with $\alpha_{\cev{T}}=\alpha_{\vec{T}}+\rho\ [\Lambda]$ if and only if $\ell=0$, thus ending the proof of Point 2. 

$\circ$ Consider now the case where $a=-1$, and $b\in\CC$. Since the map $(z,w)\mapsto(\frac{1}{z},\frac{1}{w})$ preserves the real locus of the curve, and since the orientation is preserved on each of the real components, 
the holomorphic involution $\sigma(u)=-u+b$, which is orientation reversing, cannot occur.  
\end{proof}
\begin{rem}\label{rem:thm_real_symmetric_spectral_curve}
Form the proof of Theorem~\ref{thm:real_symmetric_spectral_curve}, we also gather that we do not need the angle map $\mapalpha$ to be generic in order for the spectral curve to have central symmetry; this holds for all angle maps satisfying the conditions of Point 1. or Point 2. 
\end{rem}
Theorem~\ref{thm:real_symmetric_spectral_curve} and Remark~\ref{rem:thm_real_symmetric_spectral_curve} motivate the following. 

\begin{defi}\label{def:necessary_II}
We say that the parameters $\tau,\mapalpha,t$ of Fock's dimer model on $\GQ$ satisfy \emph{Necessary Conditions II}, if they satisfy Necessary Conditions I and if, up to modular transformations of $\tau$,
\begin{enumerate}
\item either $\tau\in i\RR^+$, then $
\alpha_{\cev{T}}=\alpha_{\vec{T}}+\rho\ [\Lambda]$,
for some half-period vector $\rho=\frac{j}{2}+\frac{\ell}{2}\tau,\ (j,\ell)\in\{0,1\}^2$ with $(j,\ell)\neq (0,0)$,
\item or $\tau\in \frac{1}{2}+i\RR$, with $\Im(\tau)>\frac{\sqrt{3}}{2}$, then $\alpha_{\cev{T}}=\alpha_{\vec{T}}+\rho\ [\Lambda]$, with $\rho=\frac{1}{2}$. 
\end{enumerate}
To emphasize the dependence in $\rho$, we now write the parameters of Fock's dimer model as $\tau,\mapalpha,\rho,t$. Note that, as in Definition~\ref{def:Necessary_I}, periodicity of the graph $\GQ$ motivates this definition but is not required for it to make sense. 
\end{defi}

\subsubsection{Face-weights of Fock's dimer model on $\GQ$ and duality}\label{subsec:face_weights_Fock}

Consider Fock's dimer model on an infinite minimal graph $\GQ$ with parameters $\tau,\mapalpha;\rho,t$ satisfying Necessary Conditions II. In this section, we do not need to assume that the graph $\GQ$ is periodic; we explicitly compute face-weights. Instead of denoting them by $\W_{\mapalpha,t}(\, \cdot\, ,\tau)=\W_{\mapalpha,t}(\, \cdot\, )$ as in Section~\ref{sec:def_Fock_general}, let us denote them by $\W_{\mapalpha,\rho,t}(\, \cdot\, ,\tau)=\W_{\mapalpha,\rho,t}(\, \cdot\, )$ emphasizing the dependence in $\rho$ stemming from Necessary Conditions II. 

\paragraph{Notation.} Recall that there is no natural way of globally assigning the arrows to the pair of oriented train-tracks $\vec{T},\cev{T}$ of $\vec{\T}$ corresponding to a train-track $T$ of $\T$, but locally around each face of $\GQ$, a consistent choice can be made, justifying the upcoming notation. For the angles $\alpha_{\vec{T}}$, $\alpha_{\cev{T}}$, when no reference to the train-tracks is needed, we will simply denote the pair of angles as $\alpha,\alpha'$.

%

Around a face corresponding to a dual vertex $f$ of degree $|f|$ of $G$, see Figure~\ref{fig:primal_dual} (left), the angles $(\alpha_i)$ are associated to train-tracks turning clockwise around $f$, and then angles $(\beta_i)$ are associated to train-tracks turning counterclockwise, with $i\in\{1,\dots,|f|\}$, and the respective numbering being such that $\beta_{i+1}=\alpha_i'$. Around a face corresponding to a vertex $v$ of degree $|v|$ of $G$, see Figure~\ref{fig:primal_dual} (right), the angles $(\alphat_i)$ correspond to train-tracks turning counterclockwise, and the angles $(\betat_i)$ correspond to train-tracks turning clockwise, with $i\in\{1,\dots,|v|\}$, and respective labeling being such that $\betat_{i+1}=\alphat_i'$. 

\begin{figure}[h]
  \centering
  \begin{overpic}[width=14cm]{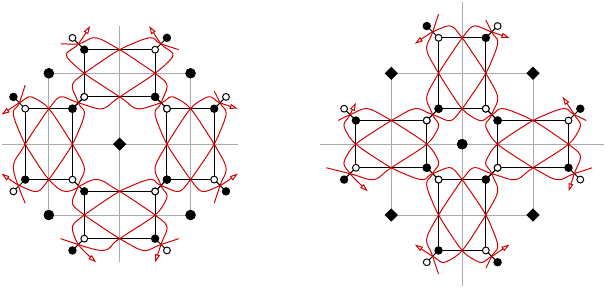}
   \put(22,43){\scriptsize $\alpha_1$}
   \put(15.5,43){\scriptsize $\beta_1$}
   \put(-3,18){\scriptsize $\alpha_2$}
   \put(-3,28){\scriptsize $\beta_2$}
   \put(23,3){\scriptsize $\alpha_3$}
   \put(16,3){\scriptsize $\beta_3$}
   \put(40,18){\scriptsize $\alpha_4$}
   \put(39,28){\scriptsize $\beta_4$}
   \put(21,22){\scriptsize $f$}
   \put(61,15){\scriptsize $\alphat_1$}
   \put(91.5,15){\scriptsize $\betat_1$}
   \put(84.5,7){\scriptsize $\alphat_2$}
   \put(84,39){\scriptsize $\betat_2$}
   \put(92,31.5){\scriptsize $\alphat_3$}
   \put(58,31){\scriptsize $\betat_3$}
   \put(66,40){\scriptsize $\alphat_4$}
   \put(66,7){\scriptsize $\betat_4$}
   \put(78,22){\scriptsize $v$}
   \end{overpic}
\caption{Left: notation around a face corresponding to a dual vertex $f$ of $G$. Right: notation around a face corresponding to a vertex $v$ of $G$. }
\label{fig:primal_dual}
\end{figure}

Around a face corresponding to a square $y=\{e,e^*\}$ of $\GQ$, we use $\alpha_e,\beta_e,\alpha_e',\beta_e'$ as in Figure~\ref{fig:square_face} (left), where the angles $\alpha_e,\beta_e$, and $\alpha_e',\beta_e'$, are the angles of the two edges of the square corresponding to an edge $e$ of $G$; we also use the notation $\alphat_e:=\alpha_{e^*}\,\betat_e=\beta_{e^*},\alphat'_e=\alpha'_{e^*},\betat'_e=\beta'_{e^*}$, where $\alphat_e,\betat_e$, and $\alphat'_e,\betat'_e$, are the angles of the two edges of the square corresponding to a dual edge $e^*$ of $G^*$, see Figure~\ref{fig:square_face} (right).

\begin{figure}[h]
  \centering
  \begin{overpic}[width=8cm]{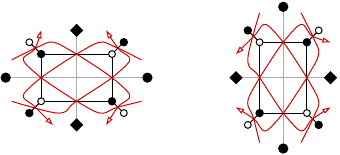} 
   \put(5,20){\scriptsize $e$}
   \put(30,38){\scriptsize $\alpha_e$}
   \put(12,38){\scriptsize $\beta_e$}
   \put(14,6){\scriptsize $\alpha_e'$}
   \put(31,6){\scriptsize $\beta_e'$}
   \put(23,24){\scriptsize $y$}
   \put(-5,22){\scriptsize $v_1$}
   \put(46,22){\scriptsize $v_2$}
   \put(21,2){\scriptsize $f_1$}
   \put(21,40){\scriptsize $f_2$}
   \put(71,19){\scriptsize $e^*$}
   \put(54,13){\scriptsize $\alphat_e=\alpha_{e^*}$}
   \put(98,13){\scriptsize $\betat_e=\beta_{e^*}$}
   \put(98,33){\scriptsize $\alphat'_e$}
   \put(65,33){\scriptsize $\beta_e'$}
   \put(84,24){\scriptsize $y$}
  \end{overpic}
\caption{Notation around a square face $y=\{e,e^*\}$, seen from the primal edge $e$ point of view (left), or from the dual edge $e^*$ point of view (right). Observe that we have $\alphat_e=\beta_e,\betat_e=\alpha_e'=\alpha_e+\rho$.}
\label{fig:square_face}
\end{figure}

By Lemma~\ref{lem:invariance_weights}, face-weights are meromorphic functions of the parameters. Therefore we can choose all the non-prime angles $\alpha$ to have fixed values in $\CC$, and all prime angles to satisfy $\alpha'=\alpha+\rho$, where $\alpha$ can be any of the angles $\alpha,\alpha_i,\alphat_i$ above. Note that this choice \emph{fixes the values in $\CC$} of the angles $(\beta_i),(\tilde{\beta}_i)$, and we have 
\begin{equation}\label{equ:angle_relation}
\beta_{i+1}=\alpha_i'=\alpha_{i}+\rho,\quad \betat_{i+1}=\alphat_i'=\alphat_i+\rho.
\end{equation}
Using further Remark~\ref{rem:angle_definition}, we know that face-weights are also meromorphic functions of the value of the discrete Abel map at the vertex corresponding to the face. It is thus useful to compute values of the discrete Abel map $\mapd$ in $\TT(\tau)$. Let us choose a vertex $v_0$ of $G$ and set $\mapd(v_0)=0$. Note that this is consistent with Necessary Conditions I for the possible range of the angle map $\mapalpha$, and this is not restrictive since we have not yet fixed the value of the parameter $t$. 
\begin{lem}\label{lem:discrete_Abel_map}
Values of the discrete Abel map in $\TT(\tau)$ are given by:
{\small 
\begin{align}\label{eq:mapd:mod:2rho}
\mapd(x)=
\begin{cases}
0&\text{ if $x$ is a primal vertex $v$},\\
\rho &\text{ if $x$ is a dual vertex $f$},\\
(\beta_e-\alpha_e)+\rho & \text{ if $x$ is a square face $y=\{e,e^*\}$ of $\GQ$, with the notation of Figure~\ref{fig:square_face}}.
\end{cases}
\end{align}}
\end{lem}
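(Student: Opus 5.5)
The plan is to follow the discrete Abel map along paths in the quad-graph of $\GQ$, with $\mapd(v_0)=0$ at the chosen primal vertex, and to reduce every increment modulo $\Lambda$ using the relation $\alpha'=\alpha+\rho$ from Necessary Conditions II. The one fact used throughout is that $2\rho\in\Lambda$, so $-\rho=\rho$ in $\TT(\tau)$. By definition, $\mapd$ changes by $\pm\alpha_{\vec{T}}$ when $\vec{T}$ is crossed. Each train-track $T$ of $G$ corresponds to the pair $\vec{T},\cev{T}$, which run side by side in opposite directions. Crossing both of them together therefore changes $\mapd$ by $\pm(\alpha_{\vec{T}}-\alpha_{\cev{T}})=\mp\rho=\rho$ in $\TT(\tau)$.

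\textbf{Step 1: primal and dual vertices.} I show that moving from a primal vertex $v$ to a neighbouring dual vertex $f$ (a face adjacent to $v$ in $G$) changes $\mapd$ by $\rho$. Take a rhombus edge $vf$ of the quad-graph $\GGR$. It is crossed by exactly one train-track $T$ of $\T$, hence in $\GQ$ by exactly the pair $\vec{T},\cev{T}$, and by nothing else. This uses the construction of Section~\ref{subsec:train_tracks}: the only new crossings in $\vec{\T}$ are between members of a pair, and they occur at the flat quadrangles. By the observation above, the increment is $\rho$. Since $\GGR$ is bipartite between primal and dual vertices, and since $\mapd$ is well defined (equivalently $2\rho\in\Lambda$ makes every path consistent), induction along paths from $v_0$ gives $\mapd(v)=0$ for every primal vertex and $\mapd(f)=\rho$ for every dual vertex.

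\textbf{Step 2: square faces.} Fix a square face $y=\{e,e^*\}$ with the notation of Figure~\ref{fig:square_face}, with $e=v_1v_2$ and $e^*=f_1f_2$. I go from the face $f_2$ (value $\rho$) to the face $y$ by crossing the edge of the square parallel to $e$ that lies on the $f_2$ side. That edge is crossed by the train-tracks with angles $\alpha_e$ and $\beta_e$. Using the orientation convention (white on the left) and the Abel-map rule of Figure~\ref{fig:minimal}, where crossing the edge with angles $\beta_1,\alpha_1$ adds $\beta_1-\alpha_1$, the increment is $\beta_e-\alpha_e$. This gives $\mapd(y)=\rho+\beta_e-\alpha_e$.

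\textbf{Step 3: consistency check.} I confirm the result by instead going from $v_1$ (value $0$) to $y$ through the edge parallel to $e^*$. Its angles are $\alphat_e=\beta_e$ and $\betat_e=\alpha_e+\rho$, so the increment is $\pm(\betat_e-\alphat_e)=\pm(\alpha_e+\rho-\beta_e)$. With the correct sign this equals $\beta_e-\alpha_e+\rho$ modulo $\Lambda$, again because $-\rho=\rho$.

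\textbf{Main obstacle.} The delicate part is the bookkeeping of signs and orientations: determining which of the two crossing train-tracks enters with $+$ and which with $-$, given the bipartite colouring of $\GQ$ in Figure~\ref{fig:Fisher_Dub}. I will settle this by reading it directly off Figures~\ref{fig:minimal} and~\ref{fig:square_face}. The identity $-\rho=\rho$ absorbs any remaining sign ambiguity in the $\rho$ terms, but the sign of $\beta_e-\alpha_e$ itself must be fixed by the convention.
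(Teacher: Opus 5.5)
Your proposal is correct and is essentially the paper's argument: both track the increments of $\mapd$ across edges of $\GQ$ using $\alpha'=\alpha+\rho$ and $2\rho\in\Lambda$. Your Step~3 is literally the paper's relation $\mapd(y)=\mapd(v_1)+\beta_e-\alpha'_e$, and, as in the paper, the sign of $\beta_e-\alpha_e$ is taken from the figure conventions. The only cosmetic difference is that you get the primal and dual values by crossing the pair at an external edge and using parity on $\GGR$, whereas the paper computes $v_1\to v_2$, $f_1\to f_2$ and $v_1\to f_1$ inside a single square.
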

\begin{proof}
Using the notation of Figure~\ref{fig:square_face} (left), we have:
\begin{align*}
\mapd(v_2)&=\mapd(v_1)+\alpha_e-\alpha'_e+\beta_e-\beta'_e=\mapd(v_1)-2\rho \ [\Lambda]=\mapd(v_1)\ [\Lambda]\\
\mapd(f_2)&=\mapd(f_1)+\beta_e-\alpha_e-(\beta'_e-\alpha'_e)=\mapd(f_1)\ [\Lambda]\\
\mapd(f_1)&=\mapd(v_1)-\alpha_e'+\alpha_e=\mapd(v_1)-\rho\ [\Lambda]\\
\mapd(y)&=\mapd(v_1)+\beta_e-\alpha'_e=\mapd(v_1)+(\beta_e-\alpha_e)-\rho\ [\Lambda].
\end{align*}
The proof is concluded by using that $\mapd(v_0)=0$. 
\end{proof}

We now turn to the computation of face-weights. 
\begin{lem}\label{lem:alternate_product_faces}
Consider Fock's dimer model on an infinite, minimal graph $\GQ$ with parameters $\tau,\mapalpha,\rho,t$ satisfying Necessary Conditions II. Then, the face-weights are explicitly given by, for every square face $y$, for every face corresponding to a dual vertex $f$, resp. primal vertex $v$:
\begin{align}
\W_{\mapalpha,\rho,t}(y)
&=(-1)^{j + j \ell}\frac{\theta_{1,1}^2(\beta_e-\alpha_e)}{
      \theta_{1+\ell,1+j}^2(\beta_e-\alpha_e)}
      \frac{\theta_{0,0}^2(t)}{\theta_{\ell,j}^2(t)}
=(-1)^{j + j \ell}\frac{\theta_{1+\ell,1+j}^2(\betat_e-\alphat_e)}{
      \theta_{1,1}^2(\betat_e-\alphat_e)}
      \frac{\theta_{0,0}^2(t)}{\theta_{\ell,j}^2(t)}\label{eq:square:face-weight}\\
\W_{\mapalpha, \rho, t}(f)&=(-1)^{|f|}\prod_{i=1}^{|f|}\frac{\theta_{1+\ell,1+j}(0)}{\theta_{1+\ell,1+j}(\alpha_{i-1}-\alpha_i)}\frac{\theta_{0,0}(t+\alpha_{i-1}-\alpha_i)}{\theta_{0,0}(t)}\label{eq:dual:face-weight}\\
\W_{\mapalpha,\rho, t}(v)
&=(-1)^{|v|}\prod_{i=1}^{|v|}
\frac{\theta_{1+\ell,1+j}(\alphat_{i}-\alphat_{i-1})}{\theta_{1+\ell,1+j}(0)}\frac{\theta_{\ell,j}(t)}{\theta_{\ell,j}(t+\alphat_i-\alphat_{i-1})}\label{eq:primal:face-weight}.
\end{align}
In particular the model is non-degenerate if $t \notin \{\frac{1}{2}+\frac{\tau}{2},\frac{1}{2}+\frac{\tau}{2}+\rho\}\ [\Lambda]$. 
\end{lem}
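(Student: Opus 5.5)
The approach is direct: plug the angle relations of Necessary Conditions II, $\alpha'=\alpha+\rho$, and the values of the discrete Abel map from Lemma~\ref{lem:discrete_Abel_map} into the general face-weight formula~\eqref{equ:face_weight_bip}. Then use the half-period translation identities~\eqref{eq:theta:4}--\eqref{eq:theta:5} to replace every theta function evaluated at an argument shifted by $\rho$ with a theta function of shifted characteristic. By Lemma~\ref{lem:invariance_weights} and Remark~\ref{rem:angle_definition}, face-weights are meromorphic in the angles and in $\mapd(\fs)$. So we may fix lifts in $\CC$ with $\alpha'=\alpha+\rho$ exactly, as in~\eqref{equ:angle_relation}, and take $\mapd(\fs)$ equal to its representative $0$, $\rho$ or $(\beta_e-\alpha_e)+\rho$ in $\TT(\tau)$. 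The computation is then carried out face type by face type.

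For a square face $y$, the four train-tracks around it carry the angles $\alpha_e,\beta_e,\alpha_e+\rho,\beta_e+\rho$, in the order dictated by Figure~\ref{fig:square_face}. Formula~\eqref{equ:face_weight_bip} with $|\fs|/2=2$ gives two $\theta_{1,1}$ ratios, whose arguments are $\pm(\beta_e-\alpha_e)$ and $\pm(\beta_e-\alpha_e)\pm\rho$. It also gives two $\theta_{0,0}$ ratios, whose arguments are $t+\mapd(y)+(\text{differences})$; since $\mapd(y)=\beta_e-\alpha_e+\rho$, these reduce to $t$ and $t\pm\rho$ (mod $2\rho$). Applying~\eqref{eq:theta:4} turns $\theta_{1,1}(x+\rho)$ into $\theta_{1+\ell,1+j}(x)$ and $\theta_{0,0}(t+\rho)$ into $\theta_{\ell,j}(t)$, up to exponential prefactors $(q^{-1/4}e^{\mp i\pi x})^\ell$ and powers of $i$. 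Parity~\eqref{eq:theta:1} then squares everything, giving the first expression in~\eqref{eq:square:face-weight}. The second expression follows by rewriting in terms of $\betat_e-\alphat_e$: Figure~\ref{fig:square_face} gives $\alphat_e=\beta_e$ and $\betat_e=\alpha_e+\rho$, so $\betat_e-\alphat_e=-(\beta_e-\alpha_e)+\rho$, and one more application of~\eqref{eq:theta:4}--\eqref{eq:theta:5} exchanges $\theta_{1,1}$ and $\theta_{1+\ell,1+j}$.

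For a dual-vertex face $f$, take $\mapd(f)=\rho$ and $\beta_{i+1}=\alpha_i+\rho$. The $\theta_{1,1}(\alpha_i-\beta_{i+1})$ factors become $\theta_{1,1}(-\rho)$, which is proportional to $\theta_{1+\ell,1+j}(0)$. The $\theta_{1,1}(\beta_i-\alpha_i)=\theta_{1,1}(\alpha_{i-1}-\alpha_i+\rho)$ factors become $\theta_{1+\ell,1+j}(\alpha_{i-1}-\alpha_i)$. The $\theta_{0,0}$ arguments are $t+\rho+\beta_i-\alpha_i=t+\alpha_{i-1}-\alpha_i+2\rho$ and $t+\rho+\beta_{i+1}-\alpha_i=t+2\rho$. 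These are handled with the period identity~\eqref{eq:theta:2}, whose prefactors depend on the argument only through $e^{-2i\pi(\cdot)\ell}$. Those prefactors telescope in the product over $i$, because $\sum_i(\alpha_{i-1}-\alpha_i)=0$. This yields~\eqref{eq:dual:face-weight}. The vertex face $v$ is treated identically with $\mapd(v)=0$: now the $\theta_{0,0}$ arguments carry a single shift $\rho$, which produces $\theta_{\ell,j}$, while the roles of $\theta_{1,1}$ and $\theta_{1+\ell,1+j}$ are exchanged relative to the dual case, giving~\eqref{eq:primal:face-weight}.

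The main obstacle is the bookkeeping of the constant prefactors, namely the powers of $i$, the signs $(-1)^{\dots}$ and the factors $q^{-1/4}e^{\pm i\pi x}$ coming from~\eqref{eq:theta:4}--\eqref{eq:theta:5}. One must verify that all $x$-dependent exponentials cancel, using $\sum(\beta_{i+1}-\beta_i)=0$ as in the proof of Lemma~\ref{lem:invariance_weights}. The surviving constants must produce exactly $(-1)^{j+j\ell}$ for squares and $(-1)^{|f|}$, $(-1)^{|v|}$ for the other faces. I would first do this for $\rho=\frac12$ ($\ell=0$, no exponentials), then for $\ell=1$, tracking $e^{-i\pi x}$ pairwise. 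Finally, non-degeneracy is read off from the formulas: the denominators $\theta_{0,0}(t)$ and $\theta_{\ell,j}(t)$ vanish exactly at $t=\frac12+\frac\tau2$ and $t=\frac12+\frac\tau2+\rho$ (mod $\Lambda$), since $\theta_{\ell,j}(t)\propto\theta_{0,0}(t-\rho)$.
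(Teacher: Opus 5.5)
Your route is the paper's: fix lifts with $\alpha'=\alpha+\rho$, choose a lift of $\mapd$ at the central face, substitute into \eqref{equ:face_weight_bip}, and convert the $\rho$-shifted thetas with \eqref{eq:theta:4}--\eqref{eq:theta:5}.

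The only deviation is your lift $\mapd(f)=\rho$. It produces the arguments $t+2\rho+\alpha_{i-1}-\alpha_i$ and $t+2\rho$, and hence an extra pass through \eqref{eq:theta:2}. The paper lifts $\mapd(f)=-\rho$ instead, so those arguments are directly $t+\alpha_{i-1}-\alpha_i$ and $t$. Both choices are legitimate by Remark~\ref{rem:angle_definition}.

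For $v$, the inversion you mention does not come from treating the face ``identically''. It comes from applying \eqref{equ:face_weight_bip} with the clockwise and counterclockwise roles exchanged, since the $\alphat_i$ turn counterclockwise around $v$. This gives the first expression in \eqref{equ:face_weight_primal_bis}; literally substituting $\alpha_i\to\alphat_i$, $\beta_i\to\betat_i$ would give a different product.

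The step that will not close as you announce is the second expression in \eqref{eq:square:face-weight}. Since $\beta_e-\alpha_e=\rho-(\betat_e-\alphat_e)$, what you actually need is
\[
\frac{\theta_{1,1}^2(z+\rho)}{\theta_{1+\ell,1+j}^2(z+\rho)}=(-1)^{j\ell}\,\frac{\theta_{1+\ell,1+j}^2(z)}{\theta_{1,1}^2(z)}.
\]
This holds because, in \eqref{eq:theta:4}, the exponential factors are identical for the two characteristics, while the powers of $i$ square to $(-1)^{\ell(1+j)}$ and $(-1)^{\ell}$ respectively. Concretely, for $\rho=\frac12+\frac\tau2$ one has $\theta_{1,1}(z+\rho)=-q^{-1/4}e^{-i\pi z}\theta_{0,0}(z)$ and $\theta_{0,0}(z+\rho)=-iq^{-1/4}e^{-i\pi z}\theta_{1,1}(z)$.

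Careful bookkeeping therefore gives the prefactor $(-1)^{j}$, not $(-1)^{j+j\ell}$, in the second form. The displayed second equality is off by a sign when $(j,\ell)=(1,1)$. You can confirm this independently. In the regime of Proposition~\ref{prop:first_parameter_condition} with $\rho=\frac12+\frac\tau2$ and $\ell_t=0$, both $t$ and $\betat_e-\alphat_e$ are real. There the first form is negative, but the displayed second form would be positive.

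The paper's proof makes the same slip. It cites \eqref{equ:theta_ratios}, which only compares equal characteristics at different arguments. The corrected sign $(-1)^j$ is the one compatible with the duality relation \eqref{equ:duality_relations}.

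Only the first form is used later in the paper, so this is a defect of the statement rather than of your method. The first form, \eqref{eq:dual:face-weight}, \eqref{eq:primal:face-weight}, and the non-degeneracy claim all come out as you describe.
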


\begin{proof}
The fact that the model is non degenerate if $t\notin \{\frac{1}{2}+\frac{\tau}{2},\frac{1}{2}+\frac{\tau}{2}+\rho\}\ [\Lambda]$ comes from looking at the face-weight at a square face, recalling that the zeros of $\theta_{(0,0)}(\,\cdot\,)$ are located at $\bigl(\frac{1}{2}+\frac{\tau}{2}\bigr)\ [\Lambda]$, using Identity~\eqref{eq:theta:4}, and observing that if $t\in \{\frac{1}{2}+\frac{\tau}{2},\frac{1}{2}+\frac{\tau}{2}+\rho\}\ [\Lambda]$, the face-weight is either equal to $0$ or to infinity, which gives a degenerate model. 

In the three following computations, we use Equation~\eqref{equ:face_weight_bip} to compute the face-weight with the relevant choice of labeling of the angles, as given in Figures~\ref{fig:primal_dual} and~\ref{fig:square_face}. We choose the value of the angle map $\mapd$ at the center vertex according to Lemma~\ref{lem:discrete_Abel_map}, with a lift that suits our computations, and we use Equation~\eqref{equ:angle_relation}. 
We also need the following two identities, which can readily obtained from Identities~\eqref{eq:theta:4} and~\eqref{eq:theta:5}. For all $x,y\in\CC$, we have
\begin{equation}\label{equ:theta_ratios}
\frac{\theta_{m,n}(x+\rho)}{\theta_{m,n}(y+\rho)}=
e^{-i\pi\ell(x-y)}\frac{\theta_{m+\ell,n+j}(x)}{\theta_{m+\ell,n+j}(y)},\quad 
\frac{\theta_{m,n}(x-\rho)}{\theta_{m,n}(y-\rho)}=
e^{i\pi\ell(x-y)}\frac{\theta_{m+\ell,n+j}(x)}{\theta_{m+\ell,n+j}(y)}.
\end{equation}

\emph{Square face $y=\{e,e^*\}$.} We choose $\mapd(y)$ to be equal to $(\beta_e-\alpha_e)+\rho$, then we have
\begin{align*}
\W_{\mapalpha, \rho,t}(y)&=\frac{\theta_{1,1}(\beta_e-\alpha_e)\theta_{1,1}(\beta'_e-\alpha'_e)}{
\theta_{1,1}(\alpha'_e-\beta_e)\theta_{1,1}(\alpha_e-\beta'_e)}
\frac{\theta_{0,0}(t+\mapd(y)+\alpha'_e-\beta_e)\theta_{0,0}(t+\mapd(y)+\alpha_e-\beta'_e)}{\theta_{0,0}(t+\mapd(y)+\alpha_e-\beta_e)\theta_{0,0}(t+\mapd(y)+\alpha'_e-\beta'_e)}\\
&=\frac{\theta_{1,1}^2(\beta_e-\alpha_e)}{
\theta_{1,1}(\alpha_e-\beta_e+\rho)\theta_{1,1}(\alpha_e-\beta_e-\rho)}
\frac{\theta_{0,0}(t+2\rho)\theta_{0,0}(t)}{\theta_{0,0}^2(t+\rho)}.
\end{align*}
Using Identities~\eqref{eq:theta:3} and~\eqref{eq:theta:5}, we obtain 
\begin{equation*}
    \begin{aligned}
  \W_{\mapalpha, \rho,t}(y) &= \frac{\theta_{1,1}^2(\beta_e-\alpha_e)}{
    q^{-\ell/2}i^{2(\ell + j \ell)}(-1)^{j\ell + \ell + j}\theta_{1+\ell,1+j}^2(\alpha_e-\beta_e)}
    \frac{\theta_{0,0}^2(t)q^{-\ell}e^{-2i\pi \ell t}}{q^{-\ell/2}e^{-2i\pi \ell t}i^{2j\ell}\theta_{\ell,j}^2(t)}\\
  &= (-1)^{j + j \ell}\frac{\theta_{1,1}^2(\beta_e-\alpha_e)}{
      \theta_{1+\ell,1+j}^2(\alpha_e-\beta_e)}
      \frac{\theta_{0,0}^2(t)}{\theta_{\ell,j}^2(t)},
  \end{aligned}
\end{equation*}
and the proof of the first identity is concluded by using that $ \theta_{1+\ell,1+j}(\,\cdot\,)$
is odd. Observing that $\alphat_e=\beta_e,\betat_e=\alpha_e+\rho$, we obtain 
\begin{align*}
 \W_{\mapalpha, \rho,t}(y)=(-1)^{j + j \ell}
\frac{\theta_{1,1}^2(\alphat_e-\betat_e+\rho)}{\theta_{1+\ell,1+j}^2(\alphat_e-\betat_e+\rho)}
\frac{\theta_{0,0}^2(t)}{\theta_{\ell,j}^2(t)}=(-1)^{j + j \ell}
\frac{\theta_{1+\ell,1+j}^2(\alphat_e-\betat_e)}{\theta_{1,1}^2(\alphat_e-\betat_e)}
\frac{\theta_{0,0}^2(t)}{\theta_{\ell,j}^2(t)},
\end{align*}
where in the last equality, we used Identity~\eqref{equ:theta_ratios}.

\emph{Face corresponding to a dual vertex $f$.} We choose $\mapd(f)=-\rho$, then we have
{\small 
\begin{align}
\W_{\mapalpha, \rho, t}(f)
&= \prod_{i=1}^{|f|} 
 \frac{\theta_{1,1}(\alpha_i-\beta_{i+1})}{\theta_{1,1}(\beta_i-\alpha_i)}
 \frac{\theta_{0,0}(t+\mapd(f)+\beta_i-\alpha_i)}{\theta_{0,0}(t+\mapd(f)+\beta_{i+1}-\alpha_i)}=
  \prod_{i=1}^{|f|} \frac{\theta_{1,1}(-\rho)}{\theta_{1,1}(\beta_i-\alpha_i)}\frac{\theta_{0,0}(t-\rho+\beta_i-\alpha_i)}{\theta_{0,0}(t)}\label{equ:face_weight_dual_bis}\\
&=\prod_{i=1}^{|f|}(-1)\frac{\theta_{1,1}(\rho)}{\theta_{1,1}(\alpha_{i-1}-\alpha_i+\rho)}\frac{\theta_{0,0}(t+\alpha_{i-1}-\alpha_i)}{\theta_{0,0}(t)}.\nonumber 
\end{align}}
Then using Identity~\eqref{equ:theta_ratios}, we obtain 
\begin{align*}
\W_{\mapalpha, \rho, t}(f)&=(-1)^{|f|}\prod_{i=1}^{|f|}e^{i\pi\ell(\alpha_{i-1}-\alpha_i)}\frac{\theta_{1+\ell,1+j}(0)}{\theta_{1+\ell,1+j}(\alpha_{i-1}-\alpha_i)}\frac{\theta_{0,0}(t+\alpha_{i-1}-\alpha_i)}{\theta_{0,0}(t)}. 
\end{align*}
The proof is concluded by using that, since the angles $(\alpha_i)$ are fixed, 
$\prod_{i=1}^{|f|}e^{-i\pi\ell(\alpha_{i-1}-\alpha_i)}=~1$. 

\emph{Face corresponding to a primal vertex $v$.} We choose $\mapd(v)=0$, and obtain
{\small 
\begin{align}
\W_{\mapalpha, \rho, t}(v) 
&=\prod_{i=1}^{|v|} \frac{\theta_{1,1}(\betat_i-\alphat_i)}{\theta_{1,1}(\alphat_i-\betat_{i+1})}\frac{\theta_{0,0}(t+\mapd(v)+\alphat_i-\betat_{i+1})}{\theta_{0,0}(t+\mapd(v)+\alphat_i-\betat_i)}
=\prod_{i=1}^{|v|} \frac{\theta_{1,1}(\betat_i-\alphat_i)}{\theta_{1,1}(-\rho)}\frac{\theta_{0,0}(t-\rho)}{\theta_{0,0}(t+\alphat_i-\betat_i)}\label{equ:face_weight_primal_bis}\\
&=\prod_{i=1}^{|v|}(-1) \frac{\theta_{1,1}(\alphat_{i-1}-\alphat_i+\rho)}{\theta_{1,1}(\rho)}\frac{\theta_{0,0}(t-\rho)}{\theta_{0,0}(t+\alphat_i-\alphat_{i-1}-\rho)}.\nonumber 
\end{align}}
Using Identity~\eqref{equ:theta_ratios} for both factors we obtain: 
\begin{align*}
\W_{\mapalpha, \rho, t}(v)=(-1)^{|v|}\prod_{i=1}^{|v|}e^{-i\pi\ell(\alphat_{i-1}-\alphat_i)}e^{i\pi\ell(\alphat_i-\alphat_{i-1})}
\frac{\theta_{1+\ell,1+j}(\alphat_{i-1}-\alphat_i)}{\theta_{1+\ell,1+j}(0)}\frac{\theta_{\ell,j}(t)}{\theta_{\ell,j}(t+\alphat_i-\alphat_{i-1})}.
\end{align*}
The exponential terms cancel because the $(\alphat_i)$'s are fixed, and we have $\theta_{1+\ell,1+j}(\alphat_{i-1}-\alphat_i)=\theta_{1+\ell,1+j}(\alphat_{i}-\alphat_{i-1})$ because $(j,\ell)\neq (0,0)$ so that the theta function is even.  
\end{proof}

\paragraph{Duality.}  As in Section~\ref{sec:bipartite_representation_I}, suppose that the graph $\GQ$ is constructed from the dual graph $G^*$ instead of $G$, while keeping the bipartite coloring fixed. Then, the graph $\GRR$ is unchanged, so that train-tracks and the angle map are the same as well. The only thing that gets modified in the construction of Fock's dimer model is the base point of the discrete Abel map, which is now a primal vertex of the dual graph $G^*$, that is a dual vertex of the graph $G$. Denoting by $\W_{G,\mapalpha,\rho,t}(\,\cdot\,)$, resp. $\W_{G^*,\mapalpha,\rho,t}(\,\cdot\,)$, the face-weights of Fock's dimer model on $\GQ$ arising from $G$, resp. from $G^*$, we prove the following. 
\begin{lem}\label{lem:duality}
Face-weights of Fock's dimer model on the graph $\GQ$ arising from $G$ and from $G^*$ are related by the following:
\[
\W_{G,\mapalpha,\rho,t}(\,\cdot\,)=\W_{G^*,\mapalpha,\rho,t+\rho}(\,\cdot\,).
\] 
\end{lem}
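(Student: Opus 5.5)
The plan is to compare the two constructions face by face through the general formula~\eqref{equ:face_weight_bip}. Building $\GQ$ from $G^*$ while keeping the bipartite colouring fixed leaves the quad-graph $\GRR$ unchanged. Hence the oriented train-tracks, the angle map $\mapalpha$, and the labelling of angles around every face of $\GQ$ are the same in both constructions. The only ingredient of~\eqref{equ:face_weight_bip} that can change is the discrete Abel map $\mapd$, which always enters through the combination $t+\mapd(\fs)$. It therefore suffices to show that the two discrete Abel maps differ by a global additive constant congruent to $\rho$ modulo $\Lambda$. Then I would invoke Remark~\ref{rem:angle_definition} and Lemma~\ref{lem:invariance_weights}, which say face-weights depend meromorphically on $t+\mapd(\fs)$ viewed in $\TT(\tau)$, to absorb that constant into $t$.

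\textbf{Step 1.} The increments of $\mapd$ across train-tracks are the same in both constructions, since they depend only on the oriented train-tracks and their angles. So $\mapd_{G^*}-\mapd_G$ is a constant, determined by the base points. For $\mapd_G$ the base point is a primal vertex $v_0$ of $G$ with $\mapd_G(v_0)=0$. For $\mapd_{G^*}$ the base point is a vertex of $G^*$, i.e.\ a dual vertex $f_0$ of $G$, with $\mapd_{G^*}(f_0)=0$. By Lemma~\ref{lem:discrete_Abel_map}, $\mapd_G(f_0)=\rho\ [\Lambda]$, so $\mapd_G=\mapd_{G^*}+\rho\ [\Lambda]$ everywhere.

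\textbf{Step 2.} Substitute into~\eqref{equ:face_weight_bip}: $t+\mapd_G(\fs)=(t+\rho)+\mapd_{G^*}(\fs)$ modulo $\Lambda$. Since face-weights are meromorphic on $\TT(\tau)$ in this combination, the $\Lambda$-ambiguity is irrelevant. This gives $\W_{G,\mapalpha,\rho,t}(\fs)=\W_{G^*,\mapalpha,\rho,t+\rho}(\fs)$ for every face $\fs$.

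The main point to check carefully is the sign convention in Step 1: whether the shift is $+\rho$ or $-\rho$. The orientation conventions for crossing train-tracks fix it. Either sign works, however, because $2\rho\in\Lambda$ makes $\rho=-\rho\ [\Lambda]$. The only remaining check is that the lifted angles are consistent between the two constructions. They are, since the angles of $\GRR$ are literally the same, so no exponential prefactors appear.

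As a sanity check, I would apply the explicit formulas of Lemma~\ref{lem:alternate_product_faces}. Taking the square-face formula with $t\mapsto t+\rho$ and using~\eqref{equ:theta_ratios}, the factor $\theta_{0,0}^2(t)/\theta_{\ell,j}^2(t)$ should invert up to the sign $(-1)^{j+j\ell}$. This matches the exchange of the roles of $e$ and $e^*$ in~\eqref{eq:square:face-weight}.
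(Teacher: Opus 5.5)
Your proposal is correct and follows essentially the same route as the paper: the two discrete Abel maps differ by a global shift congruent to $\rho$ modulo $\Lambda$ (Lemma~\ref{lem:discrete_Abel_map}), and since $\mapd$ only enters through $t+\mapd(\fs)$ and face-weights are $\Lambda$-periodic in that combination, the shift is absorbed into $t$. One small slip in your optional sanity check, which does not affect the proof: by~\eqref{eq:theta:4} (not~\eqref{equ:theta_ratios}, which compares equal characteristics at different arguments), one gets $\theta_{0,0}^2(t+\rho)/\theta_{\ell,j}^2(t+\rho)=(-1)^{j\ell}\,\theta_{\ell,j}^2(t)/\theta_{0,0}^2(t)$, so the sign is $(-1)^{j\ell}$ rather than $(-1)^{j+j\ell}$.
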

\begin{proof}
Changing the base point of the discrete Abel map from a primal vertex of $G$ to a primal vertex of $G^*$ has the effect of translating all values by $\rho$; indeed the discrete Abel map is defined additively, and by Lemma~\ref{lem:discrete_Abel_map}, we know that $\mapd(v)-\mapd(f)=\rho$ in $\TT(\tau)$. The proof is concluded by recalling that the discrete Abel map is always summed with $t$ in the face-weights, see Equation~\eqref{equ:Focks_weight} and also Remark~\ref{rem:angle_definition}.
\end{proof}

\begin{rem}\label{rem:duality_in_face_weights}
Since when constructing the graph $\GQ$ from the dual graph $G^*$, the bipartite coloring is not exchanged, our choice of notation for angles around faces does not go through directly through duality; also the primal and dual faces have opposite bipartite coloring. As a consequence, one has to be careful when reading the above duality relation on the right-hand-sides of the identities of Lemma~\ref{lem:alternate_product_faces}, which rely on our choice of notation around faces. We nevertheless have the following relations which can be derived from understanding the duality relation with our choice of notation, or by a direct computation using also Identity~\eqref{equ:theta_ratios}. More precisely, define the following which correspond to the right-hand-sides of Lemma~\ref{lem:alternate_product_faces}: 
\begin{align*}
\Ws^{\, \square}_{(\alpha_e,\beta_e),t}&=(-1)^{j + j \ell}\frac{\theta_{1,1}^2(\beta_e-\alpha_e)}{
      \theta_{1+\ell,1+j}^2(\beta_e-\alpha_e)}
      \frac{\theta_{0,0}^2(t)}{\theta_{\ell,j}^2(t)}\\
\Ws_{(\alpha_1,\dots,\alpha_m),t}&=(-1)^{m}\prod_{i=1}^{m}\frac{\theta_{1+\ell,1+j}(0)}{\theta_{1+\ell,1+j}(\alpha_{i-1}-\alpha_i)}\frac{\theta_{0,0}(t+\alpha_{i-1}-\alpha_i)}{\theta_{0,0}(t)}\\
\Ws^*_{(\alpha_1,\dots,\alpha_m),t}&=(-1)^{m}\prod_{i=1}^{m}
\frac{\theta_{1+\ell,1+j}(\alpha_{i}-\alpha_{i-1})}{\theta_{1+\ell,1+j}(0)}\frac{\theta_{\ell,j}(t)}{\theta_{\ell,j}(t+\alpha_i-\alpha_{i-1})}, 
\end{align*}
then, we have the following duality relations: 
\begin{equation}\label{equ:duality_relations}
\begin{split}
&\Ws^{\, \square}_{(\alpha_e,\beta_e),t}=\frac{1}{\Ws^{\, \square}_{(\beta_e,\alpha_e+\rho),t+\rho}}=
\frac{1}{\Ws^{\, \square}_{(\alphat_e,\betat_e),t+\rho}},\\
&\Ws_{(\alpha_1,\dots,\alpha_m),t}=\frac{1}{\Ws^*_{(\alpha_m,\dots,\alpha_1),t+\rho}},\  \Ws^*_{(\alpha_1,\dots,\alpha_m),t}=\frac{1}{\Ws_{(\alpha_m,\dots,\alpha_1),t+\rho}}.
\end{split}
\end{equation}
\end{rem}

\subsection{Equality of face-weights at square faces: Necessary Conditions III}\label{sec:Necessary_III}

Consider an Ising model on an infinite, isoradial graph $G$ with real coupling constants $\eps\Js$, and the associated Ising-dimer model on the minimal graph $\GQ$. Consider Fock's dimer model on $\GQ$ with parameters $\tau,\mapalpha,\rho,t$ satisfying Necessary Conditions II. The two models are gauge equivalent if and only if face-weights are equal at all faces. As a consequence, we must have equality of face-weights for every square face $y=\{e,e^*\}$, implying in particular that the face-weight at $y$ must be negative. Using Equations~\eqref{equ:gauge_Ising_dimers} and~\eqref{eq:square:face-weight}, we thus have the following condition at every square face $y$: 
\begin{equation}\label{equ:condition_square_face}
-\sinh^2(2\Js_e)=\WI_{\eps\Js}(y)=\W_{\mapalpha,\rho,t}(y,\tau)=
(-1)^{j + j \ell}\frac{\theta_{1,1}^2(\beta_e-\alpha_e)}{\theta_{1+\ell,1+j}^2(\beta_e-\alpha_e)}
\frac{\theta_{0,0}^2(t)}{\theta_{\ell,j}^2(t)}<0,
\end{equation}
where we use the notation of Figure~\ref{fig:square_face}. In what follows, we identify parameters  $\tau,\mapalpha,\rho,t$ satisfying the negativity Condition~\eqref{equ:condition_square_face}.

\begin{prop}\label{prop:maximality_spectral_curve}
Consider Fock's dimer model on an infinite, minimal graph $\GQ$ with parameters $\tau,\mapalpha,\rho,t$ satisfying Necessary Conditions II. Then, if $\tau\in \frac{1}{2}+i\RR^+$, the negativity  Condition~\eqref{equ:condition_square_face} is never satisfied.
\end{prop}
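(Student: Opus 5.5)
With $\tau\in\frac{1}{2}+i\RR^+$, Necessary Conditions II force $\rho=\frac{1}{2}$, that is $(j,\ell)=(1,0)$. The plan is to write the square face-weight~\eqref{eq:square:face-weight} in this case and show that it is never negative. Substituting $(j,\ell)=(1,0)$ gives
\[
\W_{\mapalpha,\rho,t}(y)=-\frac{\theta_{1,1}^2(x)}{\theta_{1,0}^2(x)}\,\frac{\theta_{0,0}^2(t)}{\theta_{0,1}^2(t)},\qquad x=\beta_e-\alpha_e .
\]
By Necessary Conditions I, the angles take values in $\{0,\frac12\}+i\RR~[\Lambda]$, so $x\in\{0,\frac12\}+i\RR~[\Lambda]$. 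The parameter satisfies $t\in\{\frac14,\frac34\}+i\RR~[\Lambda]$. Note that the real and imaginary parts of $x$ are not independent: $\alpha_e,\beta_e$ are lifts of angles on the real locus, so the relevant range is a union of vertical lines.

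\textbf{Step 1: the two squared ratios are real.} I will use conjugation identity~\eqref{equ:complex_conjugation} together with $-\bar\tau=\tau-1$ and the $\tau\mapsto\tau+1$ rules of Lemma~\ref{lem:modular_transfo} (extended to $\theta_{1,0}$ and $\theta_{0,1}$, which pick up a factor $e^{i\pi/4}$, resp. are exchanged with $\theta_{0,0}$). Together with parity~\eqref{eq:theta:1}, this should show that $A(x):=\theta_{1,1}^2(x)/\theta_{1,0}^2(x)$ satisfies $\overline{A(x)}=\pm A(x)$ on the lines $\Re x\in\{0,\frac12\}$. The same argument, combined with a half-period shift~\eqref{eq:theta:4}, should give $\overline{B(t)}=\pm B(t)$ for $B(t):=\theta_{0,0}^2(t)/\theta_{0,1}^2(t)$ on $\Re t\in\{\frac14,\frac34\}$. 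Since Theorem~\ref{thm:Fock_real_dimer_model} already guarantees that $\W(y)$ is real, I only need the arguments of $A$ and $B$ modulo $\pi$, and in fact only the sign of the product $AB$.

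\textbf{Step 2: a sign is constant on each connected piece, so evaluate it in a limit.} For generic parameters, $A$ and $B$ are real-valued, finite and nonzero on each connected component of their parameter domains. Here $\Re x=0$ or $\frac12$, $\Re t=\frac14$ or $\frac34$, and $\Im\tau>\frac{\sqrt3}{2}$ also varies continuously. Hence the sign of $-AB$ is constant on each component. It therefore suffices to compute it in the regime $\Im\tau\to\infty$, where $q=e^{i\pi\tau}=i\,e^{-\pi\Im\tau}\to0$.

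In that regime I will use the leading terms of the series: $\theta_{1,1}(z)\approx -2q^{1/4}\sin\pi z$, $\theta_{1,0}(z)\approx 2q^{1/4}\cos\pi z$, and $\theta_{0,0},\theta_{0,1}\approx 1$. Then $A\approx\tan^2(\pi x)$, which is real. For $x=iy$ it equals $-\tanh^2(\pi y)\le0$, and for $x=\frac12+iy$ it equals $-\coth^2(\pi y)<0$. Also $B\to1>0$. So $-AB\ge0$, and it vanishes only in the degenerate case $y=0$. Thus $\W(y)$ is never negative, contradicting~\eqref{equ:condition_square_face}.

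\textbf{Main obstacle.} The difficulty is the ``constant sign'' step. A term such as $q^{1/4}$ has a phase (recall $q=i|q|$), so I must check that the phases genuinely cancel in $A$ and $B$ rather than merely in the limit. I must also handle the points where $B$ could vanish or blow up: zeros of $\theta_{0,0}$ and $\theta_{0,1}$ lying on the lines $\Re t=\frac14,\frac34$. I would first verify directly from Step 1 that $A$ and $B$ are each real; this makes their individual signs locally constant away from zeros and poles. I would then check that no zero of $\theta_{0,0}$ or $\theta_{0,1}$ meets those lines when $\tau\in\frac12+i\RR$. If $B$ turns out to be only purely imaginary rather than real, I would fall back to showing $\Im\W(y)\neq0$, using the independence of $x$ and $t$.
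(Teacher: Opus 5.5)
Your route differs from the paper's. The paper rewrites $\W(y)$ through the ascending Landen transformation in terms of theta functions of modulus $2\tau-1\in i\RR^+$, then uses half-period shifts and $\tau\mapsto-1/\tau$ to exhibit it as a product of squares of real numbers. Your Step~2, however, has a genuine gap. The reduction ``the sign is constant on each connected component, so it suffices to compute it as $\Im\tau\to\infty$'' only covers components that actually reach that regime, and most do not.

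Concretely, on the line $\Re x=0$ the function $A$ has double zeros and double poles at $x\in i\,\Im(\tau)\,\ZZ$; these are the zeros of $\theta_{1,1}$ and of $\theta_{1,0}$. On $\Re t=\frac14$ the function $B$ has double zeros and poles at $\Im t\in\frac12\Im(\tau)+\Im(\tau)\ZZ$. In particular, the check announced in your last paragraph is false. The points $\frac{\tau}{2}=\frac14+\frac{i}{2}\Im\tau$ and $\frac12+\frac{\tau}{2}=\frac34+\frac{i}{2}\Im\tau$ lie exactly on the lines $\Re t\in\{\frac14,\frac34\}$; they are the degenerate values of Lemma~\ref{lem:alternate_product_faces}.

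Once these curves are removed, the parameter region splits into sectors such as $n\Im\tau<\Im x<(n+1)\Im\tau$. Sending $\Im\tau\to\infty$ with $x,t$ fixed only probes the sectors $|\Im x|<\Im\tau$ and $|\Im t|<\frac12\Im\tau$. For the other sectors your argument gives nothing, and the leading-term asymptotics are not uniform there. You can rescue the step by noting that these zeros and poles have even order, so a real-valued real-analytic function does not change sign across them. Alternatively, use $A(x+\tau-\frac12)=1/A(x)$. Either way, this has to be said.

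A cleaner fix removes Step~2 altogether: apply Step~1 to the unsquared ratios. Set $R(x)=\theta_{1,1}(x)/\theta_{1,0}(x)$ and $S(t)=\theta_{0,0}(t)/\theta_{0,1}(t)$, so that $\W(y)=-R(x)^2S(t)^2$.

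\begin{itemize}
\item Since $-\bar\tau=\tau-1$, Identity~\eqref{equ:complex_conjugation} and the $\tau\mapsto\tau\pm1$ rules give $\overline{R(x)}=R(\bar x)$, because $\theta_{1,1}$ and $\theta_{1,0}$ pick up the same factor $e^{-i\pi/4}$. They also give $\overline{S(t)}=1/S(\bar t)$, because $\theta_{0,0}$ and $\theta_{0,1}$ are exchanged.
\item For $x\in i\RR$ one has $\bar x=-x$, and for $x\in\frac12+i\RR$ one has $\bar x=1-x$. Oddness and $1$-periodicity of $R$ then give $\overline{R(x)}=-R(x)$, that is, $R(x)\in i\RR$.
\item For $t\in\frac14+i\RR$, resp.\ $t\in\frac34+i\RR$, one has $\bar t=\frac12-t$, resp.\ $\bar t=\frac32-t$. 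Identity~\eqref{eq:theta:4} with $\rho=\frac12$ gives $S(\frac12-t)=1/S(t)$, hence $S(t)\in\RR$.
\item Translations by $\Lambda$ change $R$ and $S$ only by a sign, so the choice of lift is irrelevant.
\end{itemize}

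Therefore $\W(y)=|R(x)|^2S(t)^2\geq0$ identically, with no limit and no analysis of zeros. This argument uses only identities already stated in the paper, and is more elementary than the Landen/modular route. The paper's route, in exchange, yields an explicit expression in terms of real theta values on a rectangular lattice.
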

\begin{proof}
The proof consists in showing that the face-weight $\W_{\mapalpha,\rho,t}(y,\tau)$ is always positive. 
Using Equation~\eqref{eq:theta:4}, we obtain
\begin{equation*}
\W_{\mapalpha,\rho,t}(y,\tau) 
= -\frac
{\theta_{1,1}^2(\beta-\alpha)\theta_{0,0}^2(t)}
{\theta_{1,0}^2(\beta-\alpha)\theta_{0,1}^2(t)}
= \frac
{\theta_{1,1}^2(\beta-\alpha)\theta_{1,0}^2(t+\tau/2)}
{\theta_{1,0}^2(\beta-\alpha)\theta_{1,1}^2(t+\tau/2)}
\end{equation*}
Using the algebraic identity on theta functions from Lemma~\ref{lem:ascending:Landen:theta} of Appendix~\ref{sec:App_A}, and denoting $\gamma = \beta - \alpha$, $s = t+\tau/2$ and $\tau_2 = 2\tau-1$, this can be rewritten as
\begin{equation*}
  \W_{\mapalpha,\rho,t}(y,\tau)
  =
  \frac
  {\theta_{1,1}^2(\gamma,\tau_2)\theta_{0,0}^2(\gamma,\tau_2)\theta_{1,0}^2(s,\tau_2)\theta_{0,1}^2(s,\tau_2)}
  {\theta_{1,0}^2(\gamma,\tau_2)\theta_{0,1}^2(\gamma,\tau_2)\theta_{1,1}^2(s,\tau_2)\theta_{0,0}^2(s,\tau_2)}
\end{equation*}
Note that since Necessary Conditions I are satisfied and $\tau \in 1/2 + i\RR^+$, $t \in \{\frac14,\frac34\} + i\RR~[\Lambda]$ so $s = t+\tau/2 \in \{0,\frac12\} + i\RR~[\Lambda]$.
Even though $\tau_2 \in i\RR^+$, the sign of $\W_{\mapalpha,\rho,t}(y,\tau)$ is not yet obvious from this equation since $\gamma, s \in \{0,\frac12\} +i\RR ~[\Lambda]$, and we only know that theta functions are real on $\RR [\Lambda]$ when the modulus is imaginary, so we need to use modular transformations. 
If we let $j_1, j_2 \in \{0,1\}$ be such that $\gamma + j_1/2, s+j_2/2 \in i\RR [\Lambda]$, by Equation~\eqref{eq:theta:4}
{\small 
\begin{equation*}
  \W_{\mapalpha,\rho,t}(y,\tau)
  =
  \frac
  {\theta_{1,1+j_1}^2(\gamma + j_1/2,\tau_2)\theta_{0,j_1}^2(\gamma + j_1/2,\tau_2)\theta_{1,j_2}^2(s+j_2/2,\tau_2)\theta_{0,1+j_2}^2(s+j_2/2,\tau_2)}
  {\theta_{1,j_1}^2(\gamma + j_1/2,\tau_2)\theta_{0,1+j_1}^2(\gamma + j_1/2,\tau_2)\theta_{1,1+j_2}^2(s+j_2/2,\tau_2)\theta_{0,j_2}^2(s+j_2/2,\tau_2)}.
\end{equation*}
}
Using Equation~\eqref{eq:tau_1} to change from imaginary arguments to real arguments modulo $[\Lambda]$, we see that this is positive real: all the constant factors from Equation~\eqref{eq:tau_1} cancel and there remains only a product of squared theta functions of real arguments with pure imaginary modulus.
\end{proof}

\begin{cor}\label{cor:spectral_curve_max}
Consider an Ising-dimer model on an infinite, periodic, minimal graph $\GQ$, and let $\C$ be its spectral curve. Suppose that $\C$ has genus 1 and satisfies assumptions $(\dagger)$. Then, the curve $\C$ is maximal. 
\end{cor}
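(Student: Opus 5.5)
The plan is to chain together results already established. The Ising-dimer model on $\GQ$ has periodic real edge-weights. By Remark~\ref{rem:Fisher_Dub}, its spectral curve $\C$ is defined by a polynomial with real coefficients, so it is a real algebraic curve. Since the coefficients are real, the curve contains real points; alternatively, one can use the real points coming from the parameterization on the real locus of the torus. Since $\C$ has genus 1 and satisfies $(\dagger)$, Fock's theorem provides a gauge equivalent Fock dimer model on $\GQ$ with parameters $\tau,\mapalpha,t$. Theorems~\ref{thm:real_spectral_curve} and~\ref{thm:Fock_real_dimer_model} then show that these parameters satisfy Necessary Conditions I. Next, central symmetry of $\C$ (Point 2 of Remark~\ref{rem:Fisher_Dub}) together with Theorem~\ref{thm:real_symmetric_spectral_curve} gives Necessary Conditions II.

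The central step is to rule out the case $\tau\in\frac12+i\RR^+$. Gauge equivalence means that the face-weights agree at every face, and in particular at every square face $y=\{e,e^*\}$. By Equation~\eqref{equ:gauge_Ising_dimers}, the Ising-dimer weight there is $-\sinh^2(2\Js_e)$, which is strictly negative: $\Js_e>0$ is needed for the edge-weights $\tanh(2\eps_e\Js_e)$ to be nonzero, and the curve would otherwise degenerate. So the negativity Condition~\eqref{equ:condition_square_face} must hold. Proposition~\ref{prop:maximality_spectral_curve} shows that this condition fails when $\tau\in\frac12+i\RR^+$. Hence, up to modular transformations, $\tau\in i\RR^+$. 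I would also check that modular transformations do not spoil this. They change the face-weights only through the reparameterization of Lemma~\ref{lem:modular_transo_dimers}, and the dichotomy of Theorem~\ref{thm:real_spectral_curve} is stated up to such transformations, so the conclusion is unaffected.

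Once $\tau\in i\RR^+$, Point 1 of Theorem~\ref{thm:real_spectral_curve} applies directly. The real locus of $\C$ is the image under the birational map $\Psi$ of the two circles $\RR+\{0,\frac{\tau}{2}\}~[\Lambda]$, so it has two components, which equals $g+1$. That is, $\C$ is maximal.

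I expect the main obstacle to be the genericity hypothesis used in Theorem~\ref{thm:real_spectral_curve}. That theorem's classification assumes a generic $\tau$ and angle map, whereas here they are fixed by $\C$. To handle this, I would argue that $(\dagger)$ already gives the needed genericity of $\mapalpha$, as noted before that theorem. For $\tau$, I would observe that the non-generic lattices (square and hexagonal, i.e. $\tau=i$ or $\tau=e^{i\pi/3}$) lie on the boundary of the two families in Du Val's classification. There, either the extra involutions are conjugate to the listed ones, or one can check directly that the real structure with two real components is the one compatible with the negativity condition. A second, minor point is the claim that the image under $\Psi$ of two disjoint circles gives two components even in the presence of isolated real nodes. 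Since $(\dagger)$ allows only isolated real nodes, I would count components in the normalization, following the convention for maximality in~\cite{MikhalkinRullgard}.
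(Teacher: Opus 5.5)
Your proposal is correct and follows the paper's own proof: Necessary Conditions I and II, then Proposition~\ref{prop:maximality_spectral_curve} to exclude $\tau\in\frac{1}{2}+i\RR^+$ via the negativity of the square face-weights, then Point~1 of Theorem~\ref{thm:real_spectral_curve}. One small correction: real coefficients alone do not force $\C$ to have real points, but this costs nothing, because the proof of Theorem~\ref{thm:real_spectral_curve} already excludes the fixed-point-free involutions using the positions of the zeros and poles of $\Psi$.
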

\begin{proof}
Consider Fock's gauge equivalent dimer model on $\GQ$~\cite{Fock}, with parameters $\tau,\mapalpha,\rho,t$. Then, these parameters must satisfy Necessary Conditions II. Moreover, face-weights must be equal at all square faces which, by Proposition~\ref{prop:maximality_spectral_curve} implies that $\tau\neq \frac{1}{2}+i\RR^+$. As a consequence, we have $\tau\in i\RR^+$ which, by Point 1. of Theorem~\ref{thm:real_spectral_curve}, implies maximality of the spectral curve $\C$.
\end{proof}

Suppose now that $\tau\in i\RR^+$. Then, by Necessary Conditions I, the parameter $t$ belongs to $\RR+\bigl\{0,\frac{\tau}{2}\bigr\}~[\Lambda]$ and, for every square face $y=\{e,e^*\}$, the parameters $\alpha_e,\beta_e$ belong to $\RR+\bigl\{0,\frac{\tau}{2}\bigr\}~[\Lambda]$, so that we have: 
\begin{equation}\label{equ:square_condition}
t=\Re(t)+\frac{\ell_t}{2}\tau,\quad \beta_e-\alpha_e=\Re(\beta_e-\alpha_e)+\frac{\ell_e}{2}\tau,
\end{equation}
for some $\ell_t\in\{0,1\}$, and $(\ell_e)_{\scriptsize{\{y=\{e,e^*\}:\text{ square face}\}}}$ taking values in $\{0,1\}$. By Necessary Condition~II, $\rho=\frac{j}{2}+\frac{\ell}{2}\tau$, for some $(j,\ell)\neq (0,0)$.

\begin{prop}\label{prop:first_parameter_condition}
Suppose that $\tau\in i\RR^+$, and consider Fock's dimer model on an infinite, minimal graph $\GQ$ satisfying Necessary Conditions II, with parameters $\tau,\mapalpha,\rho=\frac{j}{2}+\frac{\ell}{2}\tau,t=\Re(t)+\frac{\ell_t}{2}\tau\in\RR+\{0,\frac{\tau}{2}\}~[\Lambda]$. Then, 
\begin{enumerate}
\item if $(j,\ell)=(0,1)$ the negativity Condition~\eqref{equ:condition_square_face} is never satisfied, 
\item if $(j,\ell)\in\{(1,0),(1,1)\}$, the negativity Condition~\eqref{equ:condition_square_face} is satisfied if and only if, for every square face $y$, $\ell_e=(\ell_t+\ell)\ [2]$. 
Moreover, 
\begin{enumerate}
 \item when $(j,\ell)=(1,0)$ and $\ell_t=1$, the graph $G$ and the dual graph $G^*$ need to be bipartite, 
 \item when $(j,\ell)=(1,1)$ and $\ell_t=0$, the dual graph $G^*$ needs to be bipartite; and if $\ell_t=1$, the graph $G$ needs to be bipartite.
\end{enumerate}
\end{enumerate}
\end{prop}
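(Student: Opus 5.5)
The plan is to determine the sign of the right-hand side of~\eqref{equ:condition_square_face} by splitting it into two squared ratios,
\[
R(\gamma)=\frac{\theta_{1,1}^2(\gamma)}{\theta_{1+\ell,1+j}^2(\gamma)},\qquad S(t)=\frac{\theta_{0,0}^2(t)}{\theta_{\ell,j}^2(t)},\qquad \gamma=\beta_e-\alpha_e,
\]
and to treat each ratio using~\eqref{equ:square_condition}. Since $\tau\in i\RR^+$, Identity~\eqref{equ:complex_conjugation} shows that all four theta functions are real on $\RR$. Hence if $\gamma$, resp.\ $t$, is real, then $R(\gamma)$, resp.\ $S(t)$, is a square of a real number and so is positive (away from zeros and poles, which the generic/non-degenerate assumption excludes).

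When $\gamma=x+\frac{\tau}{2}$ with $x$ real, I apply Identity~\eqref{eq:theta:4} with the half-period $\frac{\tau}{2}$ (that is, $(j,\ell)=(0,1)$ in that formula). This gives
\[
\theta_{m,n}\bigl(x+\tfrac{\tau}{2}\bigr)^2=q^{-1/2}e^{-2i\pi x}(-1)^n\theta_{m+1,n}(x)^2 .
\]
In each ratio, numerator and denominator have the same argument. The factors $q^{-1/2}e^{-2i\pi x}$ therefore cancel, and only the signs $(-1)^n$ survive. This gives
\[
\operatorname{sgn} R\bigl(x+\tfrac{\tau}{2}\bigr)=(-1)^{1-(1+j)}=(-1)^j,\qquad \operatorname{sgn} S\bigl(x+\tfrac{\tau}{2}\bigr)=(-1)^{0-j}=(-1)^j .
\]
Combining these, the total sign of $\W_{\mapalpha,\rho,t}(y)$ is
\[
(-1)^{j+j\ell+j\ell_e+j\ell_t}.
\]
If $j=0$, that is $(j,\ell)=(0,1)$, the sign is always $+$, which proves Point~1. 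If $j=1$, negativity holds if and only if $1+\ell+\ell_e+\ell_t$ is odd, that is $\ell_e=\ell+\ell_t\ [2]$, which proves the main claim of Point~2. I expect this step to be routine once the bookkeeping of the $i$-powers in~\eqref{eq:theta:4} is done carefully.

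The remaining work, and the main obstacle, is turning this edge-wise condition into the bipartiteness statements. For each train-track $T$ let $c_{\vec{T}}\in\{0,1\}$ record whether $\Im\alpha_{\vec{T}}$ is $0$ or $\frac{\tau}{2}$. Then $\ell_e=c(\beta_e)+c(\alpha_e)\ [2]$. By Necessary Conditions~II we have $c_{\cev{T}}=c_{\vec{T}}+\ell$. I also need the analogous quantity on the dual side of the square, namely $\betat_e-\alphat_e=\alpha_e+\rho-\beta_e$, whose class is $\ell_e+\ell\ [2]$ (see Figure~\ref{fig:square_face}).

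I then read these constraints around each face of $\GQ$ corresponding to a vertex $v$ or a dual vertex $f$ of $G$, using the labelings of Figure~\ref{fig:primal_dual} and the relations~\eqref{equ:angle_relation}. Around such a face, a forced value of $\ell_e$ forces either equality or alternation of the classes $c(\alpha_i)$ of consecutive train-tracks.

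In case (a), $\rho=\frac12$ and $\ell_t=1$, so $\ell_e=1$ on every edge. Here the classes must alternate both around every vertex and around every face of $G$. Hence all degrees $|v|$ and $|f|$ are even, which makes both $G^*$ and $G$ bipartite.

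In case (b), $\rho=\frac12+\frac{\tau}{2}$. The extra shift $c_{\cev{T}}=c_{\vec{T}}+1$ means that alternation happens around exactly one of the two kinds of faces. When $\ell_t=0$ it happens around the vertices $v$, forcing every $|v|$ even and so $G^*$ bipartite; when $\ell_t=1$ it happens around the faces $f$, so $G$ is bipartite.

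I expect the delicate point to be tracking which of $\alpha_i$ and $\beta_{i+1}=\alpha_i+\rho$ enters each edge difference, so that the roles of $G$ and $G^*$ come out as stated. If the parity comes out swapped, I would recheck the conventions of Figure~\ref{fig:square_face}.
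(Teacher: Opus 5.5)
Your proposal is correct and follows the paper's proof. The sign computation via Identity~\eqref{eq:theta:4}, which yields the factor $(-1)^{j(\ell+1+\ell_e+\ell_t)}$, is exactly the paper's. For bipartiteness, the paper sums $\beta_i-\alpha_i$ (resp.\ $\tilde\beta_i-\tilde\alpha_i$) around each face and compares imaginary parts with $|f|\rho$ (resp.\ $|v|\rho$). That is just the summed form of your local rules $c(\alpha_{i-1})+c(\alpha_i)\equiv\ell_t$ around $f$ and $c(\tilde\alpha_{i-1})+c(\tilde\alpha_i)\equiv\ell_t+\ell$ around $v$, and these give exactly the parities you state, with no swap.
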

\begin{proof}
The proof consists in analyzing the condition $\W_{\mapalpha,\rho,t}(y)<0$ for every square face $y$. From Identity~\eqref{eq:theta:4} we obtain, for every $z\in\CC$, $s\in\{0,1\}$,
\[
\theta_{m,n}\Bigl(z+\frac{s}{2}\tau\Bigr)=(q^{-\frac{1}{4}}e^{-i\pi z})^{s}(-i)^{ns}\theta_{m+s,n}(z)\quad \Rightarrow\quad 
\frac{\theta_{m,n}^2\bigl(z+\frac{s}{2}\tau\bigr)}{\theta_{m+\ell,n+j}^2(z+\frac{s}{2}\tau)}=(-1)^{js}\frac{\theta_{m+s,n}^2(z)}{\theta_{m+\ell+s,n+j}^2(z)}.
\]
Therefore, setting $\beta_e-\alpha_e=\Re(\beta_e-\alpha_e)+\frac{\ell_e}{2}\tau$, and $t=\Re(t)+\frac{\ell_t}{2}\tau$ in Equation~\eqref{eq:square:face-weight} and using Identity~\eqref{eq:theta:4}, we obtain, for every square face $y$,
\begin{align}
\W_{\mapalpha,\rho,t}(y)
&=
(-1)^{j+j\ell}(-1)^{j\ell_e}(-1)^{j\ell_t}
\frac{\theta_{1+\ell_e,1}^2(\Re(\beta_e-\alpha_e))}{\theta_{1+\ell+\ell_e,1+j}^2(\Re(\beta_e-\alpha_e))}
\frac{\theta_{\ell_t,0}^2(\Re(t))}{\theta_{\ell+\ell_t,j}^2(\Re(t))}\nonumber\\
&=
(-1)^{j(\ell+1+\ell_e+\ell_t)}
\frac{\theta_{1+\ell_e,1}^2(\Re(\beta_e-\alpha_e))}{\theta_{1+\ell+\ell_e,1+j}^2(\Re(\beta_e-\alpha_e))}
\frac{\theta_{\ell_t,0}^2(\Re(t))}{\theta_{\ell+\ell_t,j}^2(\Re(t))}.\label{equ:weight_square_face_new}
\end{align}
Since $\tau\in i\RR^+$, theta functions are real for real values of the parameters, so that the terms involving theta functions are positive, and we are left with considering the sign in front. 

When $(j,\ell)=(0,1)$, the sign is always positive; when $(j,\ell)=(1,0)$, it is negative if and only if $\ell_e=\ell_t\ [2]$; when $(j,\ell)=(1,1)$, it is negative if and only if $\ell_e=(\ell_t+1)\ [2]$, thus proving the first part of Proposition~\ref{prop:first_parameter_condition}.

Now, consider a face of $\GQ$ corresponding to a dual vertex $f$ of $G$. Using the notation of Figure~\ref{fig:primal_dual} (left) we have, for every $i\in\{1,\dots,|f|\}$, $\beta_i-\alpha_i=\Re(\beta_i-\alpha_i)+\frac{\ell_i}{2}\tau$, and using the first part of the statement: 
\begin{align}
\sum_{i=1}^{|f|}(\beta_i-\alpha_i)&= \sum_{i=1}^{|f|}(\Re(\beta_i-\alpha_i))+\frac{\tau}{2}\sum_{i=1}^{|f|}\ell_i 
=\sum_{i=1}^{|f|}(\Re(\beta_i-\alpha_i))+ \frac{\tau}{2}\bigl( |f|(\ell_t+\ell)\ [2]\bigr)\nonumber\\
&=\sum_{i=1}^{|f|}(\Re(\beta_i-\alpha_i))+ \tau\Bigl( \frac{|f|}{2}(\ell_t+\ell)\ [1]\Bigr).\label{equ:cond_1}
\end{align}
Moreover, by Equation~\eqref{equ:angle_relation}, we have 
\begin{align}\label{equ:cond_2}
\sum_{i=1}^{|f|}(\beta_i-\alpha_i)&=\sum_{i=1}^{|f|}(\alpha_{i-1}+\rho-\alpha_i)=
|f|\rho=\frac{|f|}{2}j+\tau\frac{|f|}{2}\ell.
\end{align}
Right-hand-sides of~\eqref{equ:cond_1} and~\eqref{equ:cond_2} need to be equal in $\TT(\tau)$, implying that their imaginary parts need to be equal in $\TT(\tau)$. Therefore, we must have 
$\frac{|f|}{2}(\ell_t+\ell)\,[1]=\frac{|f|}{2}\ell \ [1]$, \emph{i.e.}, 
$\frac{|f|}{2}\ell_t \ [1]=0$. 
This is true if $\ell_t=0$, but if $\ell_t=1$, this is true if and only 
if $|f|$ is even. Since the argument holds for every dual vertex $f$, we must have that all dual vertices have even degree, which is equivalent to saying that the graph $G$ is bipartite. 

Consider a face of $\GQ$ corresponding to a vertex $v$ of $G$. Using the notation of Figure~\ref{fig:primal_dual} (right), for every $i\in\{1,\dots,|v|\}$, we have the following relation between the angles using the primal or dual perspective: $\betat_i-\alphat_i=\alpha_i-\beta_i+\rho$. Using the first part of the statement, we thus have:
\begin{align*}
\sum_{i=1}^{|v|}(\betat_i-\alphat_i)&=\sum_{i=1}^{|v|}(\alpha_i-\beta_i +\rho)=
\sum_{i=1}^{|v|}(\Re(\alpha_i-\beta_i))+\frac{\tau}{2}(|v|(\ell_t+\ell)\ [2])+|v|\rho\\
&=\sum_{i=1}^{|v|}(\Re(\alpha_i-\beta_i))+\frac{|v|}{2}j+\tau\Bigl( \frac{|v|}{2}(\ell_t+\ell)\ [1]+\frac{|v|}{2}\ell \Bigr).
\end{align*}
Moreover, by Equation~\eqref{equ:angle_relation}, we have
\begin{align*}
\sum_{i=1}^{|v|}(\betat_i-\alphat_i)&=\sum_{i=1}^{|v|}(\alphat_{i-1}+\rho-\alphat_i)=
|v|\rho=\frac{|v|}{2} j+\tau\frac{|v|}{2}\ell.
\end{align*}
Again, we need the imaginary parts of both right-hand-sides to be equal in $\TT(\tau)$, so that we must have that $\frac{|v|}{2}(\ell_t+\ell)=0\ [1]$. When $\ell_t=0$ and $\ell=0$, or $\ell_t=1$ and $\ell=1$, this is always true. When $\ell_t=1$ and $\ell=0$, or $\ell_t=0$ and $\ell=1$, this is true if and only if $|v|$ is even. Since this holds at every primal vertex $v$, we deduce that the dual graph $G^*$ is bipartite. Gathering these conditions proves the second part of the statement.
\end{proof}

Proposition~\ref{prop:first_parameter_condition} identifies the set of parameters $\mapalpha,\rho,t$ of Fock's dimer models satisfying the negativity Condition~\eqref{equ:square_condition}, but the condition on the angle map is given locally, that is for every square face $y=\{e,e^*\}$. This raises the question of the existence and characterization of (global) angle maps $\mapalpha$ satisfying these local constraints. 

In order to handle this question we need to add additional information on the set of train-tracks $\vec{\T}$ when either $G$ or $G^*$ is bipartite. Recall from Section~\ref{subsec:train_tracks} that, in general, there is no natural way of assigning labels $\vec{T}$ and $\cev{T}$ to the two oriented train-tracks of $\vec{\T}$ corresponding to a train-track $T$ of $\T$ just by looking at $T$. The situation is nevertheless different when the graph $G$ or its dual graph $G^*$ is bipartite. Suppose first that $G^*$ is bipartite. Then the train-tracks $\T$ of $G^*$ can themselves be oriented so that white vertices of $G^*$ are on the left. The oriented train-tracks $\vec{\T}$ of $\GQ$ can be partitioned as $\vec{\T} =  \vec{\T}_{*,\circ} \sqcup \vec{\T}_{*,\bullet}$, where $\vec{\T}_{*,\circ}$, resp. $\vec{\T}_{*,\bullet}$, consists of the train-tracks turning counterclockwise around every white, resp. black vertex of $G^*$; moreover the set $\vec{\T}_{*,\circ}$ is in orientation preserving bijection with $\T$, see Figure~\ref{fig:dual_graph_bipartite}.

\begin{figure}[h]
  \centering
  \begin{overpic}[width=12cm]{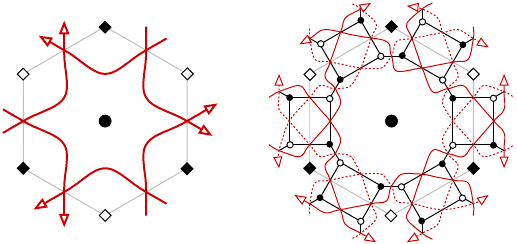} 
  \end{overpic}
\caption{Left: the oriented train-tracks of $\T$ when $G$ is the triangular lattice and $G^*$ is the hexagonal lattice, which is bipartite. Right: the oriented train-tracks of $\vec{\T}$ split into $\vec{\T}_{\circ,*}$ (full lines, in orientation preserving bijection with $\T$), and $\vec{\T}_{\bullet,*}$ (dotted lines).}  \label{fig:dual_graph_bipartite}  
\end{figure}

When the graph $G$ is bipartite, a similar construction can be done, with the word ``face'' replaced by ``vertex''. This yields a partition of the set of oriented train-tracks $\vec{\T}$ as $\vec{\T} =  \vec{\T}_\circ \sqcup \vec{\T}_\bullet$, where $\vec{\T}_\circ$, resp. 
$\vec{\T}_\bullet$, consists of the train-tracks turning counterclockwise around every white, resp. black vertex of $G$. For every train-track $T$ of $\T$, we then set the labeling $\vec{T}$, $\cev{T}$ as follows:
\begin{equation}\label{eq:bipartite:orientation}
 \text{If $G^*$ is bipartite, then } \vec{T} \in \vec{\T}_{*,\circ}, \cev{T} \in \vec{\T}_{*,\bullet} ; \quad 
  \text{if $G$ is bipartite, then } \vec{T} \in \vec{\T}_\circ, \cev{T} \in \vec{\T}_\bullet.
\end{equation}
Note that if both graphs $G$ and $G^*$ are bipartite, the above condition fixes the bipartite coloring of $G^*$ once the coloring of $G$ is set. Note also that only differences of angles occur in face-weights, so that we need to fix a base point for the angle map $\mapalpha$. If $G$ is not bipartite, we fix any $\vec{T}_0$, and set $\alpha_{\vec{T}_0}=0$ (observing that this is compatible with Necessary Conditions I). If $G$, resp. $G^*$, is bipartite, then $\vec{T}_0$ is fixed to belong to $\vec{\T}_\circ$, resp. $\vec{\T}_{*,\circ}$, and we also set $\alpha_{\vec{T}_0}=0$. 

\begin{defi}\label{defi:angle_map}
Consider an infinite, isoradial graph $G$, and the corresponding minimal graph $\GQ$. Suppose that, up to modular transformations, $\tau\in i\RR^+$, and consider parameters $\mapalpha,\rho=\frac{j}{2}+\frac{\ell}{2}\tau,t=\Re(t)+\frac{\ell_t}{2}\tau$ satisfying Necessary Conditions II, recalling that Necessary Conditions II implies that, for every train-track $T$ of $\T$, $\alpha_{\vec{T}}=\alpha_{\cev{T}}+\rho\ [\Lambda]$.
Let us define the following sets $\Acal_{\rho,\,\ell_t/2}$ for the angle map $\mapalpha$:
\begin{enumerate}
 \item Suppose that $(j,\ell)=(1,0),\ell_t=0$. For every train-track $T$ of $\T$, choose any labeling $\vec{T}$ and $\cev{T}$ for the two corresponding oriented train-tracks, and define:
\[
\Acal_{\rho,\,\ell_t/2}=\{\mapalpha: \ \forall\,T \in\T,\ \alpha_{\vec{T}}\in\RR~[\Lambda]\}.
\] 
Since $\rho=\frac{1}{2}$ is real we have, for every $T \in\T$, $\alpha_{\cev{T}}\in\RR~[\Lambda]$, so that the angle map $\mapalpha$ takes values in $\RR~[\Lambda]$.
\item Suppose that $(j,\ell)=(1,0),\ell_t=1$. By Proposition~\ref{prop:first_parameter_condition}, both the graph $G$ and the dual graph $G^*$ are bipartite. We choose the labeling of the train-tracks as in Equation~\eqref{eq:bipartite:orientation} (right), and define:
\begin{align*}
\Acal_{\rho,\,\ell_t/2}=\left\{\mapalpha:\ \right. &\text{ around every primal and dual face, } \alpha_{\vec{T}} \text{ alternatively belongs to }  \\
&\left. \RR~[\Lambda] \text{ and }\RR+\Bigl\{\frac{\tau}{2}\Bigr\}~[\Lambda]\right\}.
\end{align*}
This angle map can consistently be constructed, because both $G$ and $G^*$ are bipartite. Since $\rho=\frac{1}{2}$, $\alpha_{\vec{T}}$ and $\alpha_{\cev{T}}$ both belong to the same component of the torus $\TT(\tau)$. 
 \item Suppose that $(j,\ell)=(1,1)$. By Proposition~\ref{prop:first_parameter_condition}, if $\ell_t=0$ the dual graph $G^*$ is bipartite, and if $\ell_t=1$ the graph $G$ is bipartite. 
 We choose the labeling of the pairs of train-tracks as in Equation~\eqref{eq:bipartite:orientation} and define, 
 \[
\Acal_{\rho,\,\ell_t/2}=\{\mapalpha:\ \forall\,T\in\T,\ \alpha_{\vec{T}}\in \RR~[\Lambda]\}. 
 \]
where recall that, for every $T\in\T$, in the first case we have $\vec{T}\in\vec{\T}_{*,\circ}$, while in the second $\vec{T}\in\vec{\T}_{\circ}$. Since $\rho$ has imaginary part $\frac{\tau}{2}$ we have, for every $T \in\T$, $\alpha_{\cev{T}}\in\RR+\{\frac{\tau}{2}\}~[\Lambda]$. 
\end{enumerate}
Note that the set of angles $\Acal_{\rho,\,\ell_t/2}$ only depends on the imaginary part $\frac{\ell_t}{2}\tau$ of $t$, and not on its real part. 
Note also that the angle map $\mapalpha$ is generic since, for every train-track $T$ of $\T$, there is no additional constraint on the angle $\alpha_{\vec{T}}$ except from that of belonging to one of the real components of the torus $\TT(\tau)$.
\end{defi}

\begin{prop}\label{prop:global_condition_anglemap}
Under the assumptions of Definition~\ref{defi:angle_map}, for $(j,\ell)=(1,0)$ and $\ell_t\in\{0,1\}$, or $(j,\ell)=(1,1)$ and $\ell_t\in\{0,1\}$, the angle map $\mapalpha$ satisfies the local condition:
\begin{equation}\label{Cond:local}
\forall\, \text{square face $y=\{e,e^*\}$},\quad \beta_e-\alpha_e=\Re(\beta_e-\alpha_e)+\frac{\ell_e}{2}\tau, \ \text{ with $\ell_e=(\ell_t+\ell) \ [2]$},
\end{equation}
if and only if $\mapalpha$ belongs to $\Acal_{\rho,\,\ell_t/2}$.
\end{prop}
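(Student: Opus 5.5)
The plan is to handle the three cases of Definition~\ref{defi:angle_map} separately. In each case we reduce Condition~\eqref{Cond:local} to a statement about which real component of $\TT(\tau)$ the angles of the train-tracks lie on. By Necessary Conditions~I, every angle lies in $\RR+\{0,\frac{\tau}{2}\}~[\Lambda]$. So to each oriented train-track $\vec{T}$ we can attach a ``component bit'' $c(\vec{T})\in\{0,1\}$, defined by $\alpha_{\vec{T}}\in\RR+\frac{c(\vec{T})}{2}\tau~[\Lambda]$. By Necessary Conditions~II and $\alpha_{\cev{T}}=\alpha_{\vec{T}}+\rho$, we have $c(\cev{T})=c(\vec{T})+\ell~[2]$. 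At a square face $y=\{e,e^*\}$, the condition $\ell_e=(\ell_t+\ell)~[2]$ then reads $c(\beta_e)+c(\alpha_e)=\ell_t+\ell~[2]$. Here $\alpha_e,\beta_e$ are the angles of the two train-tracks crossing the edge $e$ of the square, which come from the two distinct train-tracks of $\T$ crossing at the quadrangle of $e$. The whole statement thus becomes combinatorial: we must describe all assignments $c:\vec{\T}\to\{0,1\}$, compatible with the pairing rule, such that the bits of the two train-tracks at each square sum to $\ell_t+\ell$ mod $2$.

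\emph{Case $(j,\ell)=(1,0)$, $\ell_t=0$.} Here $c(\vec T)=c(\cev T)$, and the constraint says that any two crossing train-tracks of $\T$ share the same bit. The train-tracks around a primal or dual face pairwise cross consecutively, so connectivity of the isoradial graph propagates a common bit to all of $\T$. The normalization $\alpha_{\vec{T}_0}=0$ forces this bit to be $0$, that is $\mapalpha\in\Acal_{\rho,0}$. Conversely, if all angles are real, every difference $\beta_e-\alpha_e$ is real, so $\ell_e=0$ as required.

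\emph{Case $(j,\ell)=(1,0)$, $\ell_t=1$.} Again $c(\vec T)=c(\cev T)$, but now crossing train-tracks must have opposite bits. Around a face of $G$ or $G^*$, the train-tracks appear cyclically and consecutive ones cross at the square of the corresponding boundary edge. The bit must therefore alternate around every primal and dual face, which is exactly the definition of $\Acal_{\rho,1/2}$. This is consistent because $G$ and $G^*$ are bipartite, by Proposition~\ref{prop:first_parameter_condition}. Conversely, alternation gives bit-sum $1$ at every square.

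\emph{Case $(j,\ell)=(1,1)$.} Now $c(\cev{T})=c(\vec{T})+1$, and the required sum is $\ell_t+1$. The main step is to identify which oriented train-tracks, $\vec{T}$ or $\cev{T}$, appear as $\alpha_e$ and $\beta_e$ at a square. This uses the partition~\eqref{eq:bipartite:orientation}. Take $\ell_t=0$, so that $G^*$ is bipartite. At each square, one of the two edges parallel to $e$ has labels such that exactly one of its two train-tracks lies in $\vec{\T}_{*,\circ}$ and the other in $\vec{\T}_{*,\bullet}$. The reason is that the edge $e$ separates a white and a black dual vertex, and the train-tracks turning counterclockwise around the white one are exactly $\vec{\T}_{*,\circ}$. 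So the constraint $c(\alpha_e)+c(\beta_e)=1$ becomes $c(\vec{T})+c(\cev{T'})=1$, i.e.\ $c(\vec T)=c(\vec T')$ for crossing train-tracks $T,T'$. By the connectivity argument of the first case and the normalization, all $c(\vec{T})=0$, which is $\Acal_{\rho,0}$. The case $\ell_t=1$ with $G$ bipartite is identical, with the roles of vertices and faces exchanged, or follows from the duality of Lemma~\ref{lem:duality} via $t\mapsto t+\rho$. Conversely, one checks that these choices give the required $\ell_e$.

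I expect the main obstacle to be this last identification: verifying carefully, with the labeling of Figure~\ref{fig:square_face}, that at a square the pair $(\alpha_e,\beta_e)$ always consists of one train-track from each class in~\eqref{eq:bipartite:orientation}, rather than two from the same class. I would check this first on the triangular/hexagonal example of Figure~\ref{fig:dual_graph_bipartite}, then argue it locally from the orientation conventions.
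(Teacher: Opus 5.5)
Your proposal is correct and follows essentially the paper's argument. Both reduce Condition~\eqref{Cond:local} to the question of which real component of $\TT(\tau)$ each angle lies on, and then propagate that information along crossing train-tracks by connectivity. In the case $\rho=\frac12+\frac\tau2$, both rely on exactly the fact you flag: on an edge parallel to $e$ the two train-tracks lie in different classes of~\eqref{eq:bipartite:orientation}, while on an edge parallel to $e^*$ they lie in the same class. The paper simply asserts this, and your turning-direction argument around the adjacent face $f$ supports it. The only difference is presentational: the paper proves the ``only if'' direction contrapositively (two intersecting train-tracks on different components produce a violating square), whereas you phrase it as a direct parity constraint on crossing pairs.
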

\begin{proof}\leavevmode

$\bullet$ Case $(j,\ell)=(1,0)$ and $\ell_t=0$. Suppose that $\mapalpha\in \Acal_{\rho,\,\ell_t/2}$, then 
since $\rho=\frac{1}{2}$ is real, the angle map $\mapalpha$ takes values in $\RR~[\Lambda]$.
As a consequence, for every square face $y=\{e,e^*\}$, $\beta_e-\alpha_e\in\RR~[\Lambda]$; it is thus of the form $\beta_e-\alpha_e=\Re(\beta_e-\alpha_e)+\frac{\ell_e}{2}\tau$, with $\ell_e=\ell_t+\ell\ [2]=0\ [2]$. Suppose now that $\mapalpha\notin \Acal_{\rho,\,\ell_t/2}$. Then, since the graph of train-tracks $\GR$ is connected, since $\mapalpha$ satisfies Necessary Conditions I, since $\rho$ is real, there exists two intersecting train-tracks $\vec{T}_1$, $\vec{T}_2$, such that $\alpha_{\vec{T}_1}\in\RR~[\Lambda]$, 
and $\alpha_{\vec{T}_2}\in\RR+\{\frac{\tau}{2}\}~[\Lambda]$. Since the train-tracks $\vec{T}_1$, $\vec{T}_2$ intersect, they must do so at a square face $y$, implying that we either have $\beta_e-\alpha_e$ or $\betat_e-\alphat_e$ that belongs to 
$\RR+ \{\frac{\tau}{2}\}~[\Lambda]$. But, by Necessary Conditions II, $\betat_e-\alphat_e=\alpha_e-\beta_e+\rho\ [\Lambda]=\alpha_e-\beta_e+\frac{1}{2}\ [\Lambda]$, so that both $\beta_e-\alpha_e$ and $\betat_e-\alphat_e$ must belong to 
$\RR+ \{\frac{\tau}{2}\}~[\Lambda]$, and in particular $\beta_e-\alpha_e$. This means that at the square face $y$, Condition~\eqref{Cond:local} is not satisfied, ending the proof of this case. The case $(i,j)=(1,0)$ and $\ell_t1$ is treated below. 

$\bullet$ Case $(j,\ell)=(1,1)$ and $\ell_t=0$. Note that because of the choice of labeling for pairs of train-tracks when the graph $G^*$ is bipartite, for every square $y=\{e,e^*\}$, we either have $\alpha_e\in \vec{\T}_{*,\circ}$ and $\beta_e\in\vec{\T}_{*,\bullet}$, or $\alpha_e\in \vec{\T}_{*,\bullet}$ and $\beta_e\in\vec{\T}_{*,\circ}$; we also have that $\alphat_e,\betat_e$ either both belong to $\vec{\T}_{*,\circ}$, or they both belong to $\vec{\T}_{*,\bullet}$.
Suppose that $\mapalpha\in \Acal_{\rho,\,\ell_t/2}$, then because of the labeling of pairs of train-tracks, we have that,
 for every square face $y$, $\beta_e-\alpha_e\in\RR+ \{\frac{\tau}{2}\}~[\Lambda]$; 
 it is thus of the form $\beta_e-\alpha_e=\Re(\beta-\alpha)+\frac{\ell_e}{2}\tau$, with $\ell_e=\ell_t+\ell\ [2]=1\ [2]$. 
 Suppose now that $\mapalpha\notin \Acal_{\rho,\,\ell_t/2}$. Then, since the graph of train-tracks $\GRR$ is connected, since $\mapalpha$ satisfies Necessary Conditions I, there exists two train-tracks $\vec{T}_1,\vec{T}_2\in \vec{\T}_\circ$ such that 
 $\alpha_{\vec{T}_1}\in\RR~[\Lambda]$, and $\alpha_{\vec{T}_2}\in\RR+ \{\frac{\tau}{2}\}~[\Lambda]$, intersecting at a square face $y$. Because of our choice of labeling for pairs of train-tracks, they must correspond to and edge crossed by train-tracks with angles $\alphat_e,\betat_e$. 
 We thus have $\betat_e-\alphat_e\in\RR+\{\frac{\tau}{2}\}~[\Lambda]$ which, with the condition $\betat_e-\alphat_e=\alpha_e-\beta_e+\rho\ [\Lambda]=\alpha_e-\beta_e+\frac{1}{2}+\frac{1}{2}\tau\ [\Lambda]$,
  implies that $\beta_e-\alpha_e\in\RR/\ZZ$. This means that at the square face $y$, Condition~\eqref{Cond:local} is not satisfied, ending the proof of this case. 

$\bullet$ Case $(j,\ell)=(1,1)$ and $\ell_t=1$. The proof is similar to the case $(j,\ell)=(1,1)$ and $\ell_t=0$, with the role of $\alpha_e,\beta_e$ taken by $\alphat_e,\betat_e$. Because of the choice of labeling for pairs of train-tracks when $G$ is bipartite, for every square $y$, we have that either $\alpha_e,\beta_e$ both belong to $\vec{\T}_{\circ}$, or they both belong to $\vec{\T}_{\bullet}$; we also have that 
$\alphat_e\in \vec{\T}_{\circ}$ and $\betat_e\in\vec{\T}_{\bullet}$, or $\alphat_e\in \vec{\T}_{\bullet}$ and $\betat_e\in\vec{\T}_{\circ}$. The remainder of the argument is so similar that we do not repeat it. 

$\bullet$ Case $(j,\ell)=(1,0)$ and $\ell_t=1$. Suppose that $\mapalpha\in\Acal_{\rho,\,\ell_t/2}$. Then, by definition of the angle map, and since $\alpha_{\vec{T}}$ and $\alpha_{\cev{T}}$ belong to the same real component we have that, for every square face $y=\{e,e^*\}$,
 $\beta_e-\alpha_e\in\RR+\{\frac{\tau}{2}\}~[\Lambda]$; it is thus of the form $\beta_e-\alpha_e=\R(\beta_e-\alpha_e)+\frac{\ell_e}{2}\tau$, with $\ell_e=\ell_t+\ell\ [2]=1\ [2]$. The remainder of the argument is a combination of the above, and we do not repeat it here. 
\end{proof}

Let us now wrap up. Consider an Ising model on an infinite, isoradial graph $G$ with real coupling constants $\eps\Js$, and the corresponding Ising-dimer model on $\GQ$. Consider Fock's dimer model on $\GQ$ with parameters $\tau,\mapalpha,\rho=\frac{j}{2}+\frac{\ell}{2}\tau,t=\Re(t)+\frac{\ell_t}{2}\tau$ satisfying Necessary Conditions II. Then, if the two models are gauge equivalent, by Proposition~\ref{prop:first_parameter_condition}, the modular parameter $\tau$ must belong to $i\RR^+$, the case $(j,\ell)=(0,1)$ cannot occur, and angles of the angle map $\mapalpha$ need to satisfy a local condition at every square face. The local condition on the angle map $\mapalpha$ is proved to be equivalent to a global condition in Proposition~\ref{prop:global_condition_anglemap}, it also requires some additional bipartitedness assumptions on the underlying graphs. This motivates the following. 

\begin{defi}\label{defi:necessary_III}
We say that the parameters $\tau,\mapalpha,\rho=\frac{j}{2}+\frac{\ell}{2}\tau,t=\Re(t)+\frac{\ell_t}{2}\tau$ of Fock's dimer model on the graph $\GQ$ satisfy \emph{Necessary Conditions III} if they satisfy Necessary Conditions II and if, up to modular transformations of $\tau$,
\begin{enumerate}
\item[$\bullet$] $\tau\in i\RR^+$; $(j,\ell)\neq (0,1)$; and the angle map $\mapalpha$ belongs to the set $\Acal_{\rho,\,\ell_t/2}$ of Definition~\ref{defi:angle_map};
\end{enumerate}
Furthermore, 
\begin{enumerate}
\item[$\bullet$] when $\rho=\frac{1}{2}$ and $\ell_t=1$, the graph $G$ and the dual graph $G^*$ are bipartite; 
\item[$\bullet$] when $\rho=\frac{1}{2}+\frac{\tau}{2}$ and $\ell_t=0$, the dual graph $G^*$ is bipartite, and when $\ell_t=1$, the graph $G$ is bipartite. 
\end{enumerate}
\end{defi}

When the parameters $\tau,\mapalpha,\rho,t$ satisfy Necessary Conditions III, using Equation~\eqref{equ:weight_square_face_new} and Proposition~\ref{prop:first_parameter_condition} (in particular the fact that $j=1$), we can rewrite Condition~\eqref{equ:condition_square_face} as:
{\small 
\begin{equation}\label{equ:cond_square_face_bis}
-\sinh^2(2\Js_e)=(-1)^{j + j \ell}\frac{\theta_{1,1}^2(\beta_e-\alpha_e)}{
      \theta_{1+\ell,1+j}^2(\beta_e-\alpha_e)}
      \frac{\theta_{0,0}^2(t)}{\theta_{\ell,j}^2(t)}=
-\frac{\theta_{1+\ell_t+\ell,1}^2(\Re(\beta_e-\alpha_e))}{\theta_{1+\ell_t,0}^2(\Re(\beta_e-\alpha_e))}
\frac{\theta_{\ell_t,0}^2(\Re(t))}{\theta_{\ell+\ell_t,1}^2(\Re(t))}<0.
\end{equation}}
This allows us to determine the absolute value coupling constants $\Js$ for every square face $y=\{e,e^*\}$; let us denote by $\Js_e=\Js_{e,(\mapalpha,\rho,t)}$ the solution to Equation~\eqref{equ:cond_square_face_bis}. 

\begin{rem}\label{rem:duality_again}
On the dual graph $G^*$, define the absolute value $\Js^*_{e^*,(\mapalpha,\rho,t)}$ to be the solution of 
\[
\sinh^{2}(2\Js^*_{e^*,(\mapalpha,\rho,t)})=-\frac{1}{\W_{G,\mapalpha,\rho,t}}(y).
\] 
Let $\eps^*$ denote any sign function on the edges of the dual graph.
Then using subsequently: the definition of Ising-dimer weights, the definition of $\Js^*_{e^*,(\mapalpha,\rho,t)}$, and Condition~\eqref{equ:condition_square_face}, we obtain
\[
\WI_{G^*,\eps^*\Js^*_{(\mapalpha,\rho,t)}}(y)=-\sinh^{-2}(2\Js^*_{e^*,(\mapalpha,\rho,t)})=\W_{G,\mapalpha,\rho,t}(y)=\WI_{G,\eps\Js_{(\mapalpha,\rho,t)}}(y),
\]
so that, recalling Definition~\ref{def:weak_duality}, the coupling constants $\eps^*\Js^*_{(\mapalpha,\rho,t)}$ are the weak dual of the coupling constant $\eps\Js_{(\mapalpha,\rho,t)}$. 
Note that the signs $\eps^*$ do not matter here since weak duality only involves the absolute values of the coupling constants.
Note also that by Lemma~\ref{lem:duality} we have:
\[
\WI_{G^*,\eps^*\Js^*_{(\mapalpha,\rho,t)}}(y)=\W_{G^*,\mapalpha,\rho,t+\rho}(y).
\]
As a consequence, we say that Fock's dimer model on $\GQ$ with parameters $\tau,\mapalpha,\rho,t$ and parameters $\tau,\mapalpha,\rho,t+\rho$ are \emph{weakly dual}.
\end{rem}

\subsection{Equality of squared face-weights: Necessary Conditions IV}\label{sec:Necessary_IV}

Consider an Ising model on an infinite, periodic, isoradial graph $G$ with periodic, real coupling constants $\eps\Js$, and the corresponding Ising-dimer model on the minimal graph $\GQ$. Consider Fock's dimer model with parameters $\tau,\mapalpha,\rho,t$. Suppose that the two models are gauge equivalent, implying that the parameters $\tau,\mapalpha,\rho,t$ satisfy Necessary Conditions III, and recall that Necessary Conditions III allows to define the absolute value coupling constants $\Js_{(\mapalpha,\rho,t)}$, see Equation~\eqref{equ:cond_square_face_bis}. If the two models are gauge equivalent, we must also have equality of the squared face-weights at all faces of $\GQ$ corresponding to dual vertices, resp. primal vertices, namely:
\begin{equation}\label{equ:condition_other_faces}
\forall\,f\in V^*,\quad 
( \WI_{\eps\Js_{(\mapalpha,\rho,t)}} )^2 (f)=( \W_{\mapalpha,\rho,t} )^2(f),\forall\,v\in V,\quad
( \WI_{\eps\Js_{(\mapalpha,\rho,t)}} )^2 (v)=( \W_{\mapalpha,\rho,t} )^2(v).
\end{equation}

\subsubsection{Necessary Conditions IV}

In the following, we identify conditions on the parameters $\rho$ and $t$ so that  Equation~\eqref{equ:condition_other_faces} is satisfied whenever Equation~\eqref{equ:cond_square_face_bis} is. 

\begin{prop}\label{prop:t}
Consider Fock's dimer model on an infinite, minimal graph $\GQ$ with parameters $\tau,\mapalpha,\rho,t$ satisfying Necessary Conditions III, \emph{i.e.}, $\rho=\frac{j}{2}+\frac{\ell}{2}\tau$, $t=\Re(t)+\frac{\ell_t}{2}\tau$, with $(j,\ell)\in\{(1,0),(1,1)\}$, and $\ell_t\in\{0,1\}$.
For every $\mapalpha\in\Acal_{\rho,\ell_t/2}$ consider the absolute value coupling constants $\Js_{(\mapalpha,\rho,t)}$ 
of Equation~\eqref{equ:cond_square_face_bis}. Then,
Condition~\eqref{equ:condition_other_faces} is satisfied for every generic angle map $\mapalpha\in\Acal_{\rho,\,\ell_t/2}$ if and only if, 
\begin{enumerate}
 \item either $\rho=\frac{1}{2}$ and $t\in\{0,\frac{1}{2}\}$, 
 \item or $\rho=\frac{1}{2}+\frac{\tau}{2}$ and $t\in\{\frac{1}{2},\frac{\tau}{2}\}$. 
\end{enumerate}
In particular, if the model is non degenerate, Condition~\eqref{equ:condition_other_faces} is never satisfied when $\rho=\frac{1}{2}$ and $\ell_t=1$. 
\end{prop}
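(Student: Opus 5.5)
Here is the plan. Condition~\eqref{equ:cond_square_face_bis} determines $\sinh^2(2\Js_e)$ as an explicit ratio of squared theta functions. From it we can compute $\cosh^{-2}(2\Js_e)=1/(1+\sinh^2(2\Js_e))$ and $\tanh^{-2}(2\Js_e)=1+\sinh^{-2}(2\Js_e)$ as functions of $\gamma_e:=\beta_e-\alpha_e$ and $t$. By Equation~\eqref{equ:gauge_Ising_dimers}, the squared Ising face-weights at $f$ and $v$ are products over the boundary edges of these quantities. On the Fock side, Lemma~\ref{lem:alternate_product_faces} gives $(\W_{\mapalpha,\rho,t})^2(f)$ and $(\W_{\mapalpha,\rho,t})^2(v)$ as products of theta ratios in the consecutive differences $\alpha_{i-1}-\alpha_i$. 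By~\eqref{equ:angle_relation}, $\alpha_{i-1}-\alpha_i=\beta_i-\alpha_i-\rho$, so both sides are products over edges of functions of $\gamma_i$ alone.

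Since the angle map is generic in $\Acal_{\rho,\ell_t/2}$, the angles around a face are independent up to the single constraint $\sum_i(\beta_i-\alpha_i)=|f|\rho$. I would therefore first reduce~\eqref{equ:condition_other_faces} to an edge-by-edge identity. I vary two consecutive angles while keeping the constraint, and use meromorphy (Lemma~\ref{lem:invariance_weights}) to extend from the real locus to all of $\TT(\tau)$. This forces the single-edge factors to agree up to a multiplicative constant $c$ whose $|f|$-th power is $1$. For the dual face this means there should be a constant $c$ with
\[
\frac{\theta_{1+\ell,1+j}^2(0)\,\theta_{0,0}^2(t+\gamma-\rho)}{\theta_{1+\ell,1+j}^2(\gamma-\rho)\,\theta_{0,0}^2(t)}=c\,\frac{1}{1+\sinh^{-2}(2\Js_e)}
\]
for all $\gamma$, and an analogous identity holds for primal vertices with $\cosh^{-2}$.

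The next step is to write $1+\sinh^2(2\Js_e)$ using~\eqref{eq:square:face-weight}:
\[
1+\sinh^2(2\Js_e)=\frac{\theta_{1+\ell,1+j}^2(\gamma)\theta_{\ell,j}^2(t)-(-1)^{j+j\ell}\theta_{1,1}^2(\gamma)\theta_{0,0}^2(t)}{\theta_{1+\ell,1+j}^2(\gamma)\theta_{\ell,j}^2(t)}.
\]
I then apply the addition formula~\eqref{eq:addition:formula} to the numerator, in the form with $(m,n)$ chosen according to $(j,\ell)$. This should factor the numerator as a product $\theta(\gamma+t')\,\theta(\gamma-t')$ for some shift $t'$ depending on $t$ and $\rho$. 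Comparing the resulting meromorphic functions of $\gamma$, their zeros and poles must match. On the Fock side the zeros and poles are at $\gamma\in\rho-t+\tfrac12+\tfrac\tau2$ (double) and at the zeros of $\theta_{1+\ell,1+j}(\gamma-\rho)$. On the Ising side they are at $\gamma=\pm t'$ and at the zeros of $\theta_{1,1}$ or $\theta_{1+\ell,1+j}$. Matching a double zero against the pair $\pm t'$ forces $t'\equiv -t'$, so $2t'\in\Lambda$, which means $t$ is a half-period. Among the half-periods, one then selects those compatible with $\rho$ and with $\ell_t$. For $\rho=\tfrac12$ this should leave $t\in\{0,\tfrac12\}$; the case $\ell_t=1$, namely $t\in\{\tfrac\tau2,\tfrac12+\tfrac\tau2\}$, is excluded because it either fails the matching or hits the degenerate set of Lemma~\ref{lem:alternate_product_faces}. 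For $\rho=\tfrac12+\tfrac\tau2$ this should leave $t\in\{\tfrac12,\tfrac\tau2\}$, since $0$ and $\rho$ are degenerate or fail.

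For the converse, given such $t$, I would verify the edge identity directly using the half-period translation formulas~\eqref{eq:theta:4} and~\eqref{equ:theta_ratios}, and check that the constants multiply to $1$ over a face. One could instead handle primal vertices via the duality relations~\eqref{equ:duality_relations}, Remark~\ref{rem:duality_again} and Lemma~\ref{lem:duality}, since $t\mapsto t+\rho$ permutes the allowed set. \textbf{The main obstacle} is the factorisation step: choosing the correct instance of~\eqref{eq:addition:formula} for each $(j,\ell,\ell_t)$ and tracking signs and prefactors so that the zero–pole comparison, and then the constant, comes out exactly. I would first treat $\rho=\tfrac12$, $t=0$, where everything should reduce to $k'$-free Jacobi identities like $1+\sc^2=\nc^2$, and use it as a template.
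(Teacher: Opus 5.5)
Your overall route is the paper's: you factor $\tanh^{-2}(2\Js_e)=1+\sinh^{-2}(2\Js_e)$ via \eqref{eq:addition:formula} at $(m,n)=(1+\ell,1+j)$, compare with Lemma~\ref{lem:alternate_product_faces}, and use genericity and meromorphy to force zeros and poles to cancel, giving $2t\in\Lambda$. You then prune by non-degeneracy and handle primal vertices by duality. The step that is wrong is the reduction to an edge-by-edge identity in which the single-edge factors ``agree up to a multiplicative constant $c$''.

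Write $x_i=\alpha_{i-1}-\alpha_i$ and let $H$ be the per-edge ratio of the Fock factor to the Ising factor. Moving one angle keeps $x_i+x_{i+1}=s$ fixed, so you only learn that $H(x)H(s-x)$ does not depend on $x$. The meromorphic solutions of this are $H(x)=c\,e^{ax}$, not just constants, and the exponential really occurs. After the addition formula, $H(x)=\theta_{0,0}(x+t)/\theta_{0,0}(x-t)$. This equals $1$ for $t\in\{0,\frac12\}$, but for $t=\frac{\tau}{2}$, \eqref{eq:theta:2} gives $H(x)=e^{-2i\pi x}$. In your own display, the Ising side is elliptic in $\gamma$, while your Fock side is multiplied by $e^{-4i\pi t}=e^{-2i\pi\tau}\neq 1$ under $\gamma\mapsto\gamma+\tau$, so no constant $c$ can exist there. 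Your intermediate claim is therefore inconsistent with the case $(\rho,t)=(\frac12+\frac{\tau}{2},\frac{\tau}{2})$ that you (correctly) keep. Your converse, ``verify the edge identity, then check the constants multiply to $1$'', cannot be carried out in that case.

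Here is how to repair it. For necessity, use only that $H$ has trivial divisor, which holds for $c\,e^{ax}$ and still yields $2t\in\Lambda$. The paper does this directly, treating $\prod_i\theta_{0,0}(x_i+t)/\theta_{0,0}(x_i-t)$ as a function of a single $\alpha_i$. For sufficiency, check the identity for the whole face rather than edge by edge:
\begin{itemize}
\item Write $t=\frac{j_t}{2}+\frac{\ell_t}{2}\tau$ with $j_t\ell_t=0$. Then $\theta_{0,0}(x_i+t)=\bigl(q^{-1}e^{-2i\pi(x_i-t)}\bigr)^{\ell_t}\theta_{0,0}(x_i-t)$.
\item The factors $e^{-2i\pi\ell_t x_i}$ cancel in the product because $\sum_i x_i=0$ for the fixed lifts of \eqref{equ:angle_relation}.
\item The remaining constant is $\bigl(q^{-1}e^{2i\pi t}\bigr)^{\ell_t|f|}$. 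This equals $1$, since $\ell_t=1$ forces $t=\frac{\tau}{2}$.
\end{itemize}

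Separately, the right-hand side of your displayed identity should be $c\,\bigl(1+\sinh^{-2}(2\Js_e)\bigr)=c\,\tanh^{-2}(2\Js_e)$, because $(\WI)^2(f)=\prod_i\tanh^{-2}(2\Js_{e_i})$ by \eqref{equ:gauge_Ising_dimers}. With the reciprocal as you wrote it, the divisor comparison would only return degenerate values of $t$.
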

\begin{proof}
Fix $\rho,t$ as above. Consider $\mapalpha\in\Acal_{\rho,\,\ell_t/2}$ and the absolute value coupling constants $\Js_{(\mapalpha,\rho,t)}$ given by Equation~\eqref{equ:cond_square_face_bis}. Condition~\eqref{equ:cond_square_face_bis} is satisfied at every square face, and we have,
\begin{align*}
\frac{1}{\tanh^{2}(2\Js_{e,(\mapalpha,\rho,t)})} &=1+\frac{1}{\sinh^{2}(2\Js_{e,(\mapalpha,\rho,t)})}
= 1- (-1)^{j+j\ell}\frac{
\theta_{1+\ell,1+j}^2(\beta_e-\alpha_e)\theta_{\ell,j}^2(t)}
{\theta_{1,1}^2(\beta_e-\alpha_e)\theta_{0,0}^2(t)}\\
    &= \frac{\theta_{1,1}^2(\beta_e-\alpha_e)\theta_{0,0}^2(t)- (-1)^{j+j\ell}
      \theta_{1+\ell,1+j}^2(\beta_e-\alpha_e)\theta_{\ell,j}^2(t)}
      {\theta_{1,1}^2(\beta_e-\alpha_e)\theta_{0,0}^2(t)}\\
    &= (-1)^{\ell + j \ell}\frac{\theta_{\ell,j}(\beta_e-\alpha_e+t)\theta_{\ell,j}(\beta_e-\alpha_e-t)\theta_{1+\ell,1+j}^2(0)}{\theta_{1,1}^2(\beta_e-\alpha_e)\theta_{0,0}^2(t)}, 
\end{align*}
where in the last line, we used Identity~\eqref{eq:addition:formula} at $(m,n)=(1+\ell,1+j)$. 


Let $f$ be a dual vertex of $G$ with $|f|$ boundary faces, and let $(e_i)_{i=1,\dots,|f|}$ denote its boundary edges, ordered counterclockwise. Then, using the notation of Figure~\ref{fig:primal_dual} (left), Equation~\eqref{equ:gauge_Ising_dimers} at the face corresponding to $f$ and the above identity for $\tanh^{-2}(2\Js_{e_i,(\mapalpha,\rho,t)})$, we have 
\begin{align*}\label{eq:face-weight:dual:I}
(\WI_{\eps\Js_{(\mapalpha,\rho,t)}})^2(f)&= \prod_{i=1}^{|f|}\frac{1}{\tanh^2(2\Js_{e_i,(\mapalpha,\rho,t)})}\\
&=\prod_{i=1}^{|f|}(-1)^{\ell+j\ell}\frac{\theta_{\ell,j}(\beta_i-\alpha_i+t)\theta_{\ell,j}(\beta_i-\alpha_i-t)\theta_{1+\ell,1+j}^2(0)}{\theta_{1,1}^2(\beta_i-\alpha_i)\theta_{0,0}^2(t)}\\
&=\prod_{i=1}^{|f|}\frac{\theta_{0,0}(\alpha_{i-1}-\alpha_i+t)\theta_{0,0}(\alpha_{i-1}-\alpha_i-t)\theta_{1+\ell,1+j}^2(0)}{\theta_{1+\ell,1+j}^2(\alpha_{i-1}-\alpha_i)\theta_{0,0}^2(t)},
\end{align*}
where in the last equality we used that, for every $i\in\{1,\dots,|f|\}$, $\beta_i-\alpha_i=\alpha_{i-1}-\alpha_i+\rho$, and Identity~\eqref{eq:theta:4}.

On the other hand, by Equation~\eqref{eq:dual:face-weight} of Lemma~\ref{lem:alternate_product_faces}, we have
\begin{equation*}
(\W_{\mapalpha, \rho, t})^2(f) 
=\prod_{i=1}^{|f|}\frac{\theta_{1+\ell,1+j}^2(0)}{\theta_{1+\ell,1+j}^2(\alpha_{i-1}-\alpha_i)}\frac{\theta_{0,0}^2(t+\alpha_{i-1}-\alpha_i)}{\theta_{0,0}^2(t)}.
\end{equation*}
Therefore we have, for every generic $\mapalpha\in\Acal_{\rho,\,\ell_t/2}$, for every face corresponding to a dual face $f$, $(\WI_{\eps\Js_{(\mapalpha,\rho,t)}} )^2 (f)=( \W_{\mapalpha,\rho,t} )^2(f)$ if and only if, for every generic $\mapalpha\in\Acal_{\rho,\,\ell_t/2}$, for every $f$,
\begin{equation*}\label{eq:CNS}
\prod_{i=1}^{|f|}\theta_{0,0}(\alpha_{i-1}-\alpha_i+t) = \prod_{i=1}^{|f|} \theta_{0,0}(\alpha_{i-1}-\alpha_i-t)\quad \Leftrightarrow\quad 
\prod_{i=1}^{|f|}\frac{\theta_{0,0}(\alpha_{i-1}-\alpha_i+t)}{\theta_{0,0}(\alpha_{i-1}-\alpha_i-t)}=1.
\end{equation*}
Note that because of our choice of labeling, see Figure~\ref{fig:primal_dual}, train-tracks associated to $\alpha_i$'s for distinct $i$'s do not belong to pairs of train-tracks around $f$. Moreover, since the graph $\GQ$ is minimal, they correspond to distinct train-tracks of $\T$, see~\cite[Lemma 8]{BCdT:immersions}.
By definition of $\Acal_{\rho,\,\ell_t/2}$, this means that the $\alpha_i$'s for distinct values of $i$ are generically independent and live on a real component of $\TT(\tau)$.
 The above condition thus implies that, for every $i\in\{1,\dots,|f|\}$, for every $\alpha_i$ living on a real component of $\TT(\tau)$,
\begin{equation}\label{equ:ratio_to_check}
\frac{\theta_{0,0}(\alpha_{i-1}-\alpha_i+t)}{\theta_{0,0}(\alpha_{i-1}-\alpha_i-t)}\frac{\theta_{0,0}(\alpha_{i}-\alpha_{i+1}+t)}{\theta_{0,0}(\alpha_{i}-\alpha_{i+1}-t)}=c,
\end{equation}
where the term $c$ of the right-hand-side is constant when seen as a function of $\alpha_i$. By Lemma~\ref{lem:meromorphic_theta_ratio}, the left-hand-side is a meromorphic function of $\alpha_i$ on $\TT(\tau)$ which needs to be equal to a constant on a real component of $\TT(\tau)$, hence it must be equal to a constant on $\TT(\tau)$. This is true if and only if the poles and the zeros cancel, and using that the $\alpha_i$'s for distinct $i$'s are generic, this is true if and only if 
\begin{align*}
\alpha_{i-1}+t=\alpha_{i-1}-t\ [\Lambda],\text{ and } \alpha_{i+1}+t=\alpha_{i+1}-t\ [\Lambda]\quad \Leftrightarrow\quad t=-t\ [\Lambda] \quad \Leftrightarrow\quad & 2t=0\ [\Lambda].
\end{align*}

When $\ell_t=0$, $t$ is real, so that we must have $t\in\{0,\frac{1}{2}\}$. Recall further that by Lemma~\ref{lem:alternate_product_faces}, if the model is non degenerate, $t\notin \{\frac{1}{2}+\frac{\tau}{2},\frac{1}{2}+\frac{\tau}{2}+\rho\}$. This implies that if $\rho=\frac{1}{2}$, then $t\in\{0,\frac{1}{2}\}$, and if $\rho=\frac{1}{2}+\frac{\tau}{2}$, then $t=\frac{1}{2}$. 

When $\ell_t=1$, $t$ has imaginary part equal to $\frac{\tau}{2}$, so that we must have $t\in\{\frac{\tau}{2},\frac{1}{2}+\frac{\tau}{2}\}$. When $\rho=\frac{1}{2}+\frac{\tau}{2}$, the non-degeneration condition implies furthermore that $t=\frac{\tau}{2}$. When $\rho=\frac{1}{2}$, the non-degeneration condition implies that $t\notin\{\frac{1}{2}+\frac{\tau}{2},\frac{\tau}{2}\}$, so that in this case there exists no $t$ such that the model is non-degenerate and satisfies Equation~\eqref{equ:ratio_to_check}. This yields the first part of the result. 

We thus assume that either $\rho=\frac{1}{2}$ and $t\in\{0,\frac{1}{2}\}$, or $\rho=\frac{1}{2}+\frac{\tau}{2}$ and $t\in\{\frac{1}{2},\frac{\tau}{2}\}$. 
We know that the ratio~\eqref{equ:ratio_to_check} is constant, but we still need to check that $\prod_{i=1}^{|f|}\frac{\theta_{0,0}(\alpha_{i-1}-\alpha_i+t)}{\theta_{0,0}(\alpha_{i-1}-\alpha_i-t)}=1$. 
The parameter $t$ is of the form $t=\frac{j_t}{2}+\frac{\ell_t}{2}\tau$ for some $j_t\in\{0,1\}$, so that $t=-t+j_t+\ell_t\tau$; note also that we have the condition $j_t\ell_t=0$. Using Identity~\eqref{eq:theta:2} we have
\begin{align*}
\prod_{i=1}^{|f|}\theta_{0,0}(\alpha_{i-1}-\alpha_i+t)&=\prod_{i=1}^{|f|}\theta_{0,0}(\alpha_{i-1}-\alpha_i-t+j_t+\ell_t\tau)\\
&=\prod_{i=1}^{|f|}(q^{-1}e^{-2i\pi(\alpha_{i-1}-\alpha_i-t)})^{\ell_t}\theta_{0,0}(\alpha_{i-1}-\alpha_i-t).
\end{align*}
The exponential term involving the angles $(\alpha_i)$ cancels because the latter are fixed. Moreover, recalling that $q=e^{i\pi\tau}$, we have $q^{-\ell_t}e^{-2i\pi \ell_t t}=e^{-i\pi \ell_t \tau} e^{i\pi (j_t \ell_t+\ell_t\tau)}=e^{i\pi j_t\ell_t}$. The latter is equal to 1 because $j_t \ell_t=0$, thus concluding the first part of the statement and the condition on equality of squared face-weights at faces corresponding to dual vertices $f$.

Fix $\rho,t$ as above. Consider $\mapalpha\in\Acal_{\rho,\,\ell_t/2}$ and the absolute value $\eps^*\Js^*_{(\mapalpha,\rho,t)}$ of the dual coupling constants of Remark~\ref{rem:duality_again}. Then, if Equation~\eqref{equ:condition_square_face} is satisfied at every square face $y$, by Remark~\ref{rem:duality_again} we also have, for every $y$,
\[
\WI_{G^*,\eps^*\Js^*_{(\mapalpha,\rho,t)}}(y)=\W_{G^*,\mapalpha,\rho,t+\rho}(y). 
\]
Otherwise stated Condition~\eqref{equ:condition_square_face} is also satisfied from the dual graph $G^*$ perspective. Repeating the above argument from the dual perspective yields that, for every face corresponding to a primal vertex $v$ (which now corresponds to a dual vertex of $G^*$), 
\[
(\WI)^2_{G^*,\eps^*\Js^*_{(\mapalpha,\rho,t)}}(v)=(\W)^2_{G^*,\mapalpha,\rho,t+\rho}(v),
\]
if and only if $2(t+\rho) = 0\ [\Lambda]$, \emph{i.e.}  $2t=0\ [\Lambda]$, since $2\rho=0\ [\Lambda]$. Now, using Equation~\eqref{equ:weak_duality_Ising} and Lemma~\ref{lem:duality} we obtain:
\[
(\WI)^2_{G,\eps\Js_{(\mapalpha,\rho,t)}}(v)=(\W)^2_{G,\mapalpha,\rho,t}(v),
\]
if and only if $t=0\ [\Lambda]$, thus proving equality of the squared face-weights at faces corresponding to primal vertices of $G$, and ending the proof. 
\end{proof}
\begin{rem}\label{rem:prop_t}
Note that the allowed values of $t$ are derived from assuming genericity of the angle map $\mapalpha$, but that if $t$ is as given by Point~1. or Point~2., then Condition~\eqref{equ:cond_square_face_bis} is satisfied.
\end{rem}
Proposition~\ref{prop:t}, and Remark~\ref{rem:prop_t} motivate the following. 
\begin{defi}\label{def:necessary_IV}
We say that the parameters $\tau,\mapalpha,\rho,t$ of Fock's dimer model on the infinite, minimal graph $\GQ$ satisfy \emph{Necessary Conditions IV} if they satisfy Necessary Conditions III (implying in particular that $\tau\in i\RR^+$ up to modular transformations, and that the angle map $\mapalpha$ belongs to $\Acal_{\rho,\,\ell_t/2}$), and if moreover:
\begin{enumerate}
 \item either $\rho=\frac{1}{2}$ and $t\in\{0,\frac{1}{2}\}$, 
 \item or $\rho=\frac{1}{2}+\frac{\tau}{2}$, $t=\frac{1}{2}$ and the dual graph $G^*$ is bipartite; or $\rho=\frac{1}{2}+\frac{\tau}{2}$, $t=\frac{\tau}{2}$ and the graph $G$ is bipartite. 
\end{enumerate}
\end{defi}

\subsubsection{Computation of face-weights using Jacobi elliptic functions}\label{sec:jacobi:elliptic}

We now compute face-weights of Fock's dimer model on the infinite, minimal graph $\GQ$ with parameters $\tau,\mapalpha,\rho,t$ satisfying Necessary Conditions IV. It turns out that they have natural expressions using Jacobi elliptic functions, whose definition we now recall. Consider a modular parameter $\tau\in i\RR^+$. The \emph{elliptic modulus} and \emph{complementary elliptic modulus} are defined as
\begin{equation}\label{equ:def_k_kp}
  k = k(\tau) = \frac{\theta_{1,0}^2(0,\tau)}{\theta_{0,0}^2(0,\tau)} \quad ; \quad k' = k'(k)=k'(\tau) = \frac{\theta_{0,1}^2(0,\tau)}{\theta_{0,0}^2(0,\tau)};
\end{equation}
they satisfy $k^2+(k')^2=1.$
When $\tau\in i\RR^+$, then $k^2$ belongs to $[0,1)$ and $(k')^2$ to $(0,1]$ (we allow $\tau = i\infty$ as a degenerate case but not $\tau = 0$).

From now on, we thus write parameters of Fock's dimer model as $k,\mapalpha,\rho,t$, with $k^2\in[0,1)$; the parameter $k$ is considered up to modular transformations. 

The \emph{complete elliptic integral of the first kind}, denoted by $K(k)$, and its complementary, denoted by $K'(k)$, are defined by:
\begin{equation*}
  K = K(k) = \int_{0}^{\pi/2} \frac{1}{\sqrt{1-k^2 \sin^2 x}}dx \quad ; \quad K'(k) := K(k'),
\end{equation*}
see Equation~22.2.2 of \cite{DLMF}. They are related to the modular parameter by $\tau=i\frac{K'(k)}{K(k)}$.

\emph{Jacobi elliptic functions} are periodic, meromorphic functions on $\CC$ characterized, up to normalization, by the position of their poles and zeros. There are twelve of them, and we only introduce those used in this paper. Introducing the notation $\alphah=2K(k)\alpha$, they are expressed using the following ratios of theta functions, see~\cite[Section 2.1]{Lawden} for example. 
\begin{equation}\label{equ:elliptic_def}
\begin{split}
\sn(\alphah|k)&=-
\frac{\theta_{0,0}(0,\tau)}{\theta_{1,0}(0,\tau)}\frac{\theta_{1,1}(\alpha,\tau)}{\theta_{0,1}(\alpha,\tau)},\quad 
\cn(\alphah|k)=\frac{\theta_{0,1}(0,\tau)}{\theta_{1,0}(0,\tau)}\frac{\theta_{1,0}(\alpha,\tau)}{\theta_{0,1}(\alpha,\tau)},\\
\dn(\alphah|k)&=\frac{\theta_{0,1}(0,\tau)}{\theta_{0,0}(0,\tau)}\frac{\theta_{0,0}(\alpha,\tau)}{\theta_{0,1}(\alpha,\tau)}.
\end{split}
\end{equation}
Whenever no confusion occurs, we omit the arguments $\tau$ and $k$ in the notation.
The other Jacobi elliptic functions are defined in terms of these three basic functions, for example 
{\small 
\begin{equation}\label{equ:elliptic_def_2}
\begin{split}
\sc(\alphah)&=-\frac{\theta_{0,0}(0)}{\theta_{0,1}(0)}\frac{\theta_{1,1}(\alpha)}{\theta_{1,0}(\alpha)},\quad
\dc(\alphah)=\frac{\theta_{1,0}(0)}{\theta_{0,0}(0)}\frac{\theta_{0,0}(\alpha)}{\theta_{1,0}(\alpha)},\quad 
\sd(\alphah)= -\frac{\theta_{0,0}^2(0)}{\theta_{0,1}(0)\theta_{1,0}(0)}\frac{\theta_{1,1}(\alpha)}{\theta_{0,0}(\alpha)}.
\end{split}
\end{equation}
}

Recall that our final goal is to control the sign of $\W_{\mapalpha,\rho,t}(\,\cdot\,)$ when computed at faces corresponding to primal and dual vertices. When $k^2\in[0,1)$, the sign of elliptic functions is well known when their arguments are real. Hence, in the following we express face-weights using real arguments. When no real part is taken, it means that the argument is real.
We use the notation introduced in Section~\ref{subsec:face_weights_Fock}, see Figures~\ref{fig:primal_dual} and~\ref{fig:square_face}.

\begin{lem}\label{lem:face_weights_again}
Consider Fock's dimer model on an infinite, minimal graph $\GQ$ with parameters 
$k,\mapalpha,\rho,t$ satisfying Necessary Conditions IV, implying in particular that $k^2\in[0,1)$ up to modular transformations, and that $\mapalpha$ belongs to $\Acal_{\rho,\,\ell_t/2}$. Then,
\begin{enumerate}
\item[$\bullet$] When $\rho=\frac{1}{2}$, and $t\in\{0,\frac{1}{2}\}$, every angle map $\mapalpha$ belonging to $\Acal_{\frac{1}{2},0}$ takes values in $\RR~[\Lambda]$, and face-weights are explicitly given by:
\renewcommand\arraystretch{2}
\begin{equation}\label{table1}
  \begin{array}{|c|c|c|}
  \hline 
  \W_{\mapalpha,\rho,t}(\,\cdot\,)& t=0 & t=\frac{1}{2}  \\
  \hline \hline 
  y&-\sc^2(\betah_e-\alphah_e)&-(k')^2\sc^2(\betah_e-\alphah_e)\\
  \hline 
  f&\displaystyle{(-1)^{|f|}\prod_{i=1}^{|f|}\ns(\betah_{i}-\alphah_i)} &\displaystyle{(-1)^{|f|} \prod_{i=1}^{|f|}\frac{1}{k'}\ds(\betah_{i}-\alphah_{i})}   \\
  \hline 
  v&\displaystyle{(-1)^{|v|}\prod_{i=1}^{|v|}k'\sd(\betath_{i}-\alphath_{i})} &  \displaystyle{(-1)^{|v|} \prod_{i=1}^{|v|}\sn(\betath_{i}-\alphath_{i})}\\
\hline 
  \end{array}
\end{equation}
\item[$\bullet$] When $\rho=\frac{1}{2}+\frac{\tau}{2}$, and $t=\frac{1}{2}$, the dual graph $G^*$ is bipartite. For every angle map $\mapalpha\in \Acal_{\frac{1}{2}+\frac{\tau}{2},0}$ the differences of angles $(\alpha_{i-1}-\alpha_i)$ at faces corresponding to dual vertices belong to $\RR~[\Lambda]$, and the differences of angles $(\betat_i-\alphat_{i})$ at faces corresponding to primal vertices belong to $\RR~[\Lambda]$.
 When $\rho=\frac{1}{2}+\frac{\tau}{2}$, and $t=\frac{\tau}{2}$, the graph $G$ is bipartite. For every angle map $\mapalpha\in \Acal_{\frac{1}{2}+\frac{\tau}{2},\frac{1}{2}}$ the differences of angles $(\beta_{i}-\alpha_i)$ belong to 
 $\RR~[\Lambda]$, and the differences of angles $(\alphat_i-\alphat_{i-1})$ belong to $\RR~[\Lambda]$.

\renewcommand\arraystretch{2}
\begin{equation}\label{table2}
  \begin{array}{|c|c|c|}
   \hline 
   \W_{\mapalpha,\rho,t}(\,\cdot\,)& t=\frac{1}{2}, \text{ $G^*$ bipartite } & t=\frac{\tau}{2}, \text{ $G$ bipartite } \\
  \hline \hline 
  y&-\displaystyle{\frac{(k')^2}{k^2} \nc^2(\R(\betah_e-\alphah_e))}&-k^2 \sd^2(\R(\betah_e-\alphah_e))\\
  \hline 
  f&\displaystyle{(-1)^{|f|} \prod_{i=1}^{|f|}\nd(\alphah_{i-1}-\alphah_i)} &\displaystyle{(-1)^{|f|}\prod_{i=1}^{|f|}\frac{1}{k}\ns(\betah_{i}-\alphah_{i})}  \\
  \hline 
  v&\displaystyle{(-1)^{|v|}\prod_{i=1}^{|v|}k\sn(\betath_{i}-\alphath_{i})} & (-1)^{|v|}\prod_{i=1}^{|v|}\dn(\alphath_{i}-\alphath_{i-1})\\
\hline 
  \end{array}
\end{equation}
\end{enumerate}
\end{lem}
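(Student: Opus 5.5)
The plan is a direct computation: start from the closed forms of Lemma~\ref{lem:alternate_product_faces}, namely \eqref{eq:square:face-weight}, \eqref{eq:dual:face-weight} and \eqref{eq:primal:face-weight}. In each of the four cases $(\rho,t)$ allowed by Necessary Conditions IV, we substitute the values of $(j,\ell)$ and $t$, and then convert every ratio of theta functions into a Jacobi elliptic function using the definitions \eqref{equ:elliptic_def}, \eqref{equ:elliptic_def_2} and $\alphah=2K\alpha$. The constants $\theta_{m,n}(0)$ are absorbed using \eqref{equ:def_k_kp}, which expresses $k,k'$ as ratios of $\theta^2_{m,n}(0)$.

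\textbf{Case $\rho=\frac12$.} Here $(j,\ell)=(1,0)$, so $\theta_{1+\ell,1+j}=\theta_{1,0}$ and $\theta_{\ell,j}=\theta_{0,1}$. For $t=0$ the square face-weight is $-\theta_{1,1}^2(\beta_e-\alpha_e)\theta_{0,0}^2(0)/(\theta_{1,0}^2(\beta_e-\alpha_e)\theta_{0,1}^2(0))=-\sc^2$. For $t=\frac12$ we use $\theta_{0,0}(\frac12)=\theta_{0,1}(0)$ and $\theta_{0,1}(\frac12)=\theta_{0,0}(0)$ (from \eqref{eq:theta:4}), which produces the extra factor $(k')^2$.

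For dual faces we rewrite $\alpha_{i-1}-\alpha_i=\beta_i-\alpha_i-\frac12$ using \eqref{equ:angle_relation} and shift by a half-period with \eqref{eq:theta:4}: $\theta_{1,0}(x-\frac12)$ becomes $\pm\theta_{1,1}(x)$, and $\theta_{0,0}(t+x-\frac12)$ becomes $\theta_{0,1}(x)$ for $t=0$ or $\theta_{0,0}(x)$ for $t=\frac12$. Each factor is then $\ns$, respectively $\frac1{k'}\ds$, after matching constants with \eqref{equ:def_k_kp}. Primal faces are handled the same way, with $\alphat_i-\alphat_{i-1}=\alphat_i-\betat_i+\frac12$, giving $k'\sd$ and $\sn$. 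Alternatively, the primal rows follow from the dual rows through the duality relations \eqref{equ:duality_relations}, since $t\mapsto t+\rho$ exchanges $0$ and $\frac12$. That angles lie in $\RR~[\Lambda]$ is immediate from Definition~\ref{defi:angle_map}.

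\textbf{Case $\rho=\frac12+\frac\tau2$.} Here $(j,\ell)=(1,1)$, so $\theta_{1+\ell,1+j}=\theta_{0,0}$ and $\theta_{\ell,j}=\theta_{1,0}$. The reality claims follow from the labeling \eqref{eq:bipartite:orientation}, as in the proof of Proposition~\ref{prop:global_condition_anglemap}. When $G^*$ is bipartite, consecutive $\alpha_{i-1},\alpha_i$ around $f$ both lie in $\vec{\T}_{*,\circ}$ (or both in $\vec{\T}_{*,\bullet}$), so their difference is real. Meanwhile $\beta_e-\alpha_e$ has imaginary part $\frac\tau2$. For the square face we therefore write $\beta_e-\alpha_e=\Re(\beta_e-\alpha_e)+\frac\tau2$ and apply the quarter-period shifts already computed in \eqref{equ:weight_square_face_new}; specializing that formula to $\ell_t$ and $\Re(t)$ gives $\theta^2_{0,1}/\theta^2_{1,0}$-type ratios, which equal $\frac{(k')^2}{k^2}\nc^2$ and $k^2\sd^2$ respectively.

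Dual and primal faces are again read off \eqref{eq:dual:face-weight} and \eqref{eq:primal:face-weight}. In the $t=\frac12$ case, the dual product has real arguments $\alpha_{i-1}-\alpha_i$ and $\theta_{0,0}(0)\theta_{0,0}(x+\frac12)/(\theta_{0,0}(x)\theta_{0,0}(\frac12))=\nd$ directly. The primal product needs a $\frac12+\frac\tau2$ shift to reach real arguments $\betat_i-\alphat_i$, giving $k\sn$. The $t=\frac\tau2$ column is obtained by the analogous shift, or by duality via Lemma~\ref{lem:duality} with $t+\rho=\frac12\ [\Lambda]$ exchanging the roles of $G$ and $G^*$.

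The main obstacle is bookkeeping. The shifts by $\tau/2$ produce factors $q^{-1/4}e^{\mp i\pi x}$ and powers of $i$ via \eqref{eq:theta:4} and \eqref{eq:theta:5}. I would handle these by pairing each shifted theta in a numerator with one in a denominator, so the ratio formula \eqref{equ:theta_ratios} applies. Around a face, the remaining exponentials telescope because $\sum_i(\alpha_{i-1}-\alpha_i)=0$ for fixed lifts, as in the proof of Lemma~\ref{lem:alternate_product_faces}. Signs are then fixed by the parity of $\theta_{1,1}$, and the resulting overall sign $(-1)^{|f|}$ or $(-1)^{|v|}$ is checked against \eqref{eq:dual:face-weight}.
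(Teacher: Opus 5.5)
Your plan is essentially the paper's proof. The paper substitutes $(j,\ell)$ and $t$ into the face-weight formulas of Lemma~\ref{lem:alternate_product_faces}, mostly through the intermediate forms \eqref{equ:face_weight_dual_bis} and \eqref{equ:face_weight_primal_bis} written in $\beta_i-\alpha_i$, which is what your half-period rewriting produces. It then identifies Jacobi functions via \eqref{equ:elliptic_def}--\eqref{equ:def_k_kp} and obtains the other column of each table from the duality relations \eqref{equ:duality_relations}.

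One slip needs correcting: for $(j,\ell)=(1,1)$ one has $\theta_{\ell,j}=\theta_{1,1}$, not $\theta_{1,0}$. With $\theta_{1,0}$, the primal weight \eqref{eq:primal:face-weight} at $t=\frac12$ would contain $\theta_{1,0}(\frac12)=0$. It is $\theta_{1,1}(\frac12)=-\theta_{1,0}(0)$ that yields the factor $k\sn$.
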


\begin{proof}\leavevmode

$\bullet$ Case $\rho=\frac{1}{2}$ and $t\in\{0,\frac{1}{2}\}$. Setting the value $(j,\ell)=(1,0)$ and $t=0$, $t=\frac{1}{2}$ in Equation~\eqref{equ:cond_square_face_bis} and using Identity~\eqref{equ:theta_ratios} in the second case gives:
\begin{align*}
\W_{\mapalpha,\rho,t}(y)
&=-\frac{\theta_{0,0}^2(0)}{\theta_{0,1}^2(0)}\frac{\theta_{1,1}^2(\beta_e-\alpha_e)}{
      \theta_{1,0}^2(\beta_e-\alpha_e)}=\sc^2(\betah_e-\alphah_e)\\
\W_{\mapalpha,\rho,t}(y)
&=-\frac{\theta_{0,0}^2(\frac{1}{2})}{\theta_{0,1}^2(\frac{1}{2})}\frac{\theta_{1,1}^2(\beta_e-\alpha_e)}{\theta_{1,0}^2(\beta_e-\alpha_e)}=-\frac{\theta_{0,1}^2(0)}{\theta_{0,0}^2(0)}\frac{\theta_{1,1}^2(\beta_e-\alpha_e)}{\theta_{1,0}^2(\beta_e-\alpha_e)}=-(k')^2\sc^2(\betah_e-\alphah_e),      
\end{align*}
where in the last equalities we used Equations~\eqref{equ:elliptic_def_2} and~\eqref{equ:def_k_kp}. 

For the face corresponding to a dual vertex $f$, setting the values $(j,\ell)=(1,0)$ and $t=0$ in Equation~\eqref{equ:face_weight_dual_bis} gives:
\begin{align*}
\W_{\mapalpha, \rho, t}(f)&=\prod_{i=1}^{|f|}\frac{\theta_{1,1}(-\frac{1}{2})}{\theta_{0,0}(0)}
\frac{\theta_{0,0}(-\frac{1}{2}+\beta_{i}-\alpha_i)}{\theta_{1,1}(\beta_{i}-\alpha_i)}
=\prod_{i=1}^{|f|}\frac{\theta_{1,0}(0)}{\theta_{0,0}(0)}
\frac{\theta_{0,1}(\beta_{i}-\alpha_i)}{\theta_{1,1}(\beta_{i}-\alpha_i)},
\end{align*} 
using that $\theta_{m,n}$ is odd iff $(m,n)=(1,1)$, and Identity~\eqref{eq:theta:4}. The proof is concluded using Definition~\eqref{equ:elliptic_def}. 

For the face corresponding to a primal vertex $v$, setting the values $(j,\ell)=(1,0)$ and $t=0$ in Equation~\eqref{equ:face_weight_primal_bis} gives:
\begin{align*}
\W_{\mapalpha, \rho, t}(v)&=\prod_{i=1}^{|v|}\frac{\theta_{0,0}(-\frac{1}{2})}{\theta_{1,1}(-\frac{1}{2})}\frac{\theta_{1,1}(\betat_{i}-\alphat_i)}{\theta_{0,0}(\alphat_{i}-\betat_i)}=
(1)^{|v|}\prod_{i=1}^{|v|}\frac{\theta_{0,1}(0)}{\theta_{1,0}(0)}\frac{\theta_{1,1}(\betat_{i}-\alphat_i)}{\theta_{0,0}(\betat_{i}-\alphat_i)},
\end{align*} 
using again the parity properties of the function $\theta_{m,n}$, and Identity~\eqref{eq:theta:4}.
The proof is concluded using Definitions~\eqref{equ:elliptic_def} and ~\eqref{equ:elliptic_def_2}.
This ends the proof of the first column. For the second column observe that, when $t=0$, $t+\rho=\frac{1}{2}$; and the results are obtained using the duality relations~\eqref{equ:duality_relations}, with the additional observation that when theses relations are expressed with $\alpha_i,\beta_i$ instead of $\alpha_i,\alpha_{i-1}$, there is an extra sign change, implying that for every $i$, the order of the angles is not exchanged. 


$\bullet$ Case $\rho=\frac{1}{2}+\frac{\tau}{2}$, and $t\in\{\frac{1}{2},\frac{\tau}{2}\}$.  Setting the value $(j,\ell)=(1,1)$ and $t=\frac{1}{2}$, $t=\frac{\tau}{2}$ in Equation~\eqref{equ:cond_square_face_bis} and using Identity~\eqref{equ:theta_ratios} in the first case gives
\begin{align*}
\W_{\mapalpha,\rho,t}(y)&=-\frac{\theta_{0,0}^2(\frac{1}{2})}{\theta_{1,1}^{2}(\frac{1}{2})}\frac{\theta_{0,1}^2(\Re(\beta_e-\alpha_e))}{\theta_{1,0}^2(\R(\beta_e-\alpha_e))}=-\frac{\theta_{0,1}^2(0)}{\theta_{1,0}^{2}(0)}
\frac{\theta_{0,1}^2(\Re(\beta-\alpha))}{\theta_{1,0}^2(\R(\beta-\alpha))}=
-\frac{(k')^2}{k^2} \nc^2(\R(\betah_e-\alphah_e))\\
\W_{\mapalpha,\rho,t}(y)&=-\frac{\theta_{1,0}^2(0)}{\theta_{0,1}^{2}(0)}\frac{\theta_{1,1}^2(\Re(\beta_e-\alpha_e))}{\theta_{0,0}^2(\R(\beta_e-\alpha_e))}=-k^2 \sd^2(\R(\betah_e-\alphah_e)),
\end{align*}
where in the last equalities, we used Equations~\eqref{equ:elliptic_def}~\eqref{equ:elliptic_def_2} and~\eqref{equ:def_k_kp}. 

For the face corresponding to a dual vertex $f$, setting the value $(j,\ell)=(1,1)$ and $t=\frac{1}{2}$ in Equations~\eqref{eq:dual:face-weight} of Lemma~\eqref{lem:alternate_product_faces}, and using Identity~\eqref{eq:theta:4} gives
\begin{align*}
\W_{\mapalpha,\rho,t}(f)&=(-1)^{|f|}\prod_{i=1}^{|f|}\frac{\theta_{0,0}(0)}{\theta_{0,0}(\frac{1}{2})}\frac{\theta_{0,0}(\frac{1}{2}+\alpha_{i-1}-\alpha_i)}{\theta_{0,0}(\alpha_{i-1}-\alpha_i)}=
(-1)^{|f|}\prod_{i=1}^{|f|}\frac{\theta_{0,0}(0)}{\theta_{0,1}(0)}\frac{\theta_{0,1}(\alpha_{i-1}-\alpha_i)}{\theta_{0,0}(\alpha_{i-1}-\alpha_i)},
\end{align*}
and the proof is concluded using Equation~\eqref{equ:elliptic_def}.
 For the face corresponding to a primal vertex $v$, since $\alphat_i-\alphat_{i-1}$ is not real, but $\alphat_i-\betat_i$ is, we rather use expression~\eqref{equ:face_weight_primal_bis} with $(j,\ell)=(1,1)$ and $t=\frac{1}{2}$. 
\begin{align*}
\W_{\mapalpha,\rho,t}(v)&=\prod_{i=1}^{|v|}
\frac{\theta_{0,0}(\frac{1}{2}-\frac{1}{2}-\frac{\tau}{2})}{\theta_{1,1}(-\frac{1}{2}-\frac{\tau}{2})}
\frac{\theta_{1,1}(\betat_{i}-\alphat_{i})}{\theta_{0,0}(\frac{1}{2}+\alphat_i-\betat_{i})}\\
&= (-1)^{|v|}\prod_{i=1}^{|v|}
\frac{\theta_{0,0}(\frac{\tau}{2})}{\theta_{1,1}(\frac{1}{2}+\frac{\tau}{2})}
\frac{\theta_{1,1}(\betat_{i}-\alphat_{i})}{\theta_{0,0}(\betat_{i}-\alphat_i-\frac{1}{2})}, \text{ using that $\theta_{m,n}$ is odd iff $(m,n)=(1,1)$}\\
&= (-1)^{|v|}\prod_{i=1}^{|v|}
\frac{\theta_{1,0}(0)}{-\theta_{0,0}(0)}
\frac{\theta_{1,1}(\betat_{i}-\alphat_{i})}{\theta_{0,1}(\betat_{i}-\alphat_i)}, \text{ using Equations~\eqref{eq:theta:4} \eqref{eq:theta:5}}.
\end{align*}
The proof is again concluded using Equations~\eqref{equ:elliptic_def} and Equation~\eqref{equ:def_k_kp}, and observing that since the dual graph $G^*$ is bipartite, the degree of every vertex $v$ is even. 
For the second column observe that, when $t=\frac{1}{2}$, $t+\rho=\frac{\tau}{2}\ [\Lambda]$. The results are obtained using the duality relations~\eqref{equ:duality_relations}.
\end{proof}

\subsection{Equality of face-weights: Necessary Conditions V}\label{sec:Necessary_V}

Consider an Ising model on an infinite, isoradial graph $G$ with real coupling constants $\eps\Js$, and the corresponding Ising-dimer model on the minimal graph $\GQ$. Consider Fock's dimer model with parameters $k,\mapalpha,\rho,t$. Then, if the two models are gauge equivalent, the parameters $k,\mapalpha,\rho,t$ must satisfy Necessary Conditions IV, arising from  
equality of face-weights at all square faces, and from equality of \emph{squared face-weights} at all faces of $\GQ$ corresponding to primal, resp. dual vertices. If the two models are gauge equivalent, we actually have equality of \emph{face-weights} at all faces of $\GQ$ corresponding to primal, resp. dual vertices, that is, using Equation~\eqref{equ:gauge_Ising_dimers}
\begin{align}
\forall\, v\in V,&\quad \WI_{\Js_{(\mapalpha,\rho,t)}}(v)=\W_{\mapalpha,\rho,t}(v)\in(-1)^{|v|-1}\RR^+ \label{equ:condition_primal_face}\\
\forall\,f\in V^*,&\quad  \WI_{\Js_{(\mapalpha,\rho,t)}}(f)=\W_{\mapalpha,\rho,t}(f)\in\ (-1)^{|f|-1}\delta_f\RR^+, \label{equ:condition_dual_face}
\end{align}
where recall that $\delta_f$ is the frustration function of the Ising model on $G$ at the face $f$. In the following, we identify 
conditions on the angle map $\mapalpha$ so that the sign conditions given by Equations~\eqref{equ:condition_primal_face} and~\eqref{equ:condition_dual_face} are satisfied.

\begin{prop}\label{prop:Necessary_V}
Consider Fock's dimer model on an infinite, minimal graph $\GQ$ with parameters $k,\mapalpha,\rho,t$, satisfying Necessary Conditions IV, implying in particular that $k^2\in[0,1)$ up to modular transformations, that $\mapalpha$ belongs to $\A_{\rho,\,\ell_t/2}$, and that $\rho,t$ are as in Lemma~\ref{lem:face_weights_again}. Then, 
\begin{enumerate}
 \item When $\rho=\frac{1}{2}+\frac{\tau}{2}$ and $t=\frac{\tau}{2}$ (implying that the graph $G$ is bipartite), the sign condition arising from~\eqref{equ:condition_primal_face} is never satisfied. 
 \item When $\rho=\frac{1}{2}$ and $t\in\{0,\frac{1}{2}\}$, or $\rho=\frac{1}{2}+\frac{\tau}{2}$ and $t=\frac{1}{2}$ (implying that the graph $G^*$ is bipartite), the sign arising from condition~\eqref{equ:condition_primal_face} is satisfied if and only if the angle map $\mapalpha$ is such that, for every face of $\GQ$ corresponding to a vertex $v$ of $G$, using the notation of Figure~\ref{fig:primal_dual} (right), the following condition is satisfied:
\[
\prod_{i=1}^{|v|}\sn(\betath_{i}-\alphath_{i}) \in \RR^-.
\]
\begin{enumerate}
\item When $\rho=\frac{1}{2}$ and $t\in\{0,\frac{1}{2}\}$ the face-weight at the face of $\GQ$ corresponding to the face $f$ is non-frustrated, resp. frustrated if, using the notation of Figure~\ref{fig:primal_dual} (left)
\[
\prod_{i=1}^{|f|}\sn(\betah_{i}-\alphah_i)\in \RR^-,\text{ resp. }
\prod_{i=1}^{|f|}\sn(\betah_{i}-\alphah_i)\in \RR^+.
\]
\item When $\rho=\frac{1}{2}+\frac{\tau}{2}$ and $t=\frac{1}{2}$, the face-weight is frustrated at every face of $\GQ$ corresponding to the face $f$. 
\end{enumerate}
\end{enumerate}
\end{prop}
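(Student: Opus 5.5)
The proof is a sign analysis of the explicit face-weights in Tables~\eqref{table1} and~\eqref{table2} of Lemma~\ref{lem:face_weights_again}. Under Necessary Conditions IV all elliptic-function arguments there are real, and $k^2\in[0,1)$, so $k,k'>0$. Moreover $\sn$ and $\ns$ share the same sign, while $\dn,\nd>0$ on the reals. Finally $\sd=\sn/\dn$ and $\ds=\dn/\sn$ have the sign of $\sn$. Each face-weight therefore reduces to $(-1)^{|\cdot|}$ times a positive constant times either a product of $\sn$'s (up to positive factors) or a product of positive $\dn$-type terms. We then compare this with the required sign $(-1)^{|v|-1}\RR^+$ in~\eqref{equ:condition_primal_face}, and with $(-1)^{|f|-1}\delta_f\RR^+$ in~\eqref{equ:condition_dual_face}.

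\emph{Point 1.} Take $\rho=\frac12+\frac\tau2$ and $t=\frac\tau2$. The primal face-weight in Table~\eqref{table2} is $(-1)^{|v|}\prod_i\dn(\alphath_i-\alphath_{i-1})$. By Lemma~\ref{lem:face_weights_again} the differences $\alphat_i-\alphat_{i-1}$ are real here, so every $\dn$ factor is strictly positive. The weight thus lies in $(-1)^{|v|}\RR^+$, which is disjoint from $(-1)^{|v|-1}\RR^+$. The sign condition therefore fails at every primal vertex, for every admissible $\mapalpha$.

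\emph{Point 2, primal condition.} In each of the three cases $(\rho,t)=(\frac12,0)$, $(\frac12,\frac12)$ and $(\frac12+\frac\tau2,\frac12)$, the primal weight is $(-1)^{|v|}c^{|v|}\prod_i g(\betath_i-\alphath_i)$. Here $c\in\{k',1,k\}$ is positive and $g\in\{\sd,\sn,\sn\}$ has the sign of $\sn$. For $k=0$ in the third case one would need to treat the degenerate case separately; I would exclude it as non-degenerate. The condition $(-1)^{|v|}c^{|v|}\prod g\in(-1)^{|v|-1}\RR^+$ is then equivalent to $\prod_i\sn(\betath_i-\alphath_i)\in\RR^-$.

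\emph{Point 2, dual faces.} For $\rho=\frac12$, the dual weight is $(-1)^{|f|}$ times a positive constant times $\prod\ns$ or $\prod\ds$ of $(\betah_i-\alphah_i)$, whose sign is that of $\prod_i\sn(\betah_i-\alphah_i)$. Matching with $(-1)^{|f|-1}\delta_f\RR^+$ gives $\delta_f=-\sgn\prod_i\sn(\betah_i-\alphah_i)$. Hence the face is non-frustrated ($\delta_f=1$) exactly when the product is negative, and frustrated when it is positive. For $\rho=\frac12+\frac\tau2$ and $t=\frac12$, the dual weight is $(-1)^{|f|}\prod\nd(\alphah_{i-1}-\alphah_i)$. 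The differences $\alpha_{i-1}-\alpha_i$ are real by Lemma~\ref{lem:face_weights_again}, so each $\nd$ factor is positive, the weight lies in $(-1)^{|f|}\RR^+$, and $\delta_f=-1$ at every face $f$.

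The main obstacle is bookkeeping rather than analysis. One must check that the arguments appearing in Lemma~\ref{lem:face_weights_again} are genuinely real representatives; shifts by $\frac\tau2$ would turn $\dn$ into $k'\nd$-type terms or introduce $i$ factors. One must also confirm the exact form of the primal entries in the $(\frac12,\frac12)$ column, which was obtained through the duality relations~\eqref{equ:duality_relations}, so that the sign reduction to $\sn$ is legitimate.
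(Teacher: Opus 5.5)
Your proposal is correct and follows the paper's proof. Both read off the signs of the primal and dual face-weights from Tables~\eqref{table1} and~\eqref{table2}, using that the relevant angle differences are real, that $\dn>0$ on $\RR$ for $k^2\in[0,1)$, and that $\sd,\ns,\ds$ carry the sign of $\sn$. Your two caveats, the degenerate value $k=0$ in the column $\rho=\frac{1}{2}+\frac{\tau}{2}$ and the need for consistent real lifts of the angles, are points the paper passes over with the remark that all angle differences involved are real; neither changes the argument.
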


\begin{proof}
Since the parameters $k,\mapalpha,\rho,t$ satisfy Necessary Conditions IV, face-weights of Fock's dimer model on $\GQ$ are given by Tables~\eqref{table1} and~\eqref{table2}. Observe that all differences of angles involved are real. In the following, we use that $k^2\in[0,1)$, and that the Jacobi elliptic function $\dn$ is positive when $k^2\in[0,1)$ and when the argument is real.  

The sign condition arising from Equation~\eqref{equ:condition_primal_face} is: $\W_{\mapalpha,\rho,t}(v)\in(-1)^{|v|-1}\RR^+$. 
When $\rho=\frac{1}{2}+\frac{\tau}{2}$ and $t=\frac{\tau}{2}$, looking at the third row/second column of Table~\eqref{table2}, we see that this condition is never satisfied, thus proving Point 1. In all other cases, looking at the third row of Table~\eqref{table1}, and the third row/first column of Table~\eqref{table2}, we see that this condition is satisfied if and only if $\prod_{i=1}^{|v|}\sn(\betath_{i}-\alphath_{i}) \in \RR^-$ thus proving the first part of Point 2. 

By Equation~\eqref{equ:condition_dual_face} we have that the face-weight at the face of $\GQ$ corresponding to a dual vertex $f$ is non-frustrated, resp. frustrated, if 
\[
\W_{\mapalpha,\rho,t}(f)\in (-1)^{|f|-1}\RR^+,\quad \text{ resp., }\quad \W_{\mapalpha,\rho,t}(f)\in (-1)^{|f|}\RR^+.
\]
Using again that the elliptic function $\dn$ is positive, and looking at the second row of Table~\eqref{table1}, and the second row/first column of Table~\eqref{table2} proves Points (a) and (b). 
\end{proof}

This motivates the following. 
\begin{defi}\label{def:necessary_V}
We say that the parameters $k,\mapalpha,\rho,t$ of Fock's dimer model on the graph $\GQ$ satisfy \emph{Necessary Conditions V} if they satisfy Necessary Conditions IV (implying in particular that $k^2\in[0,1)$ up to modular transformations, and that the angle map $\mapalpha$ belongs to $\Acal_{\rho,\,\ell_t/2}$), and if moreover: either $\rho=\frac{1}{2}$ and $t\in\{0,\frac{1}{2}\}$, or $\rho=\frac{1}{2}+\frac{\tau}{2}$ and $t=\frac{1}{2}$ in which case the graph $G^*$ is bipartite; and if the angle map $\mapalpha$ is such that, for every face of $\GQ$ corresponding to a primal vertex $v$ of $G$ we have, using the notation of Figure~\ref{fig:primal_dual} (right):
\begin{equation}\label{equ:angle_V}
\prod_{i=1}^{|v|}\sn(\betath_{i}-\alphath_{i}) \in \RR^-.
\end{equation}
\end{defi}

\subsection{Classification and algebraic phase transition}\label{sec:classification_thm}

We are now ready to state and prove the main result of this paper, classifying Ising models whose with genus 1 spectral curves.

\subsubsection{Classification result}

\begin{thm}\label{thm:main}
Consider an Ising model on an infinite, periodic, isoradial graph $G$ with periodic real coupling constants $\eps\Js$, and the corresponding Ising-dimer model on the infinite, periodic, minimal graph $\GQ$; suppose that the spectral curve $\C$ has genus 1 and satisfies Assumptions $(\dagger)$. Consider Fock's gauge equivalent dimer model on $\GQ$ with modular parameter $\tau$, generic periodic angle map $\mapalpha$, parameters $\rho,t\in\TT(\tau)$. Then, the parameters $\tau,\mapalpha,\rho,t$ satisfy Necessary Conditions~V, and thus fall in one of the classes $(\rho,t)$ of Table~\eqref{table3}, for generic $\tau$, up to modular transformations of $\tau$. They are related to the absolute value coupling constants $\Js$, resp. the frustration function $\delta=(\delta_f)_{f\in F}$, by the third, resp. the fourth, row of Table~\eqref{table3}.

 {\small 
 \renewcommand\arraystretch{1.5}
 \begin{equation}\label{table3}
   \begin{array}{|c|c|c|c|}
  \hline 
   (\rho,t) 
   &(\frac{1}{2},0)
   & (\frac{1}{2},\frac{1}{2})
   &(\frac{1}{2}+\frac{\tau}{2},\frac{1}{2})\\
   \hline \hline 
   \text{Hyp. $G/G^*$}
   & \text{none }
   & \text{none }
   & G^* \text{ bipartite}\\
   \hline
   \sinh(2\Js_e)
     & |\sc(\beta_e-\alpha_e)|
     & k'|\sc(\beta_e-\alpha_e)|
     & \frac{k'}{k}|\nc\bigl(\Re(\betah_e-\alphah_e)\bigr)|\\
   \hline
   \Js_e
    & \frac{1}{2}\ln\bigl(\frac{1+|\sn(\beta_e-\alpha_e)|}{|\cn(\beta_e-\alpha_e)|}\bigr)
    & \frac{1}{2}\ln\bigl(\frac{\dn(\beta_e-\alpha_e) +k'|\sn(\beta_e-\alpha_e)|}{|\cn(\beta_e-\alpha_e)|}\bigr)
    & \frac{1}{2}\ln\bigl(\frac{k'+\dn\Re(\betah_e-\alphah_e)}{k|\cn\Re(\betah_e-\alphah_e)|} \bigr)\\
   \hline
   \delta_f 
   & - \sgn (\prod_{i=1}^{|f|}\sn(\betah_{i}-\alphah_i))
   & - \sgn (\prod_{i=1}^{|f|}\sn(\betah_{i}-\alphah_i))
   & -1\\
   \hline 
   \end{array}
 \end{equation}
 }

Conversely, consider Fock's dimer model on an infinite, minimal graph $\GQ$ with parameters $\tau,\mapalpha,\rho,t$ satisfying Necessary Conditions V, and suppose that the angle map $\mapalpha$ is periodic. Consider the absolute value coupling constants $\Js$ and the frustration function $\delta=(\delta_f)_{f\in F}$ given by Table~\eqref{table3}, and suppose that $\prod_{f\in F_1}\delta_f=1$. Then, there exists edge-signs $(\eps_e)_{e\in E}$ such that the Ising-dimer model on $\GQ$ arising from the coupling constants $\eps \Js$, and Fock's dimer model on $\GQ$, are gauge equivalent. Moreover, the associated spectral curve $\C$ has genus 1, is invariant by complex conjugation, is centrally symmetric, and Fock's dimer model is real. 
\end{thm}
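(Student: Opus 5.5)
The proof splits into a direct part, which chains the necessary conditions already established, and a converse, which builds an Ising model by reading the face-weights backwards.

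\emph{Direct part.} Gauge equivalence of the Ising-dimer model and Fock's dimer model on $\GQ$ is equivalent to equality of face-weights at every face. So I will accumulate the necessary conditions in order.
\begin{enumerate}
\item Since $\C$ comes from a real polynomial (Remark~\ref{rem:Fisher_Dub}), Theorems~\ref{thm:real_spectral_curve} and~\ref{thm:Fock_real_dimer_model} give Necessary Conditions~I.
\item Central symmetry of $\C$ and Theorem~\ref{thm:real_symmetric_spectral_curve} give Necessary Conditions~II.
\item Equality at square faces, through Propositions~\ref{prop:maximality_spectral_curve}, \ref{prop:first_parameter_condition} and~\ref{prop:global_condition_anglemap}, gives Necessary Conditions~III.
\item Equality of squared face-weights at primal and dual faces, through Proposition~\ref{prop:t}, gives Necessary Conditions~IV.
\item The sign conditions of Proposition~\ref{prop:Necessary_V} give Necessary Conditions~V. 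This excludes $(\rho,t)=(\frac12+\frac\tau2,\frac\tau2)$ and leaves the three columns of Table~\eqref{table3}.
\end{enumerate}
To get the rows of Table~\eqref{table3}, I equate $-\sinh^2(2\Js_e)$ with the square face-weight from Tables~\eqref{table1} and~\eqref{table2}, which gives the $\sinh$ row. I then invert using $2\Js=\operatorname{arcsinh}(x)=\ln(x+\sqrt{1+x^2})$ together with the Jacobi identities $1+\sc^2=\nc^2$, $1+(k')^2\sc^2=\dc^2$ and $(k')^2\nc^2+k^2=\dc^2$. These yield the $\Js_e$ row, for instance
\[
\sqrt{1+\sc^2}+|\sc|=\frac{1+|\sn|}{|\cn|}\quad\text{and}\quad \frac{k'|\sc|+|\dc|}{1}=\frac{k'|\sn|+\dn}{|\cn|}.
\]
The frustration row is read off from Proposition~\ref{prop:Necessary_V}(a),(b), using that $\ds,\nd$ have the same sign as $\ns$ and $1$ respectively when $k^2\in[0,1)$. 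I also note $\sgn\ns=\sgn\sn$.

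\emph{Converse.} I start from parameters satisfying Necessary Conditions~V with $\mapalpha$ periodic. I define $\Js_e$ by the table (it is positive, being the solution of a $\sinh$ equation) and $\delta_f$ by the table. Lemma~\ref{lem:reconstructing_Ising_model}, applied on the toroidal quotient $G_1$ with $\prod_{f\in F_1}\delta_f=1$, produces periodic signs $\eps$ realizing this class. I then check equality of all face-weights.
\begin{itemize}
\item At square faces, equality holds by construction.
\item At primal and dual faces, the moduli agree because Proposition~\ref{prop:t} gives equality of squared face-weights. Remark~\ref{rem:prop_t} ensures this holds for every angle map and not only generic ones; the computation in Proposition~\ref{prop:t} only used the values of $t$, not genericity.
\item The signs agree at vertices $v$ by condition~\eqref{equ:angle_V}, and at faces $f$ by the choice of $\delta_f$ through Proposition~\ref{prop:Necessary_V}.
\end{itemize}
Equal face-weights give gauge equivalence.

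Finally, the properties of $\C$: genus~$1$ follows from the parameterization~\eqref{equ:param_spectral_curve_symm}. Invariance under complex conjugation follows from Remark~\ref{rem:thm_real_spectral_curve}, central symmetry from Remark~\ref{rem:thm_real_symmetric_spectral_curve}, and reality of Fock's model from Remark~\ref{rem:thm_real_dimer_model}. Each of these remarks stresses that genericity is not needed.

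\emph{Main obstacle.} I expect the main difficulty to be this non-generic bookkeeping in the converse. Proposition~\ref{prop:t} was stated "for generic $\mapalpha$", so I must rerun its final verification step — the quasi-periodicity computation with $2t\in\Lambda$ — and confirm that it yields equality for arbitrary $\mapalpha\in\Acal_{\rho,\ell_t/2}$. A secondary risk is the modulus-$\frac{k'}{k}$ column, where the real parts must be tracked carefully when applying the Jacobi identities.
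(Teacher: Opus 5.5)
Your proposal is correct and follows essentially the same route as the paper. The direct part chains Necessary Conditions I--V through the same theorems and propositions and reads Table~\eqref{table3} off Lemma~\ref{lem:face_weights_again} and Proposition~\ref{prop:Necessary_V}; the converse invokes Lemma~\ref{lem:reconstructing_Ising_model} and then checks square faces, squared primal/dual weights and signs in turn. Your explicit Jacobi identities for inverting $\sinh$, and your observation that the final step of Proposition~\ref{prop:t} uses only $2t\in\Lambda$ and quasi-periodicity (so it holds for non-generic $\mapalpha$), fill in details the paper leaves implicit.
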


\begin{proof}
Consider the assumptions of the first part of the statement. Then, the curve $\C$ arises from a polynomial with real coefficients, implying that it is invariant by complex conjugation and that Fock's gauge equivalent dimer model is real. By Theorems~\ref{thm:real_spectral_curve} and~\ref{thm:Fock_real_dimer_model} this implies that, generically and up to modular transformations, the modular parameter $\tau$ and the angle map $\mapalpha$ satisfy Necessary Conditions I, see Definition~\ref{def:Necessary_I}. Moreover, by Point 2. of Remark~\ref{rem:Fisher_Dub}, the spectral curve $\C$ is centrally symmetric. Using Theorem~\ref{thm:real_symmetric_spectral_curve}, this implies that the angle map $\mapalpha$ satisfies Necessary Conditions II, see Definition~\ref{def:necessary_II}. We then use the fact that the Ising-dimer model and Fock's dimer model on $\GQ$ are gauge equivalent. From equality of face-weights at square faces, we deduce from Propositions~\ref{prop:first_parameter_condition} and \ref{prop:global_condition_anglemap} that the parameters $\tau,\mapalpha,\rho$ satisfy Necessary Conditions III, see Definition~\ref{defi:necessary_III}; from equality of squared face-weights at faces corresponding to primal and dual vertices, we deduce from Proposition~\ref{prop:t} that they satisfy Necessary Conditions IV, see Definition~\ref{def:necessary_IV}; finally, from equality of face-weights at faces corresponding to primal and dual vertices, we deduce from Proposition~\ref{prop:Necessary_V} that they satisfy Necessary Conditions V, see Definition~\ref{def:necessary_V}. 
The explicit values of $\sinh(2\Js_e)$ come from Equation~\eqref{equ:cond_2}, Lemma~\ref{lem:face_weights_again} together with the fact that $\Js\geq 0$, and Proposition~\ref{prop:Necessary_V} which excludes the case $\rho=\frac{1}{2}+\frac{\tau}{2}$ and $t=\frac{\tau}{2}$. The explicit values of $\Js_e$ come from explicitly solving $\sinh(2\Js_e)$ for $\Js_e$ and using elliptic functions identities.
The computation of the frustration function comes from Equations~\eqref{equ:cond_1} and~\eqref{equ:cond_2} and Points $2(a)$, $2(b)$ of Proposition~\ref{prop:Necessary_V}. This ends the proof of the first part of Theorem~\ref{thm:main}.

Consider the assumptions of the reverse part of the statement. The absolute value coupling constants $\Js$, resp. the frustration function $\delta$, are given by the third, resp. the fourth, row of Table~\ref{table3}. 
Since $\prod_{f\in F_1}\delta_f=1$, by Lemma~\ref{lem:reconstructing_Ising_model}, there exists edge-signs $(\eps_e)_{e\in E}$ such that the Ising model on $G$ with coupling constants $\eps\Js$ lies in this equivalence class. 
Consider the corresponding Ising-dimer model on $\GQ$\footnote{Recall from Remark~\ref{rem:ising_model_gauge} that face-weights of the Ising-dimer model on $\GQ$ with coupling constants $\eps\Js$ only depend on the (Ising) gauge equivalence class of the model, that is, on the absolute value coupling constants $\Js$ and on the frustration function $(\delta_f)_{f\in F}$.}.
Now, since Necessary Conditions V are satisfied, Necessary Conditions III also are. As a consequence, by definition of the absolute value coupling constants $\Js$, by Propositions~\ref{prop:first_parameter_condition} and~\ref{prop:global_condition_anglemap}, we have equality of face-weights at square-faces. Since Necessary Condition IV is also satisfied, by Proposition~\ref{prop:t}, we have equality of squared face-weights at faces of $\GQ$ corresponding to primal and dual vertices of $G$. Finally, since Necessary Condition V is satisfied, we have equality of face-weights at faces of $\GQ$ corresponding to primal and dual vertices of $G$. Summarizing, face-weights are equal at all faces of $\GQ$ so that Fock's dimer model and the Ising-dimer model on $\GQ$ are gauge equivalent. Since the Ising-dimer model is defined on a periodic graph $\GQ$ and has periodic parameters, its spectral curve $\C$ is well defined. Since it is parameterized by elliptic functions, see Equation~\eqref{equ:param_spectral_curve_symm}, it has genus 1; and since the parameters of Fock's dimer model satisfy Necessary Conditions II, by Theorems~\ref{thm:real_spectral_curve} and~\ref{thm:real_symmetric_spectral_curve}, it is invariant by complex conjugation, and centrally symmetric. Since Necessary Conditions I are satisfied, Fock's dimer model is real. 
\end{proof}

\begin{rem}
We will be short on the comments on Theorem~\ref{thm:main} already given in the introduction; we also have some additional remarks.
\begin{enumerate}
\item An explicit parameterization of the spectral curve $\C$ is given by Equation~\eqref{equ:param_spectral_curve_symm}, helping us get more insight on the different cases. There are three family of models: Family~I consisting of the first two columns in the case where the frustration function is identically equal to 1; Family II consisting of the same two columns when the frustration also takes negative values; Family III corresponding to the third column. Generic curves in the three families are described in Figure~\ref{fig:triangular_lattice_curves}.
\item Note that Assumption $(\dagger)$ in the first part of the statement is there to ensure genericity of the angle map $\mapalpha$; this is needed in the proofs of Theorems~\ref{thm:real_spectral_curve},~\ref{thm:Fock_real_dimer_model} and~\ref{thm:real_symmetric_spectral_curve}, but not in the reverse statement. Indeed, in the reverse statement, we do not prove any regularity/irreducibility property of the curve; for example, an instance of a \emph{reducible} curve is provided by the fully frustrated isotropic Ising model on the square lattice, see Section~\ref{subsec:square_lattice}.
\item The modular parameter $\tau$ belongs to $i\RR^+$ (or equivalently $k^2$ belongs to $[0,1)$), up to modular transformations; but recall that, by Lemma~\ref{lem:modular_transo_dimers}, we know how Fock's dimer model evolves under modular transformations; hence this is not restrictive, and considering modular equivalent parameters $\tau$ will not give new families of models. On the same topic, in the paper~\cite{BdTR2} on Baxter's $Z$-invariant Ising model, we actually study the model of the first column (when $\delta\equiv 1$) in the range $k^2\in (-\infty,1)$. Using modular transformation, this is equivalent to studying the models of the first two columns (when $\delta\equiv 1$) in the range $k^2\in[0,1)$. 
\item Using Equations~\eqref{equ:weakly_dual_coupling_constants} and Lemma~\ref{lem:duality}, we know that the model of the first column defined on the graph $G$, and the model of the second column defined on the dual graph $G^*$, are \emph{weakly dual}. When the frustration function $\delta\equiv 1$, they are actually \emph{dual} models, see also~\cite{BdTR2}. For frustrated models, by Remark~\ref{rem:duality}, we can only have \emph{weak} dual models, because the sign conditions cannot be satisfied at primal and dual faces. This is why, the weakly dual model $\rho=\frac{1}{2}+\frac{\tau}{2}$, $t=\frac{\tau}{2}$ of the model $\rho=\frac{1}{2}+\frac{\tau}{2}$, $t=\frac{1}{2}$, which is fully frustrated, is not present in the classification, a fact also observed in Proposition~\ref{prop:Necessary_V}. 
\end{enumerate}

\end{rem}

\subsubsection{Algebraic phase transition}

We now give a few elements regarding phase transition. 

\begin{defi}\label{defi:algebraic_phase_transition}
We say that the Ising-dimer model on $\GQ$ undergoes an \emph{algebraic phase transition} if, when moving one parameter of the model, the spectral curve $\C$ goes from a genus 1 curve to a genus 0 curve. The corresponding \emph{algebraic critical temperature} is the value of the coupling constants when the genus 0 spectral curve is reached.
\end{defi}

When $k^2$ goes to 0, or equivalently $\Im(\tau)$ goes to infinity, the torus $\TT(\tau)$ becomes a cylinder, and the genus 1 curve $\C$ of Fock's dimer model degenerates to a genus 0 curve, see for example~\cite[Section 8]{BCdT:genus_1}.
Hence, as an immediate consequence of Theorem~\ref{thm:main}, using~\cite[2.1.18-2.1.20]{Lawden}, we obtain the following. 

\begin{cor}\label{cor:algebraic_phase_transition}
Consider Fock's dimer model on an infinite, minimal graph $\GQ$ with parameters $k,\mapalpha,\rho,t$ satisfying Necessary Conditions V, and suppose that the angle map $\mapalpha$ is periodic. Consider the absolute value coupling constants $\Js$ and the frustration function $\delta=(\delta_f)_{f\in F}$ given by Table~\eqref{table3}, and suppose that $\prod_{f\in F_1}\delta_f=1$; observe that the sign of the frustration function $\delta$ is independent of $k^2$. Consider the gauge equivalent Ising-dimer model on $\GQ$ with coupling constants $\eps\Js$ given by Theorem~\ref{thm:main}. Then, as $k^2$ tends to 0, the Ising-dimer model on $\GQ$ undergoes an algebraic phase transition, with algebraic critical absolute value coupling constants $\Js^{\mathrm{crit}}$ given by 

{\small 
 \renewcommand\arraystretch{1.5}
 \begin{equation}\label{cor_table3}
   \begin{array}{|c|c|c|c|}
  \hline 
   (\rho,t) 
   &(\frac{1}{2},0)
   & (\frac{1}{2},\frac{1}{2})
   &(\frac{1}{2}+\frac{\tau}{2},\frac{1}{2})\\
   \hline \hline 
   \text{Hyp. $G/G^*$}
   & \text{none }
   & \text{none }
   & G^* \text{ bipartite}\\
   \hline
   \sinh(2\Js_e^{\mathrm{crit}})
     & |\tan(\beta_e-\alpha_e)|
     & |\tan(\beta_e-\alpha_e)|
     & \infty\\
   \hline
   \Js_e^{\mathrm{crit}}
    & \frac{1}{2}\ln\bigl(\frac{1+|\sin(\beta_e-\alpha_e)|}{|\cos(\beta_e-\alpha_e)|}\bigr)
    & \frac{1}{2}\ln\bigl(\frac{1 +|\sin(\beta_e-\alpha_e)|}{|\cos(\beta_e-\alpha_e)|}\bigr)
    & \infty \\
   \hline
   \end{array}
 \end{equation}
 }
\end{cor}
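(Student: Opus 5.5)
The plan is to take the limit $k\to 0$ in Table~\eqref{table3} of Theorem~\ref{thm:main}, with the angle map $\mapalpha$ and the pair $(\rho,t)$ fixed. The first step is to check that this limit really is an algebraic phase transition in the sense of Definition~\ref{defi:algebraic_phase_transition}. For $k^2\in(0,1)$, Theorem~\ref{thm:main} (converse part) gives a gauge equivalent Ising-dimer model whose spectral curve has genus~1. As $k^2\to 0$, equivalently $\Im(\tau)\to\infty$ and $q\to0$, the torus $\TT(\tau)$ degenerates to a cylinder. By the degeneration argument of~\cite[Section 8]{BCdT:genus_1}, the parameterization~\eqref{equ:param_spectral_curve_symm} then becomes a rational parameterization by trigonometric functions, so the limiting curve has genus~0. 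I would also note that the frustration signs do not depend on $k$. Indeed $\sn(\cdot|k)$ on real arguments has zeros only at multiples of $2K(k)$, so after the rescaling $\alphah=2K\alpha$ the sign of $\sn(\betah_i-\alphah_i)$ depends only on the real angles mod~$1$. Hence $\delta$, and the edge-signs $\eps$ given by Lemma~\ref{lem:reconstructing_Ising_model}, can be kept fixed along the family.

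The second step is the limit of the coupling constants, using the degenerations~\cite[2.1.18--2.1.20]{Lawden}: $\sn(u|0)=\sin u$, $\cn(u|0)=\cos u$, $\dn(u|0)=1$, and $k'\to1$. Here $K(0)=\pi/2$, so the hatted argument becomes $\pi(\beta_e-\alpha_e)$; I would keep the paper's convention of writing it as $\beta_e-\alpha_e$.

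1. In the first column, $|\sc|\to|\tan|$, and $\Js_e$ tends to $\frac12\ln\frac{1+|\sin|}{|\cos|}$.
2. In the second column, the factor $k'|\sc|$ tends to $|\tan|$ as well, and $\frac{\dn+k'|\sn|}{|\cn|}\to\frac{1+|\sin|}{|\cos|}$. So the two limits coincide, as expected from weak duality (Lemma~\ref{lem:duality}).
3. In the third column, $\frac{k'}{k}|\nc(\cdot)|\to\infty$, because $k'\to1$, $k\to0$, and $|\nc|\ge1$ on the real line. Similarly, $\frac{k'+\dn}{k|\cn|}\ge\frac{1}{k}\to\infty$. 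So $\Js^{\mathrm{crit}}=\infty$.

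The main obstacle is justifying that the limiting object is a genus~0 curve rather than something more degenerate, for example a curve that collapses or becomes reducible, and that the formal limit $k\to0$ commutes with taking the spectral curve. The difficulty is sharpest in the third column, where the weights blow up. For the first two columns I would argue directly from the explicit characteristic polynomial. Its coefficients are continuous in $k$, and at $k=0$ they are the trigonometric weights of~\cite{BdTR2}, which give a genus~0 curve. For the third column I would simply record that the algebraic critical point corresponds to infinite coupling constants, reading the limit off the curve parameterization~\eqref{equ:param_spectral_curve_symm}, where the angles $\alpha_{\cev{T}}=\alpha_{\vec{T}}+\frac12+\frac{\tau}{2}$ escape to infinity as $\Im\tau\to\infty$. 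That escape is exactly why the coupling constants become infinite.
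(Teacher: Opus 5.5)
Your proposal is correct and follows the paper's argument: the drop from genus 1 to genus 0 comes from the degeneration of $\TT(\tau)$ to a cylinder as $\Im(\tau)\to\infty$ \cite[Section 8]{BCdT:genus_1}, and the critical values are read off Table~\eqref{table3} using the $k\to 0$ limits of the Jacobi functions \cite[2.1.18--2.1.20]{Lawden}. Your additional checks, namely the $k$-independence of the frustration signs and the lower bounds $\frac{k'}{k}|\nc|\ge\frac{k'}{k}$ and $\frac{k'+\dn}{k|\cn|}\ge\frac{1}{k}$ that force $\Js^{\mathrm{crit}}=\infty$ in the third column, make explicit what the paper treats as immediate.
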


The notion of algebraic phase transition is a priori weaker than that of phase transition. Let us explain why. As said in the introduction, one way of characterizing the phase transition is by analyzing the change of regularity in the \emph{free energy}~\cite{Baxter:exactly,Li10,CimDum12,Friedli_Velenik}. Consider the Ising model with periodic, real coupling constants $\eps\Js$ defined on an infinite, periodic isoradial graph $G$, and let $(G_n)_{n\geq 1}$ be a toroidal exhaustion of $G$. Then, the \emph{free energy}, denoted by $f$, is
\begin{equation}\label{equ:free_energy}
f=-\lim_{n\rightarrow\infty} \frac{1}{n^2} \log (\Zising(G_n,\Js)).
\end{equation}
The existence of the above limit is for example proved in~\cite{Friedli_Velenik}. Moreover, consider the Fisher dimer characteristic polynomial $\Pising(z,w)$, then we have the following. 

\begin{prop}\cite{CKP,KOS,BoutillierdeTiliere:iso_perio}\label{prop:phase:transition}
Suppose that the spectral curve $\C$ of the Ising-dimer model on $\GQ$ intersects the torus $\TT$ in a finite number of points, then the free energy of the Ising model is equal to 
\begin{equation*}
f=-\frac{1}{2}\sum_{e\in E_1}\log\cosh(\Js_e)-\frac{1}{2}\frac{1}{(2i\pi)^2}\int_{\TT^2}\log \Pising(z,w)\frac{dz}{z}\frac{dw}{w}.
\end{equation*}
\end{prop}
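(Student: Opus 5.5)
The plan is to follow the standard Kasteleyn route, going first from the Ising partition function to the Fisher dimer partition function, and then from finite tori to the double integral of the characteristic polynomial.

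First, on the torus $G_n$, Fisher's correspondence (Section~\ref{sec:Fisher_correspondence}) gives
\[
\Zising(G_n,\eps\Js)=\Bigl(\prod_{e\in E(G_n)}\cosh(\eps_e\Js_e)\Bigr)\Zdimer(\GF_n,\mu).
\]
Here $\cosh$ is even, so the prefactor only depends on $\Js$. Its logarithm divided by $n^2$ converges to $\sum_{e\in E_1}\log\cosh(\Js_e)$. The prefactor $\frac12$ in the statement will then come from the normalization conventions: either the $2^{|V|}$ multiplicity of dimer fillings combined with the free energy being taken per fundamental domain, or a square root relation that appears in the next step. I will track this constant carefully, but it is bookkeeping.

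Second, on a torus the dimer partition function is not a single Pfaffian. By Kasteleyn (see also \cite{CimRes07}),
\[
\Zdimer(\GF_n,\mu)=\tfrac12\sum_{\theta\in\{0,1\}^2}\pm\mathrm{Pf}\,K_n^{\theta},
\]
where the $K_n^\theta$ are the four Kasteleyn matrices twisted by signs along $\gamma_x,\gamma_y$. By periodicity, each $K_n^\theta$ block-diagonalizes under the Fourier transform, so that
\[
\det K_n^{\theta}=\prod_{z^n=\pm1,\ w^n=\pm1}\Pising(z,w),
\]
with the signs fixed by $\theta$. Hence $|\mathrm{Pf}K_n^\theta|=\prod\sqrt{|\Pising(z,w)|}$, and this square root is exactly where the factor $\frac12$ in front of the integral comes from. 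Central symmetry \eqref{equ:central_symmetry} and the reality of the coefficients guarantee that $\Pising$ is real on the unit torus, so taking logarithms is legitimate.

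Third, I need Riemann sums: $\frac1{n^2}\sum\log|\Pising(z,w)|$ over $n$-th roots of $\pm1$ should converge to $\frac1{(2i\pi)^2}\int_{\TT^2}\log\Pising\,\frac{dz}{z}\frac{dw}{w}$. The hypothesis that $\C\cap\TT^2$ is finite makes $\log|\Pising|$ integrable, with only logarithmic singularities at finitely many points. Away from these points the convergence is standard. Near them I would follow \cite{CKP,KOS,BoutillierdeTiliere:iso_perio}: bound the contribution of grid points within distance $\varepsilon$ of a zero by $O(\varepsilon^2\log)$, using the fact that the minimum of $|\Pising|$ on the grid is at least polynomial in $1/n$.

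The main obstacle is the upper/lower bound for the sum of four Pfaffians. Each term is bounded by $\exp(n^2 I+o(n^2))$, where $I$ denotes the integral, which gives the upper bound directly. For the lower bound one must rule out cancellation among the four signed terms, and here weights can be negative, so positivity arguments are unavailable. I would argue as in \cite{CKP}: for at least one $\theta$ the grid avoids the zeros. I would also use the identity $\sum_\theta\det K_n^\theta\ge\ldots$, or replace the signed sum by the bipartite characteristic polynomial $\Pdub$ via \eqref{equ:Dubedat_equality}, which suffices if one $\mathrm{Pf}$ dominates the others. Failing that, I would restrict to a subsequence of $n$ and invoke existence of the limit \cite{Friedli_Velenik}.
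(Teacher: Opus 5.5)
Your route is the classical one the paper points to in Remark~\ref{rem:phase_transition}: Fisher correspondence, twisted Pfaffians on $\GF_n$, Fourier diagonalisation, then Riemann sums. However, the step you single out, the lower bound on $\Zdimer(\GF_n,\mu)$, is left open, and none of your fallbacks closes it.

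For positive weights the lower bound is $|\mathrm{Pf}\,K_n^\theta|\le\Zdimer$. This fails as soon as some $\mu_e=\tanh(\eps_e\Js_e)<0$, and then $\frac12\sum_\theta\pm\mathrm{Pf}\,K_n^\theta$ could a priori be exponentially smaller than every term. Your fallbacks do not address this:
\begin{itemize}
\item Knowing that one grid avoids the zeros controls the Riemann sum of one determinant, not this cancellation.
\item Passing to $\Pdub$ does not help: the weights $\tanh(2\eps_e\Js_e)$ on $\GQ$ are signed too, and \eqref{equ:Dubedat_equality} relates characteristic polynomials without saying which twisted term dominates.
\item Existence of the limit says nothing about its value.
\end{itemize}

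The missing idea is to compare the Pfaffians with sign-twisted \emph{Ising} partition functions, which are positive whatever the signs.
\begin{itemize}
\item Choose $\gamma_x,\gamma_y$ as paths in $G^*$, so they cross only long edges of $\GF$. For $\theta'\in\{0,1\}^2$, let $\mu^{\theta'}$ be $\mu$ with signs reversed on the long edges crossing $\gamma_x$ (if $\theta'_1=1$) and $\gamma_y$ (if $\theta'_2=1$).
\item $\mu^{\theta'}$ is the Fisher weight of the Ising model $\eps^{\theta'}\Js$ obtained by flipping $\eps_e$ on these $O(n)$ edges. Hence $\Zdimer(\GF_n,\mu^{\theta'})=\Zising(G_n,\eps^{\theta'}\Js)/\prod_e\cosh(\Js_e)>0$.
\item Every spin configuration changes energy by at most $2\sum_{e\text{ crossed}}\Js_e$, so $\Zdimer(\GF_n,\mu^{\theta'})\le e^{Cn}\Zdimer(\GF_n,\mu)$.
\item Let $Z_h$ be the signed weight of dimer configurations whose mod-$2$ homology class, relative to a reference configuration, is $h\in H_1(\TT;\ZZ/2\ZZ)$. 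Kasteleyn's theorem gives $\mathrm{Pf}\,K_n^\theta=\sum_h\pm Z_h$. Also $\Zdimer(\GF_n,\mu^{\theta'})=\pm\sum_h\chi_{\theta'}(h)Z_h$, where $\chi_{\theta'}$ runs over the four characters.
\item Inverting the character table gives $|Z_h|\le\frac14\sum_{\theta'}\Zdimer(\GF_n,\mu^{\theta'})$.
\end{itemize}
Together with $\Zdimer\le 2\max_\theta|\mathrm{Pf}\,K_n^\theta|$, this yields
\[
\max_\theta|\mathrm{Pf}\,K_n^\theta|\le 4e^{Cn}\,\Zdimer(\GF_n,\mu)\le 8e^{Cn}\max_\theta|\mathrm{Pf}\,K_n^\theta|.
\]
So $\log\Zdimer(\GF_n,\mu)=\max_\theta\log|\mathrm{Pf}\,K_n^\theta|+O(n)$, exactly as for positive weights. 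The lower bound then reduces to finding, for each large $n$, one twist whose Riemann sum is close to the integral. Alternatively you can work along a subsequence; this is where existence of the limit legitimately enters.

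Three further points.
\begin{itemize}
\item Your claim that $\min|\Pising|$ on the grid is polynomial in $1/n$ is false for a fixed twist. For instance, $\Pising(1,1)=0$ at a critical ferromagnetic point forces $\mathrm{Pf}\,K_n^{(0,0)}=0$ for every $n$. You must choose $\theta$ (and $n$) so the grid stays away from the finitely many points of $\C\cap\TT^2$. Near those points you also need a \L ojasiewicz bound $|\Pising|\ge c\,\mathrm{dist}(\cdot,\C\cap\TT^2)^m$.
\item Reality of $\Pising$ on $\TT^2$ is not enough to write $\log\Pising$; you need $\Pising\ge 0$ there.
\item The constant is not bookkeeping. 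With $\Zising=\prod_e\cosh(\eps_e\Js_e)\,\Zdimer$ and $f=-\lim n^{-2}\log\Zising(G_n)$, your argument gives $-\sum_{e\in E_1}\log\cosh(\Js_e)$ with coefficient $1$. The $2^{|V|}$ fillings are already inside $\Zdimer$, and the square root only affects the determinant term. As a check, take a ferromagnetic model with all $\Js_e\to\infty$: then $f=-\sum_{E_1}\Js_e+O(1)$, while $\log\Pising$ stays bounded above on $\TT^2$. So the $\frac12$ printed in front of that sum cannot be recovered with these conventions; state the constant you actually obtain.
\end{itemize}
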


\begin{rem}\label{rem:phase_transition}\leavevmode
\begin{enumerate}
\item The proof is now very classical. Using Fisher's correspondence, one relates the Ising and Fisher-dimer partition functions. Then, one uses Kasteleyn theory to compute the Ising-dimer partition function, providing Riemann sums for the above integral; most of the work lies in proving convergence of the Riemann sums to the integral, the delicate point being when the characteristic polynomial has zeros on the unit torus. The proof goes through if this number is finite, hence the assumption on the intersection of the spectral curve $\C$ with the torus $\TT$. 
\item As long as the characteristic polynomial has no zero on the torus $\TT$, the free energy is smooth; a change of behavior occurs when zeros are reached, see for example~\cite{KOS,Li10}. Hence, the key information to have is whether the amoeba of the spectral curve has a hole containing the origin (smooth) or not (phase transition). When the spectral curve is Harnack, this information can be read directly on the genus of the curve but, in our setting of curves that are not necessarily Harnack, this is not the case since, in general, the amoeba is not equal to the log of the modulus of the real locus of the curve. Hence the fact that, when $k\neq 0$, the genus of the curve is 1, and when $k=0$, it is equal to 0 is a \emph{strong indication} that there is a phase transition, but not a proof. 
\item In the case of the isotropic frustrated Ising model on the square lattice, we will see in Section~\ref{subsec:square_lattice} that the spectral curve is a double copy of a Harnack curve, thus this proposition applies.
\end{enumerate}
\end{rem}

\section{Examples}\label{sec:examples}

The aim of this section is to provide examples of applications of Theorem~\ref{thm:main}. More specifically, in Section~\ref{sec:monotone_angle_maps}, we consider Fock's dimer model with parameters $k^2\in[0,1)$, $(\rho,t)\in\{(\frac{1}{2},0),(\frac{1}{2},\frac{1}{2}),(\frac{1}{2}+\frac{\tau}{2},\frac{1}{2})\}$, $\mapalpha\in\Acal_{\rho,\ell_t/2}$. We then use the notion of monotone angle maps studied in~\cite{BCdT:immersions,BCdT:genus_1} to give examples of angle maps $\mapalpha$ satisfying Necessary Condition V, \emph{i.e.}, the additional sign Condition~\eqref{equ:angle_V} of Definition~\ref{def:necessary_V}. Then, in Section~\ref{sec:triangular_lattice}, we derive a full solution of the Ising model on the triangular lattice using Theorem~\ref{thm:main}. Finally, in Section~\ref{subsec:square_lattice} we solve the case of the fully frustrated isotropic Ising model on the square lattice. 

In the whole of this section, we suppose that the graph $G$ is isoradial and periodic, meaning that the graph $\GQ$ is minimal and periodic. 

\subsection{Monotone angle maps}\label{sec:monotone_angle_maps}

We start by recalling definitions and properties of ordering of train-tracks of periodic minimal graphs, and the definition of monotone angle maps, see the original paper~\cite{BCdT:immersions} for more details. In Section~\ref{subsec:caseA}, we study implications in the case where $\rho=\frac{1}{2},t\in\{0,\frac{1}{2}\}$, then in Section~\ref{subsec:CaseB}, we study further implications when the dual graph $G^*$ is bipartite.

\subsubsection{Background}

Consider a periodic minimal graph $\Gs$, and let $\vec{\T}_1$ be the set of oriented train-tracks of the fundamental domain $\Gs_1$. The homology classes $(h_T,v_T)$ in $H_1(\TT,\ZZ)$ of the train-tracks consist of coprime integers. The total cyclic order on such pairs induces a \emph{partial cyclic order} on the set of train-tracks $\vec{\T}$~\cite{BCdT:immersions}. Since the graph $\Gs$ is assumed to be periodic, the partial cyclic order is equivalent to the \emph{global cyclic order} on the set of directed train-tracks $\vec{\T}$ of $\Gs$ defined as follows, see also Definition 6. and Remark 7. of~\cite{BCdT:immersions}. Consider a triple of train-tracks $(\vec{T}_1,\vec{T}_2,\vec{T}_3)$ of  $\vec{\T}$, pairwise non-parallel, and a compact disk $B$ outside which these train-tracks do not intersect, except possibly for anti-parallel train-tracks. Then order $(\vec{T}_1,\vec{T}_2,\vec{T}_3)$ cyclically according to the outgoing points of the corresponding oriented curves on the boundary $\partial B$. This partial cyclic order on $\vec{\T}$ is the \emph{global cyclic order} on $\vec{\T}$. 

An angle map $\mapalpha:\vec{\T}\rightarrow \RR/\ZZ$ on a globally cyclically ordered set $\vec{\T}$ is said to be \emph{monotone} if it respects the cyclic order on $\vec{\T}$ and on $\RR/\ZZ$ namely, whenever we have $[\vec{T}_1,\vec{T}_2,\vec{T}_3]$ in $\vec{\T}$, then we have $[\alpha_{\vec{T}_1},\alpha_{\vec{T}_2},\alpha_{\vec{T}_3}]$ in $\RR/\ZZ$. Following~\cite[Corollary 29]{BCdT:immersions}, we define $X'_\Gs$ to be the set of monotone angle maps $\mapalpha:\vec{\T}\rightarrow \RR/\ZZ$ mapping pairs of intersecting train-tracks to distinct angles. Since the minimal graph $\Gs$ is periodic, the set $X'_\Gs$ is equal to the set $\Ys_\Gs$ of angle maps satisfying a local ordering condition, see Definition 22 and Corollary 29 of~\cite{BCdT:immersions}, but since we do not need these considerations in this paper, we refer to the original paper for more details.  

\subsubsection{The case where $\rho=\frac{1}{2}$ and $t\in\{0,\frac{1}{2}\}$}\label{subsec:caseA}

Recall that in this case, the angle map $\mapalpha\in \Acal_{\rho,\,\ell_t/2}$ takes real values. The graph $\GQ$ is periodic and minimal. Then, as a consequence of~\cite[Theorem 31]{BCdT:immersions} and~\cite[Proposition 13]{BCdT:genus_1}, we obtain the following. 
\begin{cor}\label{cor:angle_map_1}
Consider Fock's dimer model on $\GQ$ satisfying Necessary Conditions IV in the case $\rho=\frac{1}{2}$ and $t\in\{0,\frac{1}{2}\}$. Suppose that the angle map $\mapalpha:\vec{\T}\rightarrow \RR$ belongs to $X'_{\GQ}$, then Necessary Conditions V is satisfied, and the model is non-frustrated at every face corresponding to a face $f$ of $G$. 
\end{cor}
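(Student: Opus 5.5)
The plan is to reduce both sign conditions of Proposition~\ref{prop:Necessary_V} to statements about signs of products of $\sn$ at real arguments, and to control these signs using the positivity results for monotone angle maps. For $\rho=\frac{1}{2}$, $t\in\{0,\frac{1}{2}\}$, the angle map is real valued. By Proposition~\ref{prop:Necessary_V} it suffices to show two things. First, at every face of $\GQ$ corresponding to a primal vertex $v$, $\prod_{i=1}^{|v|}\sn(\betath_i-\alphath_i)\in\RR^-$. Second, at every face corresponding to a dual vertex $f$, $\prod_{i=1}^{|f|}\sn(\betah_i-\alphah_i)\in\RR^-$, which is the non-frustrated case of Point 2(a). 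For $k^2\in[0,1)$ and real arguments, $\sn(\hat x)$ has the sign of $\sin(2\pi x)$ on $\RR/\ZZ$; the rescaling $\alphah=2K\alpha$ maps the period $1$ of $\mapalpha$ to $2K$, and the antiperiod $2K$ of $\sn$ corresponds to $\frac12$. So everything reduces to a count of how many differences $\beta_i-\alpha_i$ lie in $(\frac12,1)$ mod $1$.

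The key input is~\cite[Proposition 13]{BCdT:genus_1}. For $\mapalpha\in X'_{\GQ}$, Fock's weights in the Harnack case are gauge equivalent to positive weights, the angles being ordered cyclically around every face and every edge having $\theta_{1,1}(\beta-\alpha)$ of a fixed sign. Together with~\cite[Theorem 31]{BCdT:immersions}, monotonicity gives local ordering: around each face of $\GQ$ the angles $\alpha_1,\beta_1,\dots$ appear in cyclic order in $\RR/\ZZ$, with total winding exactly one turn.

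I would then compute the signs directly. Around a dual-vertex face $f$, the angles satisfy $\beta_{i+1}=\alpha_i+\frac12$, and they are cyclically ordered in the sense $\alpha_1,\beta_1,\alpha_2,\beta_2,\dots$ (or with the appropriate orientation convention) winding once. The differences $\beta_i-\alpha_i$ are then positive gaps in $(0,1)$ whose sum, together with the gaps $\alpha_{i-1}-\beta_i$, equals $1$. Since $\beta_i-\alpha_i=\alpha_{i-1}-\alpha_i+\frac12$, exactly one index has $\beta_i-\alpha_i$ in $(\frac12,1)$ (where $\sin$ is negative) and all others lie in $(0,\frac12)$. Hence the product is negative; equivalently $\prod\sin(2\pi(\beta_i-\alpha_i))<0$. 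The same argument at primal faces $v$ with $\betat_{i+1}=\alphat_i+\frac12$ gives condition~\eqref{equ:angle_V}.

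The main obstacle is this counting step. Showing that exactly one difference exceeds $\frac12$ requires a careful use of the winding-one statement for the $2|f|$ angles around the face, and of the orientation conventions of Figure~\ref{fig:primal_dual}, which are opposite for primal and dual faces. If the winding is reversed, I would get exactly $|f|-1$ negative factors instead, and then I would have to recheck the sign convention of $\sn$ versus $\theta_{1,1}$ in~\eqref{equ:elliptic_def}. My first approach would be to verify the count on a single vertex-face of the triangular lattice (Figure~\ref{fig:triangular_lattice_curves}, Family~I) and then argue generally by the cyclic-order lemma.
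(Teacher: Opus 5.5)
Your reduction to the two sign conditions of Proposition~\ref{prop:Necessary_V} is fine, but the counting step meant to settle them fails, and it fails at the sign rule.

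For $k^2\in[0,1)$, $\sn$ has zeros at $2K\ZZ$ and antiperiod $2K$. Hence $\sn(2Kx)$ has the sign of $\sin(\pi x)$, not of $\sin(2\pi x)$. The antiperiod $2K$ corresponds to a shift of $1$ in the angle variable, not $\frac12$; your two sentences about the rescaling contradict each other. So a single factor $\sn(\betah_i-\alphah_i)$ is not a function of the angles mod $1$. Only the product is, and only because the lifts of the $\alpha_i$ are fixed and $\beta_i=\alpha_{i-1}+\frac12$ exactly, see~\eqref{equ:angle_relation}. Counting differences lying in $(\frac12,1)$ mod $1$ therefore carries no information, and your claim that exactly one index has $\beta_i-\alpha_i\in(\frac12,1)$ is false.

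A concrete check is the isotropic angle map $\alpha_{\vec T_i}=\frac{i-1}{3}$ on the triangular lattice, the Family~I example of Figure~\ref{fig:triangular_lattice_curves}. At a face $f$ one gets $\beta_i-\alpha_i=\frac76,\frac16,\frac16$, all equal to $\frac16$ mod $1$. Your rule then predicts a positive product, i.e.\ a frustrated face. The true product is negative because of the single wrap-around factor $\sn(\frac{7K}{3})<0$.

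The cyclic-order claim is also off. With the gaps you list, $\alpha_{i-1}-\beta_i\equiv\frac12$ mod $1$ for every $i$, so these gaps alone exceed one turn. What monotonicity actually gives is that the $\alpha_i$ by themselves wind once around the circle.

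The missing observation is that the input you already quote finishes the proof, with no counting at all. Since $\rho=\frac12$, $t\in\{0,\frac12\}$ and $\mapalpha$ is real, every argument $t+\mapd(\cdot)$ is real, so the $\theta_{0,0}$ factors are positive. Then \cite[Theorem 31]{BCdT:immersions} together with \cite[Proposition 13]{BCdT:genus_1} give that, for $\mapalpha\in X'_{\GQ}$, Fock's face-weights have the Kasteleyn sign $(-1)^{|\fs|/2-1}$ at every face of $\GQ$. Read this at the two kinds of faces:
\begin{itemize}
\item A face corresponding to a vertex $v$ has $|\fs|/2=|v|$, so $\W_{\mapalpha,\rho,t}(v)\in(-1)^{|v|-1}\RR^+$. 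This is exactly the sign condition~\eqref{equ:condition_primal_face}, which Proposition~\ref{prop:Necessary_V} shows is equivalent to~\eqref{equ:angle_V}.
\item A face corresponding to $f$ gets $\W_{\mapalpha,\rho,t}(f)\in(-1)^{|f|-1}\RR^+$, i.e.\ $\delta_f=1$, so the face is non-frustrated.
\end{itemize}
This is the paper's argument.

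If you prefer a direct count, you can still make it work. Use the correct rule, fix lifts of the cyclically ordered $\alpha_i$, and note that consecutive steps are less than half a turn. Then every factor is positive except the one where the fixed lifts wrap around.
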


\begin{rem}\label{rem:connexion}\leavevmode
\begin{enumerate}
\item In this case, we recover the $Z$-invariant Ising model of~\cite{Baxter:exactly}, studied in~\cite{ChelkakSmirnov2,BdTR2}.
\item The condition that the angle map $\mapalpha$ belongs to $X'_{\GQ}$ is sufficient, but clearly not necessary. A natural open question is to understand what are necessary conditions and more specifically to characterize features of the angle map leading to frustration. Concrete examples will be given in the case of the triangular lattice, see Section~\ref{sec:triangular_lattice} below. 
\end{enumerate}
\end{rem}

\subsubsection{The case where $G^*$ is bipartite}\label{subsec:CaseB}

Suppose now that the dual graph $G^*$ is bipartite. The set of train-tracks $\T$ of $G^*$ can then be consistently oriented so that white vertices of $G^*$ are on the left.  

Recall also from Section~\ref{sec:Necessary_III} and Figure~\ref{fig:dual_graph_bipartite} that that the set of oriented train-tracks 
$\vec{\T}$ of $\GQ$ is naturally split into $\vec{\T}=\vec{\T}_{\circ,*}\sqcup \vec{\T}_{\bullet,*}$, where: $\vec{\T}_{\circ,*}$ consists of the train-tracks turning counterclockwise around white vertices of $G^*$; $\vec{\T}_{\circ,*}$ is in orientation preserving bijection with $\T$; and for every $T\in\T$, $\vec{T}\in\vec{\T}_{\circ,*},\cev{T}\in\vec{\T}_{\bullet,*}$.
In the following computation, we prove that the sign condition around a face of $\GQ$ corresponding to a vertex $v$ of $G$ can be rewritten as a sign condition around the face $v$ of $G^*$. The set of oriented train-tracks $\T$ of $G^*$ around $v$ can further be split into two and labeled as $T_1,\dots,T_{|v|/2}$, $T_1',\dots,T_{|v|/2}'$, see Figure~\ref{fig:dual_graph_bipartite_bis} (left). 

\begin{figure}[h]
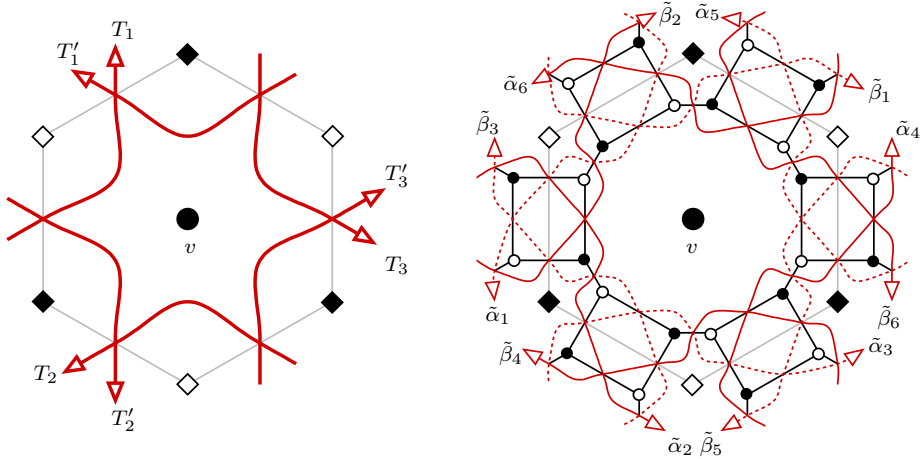

  \centering
  \begin{overpic}[width=12cm]{fig_G_dual_bip_1.pdf} 
   \put(20,20){\scriptsize $v$}
   \put(75,20){\scriptsize $v$}
   \put(12,44){\scriptsize $T_1$}
   \put(6,41.5){\scriptsize $T_1'$}
   \put(12,1){\scriptsize $T_2'$}
   \put(3.5,6){\scriptsize $T_2$}
   \put(42,28){\scriptsize $T_3'$}
   \put(42,18){\scriptsize $T_3$}
   \put(53,12.5){\scriptsize $\alphat_1$}
   \put(95,37){\scriptsize $\betat_1$}
   \put(73,-2){\scriptsize $\alphat_2$}
   \put(72,45.5){\scriptsize $\betat_2$}
   \put(95,9){\scriptsize $\alphat_3$}
   \put(52,33.5){\scriptsize $\betat_3$}
   \put(98,33){\scriptsize $\alphat_4$}
   \put(54.5,8){\scriptsize $\betat_4$}
   \put(76,46){\scriptsize $\alphat_5$}
   \put(76.5,-2){\scriptsize $\betat_5$}
   \put(55,38){\scriptsize $\alphat_6$}
   \put(96,12){\scriptsize $\betat_6$}
  \end{overpic}
\caption{Left: oriented train-tracks of $\T$ when $G^*$ is the hexagonal lattice, and labeling $T_1,T_2,T_3$, $T_1',T_2',T_3'$.
Right: oriented train-tracks of $\vec{\T}$ split into $\vec{\T}_{\circ,*}$ (full lines) $\vec{\T}_{\bullet,*}$ (dotted lines), and labeling $\alphat_1,\dots,\alphat_6$, $\betat_1,\dots,\betat_6$ in accordance with the notation of Figure~\ref{fig:primal_dual} (right).}  \label{fig:dual_graph_bipartite_bis}  
\end{figure}

Fix a choice of angles 
$\hat{\alpha}_{T_1},\dots,\hat{\alpha}_{T_{|v|/2}},\hat{\alpha}_{T_1'},\dots,\hat{\alpha}_{T_{|v|/2}'}$ for the oriented train-tracks of $\T$. Using the orientation preserving bijection between $\T$ and $\vec{\T}_{\circ,*}$, this defines a value of the angle map $\alpha_{\vec{T}}$ for every oriented train-track $\vec{T}$ of $\vec{\T}_{\circ,*}$. Using further that, for every $T\in \T$, $\alpha_{\cev{T}}=\alpha_{\vec{T}}+\rho$, and the notation of Figure~\ref{fig:dual_graph_bipartite_bis}, we deduce
\begin{align*}
\alphat_1=\hat{\alpha}_{T_1}+\rho,\ \alphat_2=\hat{\alpha}_{T_2'},\ \alphat_3=\hat{\alpha}_{T_2}+\rho,\dots,
\alphat_{|v|-1}=\hat{\alpha}_{T_{|v|/2}}+\rho, \ 
\alphat_{|v|}=\hat{\alpha}_{T_1'}. 
\end{align*}
Recalling that the lift of the angles $(\betat_i)$ is fixed so that $\betat_i=\alphat_{i-1}+\rho$, and the notation $\alphah=2K\alpha$, we obtain
\begin{align}
\prod_{i=1}^{|v|}\sn(\betath_i-\alphath_i)&=\prod_{i=1}^{|v|/2}\sn(\hat{\alpha}^k_{T_i'}-\hat{\alpha}^k_{T_i})\sn(\hat{\alpha}^k_{T_i}+4K\rho-\hat{\alpha}^k_{T_{i+1}'})\nonumber \\
&=(-1)^{|v|/2}\prod_{i=1}^{|v|/2}\sn(\hat{\alpha}^k_{T_i'}-\hat{\alpha}^k_{T_i})\sn(\hat{\alpha}^k_{T_i}-\hat{\alpha}^k_{T_{i+1}'}),\label{equ:useful_computation}
\end{align}
where in the last equality, we used~\cite[2.2.11]{Lawden}. Recall from Definition~\ref{defi:angle_map} that in both cases $\rho=\frac{1}{2}$ and $\rho=\frac{1}{2}+\frac{\tau}{2}$, the angles of the train-tracks of $\vec{\T}_{\circ,*}$, that is of $\T$, take values in $\RR\,[\Lambda]$. Observing that the sign behavior of the function $\theta_{1,1}$ and $\sn$ is the same, as a consequence of~\cite[Theorem 31]{BCdT:immersions} and~\cite[Proposition 13]{BCdT:genus_1}, we obtain the following. 

\begin{cor}\label{cor:angle_map_2}
Suppose that the dual graph $G^*$ is bipartite, and 
consider Fock's dimer model on $\GQ$ satisfying Necessary Conditions IV in the case where 
$(\rho,t)\in\{(\frac{1}{2},0),(\frac{1}{2},\frac{1}{2}),(\frac{1}{2}+\frac{\tau}{2},\frac{1}{2})\}$. Suppose that the angle map $\hat{\mapalpha}:\T\rightarrow\RR$ belongs to $X'_{G^*}$, then Necessary Conditions V is satisfied. Moreover, when $\rho=\frac{1}{2}+\frac{\tau}{2}$ and $t=\frac{1}{2}$, Fock's dimer model is frustrated at every face $f$ of $G$. 
\end{cor}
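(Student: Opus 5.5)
The plan is to reduce Corollary~\ref{cor:angle_map_2} to the sign statements of Proposition~\ref{prop:Necessary_V}, after rewriting them on the bipartite dual graph $G^*$ through the computation~\eqref{equ:useful_computation}. Since Necessary Conditions IV already hold and $(\rho,t)$ lies in one of the three allowed classes, Definition~\ref{def:necessary_V} shows that only the sign Condition~\eqref{equ:angle_V} at faces of $\GQ$ corresponding to primal vertices $v$ of $G$ needs checking. By~\eqref{equ:useful_computation}, that product equals
\[
(-1)^{|v|/2}\prod_{i=1}^{|v|/2}\sn\bigl(\hat{\alpha}^k_{T_i'}-\hat{\alpha}^k_{T_i}\bigr)\,\sn\bigl(\hat{\alpha}^k_{T_i}-\hat{\alpha}^k_{T_{i+1}'}\bigr),
\]
where the angles $\hat{\alpha}$ of the oriented train-tracks of $G^*$ are real in all three cases by Definition~\ref{defi:angle_map}. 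Because $\sn(2K x)$ and $\theta_{1,1}(x)$ have the same sign for real $x$ when $k^2\in[0,1)$, it suffices to determine the sign of the same alternating product with $\sn$ replaced by $\theta_{1,1}$.

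Up to its sign and the $\theta_{0,0}$ factors, this $\theta_{1,1}$-product is the face-weight $\W_{\hat{\mapalpha},t'}(v)$ of Fock's dimer model on the bipartite graph $G^*$ at its face $v$. The $\theta_{0,0}$ factors are positive when the discrete Abel map and $t'$ are taken real. I will now invoke \cite[Theorem 31]{BCdT:immersions}: it says that $\hat{\mapalpha}\in X'_{G^*}$ gives a minimal immersion with monotone angles. I will combine this with \cite[Proposition 13]{BCdT:genus_1}, which says such angle maps give face-weights of the positive sign $(-1)^{|v|/2-1}\RR^+$. Hence the $\theta_{1,1}$-product lies in $(-1)^{|v|/2-1}\RR^+$, and the total sign is $(-1)^{|v|/2}(-1)^{|v|/2-1}=-1$. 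This is exactly~\eqref{equ:angle_V}, so Necessary Conditions V hold.

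The second claim, frustration at every face $f$ of $G$ when $\rho=\frac12+\frac\tau2$ and $t=\frac12$, is immediate from Point~2(b) of Proposition~\ref{prop:Necessary_V}. The face-weight there is $(-1)^{|f|}\prod\nd(\cdot)$ with $\dn>0$ on real arguments, so it lies in $(-1)^{|f|}\RR^+$.

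The main obstacle is the bookkeeping in the middle step. One must check that the ordering of $T_i,T_i'$ around $v$ in Figure~\ref{fig:dual_graph_bipartite_bis} matches the ordering of train-tracks in the formula~\eqref{equ:face_weight_bip} for faces of $G^*$. If it does not, some differences come with reversed sign and introduce an extra $(-1)^{|v|/2}$. I would first test this on the hexagonal $G^*$ with a monotone angle map, fixing conventions so that the sign comes out $-1$, and then argue that the convention is local and therefore general.
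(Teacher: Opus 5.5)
Your proposal is correct and follows the same route as the paper: reduce Necessary Conditions V to the sign Condition~\eqref{equ:angle_V}, rewrite it on $G^*$ through~\eqref{equ:useful_computation}, pass from $\sn$ to $\theta_{1,1}$, conclude with \cite[Theorem 31]{BCdT:immersions} and \cite[Proposition 13]{BCdT:genus_1}, and read off frustration from Point 2(b) of Proposition~\ref{prop:Necessary_V}; your orientation bookkeeping, calibrated on the hexagonal case, spells out what the paper leaves implicit. One small correction: with this paper's convention $\theta_{1,1}=-\theta_1$, the definition~\eqref{equ:elliptic_def} shows that $\sn(2Kx)$ and $\theta_{1,1}(x)$ have \emph{opposite} signs for real $x$, which is harmless here because the product in~\eqref{equ:angle_V} has $|v|$ factors and $|v|$ is even.
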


\begin{rem}\leavevmode
\begin{enumerate}
 \item Let us emphasize that the angle map $\hat{\mapalpha}$ in the above statement is that of oriented train-tracks $\T$ of $G^*$ and not of the graph $\GQ$. Comparing to Corollary~\ref{cor:angle_map_1}, it is a condition on only half of the angles. 
 \item Again, the condition on the angle map $\hat{\mapalpha}$ is sufficient but not necessary, and understanding necessary conditions is a natural open question. In this case, there is more structure arising from the bipartitedness of $G^*$. For example, if the angle map $\hat{\mapalpha}$ belongs to $X_{G*}'$, is it true that the angle map $\mapalpha$ belongs to $X_{\GQ}'$? We believe that this is not the case, and that this depends on whether $G^*$ is isoradially embedded or minimally immersed. This will be explored further in a subsequent paper. 
\end{enumerate}
\end{rem}

\subsection{The triangular lattice case}\label{sec:triangular_lattice}

This section is dedicated to the case of the triangular lattice with smallest fundamental domain, see Figure~\ref{fig:dual_graph_bipartite_ter} (left). First, we consider Fock's dimer model on $\GQ$ with parameters $k,\mapalpha,\rho,t$ satisfying Necessary Conditions IV, then Proposition~\ref{prop:triangular_always_ising}, proves that when $(\rho,t)\in\{(\frac{1}{2},0),(\frac{1}{2},\frac{1}{2}),(\frac{1}{2}+\frac{\tau}{2},\frac{1}{2})\}$, 
for every choice of angle map $\mapalpha\in\Acal_{\rho,\,\ell_t/2}$, Necessary Condition V is always satisfied. By Theorem~\ref{thm:main}, this implies that, if $\prod_{f\in F_1}\delta_f=1$, Fock's dimer model is always gauge equivalent to an Ising model. 

Next, in Section~\ref{subsec:txo_examples} we provide two examples of families of frustrated Ising model using the second part of Theorem~\ref{thm:main}. In particular, we give a parameterization of the isotropic frustrated Ising model on the triangular lattice.

%
%
%

Finally in Section~\ref{subsec:frustrated_triang}, we consider the frustrated Ising model on $G$ with coupling constants $-\Js_1,-\Js_2,-\Js_3$ as in Figure~\ref{fig:dual_graph_bipartite_ter}, where $\Js_1,\Js_2  ,\Js_3 > 0$; the frustration function is $-1$ at every face, and the product over the two faces is equal to 1. In Theorem~\ref{thm:antiferro:triangular:general}, we classify all frustrated Ising models on the triangular lattice using Theorem~\ref{thm:main}, and making the construction explicit. 


\begin{figure}[h]
  \centering
  \begin{overpic}[width=13cm]{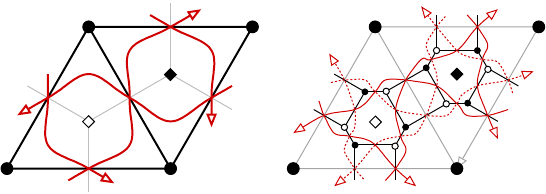} 
\put(37,34.5){\scriptsize $T_1$}
\put(0,14){\scriptsize $T_2$} 
\put(22,1){\scriptsize $T_3$} 
\put(37.5,10){\scriptsize $T_3$}
\put(22,31.5){\scriptsize $-\Js_1$}
\put(4,2){\scriptsize $-\Js_1$}
\put(16,22.5){\scriptsize $-\Js_2$}
\put(7,22.5){\scriptsize $-\Js_3$}
\put(42.7,22.5){\scriptsize $-\Js_3$}
\put(92,34){\scriptsize $\alpha_1$}
\put(98.3,22){\scriptsize $\beta_3$}
\put(51,11){\scriptsize $\alpha_2$}
\put(59,27){\scriptsize $\beta_1$}
\put(74,34){\scriptsize $\beta_1$}
\put(75,-0.5){\scriptsize $\alpha_3$}
\put(90,8){\scriptsize $\alpha_3$}
\put(61,-1){\scriptsize $\beta_2$}
  \end{overpic}
\caption{Left: $G$ is the triangular lattice with the smallest possible fundamental domain; oriented train-tracks of $\T$ are pictured in red. Right: the corresponding graph $\GQ$; oriented train-tracks $\vec{T}$ of $\T_{*,\circ}$, in orientation preserving bijection with those of $\T$, are pictured with full lines; oriented train-tracks $\cev{T}$ of $\T_{*,\bullet}$ are pictured with dotted lines.}\label{fig:dual_graph_bipartite_ter}
\end{figure}

To our knowledge, this is the first attempt to classify frustrated Ising models on the triangular lattice with general coupling constants.
As already mentioned in the introduction of this article, when two of the coupling constants are equal, it was derived by Houtappel~\cite{Hou50} (and confirmed with rigorous mathematical tools in~\cite{AthUel25}) 
that the model with coupling constants $-\beta \Js_1, -\beta \Js_2, -\beta \Js_3$ undergoes a phase transition for some $\beta = \beta_c \in (0,\infty)$ when $\Js_3 < \Js_2= \Js_1$ and on the contrary does not feature a phase transition for $\Js_3 = \Js_2 \leq \Js_1$. 

Our classification is quite different in spirit and more general: we are not restricted to the case where two coupling constants are equal in strength. We partition the set of all coupling constants $\Js \in \RR_+^3$ into one-dimensional families $(\Js_1(k), \Js_2(k), \Js_3(k))_{k^2 \in [0,1]}$. 
These one-dimensional families are of three types, with three possible kinds of spectral curves, thus featuring different properties.

These families are in general not of the form $(\beta \Js_1, \beta \Js_2, \beta \Js_3)_{\beta \in \RR_+}$ (except in the isotropic case $\Js_1 = \Js_2 = \Js_3$, see Proposition~\ref{prop:antiferro:triangular:isotropic}).
As we will see in Remark~\ref{rem:extends}, even though our classification is quite different in spirit from that of Houtappel, it is coherent and extends it in some sense, since we prove that the spectral curve associated with $(\beta \Js_1, \beta \Js_2, \beta \Js_3)$ is 
\begin{itemize}
\item of genus $1$ for all values of $\beta \in \RR_+$ when $\Js_3 < \Js_2 = \Js_1$ 
\item of genus $1$ for all $\beta \neq \beta_c$ and of genus $0$ at $\beta = \beta_c \in (0,\infty)$ when $\Js_3 = \Js_2 < \Js_1$.
\end{itemize}

\subsubsection{Notation and first result}

Let us start by recalling the labeling of the oriented train-tracks, and fix notation. The dual graph $G^*$ of $G$ is the hexagonal lattice, it is bipartite. As in Section~\ref{subsec:CaseB}, see also Section~\ref{sec:Necessary_III}, the set of train-tracks $\T$ of $G^*$ can be consistently oriented so that white vertices of $G^*$ are on the left, see Figure~\ref{fig:dual_graph_bipartite_ter} (left). The fundamental domain $G_1$ has three oriented train-tracks denoted by $\{T_1,T_2,T_3\}$. Moreover, the set of oriented train-tracks $\vec{\T}$ of $\GQ$ is split into $\vec{\T}_{*,\circ}\sqcup\vec{\T}_{*,\bullet}$, where $\vec{\T}_{*,\circ}$ is in orientation preserving bijection with $\T$. The fundamental domain $\GQ_1$ contains the three train-tracks $\{\vec{T}_1,\vec{T}_2,\vec{T}_3\}$ of $\vec{\T}_{*,\circ}$, and the three train-tracks $\{\cev{T}_1,\cev{T}_2,\cev{T}_3\}$ of $\vec{\T}_{*,\bullet}$, see Figure~\ref{fig:dual_graph_bipartite_ter} (right). Suppose that angles $(\hat{\alpha}_{T_i})_{i=1}^3$ are assigned to the oriented train-tracks $(T_i)_{i=1}^3$. Using the orientation preserving bijection between $\T$ and $\vec{\T}_{*,\circ}$, and the fact that $\alpha_{\cev{T}}=\alpha_{\vec{T}}+\rho$, we have for every $i\in\{1,2,3\}$,
\[
\alpha_{\vec{T}_i}={\hat{\alpha}}_{T_i},\quad \alpha_{\cev{T}_i}=\hat{\alpha}_{T_i}+\rho.
\]
Suppose that the parameters $k,\mapalpha,\rho,t$ satisfy Necessary Conditions IV, and that $(\rho,t)\in\{(\frac{1}{2},0),(\frac{1}{2},\frac{1}{2}),(\frac{1}{2}+\frac{\tau}{2},\frac{1}{2})\}$. Then, in both cases, by definition of the set $\Acal_{\rho,\ell_t/2}$, the angle map of the train-tracks of $\vec{\T}_{*,\circ}$, that is of $\T$, takes values in $\RR\,[\Lambda]$. We know prove the following.

%
%
%

\begin{prop}\label{prop:triangular_always_ising}
Consider the triangular lattice $G$ with fundamental domain $G_1$ as in Figure~\ref{fig:dual_graph_bipartite_ter}. Consider Fock's dimer model on the graph $\GQ$ satisfying Necessary Conditions IV, and suppose that $(\rho,t)\in\{(\frac{1}{2},0),(\frac{1}{2},\frac{1}{2}),(\frac{1}{2}+\frac{\tau}{2},\frac{1}{2})\}$.
Then, for every choice of angle map $\mapalpha\in\Acal_{\rho,\,\ell_t/2}=\{\mapalpha\,:\, \alpha_{\vec{T}_1},\alpha_{\vec{T}_2},\alpha_{\vec{T}_3}\in\RR~[\Lambda]\}$, Fock's dimer model is gauge equivalent to an Ising-dimer model. 
\end{prop}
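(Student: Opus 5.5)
The plan is to reduce the statement to checking Necessary Conditions V and then invoke the converse part of Theorem~\ref{thm:main}. Since Necessary Conditions IV are assumed and $(\rho,t)$ is one of the three admitted classes, the only missing ingredient is the sign condition~\eqref{equ:angle_V} at every face of $\GQ$ corresponding to a vertex $v$ of $G$. Once it is verified, Theorem~\ref{thm:main} provides the absolute value coupling constants $\Js$ and the frustration function $\delta$ of Table~\eqref{table3}. It remains to check $\prod_{f\in F_1}\delta_f=1$ so that Lemma~\ref{lem:reconstructing_Ising_model} yields edge-signs $\eps$.

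\textbf{Vertex condition.} The dual graph $G^*$ is the hexagonal lattice, which is bipartite. I would use computation~\eqref{equ:useful_computation} of Section~\ref{subsec:CaseB}. Every vertex $v$ of the triangular lattice has degree $6$, and the six train-tracks around $v$ are $T_1,T_2,T_3$, each appearing twice. In the labeling of Figure~\ref{fig:dual_graph_bipartite_bis}, the pairs $(T_i,T_i')$ around $v$ are, up to relabeling, $(T_a,T_b)$ with $\{a,b\}$ running over the three pairs of $\{1,2,3\}$. Formula~\eqref{equ:useful_computation} then expresses $\prod_{i=1}^6\sn(\betath_i-\alphath_i)$ as $(-1)^3$ times a product of six factors $\sn(\hat{\alpha}^k_{T_a}-\hat{\alpha}^k_{T_b})$.

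I expect each ordered difference to appear together with its negative, that is, the product has the form $\prod_{a<b}\sn(x_{ab})\sn(-x_{ab})=\prod_{a<b}\bigl(-\sn^2(x_{ab})\bigr)$. This gives $(-1)^3\cdot(-1)^3\prod\sn^2$, which would make the full product positive. That is the wrong sign, so some care is needed here. The main obstacle is precisely reading off from Figure~\ref{fig:dual_graph_bipartite_ter} the exact cyclic order of $T_1,T_1',\dots$ around $v$ and the direction in which each difference occurs.

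What should happen is that the six factors split as three differences and three \emph{shifted} negatives. Since $\sn(x+2K)=-\sn(x)$, with a shift coming from real lifts of the angles, the signs combine to $\prod\sn(\cdot)=-\prod\sn^2<0$ independently of the values of $\hat\alpha_{T_i}\in\RR$. I would verify this by a direct count of the factors $(-1)$ coming from oddness of $\sn$ and from~\cite[2.2.11]{Lawden}. If this count holds, condition~\eqref{equ:angle_V} is satisfied for every $\mapalpha\in\Acal_{\rho,\ell_t/2}$, with no monotonicity assumption needed, in contrast with Corollary~\ref{cor:angle_map_2}.

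\textbf{Frustration product.} The fundamental domain $G_1$ has two triangular faces, an up face $f_1$ and a down face $f_2$. For $\rho=\frac12+\frac\tau2$, Proposition~\ref{prop:Necessary_V} gives $\delta_{f_1}=\delta_{f_2}=-1$, so the product is $1$. For $\rho=\frac12$, the fourth row of Table~\eqref{table3} gives $\delta_f=-\sgn\prod_{i=1}^3\sn(\betah_i-\alphah_i)$. I would compute these products at the up and down faces using $\beta_{i+1}=\alpha_i+\rho$ and the same $\sn$ identities. The up and down triangles see the three train-tracks in opposite cyclic orientations, so the two products should differ by the sign change $\sn(-x)=-\sn(x)$ applied three times, after accounting for the $2K$ shifts. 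The goal is to show that $\delta_{f_1}\delta_{f_2}=1$ identically in the real angles. With both checks done, Theorem~\ref{thm:main} gives the gauge equivalence.
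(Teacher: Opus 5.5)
The reduction is the paper's own: by Theorem~\ref{thm:main}, everything reduces to the sign condition~\eqref{equ:angle_V} at the six-valent vertices, evaluated with~\eqref{equ:useful_computation}. That sign check, however, is the entire content of the proof, and you leave it open. Your stated expectation gives the wrong sign, and your proposed repair cannot work.

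\emph{Why the repair fails.} If each difference really occurred once in each orientation, the product would be $(-1)^3\prod_{a<b}\bigl(-\sn^2(x_{ab})\bigr)\geq 0$, whatever the lifts.
\begin{enumerate}
\item[$\circ$] The shifts by $4K\rho\in\{2K,\,2K+2iK'\}$ are already extracted as the prefactor $(-1)^{|v|/2}$ in~\eqref{equ:useful_computation}.
\item[$\circ$] The remaining arguments are differences of lifted real angles.
\item[$\circ$] Each angle enters exactly two consecutive factors around $v$, so moving a lift by $2K$ flips an even number of signs.
\end{enumerate}
So $\sn(x+2K)=-\sn(x)$ cannot produce ``shifted negatives''. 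The sign is purely combinatorial and is fixed by the cyclic order of the train-tracks around $v$.

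\emph{The missing fact.} That cyclic order is $T_a,T_b,T_c,T_a,T_b,T_c$. Opposite corners of the hexagonal face $v$ of $G^*$ carry lifts of the same oriented train-track. Geometrically they are parallel strips, and since the two corners have opposite colours, $v$ lies to the right of one strip and to the left of the other, so both point the same way. In the labelling of Figure~\ref{fig:dual_graph_bipartite_bis}, this is the substitution $T_1,T_3'\mapsto T_1$, $T_2,T_1'\mapsto T_2$, $T_3,T_2'\mapsto T_3$. Under it:
\begin{enumerate}
\item[$\circ$] the factors $\sn(\hat\alpha^k_{T_i'}-\hat\alpha^k_{T_i})$ become $\sn(\alpha^k_{\vec T_2}-\alpha^k_{\vec T_1})$, $\sn(\alpha^k_{\vec T_3}-\alpha^k_{\vec T_2})$ and $\sn(\alpha^k_{\vec T_1}-\alpha^k_{\vec T_3})$;
\item[$\circ$] the factors $\sn(\hat\alpha^k_{T_i}-\hat\alpha^k_{T_{i+1}'})$ reproduce the same three differences \emph{with the same orientation}.
\end{enumerate}
Hence the product is $-\sn^2(\alpha^k_{\vec T_2}-\alpha^k_{\vec T_1})\,\sn^2(\alpha^k_{\vec T_1}-\alpha^k_{\vec T_3})\,\sn^2(\alpha^k_{\vec T_3}-\alpha^k_{\vec T_2})<0$. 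The minus sign comes solely from $(-1)^{|v|/2}=-1$; this is the paper's proof.

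\emph{On the frustration product.} Your extra check that $\prod_{f\in F_1}\delta_f=1$ is worthwhile: the converse of Theorem~\ref{thm:main} needs it, and the paper's proof leaves it implicit. But your ``three sign changes'' heuristic points the wrong way again. For $\rho=\frac12$ one has $\sn(\betah_i-\alphah_i)=\sn(\alphah_{i-1}-\alphah_i+K)=\cd(\alphah_{i-1}-\alphah_i)$, and $\cd$ is even, so reversing orientation changes nothing. Both triangles of $G_1$ involve the same three unordered differences of the $\alpha_{\vec T_j}$, since the shift by $\rho$ cancels at the white triangle. The product is also lift-independent, because $\cd(x+2K)=-\cd(x)$ and each angle enters two factors. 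Hence $\delta_{f_1}=\delta_{f_2}$, as the paper later notes in its discussion of~\eqref{eq:sign:dual:triangular}.
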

\begin{proof}
By Theorem~\ref{thm:main}, we need to prove that Necessary Conditions V are satisfied, see Definition~\ref{def:necessary_V}. Using the notation of Figure~\ref{fig:primal_dual} (right), this amounts to showing that, for every face of $\GQ$ corresponding to a vertex $v$ of $G$, 
$
\prod_{i=1}^{|v|}\sn(\betath_{i}-\alphath_{i})\in \RR^-.
$
The computation of the product is exactly the computation~\eqref{equ:useful_computation} in the case where $|v|=6$,  
$T_1,T_3'\mapsto T_1$, $T_2,T_1'\mapsto T_2$, $T_3,T_2'\mapsto T_3$, see Figure~\ref{fig:dual_graph_bipartite_bis}, so that we obtain:
\begin{align*}
\prod_{i=1}^{|v|}\sn(\betath_{i}-\alphath_{i})&=
(-1)^{|v|/2}\prod_{i=1}^{|v|/2}\sn(\hat{\alpha}^k_{T_i'}-\hat{\alpha}^k_{T_i})\sn(\hat{\alpha}^k_{T_i}-\hat{\alpha}^k_{T_{i+1}'})\\
&=-\sn^2(\alpha_{\vec{T}_2}^k-\alpha_{\vec{T}_1}^k)\sn^2(\alpha_{\vec{T}_1}^k-\alpha_{\vec{T}_3}^k)\sn^2(\alpha_{\vec{T}_3}^k-\alpha_{\vec{T}_2}^k) \in \RR^-.\qedhere
\end{align*}
%
\end{proof}

\subsubsection{Two natural families of frustrated Ising models}\label{subsec:txo_examples}
Let us give concrete examples of families of frustrated Ising models that can be parameterized by Fock's dimer model. The proof of both propositions is postponed to Section~\ref{subsec:proofs}.

\begin{prop}\label{prop:antiferro:triangular:isotropic}
Fock's dimer model with parameters $k^2\in[0,1), \rho=\frac{1}{2}+\frac{\tau}{2}$, angle-map  
$\alpha_{\vec{T}_1} = 0, \alpha_{\vec{T}_2} = \frac{1}{3}, \alpha_{\vec{T}_3} = \frac{2}{3}$ is gauge equivalent to the Ising-dimer model arising from the isotropic frustrated Ising model with coupling constants $-\Js$ given by 
$$
\sinh(2\Js) = \frac{1}{k}\ds(2K/3)
$$
As $k$ runs from $0$ to $1$, this parameterizes all possible coupling constants $\Js \in (0,\infty)$.
This family of models features no \emph{algebraic phase transition} for finite values of the coupling constants since the spectral curve remains of genus one, however the model is \emph{critical} when $k=0$, corresponding to infinite values of the coupling constants, in the sense that the spectral curve has genus $0$. 
\end{prop}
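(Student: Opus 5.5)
The plan is to apply the converse part of Theorem~\ref{thm:main} in the third column $(\rho,t)=(\frac12+\frac\tau2,\frac12)$, where $G^*$, the hexagonal lattice, is bipartite. Proposition~\ref{prop:triangular_always_ising} tells us that any angle map with $\alpha_{\vec T_i}\in\RR~[\Lambda]$ satisfies Necessary Conditions~V. The chosen map $(0,\frac13,\frac23)$ is real and periodic, so Fock's dimer model is gauge equivalent to an Ising-dimer model. The Ising model is obtained from the third column of Table~\eqref{table3}, which gives $\delta_f=-1$ at every face. The triangular fundamental domain has two faces, so $\prod_{f\in F_1}\delta_f=1$ and Lemma~\ref{lem:reconstructing_Ising_model} applies. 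A fully frustrated model on the triangular lattice is gauge equivalent to the one with all signs equal to $-1$, since each triangle then has product $-1$. This yields coupling constants $-\Js_e$.

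Next I compute $\Js_e$ from $\sinh(2\Js_e)=\frac{k'}{k}|\nc(\Re(\betah_e-\alphah_e))|$. For each square face I identify $\alpha_e,\beta_e$ from Figure~\ref{fig:dual_graph_bipartite_ter}. In the third column $\Re(\beta_e-\alpha_e)$ is a difference $\hat\alpha_{T_a}-\hat\alpha_{T_b}$ (the $\tau/2$ shift only contributes to the imaginary part), so it lies in $\{\pm\frac13,\pm\frac23\}$ modulo $1$. Since $|\nc|$ is even and has real period $4K$ with $\cn(x+2K)=-\cn x$, all three edge types give the common value $\frac{k'}{k}|\nc(2K/3)|$, which makes the model isotropic. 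I then apply the half-period identity $\cn(u+K)=-k'\sd(u)$ with $u=-K/3$, so that $\nc(2K/3)=-\frac{1}{k'}\ds(-K/3)$. To match the stated form $\frac1k\ds(2K/3)$ I use the identity relating $\ds(K/3)$ and $\ds(2K/3)$; equivalently I verify $\frac{k'}{k}\nc(2K/3)=\frac1k\ds(2K/3)$, that is $k'\,\sn(2K/3)=\dn(2K/3)\cn(2K/3)$. This is the trisection relation: setting $x=K/3$ in the addition formula gives $\cn(2x)=k'\sn(2x)/\dn(2x)$ via $\cn(K-x)=k'\sd(x)$ and $\dn(K-x)=k'\nd(x)$. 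This sign and identity bookkeeping is the main obstacle. I would work it out carefully with \cite[Ch.~2]{Lawden}, checking at $k\to0$ that the right-hand side tends to $+\infty$.

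For surjectivity, $\Js(k)$ is continuous on $k\in(0,1)$. As $k\to0$, $\ds(2K/3)\to1/\sin(\pi/3)$ while $1/k\to\infty$, so $\Js\to\infty$. As $k\to1$, $K\to\infty$ and $\ds(2K/3)\sim\operatorname{csch}(2K/3)\to0$; I need this to decay faster than $k\to1$, which holds because $\operatorname{csch}$ decays exponentially in $K$. Hence $\Js\to0$, and the intermediate value theorem gives every $\Js\in(0,\infty)$. Monotonicity is not needed for this surjectivity claim.

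Finally, for every $k^2\in(0,1)$ the converse part of Theorem~\ref{thm:main} shows that the spectral curve has genus~1, so there is no algebraic transition at any finite $\Js$. At $k=0$ the torus degenerates and the curve has genus~$0$, which by Corollary~\ref{cor:algebraic_phase_transition} (third column, $\Js^{\mathrm{crit}}=\infty$) corresponds to infinite coupling constants.
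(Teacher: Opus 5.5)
Your overall route matches the paper's. You apply the converse of Theorem~\ref{thm:main} in the third column with $t=\frac12$, get Necessary Conditions~V from Proposition~\ref{prop:triangular_always_ising}, use $\prod_{f\in F_1}\delta_f=1$ with all signs equal to $-1$, conclude surjectivity by limits and the intermediate value theorem, and read off the genus from the converse part and Corollary~\ref{cor:algebraic_phase_transition}. The central computation of $\Js$, however, is wrong.

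You drop the real part of $\rho=\frac12+\frac\tau2$. In this column, at every square face one of $\alpha_e,\beta_e$ is the angle of a train-track of $\vec{\T}_{*,\circ}$ and the other of a train-track of $\vec{\T}_{*,\bullet}$. So exactly one of them carries the shift $\rho$. That shift produces the imaginary part $\frac\tau2$ you kept, and also contributes $\frac12$ to the real part: $\Re(\beta_e-\alpha_e)=\alpha_{\vec{T}_a}-\alpha_{\vec{T}_b}\pm\frac12$. This is Equation~\eqref{equ:beta_alpha}, $\Re(\betah_i-\alphah_i)=\gamma_i+K$. With angles $0,\frac13,\frac23$ it gives $\Re(\betah_e-\alphah_e)\equiv\pm K/3 \pmod{2K}$, not $\pm 2K/3$. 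The stated value then follows at once: $\frac{k'}{k}\nc(K/3)=\frac1k\ds(2K/3)$ by $\cn(K-u)=k'\sd(u)$ with $u=2K/3$. The paper phrases this through Equation~\eqref{eq:system}: $s_i=\frac1k|\ds(\gamma_i)|$ with $\gamma_1=\frac{4K}{3}$, $\gamma_2=\gamma_3=-\frac{2K}{3}$, and $|\ds(4K/3)|=|\ds(-2K/3)|=\ds(2K/3)$.

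Your value $\frac{k'}{k}\nc(2K/3)$ equals $\frac1k\ds(K/3)$. Since $\ds$ is strictly decreasing on $(0,K)$, this is strictly larger than $\frac1k\ds(2K/3)$, so no identity can reconcile the two. The ``trisection relation'' $k'\sn(2K/3)=\dn(2K/3)\cn(2K/3)$ you invoke is false: at $k=0$ it reads $\sin(\pi/3)=\cos(\pi/3)$. The slip in its derivation is that, for $x=K/3$, $\cn(2x)=\cn(K-x)=k'\sd(x)$ has argument $x$ on the right, not $2x$. Your planned check that the right-hand side tends to $+\infty$ as $k\to0$ cannot catch this, because both candidates blow up like $1/k$. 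Comparing $k\sinh(2\Js)$ would catch it: its limits are $2$ versus $2/\sqrt3$.

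Once the correct value is used, your isotropy, surjectivity and genus arguments go through. For $k\to1$, note that $1/k\to1$, so you only need $\ds(2K/3)\to0$, with no comparison of decay rates. The paper gets this cleanly from monotonicity: $\ds(2K/3)\le\ds(K/2)=\sqrt{k'(1+k')}$.
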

The amoeba and real locus of this model are provided on the right of Figure~\ref{fig:triangular_lattice_curves} for some $k^2 \in (0,1)$.
This result sheds a new light on the physics literature on the isotropic antiferromagnetic Ising model on the triangular lattice, and in particular on~\cite{Ste70a,Ste70b} 
which obtains that at zero temperature, the model is \emph{critical} in the sense that correlation have polynomial decay. 



We also provide an example of a family of anisotropic frustrated Ising model featuring an algebraic phase transition.
\begin{prop}\label{prop:triangular:anisotropic}
    Fock's “dimer” model with $\rho = \frac{1}{2}$, $t=0$, $k^2 \in [0,1)$ and $\alpha_{\vec{T}_1} = 0, \alpha_{\vec{T}_2} = \frac{1}{6}, \alpha_{\vec{T}_3} = \frac{1}{3}$ is gauge equivalent to the anisotropic frustrated model with coupling constants $\eps\Js$ such that the frustration function $\delta_f = -1$ at every face and with
    $$
      -\Js_{e_1} = -\Js_{e_2} = - \frac12\argsinh(\sc(2K/3)) \quad ; \quad -\Js_{e_3} = - \frac12\argsinh(\sc(K/3)).
    $$
    This family of models features an \emph{algebraic phase transition} since at $k=0$ the coupling constants have well-defined real positive values and the spectral degenerates from genus $1$ to genus $0$. 
\end{prop}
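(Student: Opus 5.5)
The plan is to feed the chosen parameters into the converse direction of Theorem~\ref{thm:main}, read off the coupling constants and the frustration from the first column of Table~\eqref{table3}, and then examine the limit $k\to0$. First the hypotheses. Take $\tau\in i\RR^+$, $\rho=\frac12$, $t=0$, so $\ell_t=0$. Put $\alpha_{\vec T_i}=\hat\alpha_{T_i}\in\{0,\frac16,\frac13\}$ and $\alpha_{\cev T_i}=\alpha_{\vec T_i}+\frac12$. All angles are real, so $\mapalpha\in\Acal_{\frac12,0}$ and Necessary Conditions III and IV hold. Proposition~\ref{prop:triangular_always_ising} then gives Necessary Conditions V for every real choice of the three angles. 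The frustration has value $-1$ on each of the two faces of $G_1$, so $\prod_{f\in F_1}\delta_f=1$ and Lemma~\ref{lem:reconstructing_Ising_model} supplies the edge signs. It remains to compute $\Js_e$ and $\delta_f$ explicitly.

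For the couplings I use $\sinh(2\Js_e)=|\sc(\betah_e-\alphah_e)|$. Each edge $e_i$ of the triangular lattice is crossed by the two train-tracks other than $T_i$, so $\beta_e-\alpha_e$ is a difference of two of the angles, possibly shifted by $\rho=\frac12$. Reading the labels off Figure~\ref{fig:dual_graph_bipartite_ter} gives $\betah-\alphah\in\{\pm\frac{2K}{3},\pm\frac{4K}{3},\pm\frac{K}{3}\}$ modulo $2K$ (and modulo the $4K$-periodicity of $\sn$). I then use $|\sc(2K-x)|=|\sc(x)|$, which holds because $\sn(2K-x)=\sn x$ and $\cn(2K-x)=-\cn x$, and $\sc(x+2K)=\sc(x)$. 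This should reduce the three edge values to $|\sc(2K/3)|$, $|\sc(2K/3)|$ and $|\sc(K/3)|$, in an order that I fix with the figure labels. Both of these are positive reals since $0<K/3,2K/3<K$.

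For the frustration, $\delta_f=-\sgn\prod_{i=1}^3\sn(\betah_i-\alphah_i)$ with $\beta_i-\alpha_i=\alpha_{i-1}-\alpha_i+\frac12$, so each factor equals $\sn(\hat x_i+2K)=-\sn(\hat x_i)$, where $\hat x_i=2K(\alpha_{i-1}-\alpha_i)$. The differences around an up-face are $0-\frac13=-\frac13$, $\frac13-\frac16=\frac16$ and $\frac16-0=\frac16$, taken cyclically in the order fixed by the figure. These give $\sn(-2K/3)<0$, $\sn(K/3)>0$ and $\sn(K/3)>0$. With the three minus signs from the $2K$ shift, the product is positive, so $\delta_f=-1$. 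The down-faces are handled the same way. This sign bookkeeping, in particular getting the cyclic order and orientation of the train-tracks around each face right, is the step I expect to be the main obstacle. I will double-check it against the fact that a positive $\delta$ would contradict the required all-frustrated answer.

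Finally, as $k\to0$ we have $K\to\pi/2$ and $\sc\to\tan$. The couplings therefore converge to the finite positive values $\frac12\argsinh\tan(\pi/3)$ and $\frac12\argsinh\tan(\pi/6)$, which are the entries of Corollary~\ref{cor:algebraic_phase_transition}. By the degeneration of $\TT(\tau)$ to a cylinder invoked before that corollary, the curve drops to genus $0$, so the family has an algebraic phase transition at finite coupling constants.
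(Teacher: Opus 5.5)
Your route is the same as the paper's: the converse of Theorem~\ref{thm:main}, Necessary Conditions V from Proposition~\ref{prop:triangular_always_ising}, then the first column of Table~\eqref{table3}. However, the frustration step contains a real error. Since $\alphah=2K\alpha$, the shift $\rho=\frac12$ becomes $2K\rho=K$, not $2K$. So $\betah_i-\alphah_i=\hat x_i+K$, and each factor is $\sn(\hat x_i+K)=\cd(\hat x_i)$, not $-\sn(\hat x_i)$.

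Your expression $\prod_i\bigl(-\sn(\hat x_i)\bigr)$ changes sign when the cyclic order around the face is reversed, since it has three odd factors. It therefore cannot be the right invariant. With the labeling $\alpha_i=\alpha_{\vec T_i}$ around a black vertex used in Section~\ref{subsec:proofs}, the differences $\alpha_{i-1}-\alpha_i$ are $(\frac13,-\frac16,-\frac16)$. Your recipe then gives $-\sn(2K/3)\sn^2(K/3)<0$, i.e.\ $\delta_f=+1$, a non-frustrated face. You got $-1$ only because your differences $(-\frac13,\frac16,\frac16)$ correspond to the reversed cyclic order, so two mistakes cancel. Checking the sign against the desired all-frustrated answer is circular. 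Used consistently, the $2K$ shift would also spoil the couplings: $|\sc(\hat x_i+2K)|=|\sc(\hat x_i)|$ puts $\sc(K/3)$ on two edges and $\sc(2K/3)$ on one, the reverse of the statement.

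The correct argument, which is the paper's, is short and removes the orientation issue you flagged as the main obstacle. All $|\hat x_i|\in\{K/3,2K/3\}$ are $<K$. For $k^2\in[0,1)$ and real $x$ with $|x|<K$ one has $\cn(x)>0$ and $\dn(x)>0$, so every factor $\cd(\hat x_i)$ is positive regardless of orientation or cyclic order. Hence the product is positive and $\delta_f=-1$ at every face. In the paper's notation, $\sn(5K/3)=\sn(K/3)>0$ and $\sn(2K/3)>0$, so $(k,\gamma_1,\gamma_2,\gamma_3)\in\Gamma_+$.

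For the couplings, do not defer the assignment to a reading of the figure. Use the same relation $\betah_i-\alphah_i=\gamma_i+K$ around one face, with $\gamma=(\frac{2K}{3},-\frac K3,-\frac K3)$. This gives $\sc^2(5K/3)=\sc^2(K/3)$ on one edge and $\sc^2(2K/3)$ on the other two. Since $|\sc(x+K)|=\frac{1}{k'|\sc(x)|}$ is even in $x$, this is also orientation-independent.

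The remaining points are fine: $\mapalpha\in\Acal_{\frac12,0}$, $\prod_{f\in F_1}\delta_f=1$, and the finite limits $\frac12\argsinh\tan(\pi/3)$ and $\frac12\argsinh\tan(\pi/6)$ as $k\to0$.
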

The amoeba and real locus of this model are provided in the middle of Figure~\ref{fig:triangular_lattice_curves} for some $k^2 \in (0,1)$.

\subsubsection{Classification of the frustrated Ising models}\label{subsec:frustrated_triang}
Consider, on the one hand, the Ising model on the triangular lattice with coupling constants $\eps_1\Js_1$, $\eps_2\Js_2,\eps_3\Js_3$, and the corresponding Ising-dimer model on $\GQ$. 
Denote by: 
\[
  s_i: = \sinh(2\Js_i)>0 \quad ; \quad i \in \{1,2,3\}.
\]

Consider, on the other hand, Fock's Ising-dimer model with parameters $k,\mapalpha,\rho,t$ satisfying Necessary Conditions IV, and suppose that $(\rho,t)\in\{(\frac{1}{2},0),(\frac{1}{2},\frac{1}{2}),(\frac{1}{2}+\frac{\tau}{2},\frac{1}{2})\}$.
Denote by
\[
  \gamma_1  := \alphah_{\vec{T}_3} - \alphah_{\vec{T}_1}, \gamma_2 := \alphah_{\vec{T}_1} - \alphah_{\vec{T}_2}, \gamma_3 := \alphah_{\vec{T}_2} - \alphah_{\vec{T}_3}.
\]
By Theorem~\ref{thm:main} and Proposition~\ref{prop:triangular_always_ising}, for every choice of angle map $\mapalpha\in\Acal_{\rho,\,\ell_t/2}$, Fock's dimer model is always gauge equivalent to an Ising-dimer model.
Let us now describe the maps induced on the parameters of the models. Let
\[
  \Gamma = \Bigl\{(k,\gamma_1, \gamma_2, \gamma_3) \in [0,1) \times (\RR/4K\ZZ)^3 : \sum_{i=1}^3 \gamma_i = 0\Bigr\},
\]
and
\[
  \Gamma_\pm = \Bigl\{(k,\gamma_1, \gamma_2, \gamma_3) \in \Gamma: \prod_{i=1}^3 \sn(\gamma_i +K) \in \RR_\pm\Bigr\}.
\]
Let $\cS = (\RR^+)^3$ be the set of possible values for $(s_1,s_2,s_3)$, and let 
$\phi_1, \phi_2, \phi_3: \Gamma \to \cS$ be defined by
\[
  \begin{aligned}
    \phi_1(k,\gamma_1, \gamma_2, \gamma_3) &= \Bigl\{\frac{1}{k'|\sc(\gamma_1)|}, \frac{1}{k'|\sc(\gamma_2)|}, \frac{1}{k'|\sc(\gamma_3)|}\Bigr\}\\
    \phi_2(k,\gamma_1, \gamma_2, \gamma_3) &= \Bigl\{\frac{1}{|\sc(\gamma_1)|}, \frac{1}{|\sc(\gamma_2)|}, \frac{1}{|\sc(\gamma_3)|}\Bigr\}\\
    \phi_3(k,\gamma_1, \gamma_2, \gamma_3) &= \Bigl\{\frac{|\ds(\gamma_1)|}{k}, \frac{|\ds(\gamma_2)|}{k}, \frac{|\ds(\gamma_3)|}{k}\Bigr\}.
  \end{aligned}
\] 
By Theorem~\ref{thm:main} and Proposition~\ref{prop:triangular_always_ising} (see the proof of the following theorem for details), Fock's dimer model with parameters $k,\mapalpha,\rho,t$ is gauge equivalent to an Ising model with absolute value coupling constants $(\Js_1,\Js_2,\Js_3)$, given through $(s_1,s_2,s_3)$ by
\[
(s_1,s_2,s_3)=\phi_i(k,\gamma_1,\gamma_2,\gamma_3),
\]
with 
\[
\begin{cases}
i=1, \text{ when } (\rho,t)=(\frac{1}{2},0) \text{ and } (k,\gamma_1,\gamma_2,\gamma_3) \in \Gamma_\pm\\
i=2, \text{ when } (\rho,t)=(\frac{1}{2},\frac{1}{2}) \text{ and }
(k,\gamma_1,\gamma_2,\gamma_3) \in \Gamma_\pm \\
i=3, \text{ when } (\rho,t)=(\frac{1}{2}+\frac{\tau}{2},\frac{1}{2}) \text{ and } (k,\gamma_1,\gamma_2,\gamma_3) \in \Gamma.
\end{cases} 
\]
Moreover, in Case $i$, with $i\in\{1,2\}$, the Ising model is non-frustrated, resp. frustrated, when $(k,\gamma_1,\gamma_2,\gamma_3)\in\Gamma_-$, resp. $\Gamma_+$. In Case 3., it is frustrated. 

%

The next theorem is the main result of this section. We restrict to the frustrated case, \emph{i.e.}, to the case where $\eps_i=-$ for $i\in\{1,2,3\}$. Given coupling constants $-\Js_1,-\Js_2,-\Js_3$, we identify the parameters of Fock's corresponding dimer models. Before stating our result, we need some additional notation.
For $(s_1,s_2,s_3) \in \cS$, let $(s_{(1)}, s_{(2)}, s_{(3)})$ denote the associated ordered triple, that is 
\[
  \{s_1,s_2,s_3\} = \{s_{(1)}, s_{(2)}, s_{(3)}\} \text{ and } s_{(1)} \geq s_{(2)} \geq s_{(3)}.
\]
Let us also introduce, for $s_1, s_2 \in \RR^+$,
$$
  S(s_1,s_2) = \max\left(\frac{s_2 s_1 -1}{s_2+s_1 },0\right) \quad ; \quad T(s_1,s_2) =  s_2 s_1 \frac{\sqrt{s_2^2+1}\sqrt{s_1 ^2+1}-1}{s_2^2\sqrt{s_1 ^2+1}+s_1 ^2\sqrt{s_2^2+1}}.
$$
Then we partition $\cS$ into 
\[
  \begin{aligned}
    \cS_1 &= \{(s_1, s_2,s_3) \in \cS: s_{(3)} \leq S(s_{(1)}, s_{(2)})\}\\
    \cS_2 &= \{(s_1, s_2,s_3) \in \cS: S(s_{(1)}, s_{(2)}) \leq s_{(3)} < T(s_{(1)}, s_{(2)})\}\\
    \cS_3 &= \{(s_1, s_2,s_3) \in \cS: T(s_{(1)}, s_{(2)}) < s_{(3)}\}.
  \end{aligned}
\]
Note that this is a partition of $\cS$ except for the negligible set of points points with $s_{(3)} = S(s_{(1)}, s_{(2)})$ which belong to $\cS_1 \cap \cS_2$ and those with $s_{(3)} < T(s_{(1)}, s_{(2)})$ which belong to none of these sets.
We are now able to state the main result of this section.
\begin{thm}\label{thm:antiferro:triangular:general}
  For all coupling constants $\Js_1, \Js_2, \Js_3$, there exists $k,\mapalpha,\rho,t$ satisfying Necessary Conditions IV, with $(\rho,t)\in\{(\frac{1}{2},0),(\frac{1}{2},\frac{1}{2}),(\frac{1}{2}+\frac{\tau}{2},\frac{1}{2})\}$,
  such that the Ising-dimer model on $\GQ$ arising from the frustrated Ising dimer model with coupling constants $-\Js_1, -\Js_2, -\Js_3$ is gauge equivalent to Fock's dimer model with parameters $k,\mapalpha,\rho,t$.
  More precisely, 
  \[
    \phi_1(\Gamma_+) = \cS_1 \quad ; \quad \phi_2(\Gamma_+) = \cS_2 \quad ; \quad \phi_3(\Gamma) = \cS_3.
  \]
  This is a partition of all possible frustrated Ising models on the triangular lattice in three disjoint families of parameters featuring different properties, namely the properties of the spectral curve and the existence (or not) of an algebraic phase transition.
\end{thm}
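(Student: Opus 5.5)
The argument reduces, via Theorem~\ref{thm:main} and Proposition~\ref{prop:triangular_always_ising}, to computing the three images $\phi_1(\Gamma_+)$, $\phi_2(\Gamma_+)$ and $\phi_3(\Gamma)$ explicitly. For every point of $\Gamma$ and every choice of $(\rho,t)$, Fock's model is gauge equivalent to an Ising model. Its absolute coupling constants are $s_i=\sinh(2\Js_i)$, given by the $\phi_i$. Its frustration is read off from the sign of $\prod_i\sn(\gamma_i+K)$, obtained by rewriting the row $\delta_f$ of Table~\eqref{table3} with $\beta_i-\alpha_i=\alpha_{i-1}-\alpha_i+\rho$ and $\rho=\frac12$ (so that the argument is shifted by $2K\rho=K$). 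Frustration of the triangular model means $\delta_f=-1$ at both faces, which is $\Gamma_+$ in Cases 1 and 2 and all of $\Gamma$ in Case 3. The first step is therefore to verify carefully that the three maps agree with the third row of Table~\eqref{table3} after the shift, e.g.\ $\sc(\gamma+K)=-\frac{1}{k'}\cs(\gamma)$, so that $k'|\sc(\gamma+K)|=1/|\sc(\gamma)|$. Existence then follows once the three image identities are established, since $\cS_1\cup\cS_2\cup\cS_3$ covers $\cS$ up to the boundary sets.

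The main work is to compute each image; I would do this with a one-parameter analysis in $k$. Fix the constraint $\gamma_1+\gamma_2+\gamma_3=0$. For Case 3, write $u_i=|\ds(\gamma_i)|/k$. Jacobi addition formulas give an algebraic relation among $\ds(\gamma_i)$ when $\sum\gamma_i=0$, of the form $\sum\ns\ds\,\cdots$; more concretely, one uses the identity relating $\cs,\ds,\ns$ of three arguments summing to zero. Eliminating the $\gamma_i$ yields a single equation $F_3(s_1,s_2,s_3;k)=0$. Similarly, Cases 1 and 2 use $\cs$ of three arguments summing to zero, which gives relations $F_{1,2}(s;k)=0$.

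For each fixed $s$ I would then study the solvability of $F_i(s;k)=0$ for $k^2\in[0,1)$ together with the sign constraint defining $\Gamma_+$. The boundaries come from the endpoints. At $k=0$, Case 3 degenerates to $s=\infty$, while Cases 1 and 2 give the trigonometric relation $\sum\arctan$-type condition; this is where the thresholds $S(s_{(1)},s_{(2)})$ and $T(s_{(1)},s_{(2)})$ should appear. Specifically, $S=(s_1s_2-1)/(s_1+s_2)$ is exactly $\cot$ of a sum of arccotangents, i.e.\ the $k=0$ locus in Case 2, and $T$ should be the $k\to1$ (hyperbolic) limit separating Case 2 from Case 3. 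A monotonicity argument in $k$ (intermediate value theorem plus monotonicity of $s_{(3)}$ as a function of $k$ along the level set in $s_{(1)},s_{(2)}$) would then show that every $s$ in the stated region is attained exactly once and nothing outside is.

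The obstacle is this last step: proving monotonicity and identifying correctly which limit produces $T$, including the ordering $s_{(1)}\ge s_{(2)}\ge s_{(3)}$ and the sign condition for $\Gamma_+$, which forces the largest $\gamma$ to sit in a specific quarter-period. I would first treat the symmetric slice $s_1=s_2$ numerically and symbolically, to fix which case matches which side of $S$ and $T$, and then parametrize by $(\gamma_1,\gamma_2)$ and compute the Jacobian of $\phi_i$ to get injectivity.
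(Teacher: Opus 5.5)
Your reduction matches the paper. By Theorem~\ref{thm:main} and Proposition~\ref{prop:triangular_always_ising}, everything comes down to computing $\phi_1(\Gamma_+)$, $\phi_2(\Gamma_+)$ and $\phi_3(\Gamma)$. Your identifications are also correct:
\begin{itemize}
\item frustration corresponds to $\Gamma_+$ via the shift by $K$, and $k'|\sc(\gamma+K)|=1/|\sc(\gamma)|$;
\item $S$ is the common $k=0$ value of Cases 1 and 2;
\item $T$ is the $k\to 1$ value where Cases 2 and 3 meet.
\end{itemize}
But the proposal stops exactly where the proof begins. The three image identities are the whole content of the theorem, and you leave their derivation as ``the obstacle'', to be settled by numerics on the slice $s_1=s_2$ and a Jacobian computation for injectivity. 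That route does not work. The maps $\phi_i$ are not injective: $\gamma\mapsto-\gamma$ preserves $\sum\gamma_i=0$, every $|\sc(\gamma_i)|$ and $|\ds(\gamma_i)|$, and also $\Gamma_+$, since $\sn(K-u)=\sn(K+u)$. Injectivity is not needed anyway, and a Jacobian computation is local, so by itself it does not determine the image. Likewise, a symmetric relation $F_i(s;k)=0$ in the absolute values $s_i$ loses exactly the sign information that decides whether a solution lies in $\Gamma_+$ and respects $s_{(3)}\le s_{(2)}$. You would still have to recover those branches.

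What is missing is the explicit one-variable reduction the paper uses. Order $s_1\ge s_2\ge s_3$, fix $k$, and invert on a quarter period for $i=1,2$:
\begin{itemize}
\item in Cases 1--2, $\gamma_i+K=\pm d_i$ with $d_i\in(0,K)$, where $\sc(d_i)=s_i$ in Case 1 and $\sc(d_i)=s_i/k'$ in Case 2;
\item in Case 3, $\gamma_i=\pm d_i$ with $\ds(d_i)=ks_i$.
\end{itemize}
Then $\sn,\cn,\dn(d_i)$ are explicit algebraic functions of $s_i$ and $k'$. The addition formula for $\sc$, resp.\ $\ds$, turns the third equation into $s_3=f_\eps(k')$ with $f_\eps$ explicit and $\eps=\pm1$ the relative sign.

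The branches must then be pruned by hand. In Cases 1--2 the sign condition becomes $\eps\,\cd(d_2+\eps d_1)<0$. For $\eps=-1$ one checks $f_{-1}\ge s_2$, which the ordering excludes. For $\eps=1$ the condition fixes the sign of $\sc(d_1+d_2)$. Monotonicity of the explicit $f_1$ then gives $[0,S]$ in Case 1 and $[S,T)$ in Case 2. In Case 1, $f_1$ is increasing and negative near $k=1$, so the lower end $0$ comes from positivity, not from an endpoint; in Case 2, $f_1$ is decreasing.

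Your picture that all boundaries come from the endpoints $k=0,1$ fails in Case 3. There $|\ds|\ge k'$ forces $k'/k\le s_2$, so for fixed $(s_1,s_2)$ the parameter $k$ never reaches $0$. Moreover $s_3$ is a two-valued function of $k$: the branches $f_{\pm1}$ are glued at $k'/k=s_2$, with $f_1$ increasing and $f_{-1}$ decreasing in $k'/k$. The upper bound of $\cS_3$ is the ordering constraint $s_3\le s_2$, using $T<s_2<f_{-1}(0)$.

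Finally, $s_{(3)}=T$ corresponds to $k=1$ and lies in none of the three images. So ``existence follows up to boundary sets'' does not give existence there; the theorem as stated shares this caveat.
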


\begin{rem}\label{rem:extends}\leavevmode
\begin{enumerate}
\item The proof of Theorem~\ref{thm:antiferro:triangular:general} is postponed to Section~\ref{subsec:proofs}
\item In the two cases corresponding to $\rho = \frac{1}{2}$, $t$ is not located on the same part of the real locus.
In the third case $\rho = \frac{1}{2} + \frac{\tau}{2}$, the curve has different properties since the zeros and poles are located on two different components of the real locus.
This explains why we expect the corresponding sets of Ising models to be disjoint.

However, when $s_{(3)} = S(s_{(1)}, s_{(2)})$, two Fock's dimer model correspond to this Ising model, one with $\rho = \frac{1}{2}, t=0$ and one with $\rho = \frac{1}{2}, t= \frac{1}{2}$.
We will see in the proof that in both cases, $k=0$: the genus one curve degenerates into a genus zero curve, and in this case the effect of the parameter $t$ disappears.
The same thing happens for the non-frustrated case.
\item 
The maps $\phi_i$ are not bijective due to the absolute values in the definition, but on well-chosen domains of $\Gamma$ where the signs of $\gamma_1, \gamma_2, \gamma_3$ is controlled it becomes a bijection.
This is contained in the proof but rather painful to write down explicitly.
\item Our classification extends the classification of Eggarter~\cite{Egg75}, in the following way. 
  By definition of $\S_3$, the models with $\Js_3 = \Js_2 < \Js_1$ are such that $(s_1,s_2,s_3) \in \S_3$, hence the spectral curve associated with $\beta \Js$ is of genus $1$ for all values of $\beta >0$.
  On the contrary, for $\Js_3 < \Js_2 = \Js_1$, one can show by direct computation that there exists a unique $\beta_c \in \RR_+$ such that, denoting $s_i(\beta) = \sinh(2\beta \Js_i)$, we have $(s_1(\beta_c), s_2(\beta_c), s_3(\beta_c)) \in \S_1 \cap \S_2$, which as we will see in the proof corresponds to $k=0$, i.e., the associated spectral curve has genus $0$.
\item Proposition~\ref{prop:antiferro:triangular:isotropic} can be seen as a corollary of this classification results. 
In fact, if $\Js_1 = \Js_2 = \Js_3$, then $s = s_{(1)} = s_{(2)} = s_{(3)}$ thus $T(s_{(1)},s_{(2)}) < s_{(3)}$ and by definition $(s_1,s_2,s_3) \in \S_3$.
Thus the parameterization of the isotropic Ising model must satisfy $(\rho, t) = (\frac{1}{2} + \frac{\tau}{2})$ and one recover the values of $k$ and of the angle map by solving
$$
  s = s_i = \frac{|\ds(\gamma_i)|}{k} \quad ; \quad \gamma_1 + \gamma_2 + \gamma_3 = 0.
$$
\end{enumerate}
\end{rem}

\subsubsection{Proof of the results}\label{subsec:proofs}


Before proving Proposition~\ref{prop:antiferro:triangular:isotropic}, Proposition~\ref{prop:triangular:anisotropic}, and Theorem~\ref{thm:antiferro:triangular:general}, we make a few general considerations.
Consider Fock's Ising-dimer model with parameters $k,\mapalpha,\rho,t$ satisfying Necessary Conditions IV and such that $(\rho,t)\in\{(\frac{1}{2},0),(\frac{1}{2},\frac{1}{2}),(\frac{1}{2}+\frac{\tau}{2},\frac{1}{2})\}$. Recall that by Proposition~\ref{prop:triangular_always_ising}, for every choice of angle map $\mapalpha\in\Acal_{\rho,\,\ell_t/2}$, Fock's dimer model always corresponds to an Ising model.

Consider a face of $\GQ$ corresponding to a black vertex of $G^*$, then recalling the notation $\alpha_i,\beta_i$ for the train-tracks of Figure~\ref{fig:primal_dual} (left), and using our labeling of the oriented train-tracks $T_1,T_2,T_3$ of $G^*$, see Figure~\ref{fig:dual_graph_bipartite_ter}, we have 
\[
\alpha_1 = \alpha_{\vec{T}_1}~;~ \alpha_2 = \alpha_{\vec{T}_2} ~;~ \alpha_3 = \alpha_{\vec{T}_3}.
\]
Recall the notation $\gamma_1  := \alphah_{3} - \alphah_1, \gamma_2 := \alphah_{1} - \alphah_2$, $\gamma_3 := \alphah_{2} - \alphah_3$; we have $\gamma_1 + \gamma_2 + \gamma_3 = 0$. 
Moreover, by Equation~\eqref{equ:angle_relation}, the lifts of the angles $(\beta_i)$ are fixed so that $\beta_i=\alpha_{i-1}+\rho$, with $\rho=\frac{j}{2}+i\frac{\tau}{2}$, implying that, for $i \in \{1,2,3\}$,
\begin{equation}\label{equ:beta_alpha}
\betah_i-\alphah_i=\gamma_i +jK+i\ell K'.
\end{equation}
%

Consider now an Ising dimer model with coupling constants $-\Js_1, -\Js_2, -\Js_3$, and the corresponding Ising-dimer model on $\GQ$.
Then, using the notation of Figure~\ref{fig:dual_graph_bipartite_ter} and Equation~\eqref{equ:beta_alpha},
by Theorem~\ref{thm:main}, the two dimer models are gauge equivalent if and only the following is satisfied,

1. Case $(\rho,t)\in\{(\frac{1}{2},0),(\frac{1}{2},\frac{1}{2})\}$.
\begin{enumerate}
\item[$\bullet$] Equation for $(s_i)_{i=1}^3=(\sinh(2\Js_i))_{i=1}^3$, see the second row of Table~\ref{table3}, 

$\circ$ Case $t=0$. 
\begin{equation}\label{eq:system:bis} 
\begin{split}
\forall i \in \{1,2,3\},&\ s_i^2= \sc^2(\betah_i - \alphah_i|k) =\sc^2(\gamma_i +K|k)\\ \Leftrightarrow & \ (s_1,s_2,s_3) = \phi_1(k,\gamma_1,\gamma_2,\gamma_3).
\end{split}
\end{equation}
$\circ$ Case $t=\frac{1}{2}$. 
\begin{equation}\label{equ:square_face_triangular_2}
\begin{split}
\forall i \in \{1,2,3\},&\ s_i^2 = (k')^2 \sc^2(\betah_i - \alphah_i|k) = (k')^2\sc^2(\gamma_i +K|k)\\
\Leftrightarrow & \ (s_1,s_2,s_3) = \phi_2(k,\gamma_1,\gamma_2,\gamma_3),  
\end{split}
\end{equation}
where in the equivalences we used identities for $\sc(u+K)$ given by see~\cite[(2.2.17)-(2.2.18)]{Lawden}.
\item[$\bullet$] Sign condition at every face of $\GQ$ corresponding to a dual vertex $f$ of $G$ required to have a frustrated Ising model, see fourth row of Table~\ref{table3},
\begin{align}
&\prod_{i=1}^3 \sn(\betah_i-\alphah_i) \in \RR^+
\Leftrightarrow (k,\gamma_1,\gamma_2,\gamma_3) \in \Gamma_+\label{eq:sign:dual:triangular},
\end{align} 
where we observe that the sign condition at the face corresponding to the black vertex and the white vertex is the same.
\end{enumerate}

2. Case $(\rho,t)=(\frac{1}{2}+\frac{\tau}{2},\frac{1}{2})$. Recall that the model is always frustrated, so that the two models are gauge equivalent if and only if, 
{\small 
\begin{equation}\label{eq:system}
\begin{split}
\forall i\in\{1,2,3\} &\ s_i^2=\frac{(k')^2}{k^2}\nc(\Re(\betah_i-\alphah_i)) 
=\frac{(k')^2}{k^2}\nc^2(\alphah_{i-1}-\alphah_i+K)=\frac{1}{k^2}\ds^2(\alphah_{i-1}-\alphah_i) \\
\Leftrightarrow & \ (s_1,s_2,s_3) = \phi_3(k,\gamma_1,\gamma_2,\gamma_3),
\end{split}
\end{equation}
}
where in the second equality of the first line we used that $\alphah_{i-1}-\alphah_i$ is real and that $\Re(K+iK')=K$, and in the third we used~\cite[(2.2.18)]{Lawden}. 

We first prove the (easier) Propositions~\ref{prop:antiferro:triangular:isotropic} and \ref{prop:triangular:anisotropic} and then prove the theorem. 

\begin{proof}[proof of Proposition~\ref{prop:antiferro:triangular:isotropic}]
  Let $\rho = \frac{1}{2} + \frac{\tau}{2}$, $t = \frac{1}{2}$, $\alpha_1 = 0, \alpha_2 = 1/3, \alpha_3 = 2/3$ as in the statement of the proposition.
  We are in Point 2. of the above discussion, so the model is frustrated and we only have to check the value of the coupling constants.
  We first compute 
  $$
    \gamma_1 = \frac{4K}{3} \quad ; \quad \gamma_2 = \gamma_3 = -\frac{2K}{3}.
  $$
  By Equation~\eqref{eq:system:bis}, the coupling constants for the three types of edges satisfy
  \begin{align*}
    s_1^2 &= \frac{1}{k^2}\ds^2(4K/3) = \ds^2(-2K/3) = \ds^2(2K/3)\\
    s_2^2 = s_3^2 &= \frac{1}{k^2}\ds^2(-2K/3) = \ds^2(2K/3) 
  \end{align*}
  Hence, Fock's “dimer” model corresponds to the frustrated Ising model with coupling constants
  $$
    - \sinh^2(2\Js_e) = -\frac{1}{k^2}\ds^2(2K/3).
  $$
  On the one hand, as $k \to 0$,
  \[
    \frac{1}{k}\ds(2K/3) \sim \frac{1}{k\sin(\pi/3)} \overset{k \to 0}{\longrightarrow} \infty.
  \]
  Moreover, as $k \to 1$, since $\ds$ is decreasing on $[0,K]$ and by~(2.4.10) of~\cite{Lawden}
  \[
    \ds(2K/3) \leq \ds(K/2) = \sqrt{k'(1+k')} \overset{k \to 1}{\longrightarrow} 0. 
  \]
  Hence as $k$ runs from $0$ to $1$, $\frac{1}{k}\ds(2K/3)$ runs from $0$ to $\infty$ so $\Js$ runs from $0$ to $\infty$.
\end{proof}

\begin{proof}[Proof of Proposition~\ref{prop:triangular:anisotropic}]
  Let $\rho = \frac{1}{2}$, $t=0$, $\alpha_1 = 0$, $\alpha_2 = 1/6$, $\alpha_3 = 1/3$ as in the statement of the proposition.
  We are in Point 1 of the above discussion, so we only have to check the sign at dual faces and the values of the coupling constants.
  We first compute 
  $$
    \gamma_1 = \frac{2K}{3} \quad ; \quad \gamma_2 = \gamma_3 = -\frac{K}{3}
  $$
  so $(k, \gamma_1,\gamma_2, \gamma_3) \in \Gamma_+$ and the model is frustrated.
  
  Let us now consider the coupling constants.
  By Equation~\eqref{eq:system:bis}, the coupling constants for the three types of edges satisfy 
  \begin{align*}
    s_1^2 &= \sc^2(2K/3+K) = \sc^2(K/3)\\
    s_2^2 =  s_3^2 &= \sc^2(-K/3+K) = \sc^2(2K/3)
    \qedhere
  \end{align*}
\end{proof}

We finally prove Theorem~\ref{thm:antiferro:triangular:general}.
Let us fix $\Js_1, \Js_2, \Js_3 > 0$ and recall that $s_i = \sinh(2\Js_i)$.
Upon renumbering, we can assume that $s_1 \geq s_2 \geq s_3$ i.e. $s_{(i)} = s_i$ for every $i \in \{1,2,3\}$.
We study when the frustrated Ising dimer model with coupling constants $-\Js_1, -\Js_2, -\Js_3$ corresponds to Fock's dimer model with parameters $k,\mapalpha,\rho,t$.
We split the proof in two cases.

\begin{proof}[Proof of the $(\rho,t)\in\{(\frac{1}{2},0),(\frac{1}{2},\frac{1}{2})\}$ case of  Theorem~\ref{thm:antiferro:triangular:general}.]\leavevmode
To deal simultaneously with the two cases $t=0$ and $t=\frac{1}{2}$, let us introduce the notation
$$
  \dt
  =
  \left\{
  \begin{aligned}
    0 &\text{ if }t=0\\
    1 &\text{ if }t=\frac{1}{2}
  \end{aligned}
  \right.
$$
To prove this part of the theorem, we need only show that the system of equations~\eqref{eq:system:bis} and~\eqref{eq:sign:dual:triangular} can be solved if and only if $(s_1,s_2,s_3) \in \cS_{\delta(t)+1}$.

Let us recall the system of equations~\eqref{eq:system:bis} and~\eqref{eq:sign:dual:triangular}, that we want to solve when seen as equations of the four variables $\gamma_i  \in \RR/(4K \ZZ), k \in [0,1)$:
\begin{equation*}
\begin{aligned}
  s_i^2= (k')^{2\dt}\sc^2(\gamma_i +K|k) \quad ; \quad \prod_{i=1}^3 \sn(\gamma_i  +K) \in \RR^+ \quad ; \quad \sum_{i=1}^3 \gamma_i = 0.
\end{aligned}
\end{equation*}
Writing $\gamma_3 = -\gamma_1 - \gamma_2$, this is equivalent to the set of equations of the three variables $\gamma_1, \gamma_2 \in \RR/(4K\ZZ), k \in [0,1)$:
\begin{equation}\label{eq:system:bisbis}
\begin{aligned}
  &s_1^2= (k')^{2\dt}\sc^2(\gamma_1 +K|k); \  s_2^2= (k')^{2\dt}\sc^2(\gamma_2 +K|k); \  s_3^2= (k')^{2\dt}\sc^2(-\gamma_1-\gamma_2 +K|k)\\
  &\sn(\gamma_1  +K)\sn(\gamma_2+K)\sn(-\gamma_1 - \gamma_2+K)>0.
\end{aligned}
\end{equation}
Note that since these equations are left unchanged by translating $\gamma_1$ or $\gamma_2$ by $2K$, we can consider that $\gamma_1, \gamma_2 \in \RR / (2K\ZZ)$.
For every $k\in[0,1)$, the function $\sc : (0,K) \to (0, +\infty)$ is invertible, hence we can define $d_i:= \argsc((k')^{-\dt}s_i|k) \in (0,K)$, and we must have, for $i \in \{1,2\}$,
$$
\gamma_i + K = \eps_id_i \ [2K],\text{ with $\eps_i = \pm 1$}.
$$
Since $\gamma_i$ is defined on $\RR/(2K \ZZ)$, we loose no generality at this point. 
Using the fact that the function $\sn$ is odd and~\cite[(2.2.12-2.2.17)]{Lawden}, the sign Condition of~\eqref{eq:system:bisbis} can be rewritten as 
$$
\sn(\gamma_2 +K)\sn(\gamma_1+K)\cd(\gamma_2 +\gamma_1+2K) < 0 \iff \eps_2\eps_1 \cd(\eps_2 d_2 +\eps_1 d_1) <0.
$$
Observing that the equations are the same for $(\eps_2,\eps_1 )$ and $(-\eps_2,-\eps_1 )$, we can restrict to the case $\eps_2 =1, \eps_1  = \eps = \pm 1$. 
Hence, the system of equations~\eqref{eq:system:bisbis} is
equivalent to solving for $\eps \in \{\pm 1\}$, $k' \in (0,1]$,
  \begin{equation*}
    \begin{aligned}
    s_3 ^2 
    = (k')^{2\dt}\sc^2\big[d_2 +\eps d_1-K\big|k\big]
    = \frac{(k')^{2\dt-2}}{\sc^2(d_2 +\eps d_1)} \quad ; \quad \eps \cd(d_2 +\eps d_1)<0.
    \end{aligned}
  \end{equation*}
where we used~\cite[(2.2.17)-(2.2.18)]{Lawden}. 
This is equivalent to
\begin{equation}\label{eq:triangular:cns:bis}
    \sc(d_2 +\eps d_1) = \pm \frac{(k')^{\dt-1}}{s_3 } \quad ; \quad \eps \cd(d_2 +\eps d_1)<0.
  \end{equation}
Let us now solve this equation as a function of $k'\in(0,1]$. 
  On the one hand, the addition formula for $\sc$ is, see~\cite[(2.4.1)-(2.4.3)]{Lawden},
\begin{equation}\label{equ:addition_formula_sc}
    \sc(d_2 +\eps d_1) = \frac{\sn(d_2 )\cn(d_1)\dn(d_1)+\eps\sn(d_1)\cn(d_2 )\dn(d_2 )}
    {\cn(d_2 )\cn(d_1)-\eps\sn(d_2 )\sn(d_1)\dn(d_2 )\dn(d_1)}.
\end{equation}
  On the other hand, using Equations~\cite[(2.1.9)-(2.1.11)]{Lawden}, we obtain (omitting the variables)
  $$
    \sn^2 = \frac{\sc^2}{\sc^2+1} \quad ; \quad \cn^2 = \frac{1}{\sc^2+1} \quad ; \quad \dn^2 = \frac{(k')^2\sc^2+1}{\sc^2+1}.
  $$
  Since for $i\in\{1,2\}$, $d_i\in(0,K)$, we can take the square roots in the above identities giving
{\small 
  $$
\begin{aligned}
    \sn(d_i) 
    &= \frac{\sc(d_i)}{\sqrt{\sc^2(d_i)+1}} 
    = \frac{ s_i}{\sqrt{s_i^2+(k')^{2\dt}}},\quad 
    \cn(d_i) 
    &=\frac{1}{\sqrt{\sc^2(d_i)+1}} 
    = \frac{(k')^{\dt}}{\sqrt{s_i^2+(k')^{2\dt}}} \\
    \dn(d_i) 
    &= \frac{\sqrt{(k')^2\sc^2(d_i)+(k')^{2\dt}}}{\sqrt{\sc^2(d_i)+(k')^{2\dt}}}
    = \frac{\sqrt{(k')^2s_i^2+1}}{\sqrt{s_i^2+1}}.
\end{aligned}
  $$
}  
For $i\in\{1,2\}$, define $g_i(x)=\frac{\sqrt{x^{2(1-\dt)}s_i^2+1}}{\sqrt{s_i^2+x^{2\dt}}}$, then we have  
\begin{equation}\label{eq:sc(d1+d2)}
  {\small 
    \begin{aligned}
      \sc(d_2 +\eps d_1) 
      &= 
      \frac{
        \frac{ s_2}{\sqrt{s_2^2+(k')^{2\dt}}}
        \frac{(k')^{\dt}}{\sqrt{s_1 ^2+(k')^{2\dt}}}
        \frac{\sqrt{(k')^2s_1 ^2+1}}{\sqrt{s_1 ^2+1}}
        +\eps
        \frac{ s_1 }{\sqrt{s_1 ^2+(k')^{2\dt}}}
        \frac{(k')^{\dt}}{\sqrt{s_2^2+(k')^{2\dt}}}
        \frac{\sqrt{(k')^2s_2^2+1}}{\sqrt{s_2^2+1}}
      }
      {
        \frac{(k')^{\dt}}{\sqrt{s_2^2+(k')^{2\dt}}}
        \frac{(k')^{\dt}}{\sqrt{s_1 ^2+(k')^{2\dt}}}
        - \eps
        \frac{ s_2}{\sqrt{s_2^2+(k')^{2\dt}}}
        \frac{ s_1 }{\sqrt{s_1 ^2+(k')^{2\dt}}}
        \frac{\sqrt{(k')^2s_2^2+1}}{\sqrt{s_2^2+1}}
        \frac{\sqrt{(k')^2s_1 ^2+1}}{\sqrt{s_1 ^2+1}}
      }\\
      &=
       \frac{s_2 g_1 (k')+\eps s_1 g_2 (k')}{1-\eps s_2 s_1  g_2 (k')g_1 (k')}.
    \end{aligned}}
  \end{equation}

  Let us start with the case where $\eps = 1$.
Since $d_2 +d_1 \in [0,2K]$, $\sn(d_2 +d_1) >0$ so $\sc(d_2 +d_1)$ and $\cd(d_2 +d_1)$ have the same sign
, and solving Equation~\eqref{eq:triangular:cns:bis} is equivalent to solving the unique equation
\begin{equation*}
    \sc(d_2 +d_1)
    = -\frac{1}{(k')^{1-\dt}s_3 },
\end{equation*} 
  which by Equation~\eqref{eq:sc(d1+d2)} is equivalent to solving 
  \begin{equation}\label{eq:cns:s3}
    s_3  
    = \frac{1}{(k')^{1-\dt}}
      \frac
      {
        s_2 s_1 
        g_2 (k')
        g_1 (k')
        -
        1
      }
      {
        s_2 g_1 (k')
        +
        s_1  g_2 (k')
      }
      =: f(k')
      .
  \end{equation}
  When $\dt = 0$, one can show by a careful analysis of the derivative that $f$ is increasing.
  Since $f(0) < 0$ and $s_3  \geq 0$, the set of possible values for $s_3 $ for which Equation~\eqref{eq:cns:s3} can be solved is 
  $$
    s_3  \in [0, \max(0,f(1))] = \left[0, \max\left(0,\frac{s_2 s_1 -1}{s_2+s_1 }\right)\right] = [0,S].
  $$
  When $\dt = 1$, one can show by a careful analysis of the derivative that $f$ is strictly decreasing. 
  We have 
  $f(0)=\frac{\sqrt{s_2^2+1}\sqrt{s_1 ^2+1}-1}{\frac{s_2}{s_1 }\sqrt{s_1 ^2+1}+\frac{s_1 }{s_2}\sqrt{s_2^2+1}}=s_2 s_1 \frac{\sqrt{s_2^2+1}\sqrt{s_1 ^2+1}-1}{s_2^2\sqrt{s_1 ^2+1}+s_1 ^2\sqrt{s_2^2+1}} = T >0$, and $f(1)=\frac{s_2 s_1  -1}{s_2+s_1 }$. 
  This implies that $S \leq T$.
  On the other hand, an explicit computation shows that $T \leq s_2$ (see below).
  Hence, since $0 \leq s_3  \leq s_2$, the set of possible values for $s_3 $ for which Equation~\eqref{eq:cns:s3} can be solved is $[S,T]$.
  
  To prove that $T \leq s_2$, we first bound
  $$
    \begin{aligned}
    T 
    = \frac{s_2 s_1 \left(\sqrt{s_2^2+1}\sqrt{s_1 ^2+1}-1\right)}{s_1 ^2\sqrt{s_2^2+1}+s_2^2\sqrt{s_1 ^2+1}}
    \leq \frac{s_2\left(\sqrt{s_2^2+1}\sqrt{s_1 ^2+1}-1\right)}{s_1 \sqrt{s_2^2+1}},
    \end{aligned}
  $$
  and then note that 
  \begin{equation}\label{eq:T<s1}
    \begin{aligned}
    &\sqrt{s_2^2+1}\sqrt{s_1 ^2+1}-1 < s_1 \sqrt{s_2^2+1}\\
    &\iff (s_2^2+1)(s_1 ^2+1)-2\sqrt{s_2^2+1}\sqrt{s_1 ^2+1}+1 \leq s_1 ^2(s_2^2+1)\\
    &\iff s_2^2+2 \leq 2\sqrt{s_2^2+1}\sqrt{s_1 ^2+1}\\
    &\iff s_2^4+4s_2^2 + 4 \leq 4(s_2^2+1)(s_1 ^2+1),
    \end{aligned}
  \end{equation}
  which is true since $s_2 \leq s_1 $.

  Let us now consider the case $\eps = -1$. 
  Since $d_2 +d_1 \in (0,2K)$, $\cd(d_2 +d_1) > 0$ so the second equation is always satisfied and solving Equation~\eqref{eq:triangular:cns:bis} is equivalent to solving for $k' \in (0,1]$
\begin{equation*}
    s_3  
    = \pm\frac{1}{(k')^{1-\dt}}
      \frac{1+ s_2 s_1  g_2 (k')g_1 (k')}{s_2 g_1 (k')-\eps s_1 g_2 (k')},
\end{equation*}
Since $s_2 < s_1 $, we observe that $s_2 g_1 (k') \leq s_1 g_2 (k')$ for all $k$ by comparing the squares in both cases $\dt = 0$ and $\dt = 1$.
Hence, solving Equation~\eqref{eq:triangular:cns:bis} is equivalent to solving for $k' \in (0,1]$
\begin{equation*}
    s_3  
    = \frac{1}{(k')^{1-\dt}}
      \frac{1+s_2 s_1  g_2 (k')g_1 (k')}{s_1 g_2 (k')-s_2 g_1 (k')}=: f(k').
\end{equation*}
Now, note that for $x \in [0,1]$,
\begin{equation*}
  \begin{aligned}
    \frac{1}{x^{1-\dt}}\frac{1+ s_2 s_1  g_2 (x) g_1 (x)}{s_1  g_2 (x)-s_2 g_1 (x)}
    &\geq \frac{1}{x^{1-\dt}}\frac{s_2 s_1 g_2 (x) g_1 (x)}{s_1 g_2 (x)}\\
    &= s_2 x^{\dt-1}\frac{\sqrt{x^{2(1-\dt)}s_2^2+1}}{\sqrt{s_2^2+x^{2\dt}}}
    \geq s_2,
  \end{aligned}
\end{equation*}
and since by assumptions $s_3  \leq s_2$ we find no solution in this regime.
\end{proof}

\begin{rem}
  This argument might look strange at first sight since we are excluding solutions with $s_3  \geq s_2$ which seem to exist.
  However, since we wrote $\sc^2(d_2 |k) = s_2^2$, $\sc^2(d_1|k) = s_1 ^2$, we have forced $s_3 ^2 = \sc^2(d_2 +d_1-K)$ which is (by our computations) not compatible with the condition $s_3  \leq s_2 \leq s_1 $.
\end{rem}

\begin{proof}[Proof of the $(\rho,t)=(\frac{1}{2}+\frac{\tau}{2},\frac{1}{2})$ case of Theorem~\ref{thm:antiferro:triangular:general}.]\leavevmode
By Point 2. of the discussion above, to prove this part of the theorem, we only need to show that the system of equations~\eqref{eq:system} can be solved if and only if $(s_1,s_2,s_3) \in \cS_{3}$.
Let us recall the system of equations~\eqref{eq:system} for the four variables $\gamma_i  \in \RR/(4K \ZZ), k \in [0,1)$:
$$
  \forall i \in \{1,2,3\}, \quad s_i ^2 = \frac{1}{k^2}\ds^2(\gamma_i|k) \quad ; \quad \sum_{i=1}^3 \gamma_i = 0.
$$
Since $|\ds(u|k)| \geq k'$ for all $u \in \RR$, we obtain a first necessary condition on $k$: 
\begin{equation}\label{eq:condition:x}
  s_i^2 \geq \frac{(k')^2}{k^2} =: x^2 \quad ; \quad i \in \{1,2,3\}.
\end{equation}
so we can restrict to $x^2 \leq \min(s_1 ^2, s_2^2,s_3 ^2)$.
Writing $\gamma_3 = -\gamma_1 - \gamma_2$, this is equivalent to the set of equations of the three variables $\gamma_1, \gamma_2 \in \RR/(4K\ZZ)$, $k \in [0,1)$:
$$
  s_1^2 = \frac{1}{k^2}\ds^2(\gamma_1|k) \quad ; \quad s_2^2 = \frac{1}{k^2}\ds^2(\gamma_2|k) \quad ; \quad s_3^2 = \frac{1}{k^2}\ds^2(\gamma_1+\gamma_2|k).
$$
Note that since these equations are left unchanged when translating $\gamma_1, \gamma_2$ by $2K$, we can assume that $\gamma_1, \gamma_2 \in \RR/(2K\ZZ)$. 
Since $\ds : (0,K] \to [k', +\infty)$ is invertible for each $k$, we can define $d_i = \argds(k s_i|k) \in (0,K]$.
Hence, we must have 
$$
  \gamma_i = \eps_id_i \ [2K] \quad ; \quad i \in \{1,2\},
$$
with $\eps_i = \pm 1$. 
Since $\gamma_i \in \RR/(2K\ZZ)$, we loose no generality at this point.
Recall that $s_2 \leq s_1  $, hence, under condition~\eqref{eq:condition:x}, the system of equations is equivalent to solving for $x^2 = (k')^2/k^2 \in (0,s_2^2], \eps_2 = \pm 1, \eps_1  = \pm 1$, the unique equation
\begin{equation*}
  \begin{aligned}
    s_3 ^2 
    &= \frac{1}{k^2}\ds^2\big[\eps_2 d_2 +\eps_1 d_1\big|k\big].
  \end{aligned}
\end{equation*}
Since the equation is the same for $(\eps_2,\eps_1 )$ and $(-\eps_2,-\eps_1 )$, we can restrict to the case where $\eps_2=1,\eps := \eps_1 =\pm 1$ and solve
\begin{equation}\label{eq:triangular:cns}
    \begin{aligned}
    s_3 ^2 
    &= \frac{1}{k^2}\ds^2\big[d_2 +\eps d_1\big|k\big].
    \end{aligned}
  \end{equation}
  On the one hand, the addition formula for $\ds$ is (see~\cite[(2.4.1)-(2.4.3)]{Lawden}):
  $$
    \ds(d_2 +\eps d_1) = \frac{\dn(d_2 )\dn(d_1)-\eps k^2\sn(d_2 )\sn(d_1)\cn(d_2 )\cn(d_1)}{\sn(d_2 )\cn(d_1)\dn(d_1)+\eps\sn(d_1)\cn(d_2 )\dn(d_2 )}.
  $$
  On the other hand, using~\cite[(2.1.9)-(2.1.11)]{Lawden}, we obtain (omitting the variables)
  $$
    \sn^2 = \frac{1}{\ds^2+k^2} \quad ; \quad \cn^2 = \frac{\ds^2-(k')^2}{\ds^2+k^2} \quad ; \quad \dn^2 = \frac{\ds^2}{\ds^2+k^2}.
  $$
  hence for $i \in \{1,2\}$, taking the square root and using the fact that $d_i \in [0,K/2]$, we obtain
  {\small 
  $$
    \begin{aligned}
    \sn(d_i) &
    = \frac{1}{\sqrt{\ds^2(d_i)+k^2}}
    =  \frac{1}{\sqrt{k^2s_i^2+k^2}}
    = \frac{1}{k\sqrt{s_i^2+1}}\\
    \cn(d_i) 
    &= \sqrt{\frac{\ds^2(d_i)-(k')^2}{\ds^2(d_i)+k^2}} 
    = \sqrt{\frac{k^2s_i^2 - (k')^2}{k^2s_i^2 + k^2}}
    = \frac{\sqrt{s_i^2-x^2}}{\sqrt{s_i^2+1}}\\
    \dn(d_i) 
    &= \sqrt{\frac{\ds^2(d_i)}{\ds^2(d_i)+k^2}}
    = \frac{\sqrt{k^2s_i^2}}{\sqrt{k^2s_i^2+k^2}}
    = \frac{s_i}{\sqrt{s_i^2+1}}.
    \end{aligned}
  $$
  }
  The addition formula then becomes
  $$
    \begin{aligned}
      \ds(d_2 +\eps d_1)
      &= \frac
      {
        \frac{s_2}{\sqrt{s_2^2+1}}
        \frac{s_1 }{\sqrt{s_1 ^2+1}}
        -\eps k^2
        \frac{1}{k\sqrt{s_2^2+1}}
        \frac{1}{k\sqrt{s_1 ^2+1}}
        \frac{\sqrt{s_2^2-x^2}}{\sqrt{s_2^2+1}}
        \frac{\sqrt{s_1 ^2-x^2}}{\sqrt{s_1 ^2+1}}
      }
      {
        \frac{1}{k\sqrt{s_2^2+1}}
        \frac{\sqrt{s_1 ^2-x^2}}{\sqrt{s_1 ^2+1}}
        \frac{s_1 }{\sqrt{s_1 ^2+1}}
        + \eps
        \frac{1}{k\sqrt{s_1 ^2+1}}
        \frac{\sqrt{s_2^2-x^2}}{\sqrt{s_2^2+1}}
        \frac{s_2}{\sqrt{s_2^2+1}}
      }
      \\
      &= 
      k
      \frac{
      s_2 s_1 
      -
      \eps
      \frac
      {\sqrt{s_2^2-x^2}\sqrt{s_1 ^2-x^2}}
      {\sqrt{s_2^2+1}\sqrt{s_1 ^2+1}}
      }
      {
      s_1 
      \frac
      {\sqrt{s_1 ^2-x^2}}
      {\sqrt{s_1 ^2+1}}
      +
      \eps s_2
      \frac
      {\sqrt{s_2^2-x^2}}
      {\sqrt{s_2^2+1}}
      }.
    \end{aligned}
  $$
  Hence, solving Equation~\eqref{eq:triangular:cns} is equivalent to solving for $x^2 =(k')^2/k^2 \in (0,s_2^2], \eps = \pm 1$
  $$
    s_3  
    =\pm f_{\eps}(x^2),\quad \text{ with } 
    f_{\eps}(x^2):=
      \frac{
      s_2 s_1 
      -
      \eps
      \frac
      {\sqrt{s_2^2-x^2}\sqrt{s_1 ^2-x^2}}
      {\sqrt{s_2^2+1}\sqrt{s_1 ^2+1}}
      }
      {
      s_1 
      \frac
      {\sqrt{s_1 ^2-x^2}}
      {\sqrt{s_1 ^2+1}}
      +
      \eps s_2
      \frac
      {\sqrt{s_2^2-x^2}}
      {\sqrt{s_2^2+1}}
      } =: f_{\eps}(x^2).
  $$
  This equation has a solution if and only if
  there exists $x^2 \in (0,s_2^2]$ such that 
  \begin{equation*}
    s_3 =\pm f_{1}(x^2) \quad \text{or} \quad s_3 =\pm f_{-1}(x^2).
  \end{equation*}
  We can finally conclude by identifying the image by $f_{1}$ and $f_{-1}$ of the interval $(0,s_2^2]$.
  
  \emph{Study of $f_{1}$.}
  The denominator of $f$ is non-negative and decreasing, while the numerator of $f_{1}$ is increasing, and non-negative since
  \[
  \frac
      {\sqrt{s_2^2-x^2}\sqrt{s_1 ^2-x^2}}
      {\sqrt{s_2^2+1}\sqrt{s_1 ^2+1}}
    \leq 
    \sqrt{s_2^2-x^2}\sqrt{s_1 ^2-x^2}
    \leq s_2 s_1 .
    \]
Hence, $f_1$ is increasing on $(0,s_2^2]$, so
  $$
    \begin{aligned}
    f_{1}((0,s_2^2]) = \big(f_{1}(0),f_{1}(s_2^2)\big] 
    &= \Biggl(s_2 s_1 \frac{\sqrt{s_2^2+1}\sqrt{s_1 ^2+1}-1}{s_1 ^2\sqrt{s_2^2+1}+s_2^2\sqrt{s_1 ^2+1}},\frac{s_2\sqrt{s_1 ^2+1}}{\sqrt{s_1 ^2-s_2^2}}\Biggr]\\
    &= \Biggl( T, \frac{s_2\sqrt{s_1 ^2+1}}{\sqrt{s_1 ^2-s_2^2}}\Biggr].
    \end{aligned}
  $$
  \emph{Study of $f_{-1}$.}
  The numerator of $f_{-1}$ is positive and decreasing on $(0,s_2^2]$, 
  while the denominator $g(x)$ of $f_{-1}$ is positive and increasing on $(0,s_2^2]$, 
  because
  \[g(x)\geq 0 \iff 
  s_1 
      \frac
      {\sqrt{s_1 ^2-x^2}}
      {\sqrt{s_1 ^2+1}}
      \geq s_2
      \frac
      {\sqrt{s_2^2-x^2}}
      {\sqrt{s_2^2+1}} \iff 
      \frac{\sqrt{s_1 ^2-x^2}}{\sqrt{1+s_1 ^{-2}}} \geq 
      \frac{\sqrt{s_2^2-x^2}}{\sqrt{1+s_2^{-2}}} ,
  \]
 and 
  \begin{align*}
  g'(x)& \geq 0 \iff
  \frac{1}{\sqrt{1+\frac{1}{s_1 ^2}}}\frac{1}{\sqrt{s_1 ^2-x^2}}\leq 
  \frac{1}{\sqrt{1+\frac{1}{s_2^2}}}\frac{1}{\sqrt{s_2^2-x^2}}\\
  \iff &
  s_2^2+1-x^2\Bigl(1+\frac{1}{s_ 1^2}\Bigr)\leq s_1 ^2+1-x^2\Bigl(1+\frac{1}{s_ 2^2}\Bigr),
  \end{align*}
  which are both true since $s_2\leq s_1 $.
  Hence, $f_{-1}$ is decreasing on $(0,s_2^2]$, so
  $$
    \begin{aligned}
    f_{-1}((0,s_2^2]) =
    \big[f_{-1}(s_2^2),f_{-1}(0)\big) 
    &= 
    \Biggl[
      \frac{s_2\sqrt{s_1 ^2+1}}{\sqrt{s_1 ^2-s_2^2}},
      \frac{s_2 s_1 \left(\sqrt{s_2^2+1}\sqrt{s_1 ^2+1}+1\right)}{s_1 ^2\sqrt{s_2^2+1}-s_2^2\sqrt{s_1 ^2+1}}
    \Biggr).
    \end{aligned}
  $$
  Since both $f_1$ and $f_{-1}$ are non-negative and monotone, and since $s_3 $ is positive, the values that can be obtained for $s_3 $ with $f_{1}$ and $f_{-1}$, are $s_3  \in (f_{1}(0),f_{-1}(0))$.
  By Equation~\eqref{eq:T<s1}, $T = f_1(0) < s_2$. 
  Furthermore,
  $$
    f_{-1}(0) 
    = \frac{s_2 s_1 \left(\sqrt{s_2^2+1}\sqrt{s_1 ^2+1}+1\right)}{s_1 ^2\sqrt{s_2^2+1}-s_2^2\sqrt{s_1 ^2+1}}
    > \frac{s_2 s_1 \sqrt{s_2^2+1}\sqrt{s_1 ^2+1}}{s_1 ^2\sqrt{s_2^2+1}}
    = s_2\frac{\sqrt{s_1 ^2+1}}{s_1 } > s_2.
  $$
  Hence,
  \begin{equation*}
    0 \leq f_{1}(0) \leq s_2 < f_{-1}(0).
  \end{equation*}
  Since $s_3  \leq s_2$, we can conclude that the Ising models parameterized in this way are exactly those with $s_3  \in (f_{1}(0), s_2] = (T, s_2]$.
\end{proof}

\subsection{The square lattice.}\label{subsec:square_lattice}

In this section, we establish an explicit parameterization of the isotropic frustrated Ising model on the square lattice by Fock's dimer model. This allows us to prove phase transition at 0 temperature, a fact previously obtained in the physics literature cited in the introduction. A subject of further interest is exploring the connections to~\cite[Section III]{Forgacs}. We will not study in details all Fock's dimer models that can be obtained, since this involves a lot of computations, and we have already done this for the triangular lattice in Section~\ref{sec:triangular_lattice}.

\begin{defi}
  The isotropic frustrated Ising model on the square lattice is the (equivalence class of) Ising model with real coupling constants $\eps\Js$ such that $\Js$ is the same for all edges 
  and the frustration function is $\delta_f = -1$ at every face.
  An example of model in this equivalence class is the Ising model on the square lattice with $2 \times 2$ fundamental domain and signs given by $\eps_1 = \eps_2 = \eps_3 = 1$, $\eps_4 = -1$.
\end{defi}
Note that even though the face-weights of this model are $1 \times 1$ periodic, it is impossible to find $1 \times 1$-periodic coupling constants realizing this model by Lemma~\ref{lem:reconstructing_Ising_model}. 

We consider the square lattice $G$ with $2 \times 2$ fundamental domain; then the dual graphs $G^*$ and $G_1^*$ are also bipartite. The set of train-tracks $\T$ of $G^*$ can be consistently oriented so that white vertices are on the left, see Figure~\ref{fig:square22}. The fundamental domain $G_1$ has four oriented train-tracks denoted by $\{T_0,\dots,T_3\}$ in counterclockwise order, and the set of oriented train-tracks $\vec{\T}$ of $\GQ_1$ is split into $\vec{\T}=\vec{\T}_{*,\circ}\sqcup \vec{\T}_{*,\bullet}$, where $\vec{\T}_{*,\circ}=\{\vec{T}_0,\dots,\vec{T}_3\}$ is in orientation preserving bijection with $\T$. 
Recall that $\alpha_{\cev{T}_i} = \alpha_{\vec{T}_i} + \rho$. Then, we obtain the following.

\begin{figure}[h]
  \centering
  \begin{overpic}[width=12cm]{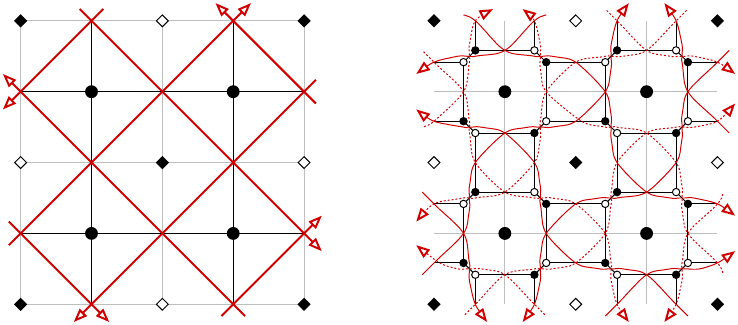} 
\put(8,-2){\scriptsize $T_3$}
\put(14,-2){\scriptsize $T_0$} 
\put(44,9){\scriptsize $T_0$}
\put(44,14){\scriptsize $T_1$} 
\put(35,43){\scriptsize $T_1$}
\put(26,43){\scriptsize $T_2$}
\put(-3,33){\scriptsize $T_2$}
\put(-3,28){\scriptsize $T_3$}
  \end{overpic}
\caption{Left: $G$ is the square lattice with a $2\times 2$ fundamental domain; oriented train-tracks of $\T$ are pictured in red. Right: the corresponding graph $\GQ$; oriented train-tracks $\vec{T}$ of $\T_{*,\circ}$, in orientation preserving bijection with those of $\T$, are pictured with full lines; oriented train-tracks $\cev{T}$ of $\T_{*,\bullet}$ are pictured with dotted lines.}\label{fig:square22}
\end{figure}

\begin{prop}\label{prop_square_lattice}
  Fock's dimer model with parameters $k^2\in[0,1)$, $\rho = \frac{1}{2}+\frac{\tau}{2}$, angle map
  $\alpha_{\vec{T}_0}=0,\alpha_{\vec{T}_1}=\frac{1}{4}, \alpha_{\vec{T}_2}=\frac{1}{2}, \alpha_{\vec{T}_3}=\frac{3}{4}$ is gauge equivalent to the Ising-dimer model arising from the isotropic frustrated Ising model with coupling constants $\eps\Js$, given by
  $$
  \sinh(2\Js) = \frac{\sqrt{k'(1+k')}}{k}.
  $$
  As $k$ runs from $0$ to $1$, this parameterizes all possible coupling constants $\Js \in (0, \infty)$. 
\end{prop}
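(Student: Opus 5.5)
This is a direct application of the converse part of Theorem~\ref{thm:main}, in the case of the third column of Table~\eqref{table3}, modelled on the proof of Proposition~\ref{prop:antiferro:triangular:isotropic}. The statement does not fix $t$, but with $\rho=\frac{1}{2}+\frac{\tau}{2}$ the only admissible value in the classification is $t=\frac{1}{2}$, so I take it. The dual graph $G^*$ of the square lattice is bipartite. Moreover the angles $\alpha_{\vec{T}_i}$ of the train-tracks of $\vec{\T}_{*,\circ}$ are real. Hence $\mapalpha\in\Acal_{\frac{1}{2}+\frac{\tau}{2},0}$, and Necessary Conditions IV hold. It then remains to check three things: the sign condition~\eqref{equ:angle_V} at faces corresponding to primal vertices (Necessary Conditions V), the value of $\Js$, and the global condition $\prod_{f\in F_1}\delta_f=1$.

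\emph{Sign condition.} Each vertex $v$ has degree $4$. I apply the computation~\eqref{equ:useful_computation} with $|v|=4$; around $v$ the four train-tracks of $G^*$ pair up as $(T_i',T_i)$. This gives
\[
\prod_{i=1}^{4}\sn(\betath_i-\alphath_i)=\sn(\hat\alpha^k_{T_1'}-\hat\alpha^k_{T_1})\,\sn(\hat\alpha^k_{T_1}-\hat\alpha^k_{T_2'})\,\sn(\hat\alpha^k_{T_2'}-\hat\alpha^k_{T_2})\,\sn(\hat\alpha^k_{T_2}-\hat\alpha^k_{T_1'}).
\]
The angles $0,\frac{1}{4},\frac{1}{2},\frac{3}{4}$ are ordered cyclically as the train-tracks $T_0,\dots,T_3$ are. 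Two routes then give negativity of this product. One is to invoke Corollary~\ref{cor:angle_map_2}, once I check that $\hat\mapalpha\in X'_{G^*}$; this holds because the map is monotone and injective on intersecting train-tracks. The other is to evaluate directly. With $\alphah=2K\alpha$, the differences are $\pm K$ or $\pm 2K$, so the factors are $\pm1$ or $0$. Here I must take care about which train-tracks actually surround $v$: parallel train-tracks never intersect, so around a vertex one sees $T_i,T_{i+1}$-type pairs with differences $\pm K$ and never differences $2K$. The product is then $\pm1$, and I must verify it equals $-1$. This bookkeeping of which angles surround which vertex is where I expect errors, and I would settle it by drawing Figure~\ref{fig:square22} carefully. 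The frustration is $\delta_f=-1$ at every face by the third column of Table~\eqref{table3}. Since $G_1$ has four faces, $\prod_{f\in F_1}\delta_f=1$, so Theorem~\ref{thm:main} applies.

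\emph{Coupling constant.} By Table~\eqref{table3}, equivalently~\eqref{eq:system}, we have $\sinh(2\Js_e)=\frac{1}{k}|\ds(\alphah_{i-1}-\alphah_i)|$. Every relevant difference is $\pm K/2$ modulo $2K$, that is $2K\cdot\frac{1}{4}$. By parity and the $2K$-antiperiodicity of $\ds$, all edges get the same value $\frac{1}{k}\ds(K/2)$, and~\cite[(2.4.10)]{Lawden} gives $\ds(K/2)=\sqrt{k'(1+k')}$. This proves isotropy and the stated formula.

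\emph{Range.} As $k\to0$, $k'\to1$ and $\sqrt{k'(1+k')}/k\to\infty$. As $k\to1$, $k'\to0$ and the expression tends to $0$. The map $k\mapsto\sqrt{k'(1+k')}/k$ is continuous, and in fact strictly decreasing since numerator decreases and denominator increases. Hence it is a bijection onto $(0,\infty)$, so $\Js$ covers $(0,\infty)$. The main obstacle is the local sign and angle bookkeeping around primal vertices; the rest is routine.
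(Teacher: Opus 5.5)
Your argument is the paper's. You get negativity at primal vertices from Corollary~\ref{cor:angle_map_2}, because $0,\frac14,\frac12,\frac34$ respect the cyclic order of $T_0,\dots,T_3$. You then read $\sinh(2\Js)$ off the third column of Table~\eqref{table3} with~\cite[(2.4.10)]{Lawden}, and your $\frac1k\ds(K/2)$ equals the paper's $\frac{k'}{k}\nc(K/2)$. Your checks of $\prod_{f\in F_1}\delta_f=1$ and of the range supply details the paper leaves implicit. Discard the alternative direct evaluation, however: with $\alphah=2K\alpha$ the consecutive differences around $v$ are $\pm K/2$ modulo $2K$, not $\pm K$ or $\pm 2K$. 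Each factor is therefore $\pm(1+k')^{-1/2}$ rather than $\pm1$, and negativity comes from cyclic monotonicity (exactly one of the four factors has the opposite sign), not from a product equal to $-1$.
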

\begin{proof}
Since the angle map preserves the cyclic order of the train-tracks of $\T$, we have by Corollary~\ref{cor:angle_map_2} that this model corresponds to an Ising model, frustrated at every face.
Let us identify the coupling constants.
At every square face $y$, by Table~\ref{table3} and using~\cite[(2.4.10)]{Lawden},
we obtain
  \[
  \begin{aligned}
    \sinh^2(2\Js) 
    = \frac{(k')^2}{k^2} \nc^2(K/2)
    = \frac{k'(1+k')}{k^2}
  \end{aligned}\qedhere
  \]
\end{proof}


One can use this parameterization to plot the amoeba and real locus of the spectral curve, see Figure~\ref{fig_amoeba_square} (left and center).
At first sight, this does not seem to be coherent with our set of parameters.  
On the one hand, looking at the amoeba and real locus, the curve seems to be Harnack.
On the other hand, the curve is parameterized by Equation~\eqref{equ:param_spectral_curve_symm} with the $\alpha_{\vec{T}_i}, \alpha_{\cev{T}_i}$ pictured in Figure~\ref{fig_amoeba_square} (right).
Hence we would expect the amoeba to feature two superimposed ``sheets'' as in the right of Figure~\ref{fig:triangular_lattice_curves}. 

\begin{figure}[h]
\begin{subfigure}{0.32\textwidth}
        \centering
\includegraphics[scale=.3, trim = 3cm 0cm 3cm .8cm, clip]{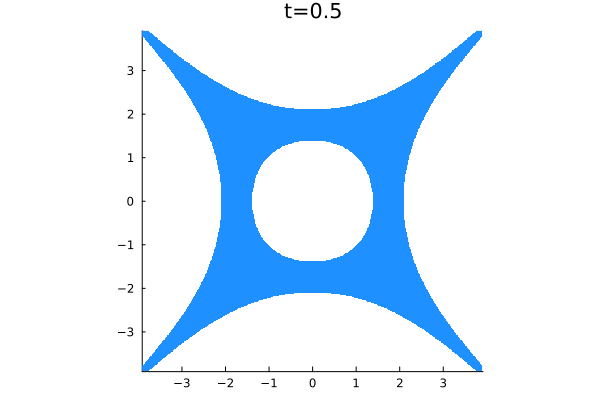}
\end{subfigure}
\hfill 
\begin{subfigure}{0.32\textwidth}
\centering
\includegraphics[scale=.31, trim = 0cm 0.5cm 0cm 1.4cm, clip]{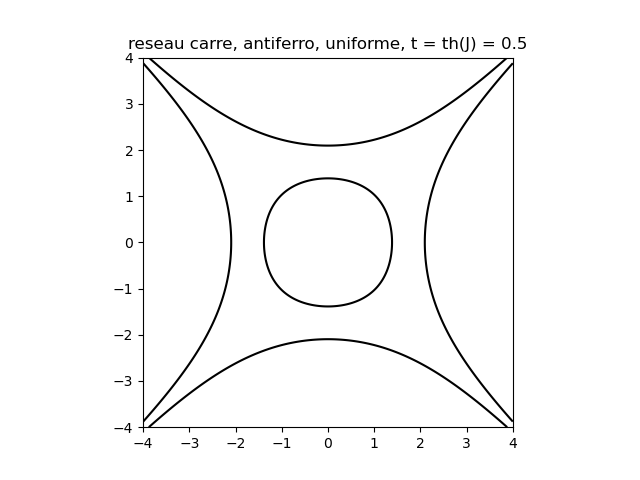}
\end{subfigure}
 \begin{subfigure}{0.32\textwidth}
         \centering
\begin{overpic}[width=3.3cm]{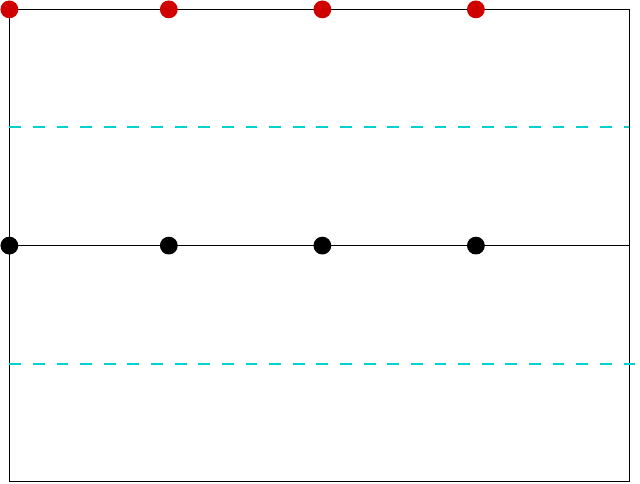}
        \put(-4,42){\scriptsize $\alpha_{\vec{T}_0}$} 
        \put(23,42){\scriptsize $\alpha_{\vec{T}_1}$} 
        \put(47,42){\scriptsize $\alpha_{\vec{T}_2}$}
        \put(73,42){\scriptsize $\alpha_{\vec{T}_3}$}
        \put(47,80){\scriptsize $\alpha_{\cev{T}_0}$} 
        \put(73,80){\scriptsize $\alpha_{\cev{T}_1}$} 
        \put(-4,80){\scriptsize $\alpha_{\cev{T}_2}$} 
        \put(23,80){\scriptsize $\alpha_{\vec{T}_3}$} 
\end{overpic}
\end{subfigure}
\caption{On the left: the amoeba of the spectral curve of the isotropic frustrated Ising model on the square lattice for some $0 < \Js < \infty$.
In the middle: the image of the real locus through the map $(z,w) \to (\log|z|, \log|w|)$. On the right: the torus and the angle map used in the parameterization of Fock's dimer model and its spectral curve.}
\label{fig_amoeba_square}
\end{figure}
To solve this apparent contradiction, we must take a closer look at the parameterization. We prove the following. 

\begin{prop}\label{prop:square}
  The spectral curve of the isotropic frustrated Ising model on the square lattice:
  \begin{itemize}
    \item is reducible, in the sense that it is the superposition of two identical copies of the same curve $\C$. In particular, it does not satisfy $(\dagger)$,
    \item is the symmetric via $(z,w) \to (z,-w)$ of the spectral curve of an isotropic ferromagnetic Ising model on the square lattice (with a different coupling constant which can be written down explicitly),
    \item each of the copy of the curve is thus a Harnack curve.
  \end{itemize}
  As a consequence, the free energy of the model is given by Proposition~\ref{prop:phase:transition}.
  By Remark~\ref{rem:phase_transition}, the algebraic phase transition at $k=0$, is in fact a phase transition in the sense that the free energy has a singularity.
  Thus by Proposition~\ref{prop_square_lattice}, the isotropic frustrated Ising model on the square lattice has a phase transition at zero temperature.
  \end{prop}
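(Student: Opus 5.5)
The plan is to work directly with the explicit parameterization \eqref{equ:param_spectral_curve_symm} of the spectral curve attached to Fock's dimer model of Proposition~\ref{prop_square_lattice}. For $\rho=\frac{1}{2}+\frac{\tau}{2}$ the angles are $\alpha_{\vec{T}_i}=\frac{i}{4}$ and $\alpha_{\cev{T}_i}=\frac{i}{4}+\frac12+\frac{\tau}{2}$. The key observation is that this angle configuration is invariant under the translation $u\mapsto u+\frac12$ of $\TT(\tau)$, which sends $\vec{T}_i\mapsto\vec{T}_{i+2}$ and $\cev{T}_i\mapsto\cev{T}_{i+2}$. For the $2\times2$ fundamental domain, the homology classes of $T_i$ and $T_{i+2}$ are opposite. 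Tracking the exponents $v_{\vec{T}},h_{\vec{T}}$ in \eqref{equ:param_spectral_curve_symm} and using the quasi-periodicity \eqref{eq:theta:2} of $\theta_{1,1}$, I expect $z(u+\frac12)=c\,z(u)$ and $w(u+\frac12)=c'\,w(u)$ for explicit constants $c,c'$. The first step is to compute these constants. The claim to establish is $c=1$ and $c'=-1$, possibly after the choice of $\lambda,\mu$ from Remark~\ref{rem:spectral_curve}, so that $\Psi(u+\frac12)=\sigma(\Psi(u))$ with $\sigma(z,w)=(z,-w)$ or a similar sign involution.

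The second step deduces reducibility. The map $\Psi$ is then not birational but factors through the quotient torus $\TT(\tau)/\langle\frac12\rangle\simeq\TT(2\tau)$, which is again genus 1. The image $\C$ is therefore traced twice, and the characteristic polynomial (degree read off from the $2\times2$ Newton polygon) is the square of the irreducible polynomial defining $\C$. This shows that $(\dagger)$ fails. I will confirm this by a degree count: the Newton polygon of the $2\times2$ domain has twice the side lengths of the one cut out by the quotient parameterization.

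The third step identifies $\C$. On $\TT(2\tau)$ the $8$ angles collapse to $4$ points, namely $0,\frac14$ (from the $\vec{T}$'s) and $\frac12\tau'+\{0,\frac14\}$ in the new modulus $\tau'=2\tau$ with period $\frac12$. After rescaling by $2$, these become real angles on $\TT(\tau')$ that lie on the same real component and are cyclically ordered. I expect the composition with $(z,w)\mapsto(z,-w)$ (to absorb $c'=-1$) to be exactly a parameterization of the form of the first column of Table~\eqref{table3}, with $\delta\equiv1$, for the $1\times1$ isotropic square lattice with modulus $\tilde k$ related to $k$ by a Landen transformation (Lemma~\ref{lem:ascending:Landen:theta}). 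That model is a ferromagnetic $Z$-invariant model, whose curve is simple Harnack by \cite{BdTR2}, and the Landen formula gives its coupling constant explicitly. Harnackness is preserved under $w\mapsto -w$ up to the sign conventions, which yields the third bullet.

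Finally, a Harnack curve meets $\TT^2$ in finitely many points, with at most a real node, so Proposition~\ref{prop:phase:transition} applies to each factor and $\log\Pising$ splits as twice the log of a ferromagnetic characteristic polynomial. The free energy is then that of a ferromagnetic Ising model with coupling $\tilde\Js(k)$. That model is critical exactly when $\tilde k=0$, equivalently $k=0$, which by Proposition~\ref{prop_square_lattice} corresponds to $\Js=\infty$; this gives the singularity at zero temperature. The main obstacle is the first step: the bookkeeping of the exponential prefactors and signs in $z(u+\frac12),w(u+\frac12)$, and matching the Landen-transformed parameters to the ferromagnetic column. I would check this numerically against Figure~\ref{fig_amoeba_square} before writing it up.
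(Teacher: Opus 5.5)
Your Step~1 fails, and Steps~2--3 depend on it. The two facts you state already decide the outcome: $u\mapsto u+\frac12$ sends $\vec{T}_i\mapsto\vec{T}_{i+2}$ and $\cev{T}_i\mapsto\cev{T}_{i+2}$, and $[T_{i+2}]=-[T_i]$. So every exponent in \eqref{equ:param_spectral_curve_symm} is negated, and zeros go to poles. Concretely, with the angles of Proposition~\ref{prop_square_lattice}, $z$ has zeros $0,\frac{\tau}{2},\frac34,\frac34+\frac{\tau}{2}$ and poles $\frac12,\frac12+\frac{\tau}{2},\frac14,\frac14+\frac{\tau}{2}$ (for $w$, exchange $\frac14$ and $\frac34$). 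Translation by $\frac12$ swaps these two sets. Thus $z(u+\frac12)=c/z(u)$ and $w(u+\frac12)=c'/w(u)$. This translation is just another realization of the central symmetry $(z,w)\mapsto(\frac1z,\frac1w)$, like $u\mapsto u+\rho$ in Theorem~\ref{thm:real_symmetric_spectral_curve}. No rescaling by $(\lambda,\mu)$ turns it into $(z,w)\mapsto(z,-w)$. Step~2 is also a non sequitur even granting Step~1. A relation $\Psi(u+b)=\sigma(\Psi(u))$ with $\sigma\neq\mathrm{id}$ only says the image is $\sigma$-invariant. $\Psi$ factors through $\TT(\tau)/\langle b\rangle$, and the curve is traced twice, only if $\Psi(u+b)=\Psi(u)$.

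The missing idea is that the deck transformation is $u\mapsto u+\frac{\tau}{2}$, i.e.\ $(u+\frac12)+\rho$ modulo $\Lambda$, the composition of the two inverting translations. It sends $\vec{T}_i$ to $\cev{T}_{i+2}$ and $\cev{T}_i$ to $\vec{T}_{i+2}$, which carry the same homology classes as the originals. So the divisors of $z$ and $w$ are preserved; the sets above are visibly $\frac{\tau}{2}$-invariant. This alone only gives $z(u+\frac{\tau}{2})=\pm z(u)$ and $w(u+\frac{\tau}{2})=\pm w(u)$, since the square of the constant is $1$ because $\tau$ is a period. A minus sign would again give a mere symmetry rather than a double cover. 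The paper settles this by rewriting $\theta_{1,1}(u,\tau)\theta_{1,1}(u-\frac{\tau}{2},\tau)$ as a multiple of $\theta_{1,1}(u,\frac{\tau}{2})$, via \eqref{eq:theta:4} and \cite[(20.7.12)]{DLMF}. This exhibits $z$ and $w$ as elliptic functions on $\TT(\frac{\tau}{2})$ with zeros and poles at $0,\frac14,\frac12,\frac34$, all on the component $\RR$, and with an explicit global minus sign in $w$. That constant, not a deck transformation, is where $(z,w)\mapsto(z,-w)$ comes from. The comparison ferromagnetic model therefore has modulus $\frac{\tau}{2}$, not one obtained from $2\tau$. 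Your Step~3 is inconsistent in any case: in $\CC/(\frac12\ZZ+\tau\ZZ)$ rescaled to $\TT(2\tau)$, the images of the $\cev{T}$-angles sit at $\tau+\{0,\frac12\}$. That is the other real component, so they could not reproduce the first column of Table~\eqref{table3}. Once the $\frac{\tau}{2}$-periodicity is in place, the rest of your plan goes through: the degree count, the Harnack identification via the ferromagnetic $Z$-invariant model, and the free-energy conclusion.
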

\begin{proof}
The proof consists in writing down explicitly the parameterization $\Psi= (z(u), w(u)): \TT(\tau) \to \C$ of the spectral curve given by Equation~\eqref{equ:param_spectral_curve_symm} and applying algebraic identities on theta functions. Using the values of the angle map $\mapalpha$ of Proposition~\ref{prop_square_lattice}, we have
$$
z(u) = 
\frac
{\theta_{1,1}(u)\theta_{1,1}(u-\tau/2)\theta_{1,1}(u-3/4)\theta_{1,1}(u-3/4-\tau/2)}
{\theta_{1,1}(u-1/2)\theta_{1,1}(u-1/2-\tau/2)\theta_{1,1}(u-1/4)\theta_{1,1}(u-1/4-\tau/2)}
$$
and 
$$
w(u) = 
\frac
{\theta_{1,1}(u)\theta_{1,1}(u-\tau/2)\theta_{1,1}(u-1/4)\theta_{1,1}(u-1/4-\tau/2)}
{\theta_{1,1}(u-1/2)\theta_{1,1}(u-1/2-\tau/2)\theta_{1,1}(u-3/4)\theta_{1,1}(u-3/4-\tau/2)}.
$$
Using Equation~\eqref{eq:theta:4} and Equation~(20.7.12) of~\cite{DLMF}, this can be rewritten as 
\begin{equation*}
z(u) = 
\frac
{\theta_{1,1}(u, \tau/2)\theta_{1,1}(u-3/4, \tau/2)}
{\theta_{1,1}(u-1/2, \tau/2)\theta_{1,1}(u-1/4, \tau/2)}\quad ; \quad 
w(u) = 
-\frac
{\theta_{1,1}(u, \tau/2)\theta_{1,1}(u-1/4, \tau/2)}
{\theta_{1,1}(u-1/2, \tau/2)\theta_{1,1}(u-3/4, \tau/2)}.
\end{equation*}
As the proof is very similar to Proposition~\ref{prop_square_lattice} in the (easier) $1 \times 1$ periodic case, we let the reader check that, upon applying the symmetry $(z,w) \to (z,-w)$, this is the parameterization of the isotropic ferromagnetic Ising model on the square lattice with parameter $\tau/2$.
It it thus a Harnack curve.

This result can also be obtained by computing directly the characteristic polynomials.
  Denote by $P_{i\times j}^\pm$, the characteristic polynomial of the isotropic frustrated (if $\pm = -$), resp. non-frustrated (if $\pm = +$), Ising model on the $i \times j$ periodic, square lattice, with absolute value coupling constant $\Js_-$, resp. coupling constants $\Js_+$. 
  One can check by direct computation that for all $\Js_- \in \RR_+$, there exists $\Js_+\in\RR^+, \lambda \in \RR_+$ such that
  $$
    P_{2\times 2}^-(z,w) = 
    \lambda P_{1 \times 1}^+(z,-w)^2.
  $$
  Again, this proves the results of Proposition~\ref{prop:square}, without resorting to our theory.
\end{proof}

\begin{rem}
This parameterization is coherent with Figure~\ref{fig_amoeba_square}.
In fact, the map $\Psi$ has a vertical period $\tau/2$: it is two-to-one and the two circles $\RR$ and $\RR + \tau/2$ containing the zeros and poles have the same image through the map, which is the outer boundary of the amoeba.
Moreover, the image of the circle $\RR + \tau/4$ is also real, thus implying that the real locus of $\C$ has a bounded component.
\end{rem}

\appendix
\section{Identities involving theta and elliptic functions}\label{sec:App_A}

In this section we provide statements and proofs of identities involving theta and elliptic functions that we use, and that we could not find as such in the literature. 

\begin{lem}\label{lem:appendix_1}
For $m,n\in\{0,1\}$ with indices understood mod 2, and $z,w\in\CC$,
\begin{equation*}
\begin{split}
\theta_{1,1}^2(z)\theta_{0,0}^2(w) + &(-1)^{1+m+mn}\theta_{m,n}^2(z)\theta_{m+1,n+1}^2(w)=\\
&=(-1)^{n+mn} \theta_{m+1,n+1}(z+w)\theta_{m+1,n+1}(z-w)\theta_{m,n}^2(0). 
\end{split}
\end{equation*}
\end{lem}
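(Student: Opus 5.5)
The plan is to treat the identity as an equality of functions of $z$ for fixed generic $w$ and to prove it by Liouville's theorem on the torus $\TT(\tau)$. Both sides are entire in $z$ and have the same quasi-periodicity.

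First, the quasi-periodicity. By Identity~\eqref{eq:theta:2}, each squared theta function $\theta^2_{a,b}(z)$ satisfies $\theta^2_{a,b}(z+1)=\theta^2_{a,b}(z)$ and $\theta^2_{a,b}(z+\tau)=q^{-2}e^{-4i\pi z}\theta^2_{a,b}(z)$. Here the sign $(-1)^{2(\cdot)}$ disappears because we square. For the right-hand side, the product $\theta_{m+1,n+1}(z+w)\theta_{m+1,n+1}(z-w)$ picks up, under $z\mapsto z+1$, the factor $(-1)^{2(m+1)}=1$. Under $z\mapsto z+\tau$ it picks up $q^{-1}e^{-2i\pi(z+w)}q^{-1}e^{-2i\pi(z-w)}(-1)^{2(n+1)}=q^{-2}e^{-4i\pi z}$. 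So all three terms lie in the space $V$ of entire functions with these multipliers, which are theta functions of level 2. Such a space has dimension $2$, since the ratio of two of them is elliptic with at most two poles. It is even better to avoid dimension counting: divide everything by $\theta^2_{1,1}(z)$, or by $\theta_{m+1,n+1}(z+w)\theta_{m+1,n+1}(z-w)$, to get an elliptic function with at most double poles.

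Concretely, I will consider
\[
F(z)=\frac{\theta_{1,1}^2(z)\theta_{0,0}^2(w)+(-1)^{1+m+mn}\theta_{m,n}^2(z)\theta_{m+1,n+1}^2(w)}{\theta_{m+1,n+1}(z+w)\theta_{m+1,n+1}(z-w)}.
\]
This is elliptic, with at most simple poles at the zeros $z=\pm w+z_0$ of the denominator, where $z_0$ is the zero of $\theta_{m+1,n+1}$ (namely $0$, $\tfrac12$, $\tfrac\tau2$ or $\tfrac12+\tfrac\tau2$ according to the characteristic). The key step is to show that the numerator $N(z)$ vanishes at $z=w+z_0$ and at $z=-w+z_0$. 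Then $F$ is holomorphic, hence constant, and the constant is identified by evaluating at a convenient point. The numerator is even in $z$, since $\theta^2$ is always even by~\eqref{eq:theta:1}, so it suffices to check $z=w+z_0$. Using the half-period translation formulas~\eqref{eq:theta:4}, the shift by $z_0=\rho'$ transforms $\theta_{1,1}(w+z_0)$ into a multiple of $\theta_{1+\ell',1+j'}(w)$ and $\theta_{m,n}(w+z_0)$ into a multiple of $\theta_{m+\ell',n+j'}(w)$. Here $(j',\ell')$ is chosen so that $\theta_{m+1,n+1}(z_0)=0$, which means $(m+1+\ell',n+1+j')\equiv(1,1)$, i.e. $(\ell',j')=(m,n)$. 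Then $\theta_{1,1}^2(w+z_0)\propto\theta_{1+m,1+n}^2(w)$ and $\theta_{m,n}^2(w+z_0)\propto\theta_{0,0}^2(w)$. Thus $N(w+z_0)$ is a multiple of $\theta_{m+1,n+1}^2(w)\theta_{0,0}^2(w)$ times a combination of the two prefactors. The prefactors share the exponential $(q^{-1/4}e^{-i\pi w})^{2m}$, and the signs $i^{2m(\cdot)}(-1)^{\cdots}$ from~\eqref{eq:theta:4} should combine with $(-1)^{1+m+mn}$ to give zero.

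Finally, the constant: I will evaluate $F$ at $z=0$ when $(m,n)\neq(1,1)$. There $\theta_{1,1}(0)=0$ and $\theta_{m+1,n+1}$ is even if $(m+1,n+1)\neq(1,1)$. This gives $F(0)=(-1)^{1+m+mn}\theta_{m,n}^2(0)\theta_{m+1,n+1}^2(w)/(\theta_{m+1,n+1}(w)\theta_{m+1,n+1}(-w))$, which should equal $(-1)^{n+mn}\theta_{m,n}^2(0)$ after using the parity of $\theta_{m+1,n+1}$; the sign bookkeeping again needs checking. For $(m,n)=(1,1)$ the statement degenerates to $0=$ something involving $\theta_{1,1}(0)=0$, and both sides vanish trivially. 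When $(m,n)=(0,0)$, the factor $\theta_{1,1}(z+w)\theta_{1,1}(z-w)$ is odd$\cdot$odd, and the evaluation needs $\theta_{1,1}(w)\theta_{1,1}(-w)=-\theta_{1,1}^2(w)$, which accounts for the sign. The main obstacle I expect is exactly this sign tracking in the half-period shifts~\eqref{eq:theta:4} at the vanishing step. If it proves messy, I would instead verify the four cases individually against the classical Jacobi quartic identities (e.g. DLMF 20.7.1–20.7.4 and the addition formulas 20.7.9–20.7.10), which are precisely these identities in the $\theta_1,\dots,\theta_4$ notation.
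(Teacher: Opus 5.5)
Your argument is correct, and it takes a genuinely different route from the paper's.

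The sign checks you left open do close. With $z_0=\frac n2+\frac m2\tau$, Identity~\eqref{eq:theta:4} gives
\[
\theta_{1,1}^2(w+z_0)=(q^{-1/4}e^{-i\pi w})^{2m}(-1)^{m+mn}\,\theta_{m+1,n+1}^2(w),\qquad \theta_{m,n}^2(w+z_0)=(q^{-1/4}e^{-i\pi w})^{2m}\,\theta_{0,0}^2(w).
\]
So the numerator vanishes at $w+z_0$ precisely because of the coefficient $(-1)^{1+m+mn}$. For the constant,
\[
F(0)=(-1)^{1+m+mn+(m+1)(n+1)}\theta_{m,n}^2(0)=(-1)^{n}\theta_{m,n}^2(0),
\]
which agrees with $(-1)^{n+mn}\theta_{m,n}^2(0)$ whenever $(m,n)\neq(1,1)$. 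You should also state explicitly that the identity, proved for generic $w$, extends to all $w\in\CC$ by continuity. Generic here means that $\pm w+z_0$ are distinct simple zeros of the denominator and that $\theta_{m+1,n+1}(w)\neq 0$.

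The paper does not use Liouville's theorem. It quotes the classical quartic identities of Lawden (1.4.19)--(1.4.29) in the compact form
\[
\theta_{1,1}^2(z)\theta_{1,1}^2(w)+(-1)^{m+n+mn}\theta_{m,n}^2(z)\theta_{m,n}^2(w)=(-1)^{m+n+mn}\theta_{m,n}(z+w)\theta_{m,n}(z-w)\theta_{m,n}^2(0).
\]
It then substitutes $w\mapsto w+\frac12+\frac\tau2$ and applies \eqref{eq:theta:4}--\eqref{eq:theta:5} with $(j,\ell)=(1,1)$, which turns $\theta_{1,1}(w)$ into $\theta_{0,0}(w)$ and $\theta_{m,n}(w)$ into $\theta_{m+1,n+1}(w)$.

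That route is shorter, but it outsources all the content to the literature and depends on transcribing the classical identities and their signs correctly. The paper's own display of the compact form even carries stray squares on $\theta_{m,n}(z\pm w)$. Your route is self-contained: it uses only the parity, period and half-period formulas already recorded in the paper, and it treats all four characteristics uniformly. The price is the zero-location and sign bookkeeping you flagged. Your fallback of checking each case against DLMF 20.7 is essentially the paper's approach.
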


\begin{proof}
The following is a compact form of Equations~\cite[(1.4.19)-(1.4.22)-(1.4.29)]{Lawden},
\begin{align*}
&\theta_{1,1}^2(z)\theta_{1,1}^2(w) + (-1)^{m+n+mn}\theta_{m,n}^2(z)\theta_{m,n}^2(w) = (-1)^{m+n+mn}\theta_{m,n}^2(z+w)\theta_{m,n}^2(z-w)\theta_{m,n}^2(0),
\end{align*} 
noting that when $(m,n)=(1,1)$ the left-hand-side vanishes, and $\theta_{1,1}^2(0)=0$ in the right-hand-side. Evaluating the equation at $w\mapsto w+\frac{1}{2}+\frac{\tau}{2}$ gives: 
\begin{align*}
\theta_{1,1}^2(z)\theta_{1,1}^2&\Bigl(w+\frac{1}{2}+\frac{\tau}{2}\Bigr) + (-1)^{m+n+mn}\theta_{m,n}^2(z)\theta_{m,n}^2\Bigl(w+\frac{1}{2}+\frac{\tau}{2}\Bigr)\\
&= (-1)^{m+n+mn}\theta_{m,n}\Bigl(z+w+\frac{1}{2}+\frac{\tau}{2}\Bigr)\theta_{m,n}\Bigl(z-w-\frac{1}{2}-\frac{\tau}{2}\Bigr)\theta_{m,n}^2(0).
\end{align*}
Using Identities ~\eqref{eq:theta:4} and~\eqref{eq:theta:5} with $(j,\ell) = (1,1)$, we obtain
\begin{align*}
(q^{-1/4}e^{-i\pi w})^{2}&(i)^{2\cdot 2}\theta_{1,1}^2(z)\theta_{0,0}^2(w) + (-1)^{m+n+mn}
(q^{-1/4}e^{-i\pi w})^{2}(i)^{2(n+1)}\theta_{m,n}^2(z)\theta_{m+1,n+1}^2(w)\\
&= (-1)^{m+n+mn}(q^{-1/4}e^{-i\pi(z+w)})(q^{-1/4}e^{i\pi(z-w)})i^{2(n+1)} (-1)^{n+1+n+mn}(-1)^{m+n+mn}\\
  &\quad \theta_{m+1,n+1}(z+w)\theta_{m+1,n+1}(z-w)\theta_{m,n}^2(0).
\end{align*}
The result is obtained after simplification in the above. 
\end{proof}

\begin{lem}{\emph{\cite[\textsection 6, p.24]{MumfordTheta1}}}\label{lem:meromorphic_theta_ratio_appendix}
Let $a_1,\dots,a_r, b_1,\dots,b_r\in\CC$. Then, the function
\[
f(u)=e^{-2i\pi\ell u}\prod_{i=1}^r \frac{\theta_{m,n}(u-a_i)}{\theta_{m,n}(u-b_i)},
\]
is a meromorphic function on $\TT(\tau)$ if and only if $\ell\in\ZZ$, and $\sum_{i=1}^r (a_i-b_i)-\ell\tau \in\ZZ$.
\end{lem}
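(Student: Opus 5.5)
We treat the function $f(u)=e^{-2i\pi\ell u}\prod_{i=1}^r \theta_{m,n}(u-a_i)/\theta_{m,n}(u-b_i)$ as meromorphic on $\CC$ and study its behaviour under the two generators $u\mapsto u+1$ and $u\mapsto u+\tau$ of the lattice $\Lambda$. The statement reduces to showing that $f(u+1)=f(u)$ and $f(u+\tau)=f(u)$ for all $u$ exactly when the stated conditions hold. The tool is the quasi-periodicity Identities~\eqref{eq:theta:2}, applied with $2\rho=1$ (that is, $(j,\ell)=(1,0)$) and with $2\rho=\tau$ (that is, $(j,\ell)=(0,1)$).

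\textbf{Horizontal period.} By~\eqref{eq:theta:2} with $(j,\ell)=(1,0)$, we have $\theta_{m,n}(z+1)=(-1)^m\theta_{m,n}(z)$. The sign $(-1)^m$ appears once in the numerator and once in the denominator of each factor, so the theta product is $1$-periodic. The exponential prefactor gains the factor $e^{-2i\pi\ell}$. Hence $f$ is $1$-periodic if and only if $\ell\in\ZZ$.

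\textbf{Vertical period.} By~\eqref{eq:theta:2} with $(j,\ell)=(0,1)$, we have $\theta_{m,n}(z+\tau)=(-1)^n q^{-1}e^{-2i\pi z}\theta_{m,n}(z)$. For each factor, the signs and the powers of $q$ cancel between numerator and denominator. What remains is the ratio of exponentials $e^{-2i\pi(u-a_i)}/e^{-2i\pi(u-b_i)}=e^{2i\pi(a_i-b_i)}$. The prefactor contributes $e^{-2i\pi\ell\tau}$. Therefore
\[
f(u+\tau)=f(u)\,\exp\Bigl(2i\pi\Bigl(\sum_{i=1}^r(a_i-b_i)-\ell\tau\Bigr)\Bigr).
\]
This factor equals $1$ if and only if $\sum_i(a_i-b_i)-\ell\tau\in\ZZ$.

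\textbf{Converse and the main obstacle.} Since $f$ is meromorphic on $\CC$, it descends to $\TT(\tau)$ if and only if it is invariant under both generators. The computations above give this equivalence directly. No quasi-periodicity formula is needed for a non-integer shift: requiring invariance under $u\mapsto u+1$ already forces $e^{-2i\pi\ell}=1$, hence $\ell\in\ZZ$. The only point requiring care is the degenerate case where $f$ is identically zero or has cancelling factors. We ignore this case, or handle it by assuming $f\not\equiv 0$; periodicity is a condition on the multipliers, which do not depend on whether zeros and poles cancel. The proof is therefore essentially bookkeeping of signs and powers of $q$, and that bookkeeping is the only real obstacle.
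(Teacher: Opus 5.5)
Your proof is correct and follows essentially the paper's argument: both reduce meromorphy on $\TT(\tau)$ to invariance under the lattice generators and use the quasi-periodicity identity~\eqref{eq:theta:2} to compute the multipliers $e^{-2i\pi\ell}$ and $e^{2i\pi(\sum_i(a_i-b_i)-\ell\tau)}$; the paper simply treats both shifts at once as $u\mapsto u+j'+\ell'\tau$. The degenerate case you set aside cannot occur: $\theta_{m,n}\not\equiv 0$, so $f\not\equiv 0$ automatically.
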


\begin{proof}
In~\cite{MumfordTheta1}, the result is written in the case where $\sum_{i=1}^r (a_i-b_i) \in \ZZ$ with no occurrence of the prefactor. Since the function $f$ is meromorphic on $\CC$, we are left with checking periodicity on $\TT(\tau)$. The function $f$ is periodic if and only if, for every $j',\ell'\in\{0,1\}$, $f(u+j'+\ell'\tau)=f(u)$. By Equation~\eqref{eq:theta:2}, we have 
\begin{align*}
\frac{f(u+j'+\ell'\tau)}{f(u)}=
e^{-2i\pi\ell(j'+\ell'\tau)}e^{-2i\pi\ell'(\sum_{i=1}^r (-a_i+b_i))}.
\end{align*}
This ratio is equal to 1 for every $j',\ell'\in\{0,1\}$ if and only if $\ell\in\ZZ$, and $\ell\tau+\sum_{i=1}^r (b_i-a_i) \in\ZZ$, thus concluding the proof.
\end{proof}

\paragraph{Ascending Landen transformation and modular transformation of theta and Jacobi elliptic functions.}
In this paragraph, we prove the following algebraic identities on Jacobi elliptic functions and theta functions.
Recall from Section~\ref{sec:jacobi:elliptic} the notation $\gamma^k = 2K(k)\gamma$.
\begin{lem}\label{lem:ascending:Landen:Jacobi}
  For $\tau \in \CC$, let $\tau_1 = 2\tau$, $\tau_2 = \tau_1 - 1 = 2\tau-1$.
  Denote also by $k_1,k_2$ the elliptic modulus associated respectively with $\tau_1$ and $\tau_2$.
  Then, for all $\gamma \in \CC$,
  $$
    \sc(\gamma^k|k) = (1+k_1)k_2'\sc(\gamma^{k_2}|k_2)\dn(\gamma^{k_2}|k_2).
  $$
\end{lem}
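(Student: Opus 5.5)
The plan is to convert both sides into theta functions of a single modular parameter and compare them as elliptic functions. The relevant tool is the classical ascending Landen transformation. Its defining feature is that the modulus of parameter $\tau_1=2\tau$ (nome $q^2$) relates to that of $\tau$ through
\[
k=\frac{2\sqrt{k_1}}{1+k_1},\qquad K(k)=(1+k_1)K(k_1).
\]
Consequently $\gamma^{k}=2K(k)\gamma=(1+k_1)\gamma^{k_1}$. The classical ascending Landen formula then gives
\[
\sc(\gamma^k|k)\;\text{ in terms of }\;\sc(\gamma^{k_1}|k_1)\,\dn(\gamma^{k_1}|k_1),
\]
up to a multiplicative constant. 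I would derive this formula myself from theta functions rather than cite it, since the paper uses nonstandard normalizations ($\alpha\mapsto 2K\alpha$ and $\theta_{j,\ell}$ indexing).

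\textbf{Step 1: theta-function duplication.} Using the product/series definitions, I would establish identities of the form
\[
\theta_{1,1}(z,\tau)\,\theta_{0,1}(z,\tau)=c\,\theta_{1,1}(2z,2\tau),\qquad
\theta_{1,0}(z,\tau)\,\theta_{0,0}(z,\tau)=c'\,\theta_{1,0}(2z,2\tau).
\]
These are the standard duplication formulas (DLMF~20.7.11 and the neighbouring identities). An alternative formulation expresses $\theta_{m,n}(z,\tau)$ via products of theta functions at $2\tau$, as in DLMF~20.7.12, which the paper already used for the square lattice. The function $\sc(\gamma^k|k)$ is, by \eqref{equ:elliptic_def_2}, a constant multiple of $\theta_{1,1}(\gamma,\tau)/\theta_{1,0}(\gamma,\tau)$. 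I would rewrite this ratio in terms of $\theta(\cdot,2\tau)$ evaluated at $\gamma$, or at $2\gamma$, depending on the normalization.

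\textbf{Step 2: passing from $\tau_1$ to $\tau_2$.} The statement is phrased with $\tau_2=\tau_1-1$. By Lemma~\ref{lem:modular_transfo} (the $\tau\mapsto\tau+1$ rule) together with \eqref{eq:theta:4}, the map $\tau_1\mapsto\tau_1-1$ swaps $\theta_{0,0}$ and $\theta_{0,1}$ and fixes $\theta_{1,1}$ and $\theta_{1,0}$ up to eighth roots of unity. In elliptic-function terms this is the Jacobi imaginary-modulus transformation:
\[
k_2^2=\frac{-k_1^2}{(k_1')^2},\qquad \sc(u|k_1)\leftrightarrow k_1'\,\sd\ \text{type relations}.
\]
I would express $\sc\cdot\dn$ at $(\tau_1,k_1)$ via theta functions at $\tau_2$ and check that the result matches $k_2'\,\sc\,\dn$ at $k_2$, up to the constant $(1+k_1)$.

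\textbf{Step 3: constants via Liouville.} Both sides are meromorphic in $\gamma$ and share periods. I would match their zeros and poles, which are at half-periods and readable from the theta factors, and conclude by Liouville that their ratio is constant. The constant is then fixed by comparing derivatives at $\gamma=0$, using
\[
\sc(u)\sim u,\qquad \dn(0)=1,
\]
together with the relations $\gamma^k=2K(k)\gamma$, $\gamma^{k_2}=2K(k_2)\gamma$ and $K(k)=(1+k_1)K(k_1)$ with $K(k_2)$ related to $K(k_1)$ by $k_1'$.

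The main obstacle is this bookkeeping of constants. The factors $2K$ in the arguments, the relation between $K(k_2)$ and $K(k_1)$ under $\tau\mapsto\tau-1$ (where $K(k_2)=k_1'K(k_1)$ should hold), and the resulting $(1+k_1)k_2'$ prefactor all have to come out exactly. I would verify the final constant by the derivative at $0$ rather than by tracking every theta prefactor.
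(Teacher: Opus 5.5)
Your Steps 1--2 follow the paper's route: Landen from $\tau$ to $\tau_1=2\tau$, then $\tau_1\mapsto\tau_1-1$. However, the intermediate formula you assert is wrong, and because of it the check in Step 2 fails. The Landen step gives
\[
\sc(\gamma^k|k)=(1+k_1)\,\frac{\sn(\gamma^{k_1}|k_1)}{\cn(\gamma^{k_1}|k_1)\,\dn(\gamma^{k_1}|k_1)}=(1+k_1)\,\sc(\gamma^{k_1}|k_1)\,\nd(\gamma^{k_1}|k_1),
\]
with $\nd$, not $\dn$. Zeros and poles modulo $\ZZ+2\tau\ZZ$ show this. The function $\sc(\gamma^k|k)\propto\theta_{1,1}(\gamma,\tau)/\theta_{1,0}(\gamma,\tau)$ vanishes at $0,\tau$ and has poles at $\frac12,\frac12+\tau$, and so does $\sc\,\nd$ at level $\tau_1$. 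By contrast, $\sc\,\dn$ at level $\tau_1$ vanishes at $0,\frac12+\tau$ and has poles at $\frac12,\tau$.

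This distinction is what makes the second step work. Since $\tau_1\mapsto\tau_1-1$ swaps $\theta_{0,0}$ and $\theta_{0,1}$, one has $\sc(u|k_1)=k_2'\,\sc(u/k_2'|k_2)$ and $\dn(u|k_1)=\nd(u/k_2'|k_2)$, with $u/k_2'=\gamma^{k_2}$ when $u=\gamma^{k_1}$. So the $\nd$ at level $k_1$ becomes the $\dn$ at level $k_2$. Your $\sc\,\dn$ at level $k_1$ would instead turn into $k_2'\,\sc\,\nd$ at level $k_2$, which does not match the target.

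The duplication formulas you quote are also off. For example, $\theta_{1,1}(z,\tau)\theta_{0,1}(z,\tau)$ vanishes on $\ZZ+\frac{\tau}{2}\ZZ$, while $\theta_{1,1}(2z,2\tau)$ vanishes on $\frac12\ZZ+\tau\ZZ$. Doubling the argument is not what is needed here anyway. The relevant same-argument identities are $\theta_{1,1}(\gamma,\tau)\propto\theta_{1,1}(\gamma,2\tau)\,\theta_{0,1}(\gamma,2\tau)$ and $\theta_{1,0}(\gamma,\tau)\propto\theta_{1,0}(\gamma,2\tau)\,\theta_{0,0}(\gamma,2\tau)$.

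The simplest repair is to apply your Step 3 directly to the two sides of the lemma. This gives a complete proof and avoids both the Landen and the imaginary-modulus formulas, which the paper instead quotes from Lawden.
Since $\ZZ+\tau_2\ZZ=\ZZ+2\tau\ZZ$, both sides are elliptic for this lattice. The right side is a constant times $\theta_{1,1}\theta_{0,0}/(\theta_{1,0}\theta_{0,1})$ at $\tau_2$. Its four factors vanish at $0$, $\frac12+\frac{\tau_2}{2}=\tau$, $\frac12$ and $\frac{\tau_2}{2}\equiv\frac12+\tau$ respectively. So its simple zeros and poles coincide with those of the left side, and the ratio is constant by Liouville.

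Comparing derivatives at $0$ then reduces to $K(k)=(1+k_1)k_2'K(k_2)$. This follows from $\theta_{0,0}^2(0,\tau)=\theta_{0,0}^2(0,2\tau)+\theta_{1,0}^2(0,2\tau)$, which gives $K(k)=(1+k_1)K(k_1)$. It also uses $\theta_{0,0}(0,\tau_2)=\theta_{0,1}(0,\tau_1)$ and $\theta_{0,1}(0,\tau_2)=\theta_{0,0}(0,\tau_1)$, which give $k_2'=1/k_1'$ and $K(k_2)=k_1'K(k_1)$.

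Fixing the constant this way is worthwhile. It also exposes the stray factors $k_2'$ in the paper's displayed $\tau\mapsto\tau+1$ formulas for $\cn$ and $\dn$, which at $\gamma=0$ would force $k_2'=1$. The correct relations are $\cn(\gamma^{k_1}|k_1)=\cd(\gamma^{k_2}|k_2)$ and $\dn(\gamma^{k_1}|k_1)=\nd(\gamma^{k_2}|k_2)$.
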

 
The proof of this lemma consists in performing an ascending Landen transformation (see Section~3.9 of~\cite{Lawden}) to go from $\tau$ to $\tau_1 = 2\tau$ and then the modular transformation $\tau \to \tau+1$ to go from $\tau_1$ to $\tau_2$.
The details are given below.
This implies the following.

\begin{lem}\label{lem:ascending:Landen:theta}
   For $\alpha,\beta \in \CC$, 
   \begin{equation*}
     \frac
     {\theta_{1,1}(\alpha,\tau)\theta_{1,0}(\beta,\tau)}
     {\theta_{1,0}(\alpha,\tau)\theta_{1,1}(\beta,\tau)} = 
     \frac
     {\theta_{1,1}(\alpha,2\tau-1)\theta_{0,0}(\alpha,2\tau-1)\theta_{1,0}(\beta,2\tau-1)\theta_{0,1}(\beta,2\tau-1)}
     {\theta_{1,0}(\alpha,2\tau-1)\theta_{0,1}(\alpha,2\tau-1)\theta_{1,1}(\beta,2\tau-1)\theta_{0,0}(\beta,2\tau-1)}
   \end{equation*}
 \end{lem}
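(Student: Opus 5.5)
We will derive Lemma~\ref{lem:ascending:Landen:theta} from Lemma~\ref{lem:ascending:Landen:Jacobi}. The left-hand side factors as a function of $\alpha$ times a function of $\beta$, so the plan is to rewrite each factor in terms of $\sc$ and then compare the $\alpha$- and $\beta$-dependence separately.

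First, using~\eqref{equ:elliptic_def_2} at modulus $\tau$, we have
\[
\frac{\theta_{1,1}(\alpha,\tau)}{\theta_{1,0}(\alpha,\tau)}=-\frac{\theta_{0,1}(0,\tau)}{\theta_{0,0}(0,\tau)}\,\sc(\alpha^k|k),
\]
and similarly for $\beta$. Hence the left-hand side equals $\sc(\alpha^k|k)/\sc(\beta^k|k)$, and all constants cancel.

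At modulus $\tau_2=2\tau-1$, we use~\eqref{equ:elliptic_def_2} again together with the definition of $\dn$ in~\eqref{equ:elliptic_def}. This gives
\[
\sc(\gamma^{k_2}|k_2)\,\dn(\gamma^{k_2}|k_2)=c\,\frac{\theta_{1,1}(\gamma,\tau_2)\,\theta_{0,0}(\gamma,\tau_2)}{\theta_{1,0}(\gamma,\tau_2)\,\theta_{0,1}(\gamma,\tau_2)}
\]
for a constant $c$ depending only on $\tau_2$. Applying Lemma~\ref{lem:ascending:Landen:Jacobi} at $\gamma=\alpha$ and at $\gamma=\beta$ and taking the quotient, the prefactors $(1+k_1)k_2'c$ cancel. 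The result is exactly the right-hand side of Lemma~\ref{lem:ascending:Landen:theta}.

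Since the statement is an identity of meromorphic functions, the argument holds in the generic setting where the denominators are nonzero and extends by continuity elsewhere. The only point that needs care is that $\gamma^k$ and $\gamma^{k_2}$ use different normalizations $2K(k)$ and $2K(k_2)$. This is handled inside Lemma~\ref{lem:ascending:Landen:Jacobi}, which is stated with the same argument $\gamma$ on both sides.

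The main obstacle is therefore Lemma~\ref{lem:ascending:Landen:Jacobi} itself, in case it is not yet available. To prove it directly, one would use the theta-level duplication formulas relating $\theta_{m,n}(z,\tau)$ to products at $2\tau$ (DLMF 20.7). The function $\theta_{1,1}(z,\tau)/\theta_{1,0}(z,\tau)$ equals a constant times $\theta_{1,1}\theta_{0,0}/(\theta_{1,0}\theta_{0,1})$ at $\tau_1=2\tau$. One then applies~\eqref{eq:tau_2} to move from $\tau_1$ to $\tau_2=\tau_1-1$; this swaps $\theta_{0,0}$ and $\theta_{0,1}$ up to constants and keeps $\theta_{1,1}$ and $\theta_{1,0}$ up to constants.

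Alternatively, one can check the theta identity directly by comparing zeros and poles on the torus $\TT(\tau)$. On the right-hand side, $\TT(\tau_2)$ has lattice $\ZZ+(2\tau-1)\ZZ$; this lattice contains $2\tau$ and has index $2$ in $\Lambda$, so the factors on the right realize the zeros and poles of the left-hand side. Both sides are then meromorphic functions of $\alpha$ on $\TT(\tau)$ with the same divisor, so their ratio is a constant. That constant equals $1$, which follows by evaluating the ratio as $\alpha=\beta$.
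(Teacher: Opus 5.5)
Your main argument is correct and is the paper's own proof. You write the left side as $\sc(\alpha^k|k)/\sc(\beta^k|k)$, apply Lemma~\ref{lem:ascending:Landen:Jacobi} at $\alpha$ and at $\beta$, and use $\sc\,\dn=-\theta_{1,1}\theta_{0,0}/(\theta_{1,0}\theta_{0,1})$ at modulus $\tau_2$, so that all constants cancel.

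Your optional backup sketches contain two slips, which do not affect the main argument. First, the intermediate expression at $\tau_1=2\tau$ should be $\theta_{1,1}\theta_{0,1}/(\theta_{1,0}\theta_{0,0})$, because $\theta_{0,1}(\cdot,2\tau)$ vanishes at $\tau$; the swap $\tau_1\mapsto\tau_1-1$ then turns this into the right-hand side. Second, neither side is a function on $\TT(\tau)$, since both change sign under $\alpha\mapsto\alpha+\tau$, so the divisor comparison must be done on $\CC/(\ZZ+2\tau\ZZ)$.
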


This is a one-line corollary of Lemma~\ref{lem:ascending:Landen:Jacobi}, because by definition of the $\sc$ and $\dn$ functions:
{\small 
  \begin{equation*}
    \frac
    {\theta_{1,1}(\alpha|\tau)\theta_{1,0}(\beta|\tau)}
    {\theta_{1,0}(\alpha|\tau)\theta_{1,1}(\beta|\tau)} 
    =
    \frac{\sc(\alpha^k|\tau)}{\sc(\beta^k|\tau)}
    = \frac{\sc(\alpha^{k_2}|\tau_2)
  \dn(\alpha^{k_2}|\tau_2)}{\sc(\beta^{k_2}|\tau_2)
  \dn(\beta^{k_2}|\tau_2)}
  = 
  \frac
    {\theta_{1,1}(\alpha|\tau_2)\theta_{0,0}(\alpha|\tau_2)\theta_{1,0}(\beta|\tau_2)\theta_{0,1}(\beta|\tau_2)}
    {\theta_{1,0}(\alpha|\tau_2)\theta_{0,1}(\alpha|\tau_2)\theta_{1,1}(\beta|\tau_2)\theta_{0,0}(\beta|\tau_2)}
  \end{equation*}
}
Note that in principle we could obtain Lemma~\ref{lem:ascending:Landen:theta} directly with the same techniques since the ascending Landen transform also exists for theta functions, but we find the computations much easier with Jacobi elliptic functions.

\begin{proof}[Proof of Lemma~\ref{lem:ascending:Landen:Jacobi}]
%
By the first point of Equation (22.2.2) of~\cite{DLMF}, Equation (1.8.2) of \cite{Lawden}, the third point of Equation (22.2.2) of \cite{DLMF} and again the first point of Equation (22.2.2) of~\cite{DLMF}: 
$$
  2K(k) 
  = \theta_{0,0}^2(0,\tau) 
  = \theta_{0,0}^2(0,\tau_1)+\theta_{1,0}^2(0,\tau_1)
  = (1+k_1)\theta_{0,0}^2(0,\tau_1)
  = (1+k_1)2K(k_1) 
$$
so 
Using Equations (3.9.28), (3.9.29) of~\cite{Lawden} (the ascending Landen transform for $\sn$ and $\cn$), we first obtain
  \begin{equation}\label{eq:ascending:Landen}
    \sc(\gamma^k|k) 
    = (1+k_1)\frac
    {\sn\big(\frac{\gamma^k}{1+k_1}\big|k_1\big)}
    {\cn\big(\frac{\gamma^k}{1+k_1}\big|k_1\big)\dn\big(\frac{\gamma^k}{1+k_1}\big|k_1\big)}
    = (1+k_1)
    \frac
    {\sn(\gamma^{k_1}|k_1\big)}
    {\cn(\gamma^{k_1}|k_1)\dn(\gamma^{k_1}|k_1)}.
  \end{equation}
We now use the modular transformation $\tau \to \tau+1$. 
First observe that by the first and second points of Equation (22.2.2) of~\cite{DLMF} and Equation~\eqref{eq:tau_2},
$$
  2K(k_2) 
  = \theta_{0,0}^2(0,\tau_2) 
  = \frac{\theta_{0,1}^2(0,\tau_2)}{k_2'} 
  = \frac{\theta_{0,0}^2(0,\tau_1)}{k_2'} 
  = \frac{2K(k_1)}{k_2'}. 
$$
Then, the modular transformation $\tau \to \tau +1$ for elliptic functions (9.3.6)-(9.3.8) of~\cite{Lawden} (which is a direct consequence of Equation~\eqref{eq:tau_2} and the definition of the Jacobi elliptic functions) writes 
\begin{equation*}
  \sn(\gamma^{k_1}|k_1) = k_2' \sd(\gamma^{k_2}|k_2) 
  \quad ; \quad
  \cn(\gamma^{k_1}|k_1) = k_2' \cd(\gamma^{k_2}|k_2)
  \quad ; \quad
  \dn(\gamma^{k_1}|k_1) = k_2' \nd(\gamma^{k_2}|k_2)  
\end{equation*}
so Equation~\eqref{eq:ascending:Landen} gives
\begin{equation*}
  \sc(\gamma^k|k) 
  =
  (1+k_1)k_2' 
  \frac
    {\sd(\gamma^{k_2}|k_2)}
    {\cd(\gamma^{k_2}|k_2)\nd(\gamma^{k_2}|k_2)}
  = 
  (1+k_1)k_2' 
  \sc(\gamma^{k_2}|k_2)
  \dn(\gamma^{k_2}|k_2),
\end{equation*}
which concludes the proof of the lemma.
\end{proof}

\bibliographystyle{alpha}
\bibliography{anti_ferro_ising}

\end{document}